\newtheorem{theorem}{Theorem}
\newtheorem{lemma}[theorem]{Lemma}
\newtheorem{condition}[theorem]{Condition}
\DeclareMathOperator*{\argmin}{argmin}
\theoremstyle{definition}
\newtheorem{assumption}{Assumption}
\newtheorem{proposition}{Proposition}
\newcommand{\anon}{0} 
\begin{document}

\def\spacingset#1{\renewcommand{\baselinestretch}%
{#1}\small\normalsize} \spacingset{1}


\if0\anon
{
  \title{\bf Group Permutation Testing in Linear Model: Sharp Validity, Power Improvement, and Extension Beyond Exchangeability}

\author{
  Zonghan Li$^{1}$, \quad
  Hongyi Zhou$^{1}$, \quad
  Zhiheng Zhang$^{2, 3}$\thanks{The author Zonghan Li and Hongyi Zhou contributed equally to this work. Correspondence should be addressed to Zonghan Li: \url{lizongha24@mails.tsinghua.edu.cn} and Zhiheng Zhang: \url{zhangzhiheng@mail.shufe.edu.cn}}
  \\ \vspace{0.2cm}
  \parbox{\textwidth}{\centering \small
    $^1$ Institute for Interdisciplinary Information Sciences, Tsinghua University, Beijing, China \\
    $^2$ School of Statistics and Data Science, Shanghai University of Finance and Economics, Shanghai 200433, P.R. China \\
    $^3$ Institute of Big Data Research, Shanghai University of Finance and Economics, Shanghai 200433, P.R. China
  }
}
\date{}
\maketitle
} \fi

\if0 \anon
{
  \bigskip
  \bigskip
  \bigskip
  \begin{center}
    {\LARGE\bf Group Permutation Testing for Linear Models: Sharp Validity, Power Improvement, and Extension Beyond Exchangeability}
\end{center}
  \medskip
} \fi

\bigskip
\begin{abstract}

We study finite-sample inference for a single regression coefficient in the fixed-design linear model $Y=Z\beta+bX+\epsilon$, allowing $\epsilon$ to be dependent, heterogeneous, or non-exchangeable. We develop a group-permutation framework that unifies randomization-based regression testing and yields sharp finite-sample guarantees. Under exchangeable errors, we show that permutation-augmented regression tests admit an exact validity characterization: a grouped version of PALMRT controls Type I error at level at most $2\alpha$ for any permutation group, and this factor $2$ is unimprovable in general. We further connect Type II error to a design-dependent geometric separation between the target regressor and its permuted versions after removing nuisance effects, leading to a combinatorial optimization problem over permutation groups. Under mild Sub-Gaussian assumptions, we propose a constructive, design-adaptive strategy that is provably no worse than i.i.d. random permutations and often substantially more powerful. Finally, we extend cyclic and permutation-augmented regression tests beyond exchangeability by linking rank-based randomization with weighted conformal inference. The resulting weighted group tests retain finite-sample Type I control that degrades smoothly with total variation departures from group symmetry, recovering exact validity under exchangeability.

\end{abstract}

\noindent%
{\it Keywords:}  Permutation tests; fixed-design regression; sharp Type I error bounds; power optimization; exchangeability.
\vfill

\newpage
\spacingset{1.9} 
\section{Introduction}

Exact finite-sample inference for linear regression coefficients has become an increasingly important problem in modern statistics and applied data science~\citep{lei2021,wen2025,d2024robust,guan2024,spector2024mosaic}. In many scientific studies, including genomics~\citep{love2014moderated,smyth2004linear}, neuroscience~\citep{button2013power}, and small-sample clinical trials~\citep{julious2023sample}, the available sample size is often moderate~\citep{fan2008high}. In such regimes, classical asymptotic tests can suffer from poor Type I error control, making resampling-based procedures that exploit exact symmetries of the noise distribution particularly appealing. In this paper, we revisit the problem of testing the coefficient $b$ of a target regressor $X$ in the fixed-design linear model
\begin{equation}
\label{eq::model}
Y=Z \beta+b X+\epsilon, 
\end{equation}
{where we have observations $(Z,X,Y)\in \mathbb{R}^{n \times p}\times \mathbb{R}^{n} \times \mathbb{R}^{n}$ and $\epsilon:=(\epsilon_1,\cdots,\epsilon_n)^T$ is an $n$-dimensional noise vector. Our goal is to test the null hypothesis $H_0:b=0$ against the alternative hypothesis $H_1:b\neq 0.$} We develop a unified permutation framework that delivers finite-sample Type I error guarantees with provably sharp (i.e., unimprovable) constants, provides a principled and optimizable handle on Type II error, and extends permutation-based inference beyond the classical exchangeability assumption on the noise $\epsilon$. A recent line of work has addressed this problem by exploiting noise exchangeability through carefully designed permutation tests, typically built upon the following assumption:
 {\begin{assumption}(Exchangeable noise). 
\label{assump::exchanglable}
For $\epsilon:=(\epsilon_1,\cdots,\epsilon_n)^T$ and any permutation $\pi$ of indices $1,\cdots,n$,
$$(\epsilon_1,\cdots,\epsilon_n) \overset{d}{=} (\epsilon_{\pi(1)},\cdots,\epsilon_{\pi(n)})$$
\end{assumption}}

\citet{lei2021}'s cyclic permutation test (CPT) achieves exact finite-sample validity in fixed-design
regression by constructing linear statistics whose joint null distribution is invariant under a
structured left-shift permutation group; ranking these statistics yields an exact test under
exchangeable noise.
The explicit group structure provides a transparent invariance argument and sharp Type~I error
control.
However, CPT relies on solving a linear constraint system to eliminate nuisance effects, and the
existence of nontrivial solutions requires restrictive dimensional regimes, which can be prohibitive
when $p$ is moderate relative to $n$ or when finer randomization resolution is desired. \citet{wen2025}'s residual permutation test (RPT) addresses this feasibility issue by replacing
CPT's constraint-based construction with a projection-based reformulation.
By working in orthogonal complements of augmented design spaces, RPT substantially relaxes CPT's
dimensional requirements while preserving exact finite-sample validity under exchangeability.
The price of this increased applicability is a conservative aggregation of permutation evidence
through a minimum-type operator, which can attenuate power and obscures how the choice of
permutations influences the resulting Type~II error. \citet{guan2024}'s permutation-augmented linear model regression test (PALMRT) further modifies the
permutation scheme by replacing conservative aggregation with pairwise permutation comparisons.
This change can yield substantial empirical power improvements and admits an explicit finite-sample
Type~I error bound of at most $2\alpha$ under exchangeable errors.
At the same time, PALMRT samples permutations i.i.d.\ from the full symmetric group, discarding the
geometry of the transformation set and making it difficult to analyze or optimize power in a
design-dependent manner; moreover, the factor-$2$ constant in the Type~I bound is not known to be
improvable under the same assumptions.

These gaps are most salient in applications that demand both exact calibration and high power under
complex fixed designs.
For example, in small-$n$, moderate-$p$ regressions with strongly correlated nuisance covariates
(e.g., basis expansions or multi-omic features) and mild heteroskedasticity across experimental
batches, existing exact tests may require restrictive regimes, incur conservative Type~I constants,
or provide no principled mechanism for selecting permutations tailored to the observed design.
Accordingly, three questions remain open: (i) whether the factor-$2$ Type~I bound in PALMRT is
improvable without strengthening assumptions; (ii) how to control and optimize Type~II error while
maintaining finite-sample Type~I guarantees; and (iii) how far finite-sample inference can be pushed
beyond exchangeability using only structural information about $\epsilon$. These questions form the research gap addressed in this work. We propose a group permutation test framework that centers on the assumption that the chosen permutation matrices form a finite group.
{Let $\mathcal{P}_n$ denote the set of all permutation matrices in $\mathbb{R}^{n\times n}$. We impose the following assumption: \begin{assumption}(Group permutation)
\label{assump::grouppermutation}
The set of permutation matrices $P_K := \{P_0,\ldots,P_K\}$ is closed under matrix multiplication; that is, for any $P_i, P_j \in P_K,$ there exists $k \in \{0,\ldots,K\}$ such that $P_k = P_i P_j$.
\end{assumption}}

First, Assumption~\ref{assump::grouppermutation} strictly relaxes the left-shifting structure in CPT
while remaining sufficiently rich to sustain exact symmetry arguments.
Within this unified group-theoretic view, we revisit PALMRT and show that under exchangeable noise its
Type~I error is controlled at level $2\alpha$ for \emph{any} admissible permutation group.
We further construct explicit worst-case designs establishing sharpness: without additional assumptions,
the constant $2$ cannot be improved.

Second, the group perspective yields a principled route to power analysis.
For fixed designs $(X,Z)$, we link the power of group-based PALMRT to a design-dependent geometric
separation between permuted projections of $X$, which induces an optimization problem over permutation
groups.
Although combinatorial and non-convex, this formulation admits tractable bound characterizations of
Type~II error.
We propose a constructive, design-adaptive algorithm that selects permutation groups informed by the
observed geometry; under mild assumptions, the resulting groups achieve systematically tighter Type~II
error upper bounds than i.i.d.\ permutations (e.g., \citet{guan2024}), while preserving valid Type~I control.

Finally, we extend the framework beyond exchangeability by connecting rank-based randomization to
conformal inference under distribution shift.
Recasting CPT and PALMRT as weighted rank statistics yields weighted group permutation tests whose
Type~I error bounds degrade gracefully with total variation distances between $\epsilon$ and its
group-permuted versions, recovering exact validity under exchangeability and providing quantitative
robustness otherwise.
Together, the group-permutation viewpoint unifies exact validity, design-adaptive power, and robustness
within a single finite-sample framework. Our main contributions are summarized as follows:

The remainder of the paper is organized along the motivation–method–validation chain. \Cref{sec::exact_tests} reviews existing exact tests for linear models, emphasizing how their permutation structures and assumptions relate. \Cref{sec::The relax of matrix selections assumptions to group permutation} introduces the group permutation framework, analyzes Type I error for grouped PALMRT, develops the Type II optimization formulation, and presents our algorithm for constructing permutation groups together with numerical comparisons to random permutations. \Cref{sec::The relax of exchangeable to nonexchangeable} extends CPT and PALMRT with group permutations to the nonexchangeable setting, drawing explicit parallels to conformal prediction and deriving robustness guarantees. We conclude in Section~\ref{conclusion}.

{{\bf Notation.} Given any permutation matrix $P_i\in \mathcal{P}_n$ and any matrix $M\in \mathbb{R}^{n \times d}$, we write $M_{\pi_i}:=P_i M.$}


\section{Preliminaries}
\label{sec::exact_tests}

Recent literature on linear model inference \eqref{eq::model} leverages the exchangeability of noise $\epsilon$. Since $\epsilon$ is unobserved, the core challenge is to propagate its exchangeability to test statistics. We briefly review three representative methods that motivate our work. The details of these work can be found in \Cref{sec::exact_tests submateral}.

\subsection{Cyclic permutation test (CPT)}

\citet{lei2021} proposed the CPT, constructing statistics $S_j = Y^T \eta_j$ where $\eta_j$ are derived from a specific left-shifting permutation matrix $\mathcal{P}^L_{m}$, which satisfies \Cref{assump::leftshifting}. By imposing constraints (Condition \ref{con::lei}) to ensure $Z^T\eta_j$ is constant, the deterministic part of $Y$ is neutralized under $H_0$, allowing $\epsilon$'s exchangeability to control the Type I error: $P(R_0/(m+1) \leq \alpha) \leq \alpha$. However, CPT requires a restrictive sample size condition, $n > (1/\alpha - 1)p$, which often fails in high-dimensional settings.

\subsection{Residual permutation test (RPT)}

To relax this constraint, \citet{wen2025} introduced RPT. By projecting $Y$ onto the orthogonal complement of the column space spanned by $(Z, Z_{\pi_k})$, they eliminate the nuisance parameters without the strict $\eta_j$ construction. RPT only requires $p < n/2$. However, to maintain exactness, they employ a minimum operator in their statistic $\phi_1$, which makes the test conservative and reduces power (Type II error).
\subsection{Permutation-augmented linear model regression test (PALMRT)}

\citet{guan2024} addressed this conservativeness by replacing the rank-based statistic with a pairwise comparison $\phi_2$, using i.i.d. permutations in \Cref{assump::iidpermute}.

 \begin{assumption}(Independent and Identically Distributed permutations)
 \label{assump::iidpermute}
    $\mathcal{P}^{i.i.d}_K:=\{P_k\}_{k=0}^{K}$, $\forall k \in \{1,\cdots,K\}, P_k$ is generated independently and uniformly from $\mathcal{P}_n$, and let $P_0=I$.
\end{assumption}

While achieving $P(\phi_2 \leq \alpha) \leq 2\alpha$ under $p < n/2$, PALMRT relies on random permutations, which complicates the theoretical optimization of power.

This paper focuses on the \textit{group permutation} set $\mathcal{P}_K$ in  \Cref{assump::grouppermutation}. We begin with PALMRT~\citep{guan2024}. We consider changing the construction method of permutation matrices from 
  \Cref{assump::iidpermute} to \Cref{assump::grouppermutation}, maintaining the same statistic as $\phi_2$. Under exchangeability, we show that the group-permutation version satisfies the same Type I error bound as in (6). Moreover, the group structure provides enough insight to construct explicit examples showing that this bound is in fact tight. Compared with PALMRT, the group assumption allows us to formalize the Type II error control question as an optimization problem and provide a theoretically grounded construction method for the permutation group $\mathcal{P}_K$. Additionally, it is natural to relax the original CPT's~\citep{lei2021} assumption for constructing permutation matrices from \Cref{assump::leftshifting} to the weaker \Cref{assump::grouppermutation} and to reformulate the proof. To clarify why the group permutation assumption and the proof reformulation are useful, we extend CPT with group permutation beyond the exchangeable assumption of $\epsilon$. Using a similar idea, we also extend PALMRT with group permutation to a nonexchangeable scenario.
\section{From permutation-selection assumptions to group permutations}
\label{sec::The relax of matrix selections assumptions to group permutation}

As discussed in \Cref{sec::exact_tests}, CPT forms a permutation matrix set $\mathcal{P}_K^L:=\{P_k\}_{k=0}^K$ based on \Cref{assump::leftshifting}, and PALMRT forms a permutation matrix set $\mathcal{P}_K^{i.i.d}:=\{P_k\}_{k=0}^K$ based on \Cref{assump::iidpermute}. In this section, we relax both assumptions to the group permutation in \Cref{assump::grouppermutation} and present some new conclusions. 
\subsection{PALMRT with group permutation}
\label{Grouped PALMRT}
In this section, we restate the problem of interest: to test the null hypothesis $H_0:b=0$ against the alternative hypothesis $H_1:b\neq0$ under model \eqref{eq::model}, and the error term $\epsilon$ satisfies \Cref{assump::exchanglable} and is exchangeable. Correspondingly, $S_n$ denotes the permutation group on [n]. We construct our permutation group $\mathcal{P}_{K}:=\{P_0,P_1,P_2,\cdots,P_K\},P_i\in\mathcal{P}_n,i\in\{0,1,\cdots,K\}$, and $K$ is a hyperparameter, with $\mathcal{P}_K$ satisfying  \Cref{assump::grouppermutation}. Let $\pi_k$ denote the permutation corresponding to $P_k$, $Z_{\pi_k}:=P_k\cdot Z$, $k=0,\cdots,K.$ We will prove later that $I\in\mathcal{P}_K$ in \Cref{prop::Proposition for group permutation}; there we simply denote $P_0:=I.$ Then, we define the test statistic as follows:
   \begin{equation}
   \label{eq::PALMRTphi}
       \phi= \frac{1}{K+1}(1+\sum_{k=1}^{K}\mathbbm{1}\{X^{T}(I-H^{ZZ_{\pi_k}})Y\leq X_{\pi_k}^{T}(I-H^{ZZ_{\pi_k}})Y\}).
   \end{equation}
 Under the null hypothesis $H_0:b=0$, we define a bivariate function that also incorporates data $x$, $Z$, and  unobserved noise $\epsilon$ as fixed inputs:
   \begin{equation*}
       F(\pi_1,\pi_2;x,Z,\epsilon)=X_{\pi_1}^{T}(I-H^{Z_{\pi_1}Z_{\pi_2}})\epsilon.
   \end{equation*}
Substituting the definition of $Y$ in model \eqref{eq::model} under the null hypothesis $H_0:b=0$, we have that:
$$(X_{\pi_k})^T(I-H^{ZZ_{\pi_k}})Y=\underbrace{(X_{\pi_k})^T(I-H^{ZZ_{\pi_k}})Z \beta}_{\text{orthogonal part}}+\underbrace{(X_{\pi_k})^T(I-H^{ZZ_{\pi_k}})\epsilon }_{\text{stochastic part}}.$$

Since $Z$ lies in the column space of $(Z, Z_{\pi_k})$, the deterministic term vanishes. Hence, under $H_0$, $X^T(I-H_{ZZ_{\pi_k}})Y = X^T(I-H_{ZZ_{\pi_k}})\epsilon$. This function $F(\pi_1,\pi_2;x,Z,\epsilon)=X_{\pi_1}^{T}(I-H^{Z_{\pi_1}Z_{\pi_2}})\epsilon$ satisfies the proposition below:
 \begin{proposition}
 \label{prop::Finverse}
     For any permutation $\pi_1,\pi_2$ of $S_n$, the function $F(.,.;x,Z,\epsilon)$ satisfies 
     \begin{equation*}
         F(\pi_1,\pi_2;x,Z,\epsilon_{\sigma})=F(\sigma^{-1} \circ \pi_1,\sigma^{-1} \circ \pi_2;x,Z,\epsilon).
     \end{equation*}
 \end{proposition}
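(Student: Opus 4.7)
\medskip

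\noindent\textbf{Proof proposal.} The plan is to show the claimed identity by conjugating the middle projector by $P_\sigma$ and then re-absorbing the extra permutation factors into the $X$ and $Z$ indices via the group multiplication of permutation matrices. Concretely, starting from
\begin{equation*}
F(\pi_1,\pi_2;x,Z,\epsilon_{\sigma}) \;=\; X_{\pi_1}^{T}\bigl(I-H^{Z_{\pi_1}Z_{\pi_2}}\bigr)P_\sigma\,\epsilon,
\end{equation*}
I would insert the identity $P_\sigma P_\sigma^{T}=I$ on the left of $(I-H^{Z_{\pi_1}Z_{\pi_2}})$, which is legitimate because $P_\sigma$ is orthogonal. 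This moves one $P_\sigma^T$ onto the factor $X_{\pi_1}^T$ (turning it into $(P_\sigma^T X_{\pi_1})^T$) and conjugates the projector to $P_\sigma^{T}(I-H^{Z_{\pi_1}Z_{\pi_2}})P_\sigma$.

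Next, I would invoke the elementary fact that, for any permutation matrix $Q$ and any conformable matrix $A$, $Q^{T}H^{A}Q=H^{Q^{T}A}$; this follows from $Q^{T}Q=I$ and the definition $H^{A}=A(A^{T}A)^{-1}A^{T}$. Applied with $Q=P_\sigma$ and $A=[Z_{\pi_1}\mid Z_{\pi_2}]$, the conjugated projector becomes $I - H^{[P_\sigma^{T}Z_{\pi_1}\,\mid\,P_\sigma^{T}Z_{\pi_2}]}$. At this point, the expression reads
\begin{equation*}
F(\pi_1,\pi_2;x,Z,\epsilon_{\sigma}) \;=\; (P_\sigma^{T}X_{\pi_1})^{T}\Bigl(I-H^{[P_\sigma^{T}Z_{\pi_1}\,\mid\,P_\sigma^{T}Z_{\pi_2}]}\Bigr)\epsilon,
\end{equation*}
so the randomness $\epsilon$ now appears unshifted and all of the $\sigma$-action has been funneled into the $X$- and $Z$-factors.

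The last step is to identify these shifted factors with the claimed ones. Using $P_\sigma^{T}=P_{\sigma^{-1}}$ and the group multiplication rule $P_{\sigma^{-1}}P_{\pi_k}=P_{\sigma^{-1}\circ\pi_k}$ on $\mathcal{P}_n$, I get $P_\sigma^{T}X_{\pi_k}=P_{\sigma^{-1}\circ\pi_k}X=X_{\sigma^{-1}\circ\pi_k}$ and likewise $P_\sigma^{T}Z_{\pi_k}=Z_{\sigma^{-1}\circ\pi_k}$ for $k=1,2$. Substituting back into the display above immediately recovers $F(\sigma^{-1}\circ\pi_1,\sigma^{-1}\circ\pi_2;x,Z,\epsilon)$, which is the desired identity.

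The proof is essentially linear algebra — two ingredients (orthogonality of permutation matrices and permutation-covariance of orthogonal projections) plus the composition rule — so there is no deep obstacle. The only non-routine point is bookkeeping: one must fix the convention linking a permutation $\pi$ to its matrix $P_\pi$ (i.e.\ whether $P_\sigma P_\tau$ equals $P_{\sigma\circ\tau}$ or $P_{\tau\circ\sigma}$) so that the group identity $P_\sigma^{T}P_{\pi_k}=P_{\sigma^{-1}\circ\pi_k}$ used in the last step is consistent with the meaning of $M_{\pi}:=P_\pi M$ declared in the Notations. Once the convention is pinned down, the three displayed manipulations go through verbatim.
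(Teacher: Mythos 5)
Your proof is correct, and the endgame is identical to the paper's: both ultimately rest on the identity
\begin{equation*}
P_\sigma^{T}\bigl(I-H^{Z_{\pi_1}Z_{\pi_2}}\bigr)P_\sigma \;=\; I-H^{Z_{\sigma^{-1}\circ\pi_1}Z_{\sigma^{-1}\circ\pi_2}},
\end{equation*}
after which one absorbs $P_\sigma^{T}$ into the $X$ factor and applies the composition rule $P_\sigma^{T}P_{\pi_k}=P_{\sigma^{-1}\circ\pi_k}$. The difference is in how each of you establishes that conjugation identity. You invoke the clean, general fact $Q^{T}H^{A}Q=H^{Q^{T}A}$ for orthogonal $Q$, which is a one-line computation from $H^{A}=A(A^{T}A)^{-1}A^{T}$ and $Q^{T}Q=I$. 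The paper instead reaches the same conclusion through a regression-residual argument: it interprets $(I-H^{Z_{\pi_1}Z_{\pi_2}})\epsilon_\sigma$ as the least-squares residual of $\epsilon_\sigma$ regressed on $(Z_{\pi_1},Z_{\pi_2})$, shows the same coefficient estimate solves the permuted least-squares problem, and concludes $P_\sigma^{T}\bar\eta_1=(I-H^{Z_{\sigma^{-1}\circ\pi_1}Z_{\sigma^{-1}\circ\pi_2}})\epsilon$. Your route is shorter and purely algebraic; the paper's is more verbose but ties the projector identity to a statistical interpretation that fits the surrounding exposition. Your final caveat about pinning down the convention $P_{\alpha\circ\beta}=P_\alpha P_\beta$ so that $P_\sigma^{T}P_{\pi_k}=P_{\sigma^{-1}\circ\pi_k}$ is exactly the right bookkeeping concern, and under the paper's convention $M_\pi:=P_\pi M$ it checks out, so the argument is complete.
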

   Then, the statistic $\phi$ can be rewritten as ($\pi_0 := I$):
   \begin{equation}
   \label{eq::phinew}
       \phi= \frac{1}{K+1}(1+\sum_{k=1}^{K}\mathbbm{1}\{F(\pi_0,\pi_k;x,Z,\epsilon_{\sigma}) \leq F(\pi_k,\pi_0;x,Z,\epsilon_{\sigma})\}).
   \end{equation}
   We denote $r_{ab}=\mathbbm{1}\{F(\pi_a,\pi_b;x,Z,\epsilon) < F(\pi_b,\pi_a;x,Z,\epsilon)\}+\frac{1}{2}\mathbbm{1}\{F(\pi_a,\pi_b;x,Z,\epsilon) = F(\pi_b,\pi_a;x,Z,\epsilon)\}$, $R_{a}:=\frac{1}{K+1}\sum^{K}_{b=0} r_{ab}$, and $\forall a,b, \pi_a,\pi_b \in \mathcal{P}_K$.
\begin{theorem}
\label{Thm::PALMRT}
    Suppose $\mathcal{P}_K:=\{P_k:k=0,\cdots,K\}$ satisfies \Cref{assump::grouppermutation}. Suppose that $(X,Z,Y)$ is generated under the model \eqref{eq::model} with $p \leq n/2$ and that the noise $\epsilon$ satisfies \Cref{assump::exchanglable}. We define $S:=\{m:R_m\leq \alpha\}$. Under $H_0: b=0$, $\phi$ defined in \eqref{eq::PALMRTphi} satisfies $P(\phi\leq \alpha)\leq \frac{1}{K+1}E|S|$, and for a given $\epsilon$, $\frac{[\alpha(K+1)]}{K+1}\leq\frac{1}{K+1}|S|\leq\frac{[2\alpha(K+1)]}{K+1}$.
\end{theorem}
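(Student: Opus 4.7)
The plan is to establish the theorem in three modular steps: (i) a group-theoretic distributional invariance of the pairwise-comparison matrix $(r_{ab})$ under exchangeable noise, (ii) a deterministic lower bound comparing $\phi$ to the self-rank $R_0$, and (iii) a combinatorial tournament argument bounding $|S|$ in terms of $\alpha$ and $K$.

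First I would prove invariance. Fix any $\sigma \in \mathcal{P}_K$. Assumption~\ref{assump::exchanglable} gives $\epsilon \stackrel{d}{=} \epsilon_\sigma$, so the joint law of $\{F(\pi_a,\pi_b;X,Z,\epsilon)\}_{a,b=0}^K$ equals that of $\{F(\pi_a,\pi_b;X,Z,\epsilon_\sigma)\}_{a,b}$. Proposition~\ref{prop::Finverse} rewrites the latter as $\{F(\sigma^{-1}\pi_a,\sigma^{-1}\pi_b;X,Z,\epsilon)\}_{a,b}$. Assumption~\ref{assump::grouppermutation} makes $\mathcal{P}_K$ closed under left multiplication by $\sigma^{-1}$, so $a \mapsto \tau_\sigma(a)$ defined by $\pi_{\tau_\sigma(a)} = \sigma^{-1}\pi_a$ is a genuine bijection of $\{0,\ldots,K\}$. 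Consequently $(r_{ab})_{a,b} \stackrel{d}{=} (r_{\tau_\sigma(a),\tau_\sigma(b)})_{a,b}$, which in turn gives $R_a \stackrel{d}{=} R_{\tau_\sigma(a)}$. Since the group acts transitively on itself by left multiplication, $P(R_a \leq \alpha)$ does not depend on $a$, and summing over $a$ yields $(K+1)\,P(R_0 \leq \alpha) = \E|S|$.

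Second, I would relate $\phi$ to $R_0$ by a termwise comparison. Because the indicator $\mathbbm{1}\{F(\pi_0,\pi_k) \leq F(\pi_k,\pi_0)\}$ assigns full weight $1$ to ties whereas $r_{0k}$ assigns only $1/2$, we have $\mathbbm{1}\{F(\pi_0,\pi_k) \leq F(\pi_k,\pi_0)\} \geq r_{0k}$; combining this with $r_{00}=1/2$ gives
\[
\phi \;=\; \frac{1}{K+1}\Bigl(1+\sum_{k=1}^K \mathbbm{1}\{F(\pi_0,\pi_k)\leq F(\pi_k,\pi_0)\}\Bigr) \;\geq\; R_0 + \frac{1}{2(K+1)}.
\]
Hence $\{\phi \leq \alpha\} \subseteq \{R_0 \leq \alpha\}$, so $P(\phi \leq \alpha) \leq P(R_0 \leq \alpha) = \frac{1}{K+1}\E|S|$, the first claimed inequality.

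Third, I would bound $|S|$ via a tournament identity. For $a \neq b$ exactly one of $F(\pi_a,\pi_b) < F(\pi_b,\pi_a)$, $=$, or $>$ holds, yielding $r_{ab}+r_{ba} = 1$; combined with $r_{aa}=1/2$ this gives $\sum_{a,b \in T} r_{ab} = |T|^2/2$ for any $T \subseteq \{0,\ldots,K\}$. Taking $T = S$ and using $R_a \leq \alpha$ for $a \in S$:
\[
|S|\,\alpha \;\geq\; \sum_{a \in S} R_a \;\geq\; \frac{1}{K+1}\sum_{a,b \in S} r_{ab} \;=\; \frac{|S|^2}{2(K+1)},
\]
so $|S| \leq 2\alpha(K+1)$, and integrality of $|S|$ forces $|S| \leq [2\alpha(K+1)]$. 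The matching lower bound is obtained by the analogous identity on $S^c$, together with the global balance $\sum_a R_a = (K+1)/2$ that comes from the same pairwise sum identity.

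The main obstacle is the invariance step. Proposition~\ref{prop::Finverse} alone only rewrites a single entry of the score matrix; to conclude that the \emph{entire} random matrix $(r_{ab})$ is equal in distribution to a row-and-column permutation of itself one must carefully exploit group closure, existence of an identity, and existence of inverses to guarantee that the induced index map $\tau_\sigma$ is a bijection of $\{0,\ldots,K\}$ for \emph{every} $\sigma$, so that the action is transitive. Once joint (rather than merely marginal) distributional invariance is secured, the comparison $\phi \geq R_0 + 1/(2(K+1))$ and the tournament bound are essentially elementary book-keeping.
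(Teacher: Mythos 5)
Your invariance argument, the comparison $\phi \geq R_0$, and the upper bound on $|S|$ are correct and essentially match the paper's approach, modulo a small bookkeeping discrepancy: the paper's arithmetic implicitly takes $r_{aa}=1$ (which contradicts its formal definition giving $r_{aa}=1/2$), yielding $|S| \leq 2\alpha(K+1)-1$, whereas your consistent use of $r_{aa}=1/2$ yields $|S| \leq 2\alpha(K+1)$; both imply $|S| \leq [2\alpha(K+1)]$ by integrality, so the conclusion is unchanged. You flag joint invariance of the whole score matrix as the main obstacle, but in fact only the marginal identity $R_a \stackrel{d}{=} R_0$ is actually used: the first inequality follows from $\E|S| = \sum_a P(a\in S) = \sum_a P(R_a\leq\alpha) = (K+1)\,P(R_0\leq\alpha)$ together with $\phi \geq R_0$, and the bounds on $|S|$ are deterministic.

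The genuine gap is the lower bound $|S|\geq[\alpha(K+1)]$. The route you sketch, the tournament identity on $S^c$ combined with the global balance $\sum_a R_a = (K+1)/2$, does not produce it; it merely re-derives the upper bound. Write $s=|S|$, $c=K+1-s$. The balance gives
\begin{align*}
\sum_{a\in S^c}R_a \;=\; \frac{K+1}{2}-\sum_{a\in S}R_a \;\geq\; \frac{K+1}{2}-\alpha s,
\end{align*}
while the within-$S^c$ identity together with $r_{ab}\leq 1$ for $a\in S^c$, $b\in S$ gives
\begin{align*}
\sum_{a\in S^c}R_a \;\leq\; \frac{1}{K+1}\Bigl(\frac{c^2}{2}+cs\Bigr) \;=\; \frac{K+1}{2}-\frac{s^2}{2(K+1)}.
\end{align*}
Chaining these returns $\alpha s \geq \frac{s^2}{2(K+1)}$, i.e.\ $s\leq 2\alpha(K+1)$ once again. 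Using instead $R_a>\alpha$ on $S^c$ with the same upper bound yields $\alpha < \frac{K+1+s}{2(K+1)}$, which is vacuous whenever $\alpha\leq 1/2$. No combination of the pairwise-sum identity with the one-sided constraints on $S$ and $S^c$ forces $s\geq[\alpha(K+1)]$.

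You should also be cautious about the claim itself: as stated it does not hold deterministically. For $K+1=4$, $\alpha=1/2$, and a tie-free score matrix in which index $3$ loses all three pairwise comparisons while $0,1,2$ form a $3$-cycle, one gets $R_0=R_1=R_2=3/4$ and $R_3=1/4$, hence $|S|=1$ whereas $[\alpha(K+1)]=2$. The paper's own argument for the lower bound evaluates $|S|$ only for the favourable configuration in which the $R_a$ are all distinct and equally spaced, and then treats that as the extremal case without justification. Closing this step rigorously would require either additional hypotheses (and the example above shows a tie-free assumption alone is not enough) or a reformulation of the lower bound claim.
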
 If we define 
\begin{equation}
    \begin{aligned}
    \label{phi'}
        \phi'= &\frac{1}{K+1}(1+\sum_{k=1}^{K}\mathbbm{1}\{X^{T}(I-H^{ZZ_{\pi_k}})Y < X_{\pi_k}^{T}(I-H^{ZZ_{\pi_k}})Y\}\\ &+\frac{1}{2}(\mathbbm{1}\{X^{T}(I-H^{ZZ_{\pi_k}})Y = X_{\pi_k}^{T}(I-H^{ZZ_{\pi_k}})Y\})).
    \end{aligned}
\end{equation}
When $H_0: b=0$, from the proof of \Cref{Thm::PALMRT}, it is immediate that $\phi'$ also satisfies  \Cref{Thm::PALMRT}. The detailed proof shows that the bound $2\alpha$ in \Cref{Thm::PALMRT} is driven mainly by $\phi'$, and the Type I error of $\phi$ can be controlled by $\phi'$. We write the test statistic $\phi$ to align with the statistical format of the previous work \citet{wen2025}.


Not only the original test statistic in \eqref{eq::PALMRTphi}, but also the entire class of statistics of the form \eqref{eq::phinew}, satisfies \Cref{prop::Finverse} and hence falls within the scope of \Cref{Thm::PALMRT}. In particular, \eqref{eq::PALMRTphi} is a special case of \eqref{eq::phinew}. From \Cref{Thm::PALMRT}, we can obtain $P(\phi\leq \alpha)\leq 2\alpha$. It is interesting that the above bound for the rejection region $[0,\alpha]$ is $2\alpha$. As discussed in \Cref{sec::exact_tests}, previous studies based on rank statistics ensure $\alpha$-level error control for the rejection region $[0,\alpha]$ in \eqref{eq::lei type I error bound} and (\ref{eq::Wen type I error control}). The core idea in these works is to exploit exchangeability to construct several identically distributed statistics, whose ranks are also identically distributed. Therefore, it is provable that their normalized rank distribution stochastically dominates the uniform distribution $U[0,1]$, so that the $\alpha$ bound follows. Then, a natural question arises: if we refine our proof procedure, is it possible to obtain an error bound like
$P( \phi\leq \alpha)\leq \alpha$? Thus, the proposition below provides an application to this question and offers a method for constructing worst-case examples to see that, without adding any other assumptions, we cannot improve the error bound in PALMRT with the group permutation assumption; that is, the factor of $2$ in Theorem \ref{Thm::PALMRT} cannot be improved.
\begin{proposition}
\label{proposition of 2alpha}
    For any sample size $n\geq 2p$, there exist $(X,Z)\in \mathbb{R}^{n\times p}\times \mathbb{R}^{n}$, a permutation group $\mathcal{P}_K$ satisfying \Cref{assump::grouppermutation}, and a given $\alpha$, such that for any noise vector $\epsilon$, 
    $$P(\phi'\leq \alpha)=2\alpha.$$
\end{proposition}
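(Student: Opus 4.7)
The approach is to unpack when every inequality in the proof of Theorem~\ref{Thm::PALMRT} becomes an equality and then exhibit an explicit design $(X,Z,\mathcal{P}_K)$ that realizes them. The bound $|S|/(K+1)\leq 2\alpha$ there comes from the chain
\begin{equation*}
\tfrac{|S|^2}{2(K+1)}\;\leq\;\sum_{m\in S}R_m\;\leq\;\alpha|S|,
\end{equation*}
where the lower bound uses $r_{ab}+r_{ba}=1$ for $a\neq b$ together with $r_{aa}=\tfrac{1}{2}$. For both inequalities to be tight for (almost) every $\epsilon$, I need (i) $r_{ab}=0$ for every $a\in S$, $b\notin S$, (ii) $r_{ab}=\tfrac{1}{2}$ for all distinct $a,b\in S$, and (iii) $|S(\epsilon)|\equiv 2\alpha(K+1)$.

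Next, I would translate (ii) into a geometric constraint via Proposition~\ref{prop::Finverse}. Since
\begin{equation*}
F(\pi_a,\pi_b;X,Z,\epsilon)-F(\pi_b,\pi_a;X,Z,\epsilon)=(X_{\pi_a}-X_{\pi_b})^\top(I-H^{Z_{\pi_a}Z_{\pi_b}})\epsilon,
\end{equation*}
the tie condition for all $\epsilon$ is equivalent to $X_{\pi_a}-X_{\pi_b}\in\mathrm{span}(Z_{\pi_a},Z_{\pi_b})$, a purely linear-algebraic requirement on the joint action of $\mathcal{P}_K$ on $(X,Z)$. The cleanest way to force this for every pair $(a,b)$ is to place $X$ inside the column span of $Z$; in that case $F(\pi_a,\pi_b;X,Z,\epsilon)\equiv 0$ for every pair and every $\epsilon$.

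For the construction itself, pick a group $\mathcal{P}_K\subset\mathcal{P}_n$ of order $K+1=2q$, which exists whenever $n\geq 2q$; choose $Z$ so that $(Z,Z_{\pi_k})$ has full column rank for every $k\neq 0$; and set $X\in\mathrm{col}(Z)$. Then every $F$ vanishes, every $r_{ab}=\tfrac{1}{2}$, and $R_m\equiv\tfrac{1}{2}$, so $|S(\epsilon)|\equiv K+1$ for every $\epsilon$. Taking $\alpha=\tfrac{1}{2}$, so that $2\alpha(K+1)=K+1$, the exchangeability identity $P(\phi'\leq\alpha)=\E|S|/(K+1)$ established in the proof of Theorem~\ref{Thm::PALMRT} then gives $P(\phi'\leq\tfrac{1}{2})=1=2\alpha$, saturating the bound for every (exchangeable) noise law. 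The hypothesis $n\geq 2q$ is used exactly to fit a non-trivial group of order $2q$ together with a compatible nuisance design of full effective rank.

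\textbf{Main obstacle.} The hard part is condition (i): since $F(\pi_a,\pi_b)-F(\pi_b,\pi_a)$ is linear in $\epsilon$, its sign cannot be forced deterministically without annihilating the functional itself, which collapses the pair back into the tie class and violates (i) as a strict win. The degenerate construction above circumvents this by moving \emph{every} pair into the tie class at the price of fixing $\alpha=\tfrac{1}{2}$, which is enough to establish sharpness in an existential sense. Refining the construction to finer rational $\alpha$ would require engineering collinearities among the residualized coefficient vectors $(I-H^{Z_{\pi_a}Z_{\pi_b}})(X_{\pi_a}-X_{\pi_b})$ across $S$–$S^c$ pairs so that their signs on $\epsilon$ are jointly coherent, and it is in this refinement that the freedom granted by $n\geq 2q$ would be consumed.
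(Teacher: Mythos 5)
Your construction follows the same basic route as the paper's: engineer a degenerate design so that every pairwise comparison in the statistic becomes a tie, then take $\alpha=\tfrac{1}{2}$ to saturate the bound. The paper realizes the ties by taking $n=5$, a cyclic ``down-shift'' group, and a constant $X$ (so that $X_{\pi_k}=X$ and every residualized difference vanishes); you realize them by placing $X$ inside $\mathrm{col}(Z)$, so that $X_{\pi_a}^\top(I-H^{Z_{\pi_a}Z_{\pi_b}})\equiv 0$ for every pair. Your variant is arguably cleaner and dimension-free, and your identification of the three tightness conditions on $\{r_{ab}\}$ and $|S|$ correctly locates where equality must be forced in the chain $\tfrac{|S|^2}{2(K+1)}\le\sum_{m\in S}R_m\le\alpha|S|$.

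The gap is in the final step, where you invoke ``the exchangeability identity $P(\phi'\leq\alpha)=\E|S|/(K+1)$.'' The proof of Theorem~\ref{Thm::PALMRT} establishes this as an \emph{equality} only for $R_0$, and only an \emph{inequality} $P(\phi\leq\alpha)\leq P(R_0\leq\alpha)=\frac{1}{K+1}\E|S|$ for $\phi$ (and hence for $\phi'$). Indeed $\phi'$ is \emph{not} equal to $R_0$: unpacking the definitions,
\begin{equation*}
\phi' \;=\; \frac{1}{K+1}\Bigl(1+\sum_{k=1}^{K}r_{0k}\Bigr)
\;=\; R_0+\frac{1}{K+1}\bigl(1-r_{00}\bigr)
\;=\; R_0+\frac{1}{2(K+1)},
\end{equation*}
so $\phi'$ strictly exceeds $R_0$. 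In your all-tie construction $r_{0k}\equiv\tfrac12$, hence $R_0\equiv\tfrac12$ but $\phi'\equiv\frac{K+2}{2(K+1)}>\tfrac12$ deterministically, and therefore $P(\phi'\leq\tfrac12)=0$, not $1$. What your argument (and, as written, the paper's own example) actually establishes is $P(R_0\leq\tfrac12)=1=2\alpha$, i.e.\ tightness for $R_0$ rather than for $\phi'$. To honestly close the proposition as stated for $\phi'$, one must either account for the deterministic offset $\tfrac{1}{2(K+1)}$ (e.g.\ by taking $K\to\infty$, which weakens the statement to asymptotic tightness), or restate the proposition for $R_0$; neither is addressed in your write-up.
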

The full proof is deferred to \Cref{Sec::section of proof of theorem 1}.

    \textit{Proof sketch of \Cref{Thm::PALMRT}}. We control the Type I error of $\phi$ through that of $\phi'$. Due to the orthogonality of $I-H^{ZZ_{\pi_k}}$, we change $\phi$ and $\phi'$ to a form like \eqref{eq::phinew} that satisfies \Cref{prop::Finverse}. Similarly, we denote $r_{ab}$, which forms the matrix $R:=(r_{ab})_{a,b\in\{0,1,\cdots,K\}}$, and $\phi'$ is determined by the sum over the 0-th row, namely $R_0$. Through the construction of the group permutation in \Cref{assump::grouppermutation}, we can prove that the sum of each row is actually identically distributed. Consequently, our problem turns to providing a bound for the set of ``strange'' rows defined by $S$, which can be controlled using the structural properties of the matrix $R$, where the two elements of the matrix $R:=(r_{ab})_{a,b\in\{0,1,\cdots,K\}}$ are complementary with respect to diagonal symmetry, and their sum is $1$.

Now we provide our numerical experiment to verify the validity of our PALMRT with group permutation. Under model~\eqref{eq::model} with $b=0$, we generate the entries of $X$ and $Z$ $i.i.d.$ from a distribution $\mathcal{D}_{data}$, and generate each the entries of $\epsilon$ $i.i.d.$ from a distribution $\mathcal{D}_{noise}$. We choose a permutation group with size $1+K=20$, and evaluate the Type I error at nominal levels $5\%, 10\%$ and $20\%$. 

\begin{table}[htbp]
\centering
\caption{Type I error of PALMRT with group permutation under different $(n,p)$ settings.}
\label{table::Type_I}

\begin{subtable}{0.32\textwidth}
\centering
\caption{$n=300, p=100$}
\label{tab:type1_n300_p100}
\resizebox{\linewidth}{!}{%
\begin{tabular}{ccccc} 
\toprule
\multirow{2}{*}{$\mathcal{D}_{data}$} & \multirow{2}{*}{$\mathcal{D}_{noise}$} & \multicolumn{3}{c}{Type I error} \\
\cmidrule(lr){3-5} 
 & & $5\%$ & $10\%$ & $20\%$ \\
\midrule
Gaussian & Gaussian & 0.69 & 2.37 & 9.12 \\
Gaussian & $t_1$    & 0.75 & 2.47 & 9.37 \\
Gaussian & $t_2$    & 0.72 & 2.52 & 9.37 \\
$t_1$    & Gaussian & 0.47 & 2.04 & 8.97 \\
$t_1$    & $t_1$    & 0.54 & 1.52 & 5.47 \\
$t_1$    & $t_2$    & 0.49 & 1.73 & 7.42 \\
$t_2$    & Gaussian & 0.65 & 2.43 & 8.99 \\
$t_2$    & $t_1$    & 0.84 & 2.26 & 7.56 \\
$t_2$    & $t_2$    & 0.78 & 2.38 & 8.71 \\
\bottomrule
\end{tabular}%
}
\end{subtable}\hfill
\begin{subtable}{0.32\textwidth}
\centering
\caption{$n=600, p=100$}
\label{tab:type1_n600_p100}
\resizebox{\linewidth}{!}{%
\begin{tabular}{ccccc} 
\toprule
\multirow{2}{*}{$\mathcal{D}_{data}$} & \multirow{2}{*}{$\mathcal{D}_{noise}$} & \multicolumn{3}{c}{Type I error} \\
\cmidrule(lr){3-5} 
 & & $5\%$ & $10\%$ & $20\%$ \\
\midrule
Gaussian & Gaussian & 2.70 & 6.24 & 15.6 \\
Gaussian & $t_1$    & 2.96 & 6.82 & 16.3 \\
Gaussian & $t_2$    & 2.94 & 6.71 & 15.6 \\
$t_1$    & Gaussian & 2.25 & 5.96 & 16.0 \\
$t_1$    & $t_1$    & 1.58 & 3.65 & 9.91 \\
$t_1$    & $t_2$    & 2.19 & 5.52 & 13.9 \\
$t_2$    & Gaussian & 2.35 & 6.27 & 15.6 \\
$t_2$    & $t_1$    & 2.46 & 5.39 & 13.0 \\
$t_2$    & $t_2$    & 2.57 & 5.94 & 14.6 \\
\bottomrule
\end{tabular}%
}
\end{subtable}\hfill
\begin{subtable}{0.32\textwidth}
\centering
\caption{$n=600, p=200$}
\label{tab:type1_n600_p200}
\resizebox{\linewidth}{!}{%
\begin{tabular}{ccccc} 
\toprule
\multirow{2}{*}{$\mathcal{D}_{data}$} & \multirow{2}{*}{$\mathcal{D}_{noise}$} & \multicolumn{3}{c}{Type I error} \\
\cmidrule(lr){3-5} 
 & & $5\%$ & $10\%$ & $20\%$ \\
\midrule
Gaussian & Gaussian & 0.74 & 2.44 & 9.20 \\
Gaussian & $t_1$    & 0.73 & 2.46 & 9.08 \\
Gaussian & $t_2$    & 0.69 & 2.32 & 8.94 \\
$t_1$    & Gaussian & 0.50 & 1.97 & 8.68 \\
$t_1$    & $t_1$    & 0.43 & 1.21 & 4.92 \\
$t_1$    & $t_2$    & 0.47 & 1.69 & 7.13 \\
$t_2$    & Gaussian & 0.69 & 2.29 & 9.18 \\
$t_2$    & $t_1$    & 0.84 & 2.17 & 7.41 \\
$t_2$    & $t_2$    & 0.87 & 2.42 & 8.58 \\
\bottomrule
\end{tabular}%
}
\end{subtable}

\end{table}
The simulation results are reported in Table \ref{table::Type_I}. In this simulation, we take both $\mathcal{D}_{data}$ and $\mathcal{D}_{noise}$ to be Gaussian, $t_1$, or $t_2$, and test over $(n,p)=(300,100),(600,100),(600,200)$. In each example, we test 50000 simulations and compute the overall Type I error.

\subsection{Type II error analysis for PALMRT with group permutation}

We have extended the permutation-test methodology to accommodate arbitrary groups under exchangeable noise, yielding finite-sample Type~I error control. In contrast, a systematic theoretical understanding of Type~II error (power) in such group-based settings remains comparatively limited. Existing results typically proceed under additional distributional or structural assumptions on $(X,Z)$. For instance, \citet{wen2025} analyze the Type~II behavior under a model of the form $X=h+\beta'Z+e$, where the coordinates of $e$ are independent with zero mean and bounded variance. Complementary to this line, other recent works—e.g., \citet{guan2024}—provide empirical evaluations that illustrate the practical performance of permutation-based procedures in representative regimes, albeit without offering general finite-sample power guarantees. Motivated by this gap between broad Type~I validity and the less-understood, design-dependent power properties, we aim to develop an optimization approach that, given $(X,Z)$, constructs a valid $\mathcal{P}_{K}$ while ensuring a provably controlled Type~II error (equivalently, a guaranteed separation in power) under minimal additional assumptions. 


\paragraph{Notation.}\label{notations in type II} We first clarify the notation used for Type II error analysis.
\begin{enumerate}
    \item Let $\{\pi_{0},\pi_{1},...,\pi_{K}\}$ be the permutation group of $\{1,2,...,n\}\to\{1,2,...,n\}$ corresponding to $\mathcal{P}_{K}$, with $\pi_{0}$ being the identity mapping.
    \item For any permutation $\pi$ of $\{1,2,...,n\}\to \{1,2,...,n\}$, let $P_{\pi}\in \mathbb{R}^{n\times n}$ be the permutation matrix corresponding to $\pi$. Furthermore, for any matrix $M\in\mathbb{R}^{n\times d}$, we write $M_{\pi}:=P_{\pi} M$.
    \item For vector $v\in\mathbb{R}^{n}$ and $i\in\{1,2,...,n\}$, we let $v_{\pi(i)}$ denote the $\pi(i)$th coordinate of $v$.
    \item Let $e_{i}\in \mathbb{R}^{n}$ be the unit vector whose $i$-th coordinate is $1$ and whose all other coordinates are $0$. Let $w_{i}=H^{Z}e_{i}$ the projection of $e_i$ onto the column space of $Z$. 
    \item Denote $\lambda_1(X,Z,\mathcal{P}_{K},t):= \inf \lambda: \frac{1}{1+K}\left[\sum_{k=0}^{K}\mathbbm{1}\left\{X^{T}H^{ZZ_{\pi_k}}X+X^{T}_{\pi_k}(I-H^{ZZ_{\pi_k}})X\geq\lambda\right\}\right]\leq t$ for any given $X,Z,\mathcal{P}_{K}$ and $t\in[0,1]$.
\end{enumerate}

\subsubsection{An optimization formulation for Type II error control}\label{subsubsection: optimization formulation}

Consider hypotheses $\mathcal{H}_{0}:b=0$ and $H_{1}:b\neq 0$ for the regression problem $Y=bX+\beta Z+\epsilon$, where $\epsilon$ is exchangeable. We use statistics $\phi_1,\phi_2$ to distinguish between $\mathcal{H}_0$ and $\mathcal{H}_{1}$:
\begin{align}\label{statistics for type II}
\phi_1=\frac{1}{1+K}\sum_{k=0}^{K}\mathbbm{1}\left\{ X^{T}(I-H^{ZZ_{\pi_k}})Y \leq X^{T}_{\pi_k}(I-H^{ZZ_{\pi_k}})Y      \right\}\,,~\phi_2 = 1-\phi_1.
\end{align}


Taking $\mathcal{H}_{0}$ to be accepted when $\phi_1,\phi_2\in(\alpha,1]$, the Type I error is at most $4\alpha$. This follows because when $b=0$, $\mathbb{P}[\phi_{1}\leq \alpha]\leq 2\alpha$, and for $\phi_2$, by substituting $-X$ for $X$, we likewise obtain $\mathbb{P}[\phi_2\leq \alpha]\leq 2\alpha$. Consequently, the Type I error is bounded by $\mathbb{P}[\phi_1\leq \alpha]+\mathbb{P}[\phi_2\leq \alpha]\leq 4\alpha$. On the other hand, since
$
X^{T}(I-H^{ZZ_{\pi_{k}}})Y-X^{T}_{\pi_{k}}(I-H^{ZZ_{\pi_{k}}})Y=(X^{T}-X_{\pi_{k}}^{T})(I-H^{ZZ_{\pi_{k}}})(bX+\epsilon)\,,
$
it suffices to consider \eqref{optimization: hypothesis} for rejecting $\mathcal{H}_{0}$.
\begin{align}\label{optimization: hypothesis}
\frac{1}{1+K}\sum_{k=0}^{K}\mathbbm{1}\left\{ b (X^{T}-X_{\pi_{k}}^{T})(I-H^{ZZ_{\pi_{k}}})X \geq (X_{\pi_{k}}^{T}-X^{T})(I-H^{ZZ_{\pi_{k}}})\epsilon    \right\}\notin [\alpha,1-\alpha]\,.
\end{align}
Since the noise term, $(X_{\pi_{k}}^{T}-X^{T})(I-H^{ZZ_{\pi_{k}}})\epsilon$, is hard to analyze exactly but is usually well bounded, we instead consider the following problem \eqref{eq: optimization1}, or, equivalently, $\min_{\mathcal{P}_{K}} \lambda_1(X,Z,\mathcal{P}_{K},\frac{1}{2}\alpha)$, which serves as a sufficient and necessary optimization problem for distinguishing $b\neq 0$ with a small absolute value (details are shown in~\ref{subsubsection: optimization: intuitive}).
\begin{align}\label{eq: optimization1}
\min_{\mathcal{P}_k} \lambda,\ s.t.\ \frac{1}{1+K}\sum_{k=0}^{K}1\left\{ X^{T}H^{ZZ_{\pi_k}}X+X^{T}_{\pi_k}(I-H^{ZZ_{\pi_k}}))X> \lambda \right\}\leq\frac{1}{2}\alpha\,.
\end{align}

The primary challenge now lies in solving \eqref{eq: optimization1}, which presents two principal difficulties. First, the projection matrices $H^{ZZ_{\pi_k}}$ defined over a general permutation group are extremely difficult to analyze. While optimizing over a single permutation may appear straightforward, the simultaneous consideration of all such projections introduces substantial complexity. Second, permutation groups that satisfy \Cref{assump::grouppermutation} are not easily adjustable, and there is no guarantee on how close a near-optimal solution is to the global optimum within such groups. Given the difficulty of optimizing directly over $H^{ZZ_{\pi_k}}$, we instead seek to establish a tractable bound on $\lambda$ in \eqref{eq: optimization1} that can be efficiently optimized. 

To address the problematic term $H^{ZZ_{\pi_k}}$, we present the following Lemma~\ref{lem: upper and lower of optimization objective informal} and Lemma~\ref{lem: decomposition}, which provide bounds for $X^{T}H^{ZZ_{\pi_k}}X$ and $X^{T}{\pi_k}(I-H^{ZZ{\pi_k}})X$, respectively.

\begin{lemma}[Upper and lower bounds of $X^{T}H^{ZZ_{\pi}}X$, informal]\label{lem: upper and lower of optimization objective informal}
$X^{T}H^{ZZ_{\pi}}X$ can be lower bounded as follows:
\begin{align}\label{lower bound term 1}
X^{T}H^{ZZ_{{\pi}}}X\geq \Vert H^{Z}X\Vert^{2}_{2}+\Vert H^{Z_{{\pi}}}(X-H^{Z}X)\Vert^{2}_{2}.    
\end{align}
Furthermore, if each coordinate of $Z$ is independent and $K$-subgaussian, with mean of $0$ and variance of $1$, then we can obtain the following upper and lower bounds of $X^{T}H^{ZZ_{\pi}}X$:
\begin{align*}
X^{T}H^{ZZ_{\pi}}X=\Vert H^{ZZ_{\pi}}X\Vert^{2}_{2}\leq \Vert H^{Z}X\Vert^{2}_{2}+\frac{1}{1-\frac{C(p+tr(P_{\pi}))}{n}}\Vert H^{Z_{\pi}}(X-H^{Z}X)\Vert^{2}_{2}\,,\forall X\,,
\end{align*}
where $C$ is a constant that depends only on $K$.
\end{lemma}

\begin{lemma}\label{lem: decomposition}
For any permutation $\pi$ and $X$, we have:
\begin{align}\label{lower and upper bound term 2}
     \left\vert X^{T}_{\pi}(I-H^{ZZ_{\pi}})X-\frac{1}{2} X^{T}_{\pi} (I-H^{Z_{\pi}})(I-H^{Z})X\right\vert \leq \frac{1}{2}\Vert (I-H^{Z})X\Vert^{2}_{2}\,.
\end{align}
\end{lemma}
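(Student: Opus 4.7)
The plan is to rewrite the left-hand side as a single bilinear form in two residual vectors, bound it by Cauchy--Schwarz, and then invoke a norm identity that is forced by the permutation structure. I will write $A := I - H^Z$, $B := I - H^{Z_\pi}$, $u := AX$, $v := BX_\pi$, and let $H^{AZ_\pi}$ denote the orthogonal projection onto $\mathrm{col}((I-H^Z)Z_\pi)$. Since $\mathrm{col}(Z) \subset \mathrm{col}([Z,Z_\pi])$, the standard orthogonal-sum decomposition of a projection onto a sum of subspaces gives $H^{ZZ_\pi} = H^Z + H^{AZ_\pi}$, so in particular $H^{AZ_\pi} H^Z = 0$ and $H^{AZ_\pi}$ has range in $\mathrm{col}(Z)^{\perp}$.

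The first step is to rewrite both terms using only $u$, $v$, and $H^{AZ_\pi}$. Because $H^{Z_\pi} X_\pi \in \mathrm{col}(Z_\pi) \subset \mathrm{col}([Z, Z_\pi])$, one has $(I - H^{ZZ_\pi}) X_\pi = (I - H^{ZZ_\pi}) v$; combining with $H^Z u = 0$ and symmetry of $I - H^{ZZ_\pi}$, I would derive $X_\pi^T(I - H^{ZZ_\pi}) X = v^T u - v^T H^{AZ_\pi} u$. A direct expansion of the other term, using that $X_\pi^T(I - H^{Z_\pi}) = (BX_\pi)^T = v^T$, gives $X_\pi^T(I - H^{Z_\pi})(I - H^Z) X = v^T u$. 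Subtracting yields the compact identity
\[ X_\pi^T(I - H^{ZZ_\pi}) X - \tfrac{1}{2} X_\pi^T (I - H^{Z_\pi})(I - H^Z) X = v^T\bigl(\tfrac{1}{2} I - H^{AZ_\pi}\bigr) u. \]
Since $\tfrac{1}{2} I - H^{AZ_\pi}$ is symmetric with spectrum $\{\pm \tfrac{1}{2}\}$, its operator norm equals $\tfrac{1}{2}$, so Cauchy--Schwarz yields $\bigl|v^T(\tfrac{1}{2} I - H^{AZ_\pi}) u\bigr| \leq \tfrac{1}{2} \|v\| \|u\|$.

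The final step, which I expect to be the main obstacle, is to establish the hidden norm identity $\|v\| = \|u\|$; without it, the preceding Cauchy--Schwarz estimate would only control the expression by a multiple of $\|X\|^2$, which is strictly weaker than the claim. This identity is not visible from the left-hand side and is precisely where the permutation structure enters: because $P_\pi$ is orthogonal and $Z_\pi = P_\pi Z$, one has $H^{Z_\pi} = P_\pi H^Z P_\pi^T$, and therefore $H^{Z_\pi} X_\pi = P_\pi H^Z P_\pi^T (P_\pi X) = P_\pi H^Z X$, whence $\|H^{Z_\pi} X_\pi\| = \|H^Z X\|$. Combined with $\|X_\pi\| = \|X\|$, this gives $\|v\|^2 = \|X_\pi\|^2 - \|H^{Z_\pi} X_\pi\|^2 = \|X\|^2 - \|H^Z X\|^2 = \|u\|^2$. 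Substituting $\|v\| = \|u\|$ into the Cauchy--Schwarz bound closes the argument at $\tfrac{1}{2}\|u\|^2 = \tfrac{1}{2}\|(I - H^Z)X\|_2^2$, as required.
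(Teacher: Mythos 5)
Your proof is correct, and it takes a genuinely different route from the paper's. The paper expands $(I-H^Z)X$ and $(I-H^{Z_\pi})X_\pi$ in orthogonal bases built around $\mathrm{col}([Z,Z_\pi])^\perp$, writes the target as $\tfrac{1}{2}\bigl[(X^u_\pi)^T X^u - w^T v\bigr]$ for the resulting components, and closes with a double application of Cauchy--Schwarz, $(|a|+|b|)^2 \leq (\|X^u_\pi\|^2+\|w\|^2)(\|X^u\|^2+\|v\|^2)$. Your argument instead collapses the whole left-hand side to a single bilinear form $v^T(\tfrac{1}{2}I - H^{AZ_\pi})u$ via the projection-sum identity $H^{ZZ_\pi}=H^Z+H^{AZ_\pi}$, and then uses that $\tfrac{1}{2}I-P$ has operator norm $\tfrac{1}{2}$ for any orthogonal projection $P$. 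This makes it transparent \emph{why} the constant is exactly $\tfrac{1}{2}$, whereas in the paper's presentation it emerges from the arithmetic of the double Cauchy--Schwarz step.

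A further point in favor of your write-up: the paper's final inequality, $(\|X^u_\pi\|^2+\|w\|^2)(\|X^u\|^2+\|v\|^2)\leq\|(I-H^Z)X\|_2^4$, silently relies on $\|(I-H^{Z_\pi})X_\pi\|_2 \leq \|(I-H^Z)X\|_2$, a fact that holds (with equality) only because $\pi$ is a permutation, not for arbitrary $(X_\pi,Z_\pi)$. You isolate exactly this point as ``the main obstacle'' and prove it cleanly via $H^{Z_\pi}=P_\pi H^Z P_\pi^T$, giving $(I-H^{Z_\pi})X_\pi = P_\pi(I-H^Z)X$ and hence $\|v\|=\|u\|$. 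So your proof not only reorganizes the algebra but also supplies a justification that the paper leaves implicit. The paper's version is slightly more self-contained in that it never names $H^{AZ_\pi}$, but yours is more informative about where the permutation assumption actually enters.
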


Analogous to the definition of $\lambda_1(X,Z,\mathcal{P}_K,t)$, we define $$\lambda_2(X,Z,\mathcal{P}_{K},t)=\inf \lambda: \frac{1}{1+K}\sum_{k=0}^{K}\mathbbm{1}\left\{\frac{1}{2}v^{T}_{\pi_k}v+\Vert H^{Z_{\pi_k}}v\Vert^{2}_{2}\geq \lambda\right\}\leq t.$$
Based on the bounds of both $X^{T}H^{ZZ_{\pi}}X$ and $X^{T}_{\pi}(I-H^{ZZ_{\pi}})X$ according to Lemma~\ref{lem: upper and lower of optimization objective informal} and Lemma~\ref{lem: decomposition}, we establish the optimization problem represented in~\eqref{optimization objective for alg}, where we denote $v = (I - H^{Z})X$ for simplicity. In particular, we have the following comparison between $\lambda_1$ and $\lambda_2$, represented in~\eqref{upper of original optimization objective} and~\eqref{lower of original optimization objective}, where~\eqref{upper of original optimization objective} holds when $Z$ is independent of $\mathcal{P}_{K}$ and follows Lemma~\ref{lem: upper and lower of optimization objective informal}, and ~\eqref{upper of original optimization objective} is always true. Details on the validity of~\eqref{optimization objective for alg} are provided in Appendix \ref{paragraph: upper and lower of optimization objective}.

\begin{align}\label{upper of original optimization objective}
\lambda_1(X,Z,\mathcal{P}_{K},\frac{1}{2}\alpha)\leq \frac{n}{n-C(p+m)}\lambda_2(X,Z,\mathcal{P}_{K},\frac{1}{4}\alpha)+\frac{n}{2n-2C(p+m)}\Vert v\Vert^{2}_{2}+\Vert H^{Z}X\Vert^{2}_{2}\,.
\end{align}
\begin{align}\label{lower of original optimization objective}
\lambda_1(X,Z,\mathcal{P}_{K},\frac{1}{2}\alpha)\geq\lambda_{2}(X,Z,\mathcal{P}_{K},\frac{1}{2}\alpha)+\Vert H^{Z}X\Vert^{2}_{2}-\frac{1}{2}\Vert v\Vert^{2}_{2}\,.
\end{align}


\begin{align}\label{optimization objective for alg}
\min_{\mathcal{P}_{K}}\lambda_2(X,Z,\mathcal{P}_{K},\frac{1}{4}\alpha)\,,\,i.e.\,\min_{\mathcal{P}_{K}}\lambda,\, s.t.\, \frac{1}{1+K}\sum_{k=0}^{K}\mathbbm{1}\left\{\frac{1}{2}\cdot v^{T}_{\pi_k}v +\Vert H^{Z_{\pi_k}}v\Vert^{2}_{2}> \lambda\right\}\leq \frac{1}{4}\alpha. 
\end{align}
\paragraph{Remark} 
When dependence exists between $Z$ and $\mathcal{P}_{K}$, or $Z$ does not satisfy Lemma~\ref{lem: upper and lower of optimization objective informal}, $\lambda_2(X,Z,\mathcal{P}_{K},\frac{1}{4}\alpha)$ no longer yields a theoretically guaranteed upper estimate for $\lambda_1(X,Z,\mathcal{P}_{K},\frac{1}{2}\alpha)$. Consequently, its effectiveness in reducing Type II error remains an empirical matter. We leave a more refined analysis of Type II error control as an open problem.

\subsubsection{Algorithm design for optimizing \texorpdfstring{\eqref{optimization objective for alg}}{(17)}}
In this section, we propose an algorithm for solving \eqref{optimization objective for alg}. The core idea is to consider permutation groups with a structured decomposition: the group is constructed as the composition of multiple subgroups, where the permutations within each subgroup act only on a distinct subset of $\{1,2,\dots,n\}$. Under this structure, $\mathcal{P}_{K}$ admits a tractable bound on $\lambda_2$ as follows:$\left\vert \lambda_2(X,Z,\mathcal{P}_{K},\frac{1}{4}\alpha)-\mathbb{E}_{\pi_k}\left[\frac{1}{2}v^{T}_{\pi_k}v+\Vert H^{Z}v_{\pi_k}\Vert^{2}_{2}\right]\right\vert\leq O(\alpha)\Vert v\Vert^{2}_{2}$,
where $\pi_k$ is uniformly sampled from $\mathcal{P}_{K}$ (see Theorem~\ref{thm: value of optimization objective} and Proposition~\ref{proposition: high probability bound on group decomposition}). This reduces~\eqref{optimization objective for alg} to \eqref{optimization problem for alg}, for which we can design an efficient algorithm that finds a $\mathcal{P}_{K}$ with a valid $\lambda_2$. 


\paragraph{Notation for the algorithm design}
We now clarify the notation used in this part.
\begin{enumerate}
    \item Let $v=(I-H^{Z})X\in\mathbb{R}^{n}$, with $i$-th coordinate $v_{i}$. Let $\bar{v}=\frac{1}{n}\sum_{i=1}^{n}v_i$, $a_i=v_{i}-\bar{v}$.
    \item Let $\vec{1}\in\mathbb{R}^{n}$ be the all-ones vector, $e_i\in\mathbb{R}^{n}$ be the unit vector whose $i$-th coordinate is $1$ and whose other coordinates are $0$. Let $w_{i}=H^{Z}e_{i}$ be the projection of $e_{i}$ onto the column space of $Z$ and $v^{*}=\bar{v}H^{Z}\vec{1}$. 
    \item Let $b_{i}=\Vert w_{i}\Vert^{2}_{2}$, $\bar{b}=\frac{1}{n}\sum_{i=1}^{n}b_{i}$, $c_i=a^{2}_{i}$, $\bar{c}=\frac{1}{n}\sum_{i=1}^{n}c_i$.
\end{enumerate}

\paragraph{The group decomposition idea and its high probability guarantee for $\lambda_2$ in~\eqref{optimization objective for alg}}
We consider the permutation groups with the following structure: suppose that $\{1,2,...,n\}$ is partitioned as $\{1,2,...,n\}=S_{1}\cup S_{2}\cup ... \cup S_{k}$ and $S_i\cap S_j =\emptyset,\forall 1\leq i\neq j\leq k$. For each $i$, define $\mathcal{Q}_{i}=\{\sigma_{i}\vert \sigma_{i}(j)=j,\forall j\notin S_{i}, j\in\{1,2,\cdots,n\}\}$, the set of permutations of $\{1,2,...,n\}$ that act non-trivially only on indices within $S_i$. We then construct $\mathcal{P}_{K}$ as the set of permutation matrices corresponding to
$
\pi\in\{\sigma_{1}\circ \sigma_{2}\circ...\circ \sigma_{k}\vert \sigma_{i}\in \mathcal{Q}_{i}\}.
$
By this construction, $H^{Z}v_{\pi}$ admits the following form:
$$
H^{Z}v_{\pi} = \sum_{i=1}^{k}\sum_{j\in S_{i}}v_{j}w_{\sigma_{i}(j)}=\bar{v}\sum_{i=1}^{n}w_{i}+\sum_{i=1}^{k}\sum_{j\in S_{i}}(v_{j}-\bar{v})w_{\sigma_{i}(j)}:=v^{*}+\sum_{i=1}^{k}u_{i}\,,
$$
where $u_{i}=\sum_{j\in S_{i}}(v_{j}-\bar{v})w_{\sigma_{i}(j)}$. Thus, $H^{Z}v_{\pi}-v^{*}$ can be viewed as a summation of independent random variables when $\pi$ is chosen uniformly in $\{\sigma_{1}\circ \sigma_{2}\circ...\circ \sigma_{k}\vert \sigma_{i}\in \mathcal{Q}_{i}\}$.

According to Theorem~\ref{thm: informal for high prob bound}, this structured form of $H^{Z_{\pi}}v$ can be analyzed when $v$ is not distributed in a low dimensional subspace of $\mathbb{R}^{n}$. Consequently, for a permutation group $\mathcal{P}_{K}$ constructed as above and with $P{\pi_k}$ drawn uniformly from $\mathcal{P}_{K}$, we obtain:
\begin{align}\label{high probability bound of objective for alg}
\mathbb{P}\left[\frac{1}{2}v^{T}_{\pi_k}v+\Vert H^{Z_{\pi_k}}v\Vert^{2}_{2} \leq \mathbb{E}_{\pi_k}\left[\frac{1}{2}v^{T}_{\pi_k}v+\Vert H^{Z_{\pi_k}}v\Vert^{2}_{2}\right] +o(1)\Vert v\Vert^{2}_{2}\right]\to 1 \,(\text{as }n\to \infty)\,.
\end{align}
This implies that, for any $\alpha > 0$, the quantity $\lambda_2$ admits the following upper bound:\\$\lambda_2(X,Z,\mathcal{P}_{K},\frac{1}{4}\alpha)\leq\mathbb{E}_{\pi_k}\left[\frac{1}{2}v^{T}_{\pi_k}v+\Vert H^{Z_{\pi_k}}v\Vert^{2}_{2}\right] +o(1)\Vert v\Vert^{2}_{2}=\mathbb{E}_{\pi_k}\left[\frac{1}{2}v^{T}_{\pi_k}v+\Vert H^{Z}v_{\pi_k}\Vert^{2}_{2}\right] +o(1)\Vert v\Vert^{2}_{2}$. 
\begin{theorem}[Informal]\label{thm: informal for high prob bound}
Suppose $v_{i}\in\mathbb{R}^{n}(i=1,2,...,m)$ are independent with $\mathbb{E}[v_i]=0$, $\Vert v_i\Vert_2^{2}\leq o(1)\sum_{i=1}^{m}\mathbb{E}[\Vert v_{i}\Vert^{2}_{2}]$, and for any $w\in \mathcal{S}^{n-1}$, we have: $\sum_{i=1}^{n}\mathbb{E}\left[(w^{T}v_{i})^{2}\right]\in o(1)\sum_{i=1}^{m}\mathbb{E}[\Vert v_{i}\Vert^{2}_{2}]$. Then $\Vert \sum_{i=1}^{m}v_{i}\Vert^{2}_{2}\leq (1+o(1))\sum_{i=1}^{m}\mathbb{E}[\Vert v_{i}\Vert^{2}_{2}]$ holds with high probability.
\end{theorem}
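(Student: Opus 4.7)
Write $S := \sum_{i=1}^{m} v_i$ and $V := \sum_{i=1}^{m} \mathbb{E}\|v_i\|_2^2$. Independence combined with $\mathbb{E}[v_i]=0$ immediately gives $\mathbb{E}\|S\|_2^2 = V$, so the target is concentration of $\|S\|_2^2$ around its mean. My plan is the classical second-moment / Chebyshev route: if I can show $\mathrm{Var}(\|S\|_2^2) = o(V^2)$, then Chebyshev yields $\mathbb{P}\bigl[\|S\|_2^2 > (1+\epsilon)V\bigr] = o(1)$ for every fixed $\epsilon>0$, which is exactly the claim $\|S\|_2^2 \leq (1+o(1))V$ w.h.p. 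The entire proof therefore reduces to estimating $\mathbb{E}\|S\|_2^4$.

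Decompose $\|S\|_2^2 = A + 2B$ with $A := \sum_i \|v_i\|_2^2$ and $B := \sum_{i<j} \langle v_i, v_j\rangle$, so that $\mathbb{E}\|S\|_2^4 = \mathbb{E}A^2 + 4\,\mathbb{E}[AB] + 4\,\mathbb{E}B^2$. The cross term $\mathbb{E}[AB]$ vanishes by independence and centering: if $k\notin\{i,j\}$, then $v_k$ is independent of $(v_i,v_j)$ and $\mathbb{E}\langle v_i,v_j\rangle = \langle \mathbb{E}v_i,\mathbb{E}v_j\rangle = 0$; if $k\in\{i,j\}$, say $k=i$, then $\mathbb{E}[\|v_i\|_2^2\langle v_i,v_j\rangle] = \mathbb{E}[\|v_i\|_2^2 v_i]^{\!\top}\mathbb{E}[v_j] = 0$. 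For the diagonal piece, $\mathbb{E}A^2 = V^2 + \sum_i \mathrm{Var}(\|v_i\|_2^2) \leq V^2 + \sum_i \mathbb{E}\|v_i\|_2^4$, and hypothesis~2 (the a.s.\ bound $\|v_i\|_2^2 \leq o(1)\,V$) gives $\sum_i \mathbb{E}\|v_i\|_2^4 \leq o(1)\,V \cdot \sum_i \mathbb{E}\|v_i\|_2^2 = o(V^2)$.

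The off-diagonal piece is where hypothesis~3 enters. By independence and centering, all terms in $\mathbb{E}B^2$ with $\{i,j\}\neq\{k,l\}$ vanish (the disjoint case factors through $\mathbb{E}v_i=0$, and the one-index-overlap case factors through $\mathbb{E}v_{\text{other}}=0$), so only the matched pairs survive and
\begin{equation*}
\mathbb{E}B^2 \;=\; \sum_{i<j}\mathbb{E}\bigl[\langle v_i, v_j\rangle^2\bigr] \;=\; \sum_{i<j}\mathrm{tr}(\Sigma_i\Sigma_j),\qquad \Sigma_i := \mathbb{E}[v_i v_i^{\!\top}].
\end{equation*}
Setting $\Sigma := \sum_i \Sigma_i$, I bound $\sum_{i<j}\mathrm{tr}(\Sigma_i\Sigma_j) \leq \tfrac{1}{2}\mathrm{tr}(\Sigma^2) \leq \tfrac{1}{2}\|\Sigma\|_{\mathrm{op}} \cdot \mathrm{tr}(\Sigma)$, using positive semi-definiteness. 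Now $\mathrm{tr}(\Sigma) = V$, while hypothesis~3 is precisely the statement $\|\Sigma\|_{\mathrm{op}} = \sup_{\|w\|_2=1}\sum_i \mathbb{E}[(w^{\!\top}v_i)^2] \leq o(1)\,V$. Hence $\mathbb{E}B^2 = o(V^2)$, and collecting all pieces gives $\mathrm{Var}(\|S\|_2^2) = \mathbb{E}\|S\|_2^4 - V^2 \leq o(V^2)$, closing the plan via Chebyshev.

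The main obstacle is keeping the three occurrences of ``$o(1)$'' quantitative and mutually compatible: the informal statement leaves the asymptotic regime implicit, so the formal version (Theorem~\ref{thm: high prob bound of optimization objective}) will need to choose a single rate $\delta=\delta(n)\to 0$ dominating all three hypotheses and track how it propagates into the final tail bound. A secondary concern is that \eqref{high probability bound of objective for alg} is applied in a context where one wants a failure probability shrinking faster than $1-o(1)$; in that case the Chebyshev step should be upgraded to a higher-moment or Bernstein-type estimate (leveraging the uniform bound $\|v_i\|_2^2 \leq o(1)V$ as a ``sub-Gaussian envelope''), but the structural calculation above is the same.
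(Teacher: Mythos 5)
Your argument is correct as a proof of the informal claim, but it takes a genuinely different route from the paper's. You go via the second moment: decompose $\|S\|_2^2 = A + 2B$, kill the cross term $\mathbb{E}[AB]$ by independence and centering, bound $\mathrm{Var}(A)$ by the almost-sure envelope (hypothesis~2), and bound $\mathbb{E}B^2 = \sum_{i<j}\mathrm{tr}(\Sigma_i\Sigma_j)$ via $\mathrm{tr}(\Sigma^2)\le\|\Sigma\|_{\mathrm{op}}\mathrm{tr}(\Sigma)$ (hypothesis~3). All of these steps are sound, and the observation that hypothesis~3 is exactly $\|\Sigma\|_{\mathrm{op}}\le o(1)V$ is a clean way to see why that condition is the right one. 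The paper, by contrast, proves the formal version (Theorem~\ref{thm: high prob bound of optimization objective}) by a martingale/MGF argument: it sets $X_i=\sum_{j\le i}v_j$, $Y_i=\|X_i\|_2^2-\sum_{j\le i}\|v_j\|_2^2$, observes $Y_{i+1}=Y_i+2X_i^\top v_{i+1}$, proves a sub-martingale maximal inequality for $\sup_i|X_i^\top w|$ (Lemma~\ref{lem: bound inner product}), an a.s.\ bound on $\sup_i\|X_i\|_2$ (Lemma~\ref{lem:bound norm}), and then exponentiates $Y_i$ on the good event. The payoff of the paper's heavier machinery is a failure probability of order $Cn^{-k}$ for any $k$, whereas Chebyshev only yields $o(1)$. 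You correctly flag this in your last paragraph: the downstream application in Proposition~\ref{proposition: high probability bound on group decomposition} (and the estimation step in \Cref{subsubsection: comparison with random permutation} that samples $m\in\Omega(1/\alpha^2)$ permutations) needs polynomially small tails to survive union bounds, so a Chebyshev-only proof of the informal statement does not substitute for the formal theorem. Your suggestion to upgrade to a Bernstein/higher-moment bound using the a.s.\ envelope $\|v_i\|_2^2\le o(1)V$ would indeed close that gap, and is morally what the paper's martingale MGF argument accomplishes, so the two proofs would then be close in spirit even if organized differently (pair-sum variance calculation vs.\ sequential martingale increments).
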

\paragraph{Algorithm design and performance analysis}
We now propose Algorithm~\ref{alg structure} to solve the optimization problem~\eqref{optimization problem for alg}, where the permutation group $\mathcal{P}_{K}$ is endowed with the structure discussed above.
\begin{align}\label{optimization problem for alg}
\min_{\mathcal{P}_{K}}\mathbb{E}_{\pi_k}\left[\frac{1}{2}v^{T}_{\pi_k}v+\Vert H^{Z}v_{\pi_k}\Vert^{2}_{2}\right]
\end{align}
\begin{algorithm}[t]
	\caption{Construction of the permutation group} \label{alg structure}
	\begin{algorithmic}
		\STATE \textbf{Input:} Vectors $w_1,w_2,...,w_n\in\mathbb{R}^{n}$ $(w_{i}=H^{Z}e_{i})$, $\vec{v}=(v_{1},...,v_{n})\in\mathbb{R}^{n}$
        \STATE Compute $\bar{v}=\frac{1}{n}\sum_{i=1}^{n}v_{i}$, let $a_{i}=v_{i}-\bar{v}(i=1,2,...,n)$, $b_{i}=\Vert w_{i}\Vert^{2}_{2}(i=1,2,...,n)$
        \STATE Let $\bar{b}=\frac{1}{n}\sum_{i=1}^{n}b_{i}$,  $c_{i}=a^{2}_{i}$, $\bar{c}=\frac{1}{n}\sum_{i=1}^{n}c_i$, $M=\max_{i\in\{1,2,...,n\}}a^{2}_{i}$, $S=\sum_{i=1}^{n}a^{2}_{i}$.
        \STATE Let $I_{1}=\{i\vert (c_i-\bar{c})(b_i-\bar{b})\geq 0\}$,
        \STATE $I_2 = \{i\vert (c_i-\bar{c}<0)\land (b_{i}-\bar{b}>0)   \}$,
        \STATE $I_3=\{i\vert (c_i-\bar{c}>0)\land (b_i-\bar{b}<0)\}$.
        \STATE Call \textbf{Rearrange}~\ref{alg rearrange} on $I_{1},I_{2},I_{3}$, with vectors $(a_i,c_i)(i=1,2,...,n)$ and parameter $M$.
        \STATE Let the output above be $J_1,J_2,J_3$.
        \STATE Call \textbf{Partition}~\ref{alg partition} with parameter $M^{\frac{1}{3}}S^{\frac{2}{3}}$ and obtain subsets $S_{1},S_{2},...,S_{m}$ of $\{1,2,...,n\}$
        \STATE Let $\mathcal{Q}_{i}$ be the set of all permutations on $S_{i}$.
        \STATE Return $\mathcal{P}_{K}=\{P_{\pi}\vert \pi\in\sigma_1\circ...\circ\sigma_{k}, \sigma_{i}\in\mathcal{Q}_{i}\}$.
	\end{algorithmic}
\end{algorithm}
In Algorithm~\ref{alg structure}, the index set $\{1,2,\dots,n\}$ is first partitioned into three subsets $I_1$, $I_2$, and $I_3$ according to the values of $b_i - \bar{b}$ and $c_i - \bar{c}$. Subsequently, the \textbf{Rearrange} step~\ref{alg rearrange} transfers a small portion of elements from $I_2$ and $I_3$ into $I_1$, yielding three new subsets $J_1$, $J_2$, and $J_3$. This adjustment ensures that the expectation in \eqref{optimization problem for alg} can be effectively controlled. In the final step, \textbf{Partition}~\ref{alg partition}, we carefully prescribe the size of each subset $S_i$. This construction simultaneously achieves a small value of $\mathbb{E}_{\pi_k}\left[\frac{1}{2}v^{T}_{\pi_k}v+\Vert H^{Z}v_{\pi_k}\Vert^{2}_{2}\right]$ and satisfies the high‑probability conditions required in Lemma~\ref{lem: approximation of optimization objective}. A detailed description of Algorithm~\ref{alg structure} is provided in \Cref{paragraph: algorithm detail}.

We now evaluate $\lambda_2(X,Z,\mathcal{P}_{K},\frac{1}{4}\alpha)$ for the permutation group $\mathcal{P}_K$ produced by our algorithm, and compare it with the random permutation scheme described in Assumption~\ref{assump::iidpermute}.
On one hand, let $\pi'$ be drawn uniformly from the set of all permutations of $\{1,2,\dots,n\}$. Then, as shown in \ref{subsubsection: comparison with random permutation}, we obtain:
\begin{align*}
\mathbb{E}_{\pi_k}\left[\frac{1}{2}v^{T}_{\pi_{k}}v+\Vert H^{Z}v_{\pi_{k}}\Vert^{2}_{2}\right]&\leq \mathbb{E}_{\pi^{'}}\left[\frac{1}{2}v^{T}_{\pi^{'}}v+\Vert H^{Z}v_{\pi^{'}}\Vert^{2}_{2}\right]+\vert J_{2}\vert (\bar{b}_{2}-\bar{b})(\bar{c}_2-\bar{c})+\vert J_{3}\vert (\bar{b}_{3}-\bar{b})(\bar{c}_3-\bar{c})\\
&+o(1)\Vert v\Vert^{2}_{2}\,,
\end{align*}
where $\bar{b}_{k}=\frac{1}{\vert J_k\vert}\sum_{i\in J_k}b_{i}$, $\bar{c}_k=\frac{1}{\vert J_k\vert}\sum_{i\in J_k}c_i$ $(k=2,3)$, and both of $(\bar{b}_2-\bar{b})(\bar{c}_2-\bar{c})$ and $(\bar{b}_3-\bar{b})(\bar{c}_3-\bar{c})$ are negative. On the other hand, Theorem~\ref{thm: lower bound of lambda informal} provides a strict lower bound of both $\lambda_2(X,Z,\mathcal{P}_K,\frac{1}{4}\alpha)$ and $\lambda_2(X,Z,\mathcal{P}_n,\frac{1}{4}\alpha)$. Combining \eqref{high probability bound of objective for alg} with Theorem~\ref{thm: lower bound of lambda informal} yields Proposition~\ref{prop: gap between ours and random permutation}, which implies that, as $\alpha\to 0$ and neglecting the $o(1)\Vert v\Vert^{2}_{2}$ term, $\lambda_2(X,Z,\mathcal{P}_{K},\frac{1}{4}\alpha)$ is asymptotically no larger than $\lambda_2(X,Z,\mathcal{P}_{n},\frac{1}{2}\alpha)$ when $n\to\infty$, with a provable gap of $\vert J_{2}\vert (\bar{b}_{2}-\bar{b})(\bar{c}_2-\bar{c})+\vert J_{3}\vert (\bar{b}_{3}-\bar{b})(\bar{c}_3-\bar{c})$, which depends on $X,Z$.

\begin{theorem}[Informal lower bound for $\lambda_2$]\label{thm: lower bound of lambda informal}
For any $\mathcal{P}_{K}$ and $\alpha>0,$ the quantity $\lambda_2$ corresponding to $\mathcal{P}_{K}$ satisfies:
$$
\lambda_2(X,Z,\mathcal{P}_{K},\alpha)\geq \mathbb{E}\left[\frac{1}{2}v^{T}_{\pi_{k}}v+\Vert H^{Z}v_{\pi_{k}}\Vert^{2}_{2}\right]-O(\alpha)\Vert v\Vert^{2}_{2}.
$$
Furthermore, suppose $\pi_{1},...,\pi_{m}$ are sampled independently and uniformly from $\mathcal{P}_{K}$. Then if $m\geq \frac{1}{\alpha^{2}}$, for $\lambda$ such that $\frac{1}{m}\sum_{i=1}^{m}\mathbbm{1}\left\{ \frac{1}{2}v^{T}_{\pi_{i}}v+\Vert H^{Z_{\pi_i}}v\Vert^{2}_{2}  \leq \lambda  \right\}\geq 1-\frac{1}{4}\alpha$, with high probability we have: $\lambda\geq\mathbb{E}_{\pi_k}\left[\frac{1}{2}v^{T}_{\pi_{k}}v+\Vert H^{Z}v_{\pi_{k}}\Vert^{2}_{2}\right]-O(\alpha)\Vert v\Vert^{2}_{2}$.
\end{theorem}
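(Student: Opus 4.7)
The plan is to view $\lambda_2(X,Z,\mathcal{P}_K,\alpha)$ as a quantile of the random variable $Y := \tfrac{1}{2}v^{T}_{\pi}v + \|H^{Z_{\pi}}v\|_{2}^{2}$, where $\pi$ is drawn uniformly from $\mathcal{P}_{K}$, and to exploit that $Y$ is deterministically bounded: by Cauchy--Schwarz $|v^{T}_{\pi}v|\le\|v\|_{2}^{2}$, and $H^{Z_{\pi}}$ being an orthogonal projection gives $0\le\|H^{Z_{\pi}}v\|_{2}^{2}\le\|v\|_{2}^{2}$, so $Y\in[-\tfrac{1}{2}\|v\|_{2}^{2},\tfrac{3}{2}\|v\|_{2}^{2}]$. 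Since $\mathcal{P}_{K}$ is a group, $\pi^{-1}$ is uniform whenever $\pi$ is; together with the identity $H^{Z_{\pi}}=P_{\pi}H^{Z}P_{\pi}^{T}$ and the scalar symmetry $v^{T}P_{\pi}v=v^{T}P_{\pi}^{T}v$, this gives $\E[Y]=\E[\tfrac{1}{2}v^{T}_{\pi}v+\|H^{Z}v_{\pi}\|_{2}^{2}]$, matching the target on the right-hand side of the theorem. For the first (deterministic) claim, writing $\lambda_{2}:=\lambda_2(X,Z,\mathcal{P}_K,\alpha)$ and splitting the expectation gives
\begin{equation*}
\E[Y]\;\le\;\lambda_{2}\,\Pr[Y<\lambda_{2}]+\tfrac{3}{2}\|v\|_{2}^{2}\,\Pr[Y\ge\lambda_{2}]\;\le\;\lambda_{2}+\tfrac{3}{2}\alpha\|v\|_{2}^{2},
\end{equation*}
since $\Pr[Y\ge\lambda_{2}]\le\alpha$ by definition of $\lambda_{2}$. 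Rearranging yields the first inequality with an explicit constant $3/2$ in the $O(\alpha)$ factor.

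For the second (empirical) claim the approach is by contradiction. Fix an absolute constant $C$ to be chosen, set $\lambda^{\star}:=\E[Y]-C\alpha\|v\|_{2}^{2}$, and suppose $\lambda<\lambda^{\star}$. Since $\{Y_{i}>\lambda^{\star}\}\subseteq\{Y_{i}>\lambda\}$, the hypothesis gives $\tfrac{1}{m}\sum_{i=1}^{m}\one\{Y_{i}>\lambda^{\star}\}\le\tfrac{1}{4}\alpha$. The next step is a reversed Markov-type lower bound on the population tail: decomposing $\E[Y]-\lambda^{\star}=\E[(Y-\lambda^{\star})\one\{Y>\lambda^{\star}\}]+\E[(Y-\lambda^{\star})\one\{Y\le\lambda^{\star}\}]$, bounding the first summand by $(\tfrac{3}{2}\|v\|_{2}^{2}-\lambda^{\star})\Pr[Y>\lambda^{\star}]\le(2+C\alpha)\|v\|_{2}^{2}\Pr[Y>\lambda^{\star}]$ (using $\E[Y]\ge-\tfrac{1}{2}\|v\|_{2}^{2}$), and the second by $0$, yields $\Pr[Y>\lambda^{\star}]\ge C\alpha/(2+C\alpha)$. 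With $m\ge 1/\alpha^{2}$, Bernstein's inequality applied to the $[0,1]$-valued i.i.d.\ indicators $\one\{Y_{i}>\lambda^{\star}\}$ (whose mean is of order $\alpha$) shows that the empirical mean concentrates around this population value on an event of probability at least $1-\exp(-\Omega(1/\alpha))$; choosing $C$ large enough so that $C\alpha/(2+C\alpha)-\tfrac{1}{4}\alpha$ exceeds the Bernstein deviation produces a contradiction. Hence $\lambda\ge\lambda^{\star}=\E[Y]-C\alpha\|v\|_{2}^{2}$, which is the claimed empirical lower bound.

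\textbf{Main obstacle.} The proof is structurally straightforward once one recognises that $Y$ is uniformly $O(\|v\|_{2}^{2})$, so the only delicate accounting is in the empirical claim. At sample size $m=1/\alpha^{2}$ the Hoeffding resolution is only $O(\alpha)$, which exactly matches the size of the gap between $\Pr[Y>\lambda^{\star}]$ and $\tfrac{1}{4}\alpha$; a naive application of Hoeffding therefore only gives a constant failure probability, and one must either (i) invoke Bernstein (using that the mean itself is $\Theta(\alpha)$) to upgrade to $\exp(-\Omega(1/\alpha))$, or (ii) mildly strengthen $m\ge 1/\alpha^{2}$ to $m\ge c\log(n)/\alpha^{2}$ to obtain a standard $1-o_{n}(1)$ rate. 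Either tweak is consistent with the informal $O(\alpha)\|v\|_{2}^{2}$ term and with the ``high probability'' language of the statement, and both reduce to routine bookkeeping of the constant $C$ once the reversed Markov bound is in hand.
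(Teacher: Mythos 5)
Your proof matches the paper's own argument in all essentials: both exploit that $Y:=\tfrac12 v_\pi^{T}v+\|H^{Z_\pi}v\|^2_2$ lies deterministically in $[-\tfrac12\|v\|^2_2,\tfrac32\|v\|^2_2]$, obtain the deterministic lower bound on $\lambda_2$ by splitting $\mathbb{E}[Y]$ across $\{Y\ge\lambda_2\}$ (probability at most $\alpha$), lower-bound the population tail $\mathbb{P}[Y\ge\mathbb{E}[Y]-O(\alpha)\|v\|^2_2]$ by $\Omega(\alpha)$ (the paper by invoking the first part at level $\alpha/2$, you via a reversed-Markov split---the same bound), and then apply a concentration inequality that respects the $\Theta(\alpha)$ variance of the Bernoulli indicators (you Bernstein, the paper a direct MGF/Chernoff bound), yielding failure probability $\exp(-\Omega(1/\alpha))$ when $m\ge1/\alpha^2$. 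One minor slip in the deterministic part: the step $\lambda_2\,\mathbb{P}[Y<\lambda_2]\le\lambda_2$ requires $\lambda_2\ge 0$; when $\lambda_2<0$ write instead $\lambda_2\,\mathbb{P}[Y<\lambda_2]=\lambda_2-\lambda_2\,\mathbb{P}[Y\ge\lambda_2]\le\lambda_2+\tfrac12\alpha\|v\|^2_2$, which gives the constant $2$ (as in the paper) rather than $3/2$---harmless for the $O(\alpha)$ claim.
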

\begin{proposition}\label{prop: gap between ours and random permutation}
Let $\mathcal{P}_K$ be obtained from Algorithm~\ref{alg structure}. Then we have:
\begin{align*}
\lambda_2(X,Z,\mathcal{P}_{K},\frac{1}{4}\alpha)\leq \lambda_2(X,Z,\mathcal{P}_n,\frac{1}{2}\alpha)+\vert J_{2}\vert (\bar{b}_{2}-\bar{b})(\bar{c}_2-\bar{c})+\vert J_{3}\vert (\bar{b}_{3}-\bar{b})(\bar{c}_3-\bar{c})+O(\alpha)\Vert v\Vert^{2}_{2}\,.
\end{align*}
Where $\bar{b}_{k}=\frac{1}{\vert J_k\vert}\sum_{i\in J_k}b_{i}$, $\bar{c}_k=\frac{1}{\vert J_k\vert}\sum_{i\in J_k}c_i$ $(k=2,3)$ and $\vert J_{2}\vert (\bar{b}_{2}-\bar{b})(\bar{c}_2-\bar{c})+\vert J_{3}\vert (\bar{b}_{3}-\bar{b})(\bar{c}_3-\bar{c})\leq 0$. This also implies that $\lambda_2(X,Z,\mathcal{P}_{K},\frac{1}{4}\alpha)\leq \lambda_2(X,Z,\mathcal{P}_n,\frac{1}{2}\alpha)+O(\alpha)\Vert v\Vert^{2}_{2}$.
\end{proposition}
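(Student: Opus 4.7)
The plan is to compare $\lambda_2(X,Z,\mathcal{P}_K,\tfrac14\alpha)$ and $\lambda_2(X,Z,\mathcal{P}_n,\tfrac12\alpha)$ by routing both through the common quantity $\mathbb{E}_{\pi}[\tfrac12 v^T_{\pi}v+\|H^Z v_{\pi}\|_2^2]$, using the concentration results already stated in the excerpt to handle the gap between a quantile statistic and its mean, and then isolating the structural gain of Algorithm~\ref{alg structure} as an explicit covariance-type term.

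First, I would upper bound $\lambda_2(X,Z,\mathcal{P}_K,\tfrac14\alpha)$ by the expectation side. Since the $\mathcal{P}_K$ produced by Algorithm~\ref{alg structure} has the group-decomposition structure $S_1\cup\cdots\cup S_k$, the vector $H^Z v_\pi$ decomposes as $v^*+\sum_i u_i$ with independent centered summands $u_i$ of bounded norm. Invoking the high-probability bound~\eqref{high probability bound of objective for alg} (a direct application of the informal Theorem~\ref{thm: informal for high prob bound}) gives
\begin{equation*}
\lambda_2(X,Z,\mathcal{P}_K,\tfrac14\alpha)\le \mathbb{E}_{\pi_k}\bigl[\tfrac12 v^T_{\pi_k}v+\|H^Z v_{\pi_k}\|_2^2\bigr]+o(1)\|v\|_2^2,
\end{equation*}
where the $o(1)$ tail absorbs both the $\tfrac14\alpha$ quantile slack and the concentration error, and the equality $\|H^{Z_{\pi_k}} v\|_2^2=\|H^Z v_{\pi_k}\|_2^2$ converts to the form used in~\eqref{optimization problem for alg}.

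Second, I would lower bound $\lambda_2(X,Z,\mathcal{P}_n,\tfrac12\alpha)$ by the analogous expectation under a fully uniform permutation $\pi'$. This is exactly the content of Theorem~\ref{thm: lower bound of lambda informal}: any $\lambda_2$-value must exceed $\mathbb{E}_{\pi'}[\tfrac12 v^T_{\pi'}v+\|H^Z v_{\pi'}\|_2^2]-O(\alpha)\|v\|_2^2$. Subtracting the two bounds reduces the proposition to showing
\begin{equation*}
\mathbb{E}_{\pi_k}\bigl[\tfrac12 v^T_{\pi_k}v+\|H^Z v_{\pi_k}\|_2^2\bigr]-\mathbb{E}_{\pi'}\bigl[\tfrac12 v^T_{\pi'}v+\|H^Z v_{\pi'}\|_2^2\bigr]\le |J_2|(\bar b_2-\bar b)(\bar c_2-\bar c)+|J_3|(\bar b_3-\bar b)(\bar c_3-\bar c)+o(1)\|v\|_2^2.
\end{equation*}
The $\tfrac12 v^T_\pi v$ contribution is trivial because $v^T_\pi v=\|v\|_2^2$ for any permutation, so both $\pi_k$ and $\pi'$ produce the same value and the comparison reduces to the projection term.

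The main work, and the main obstacle, is the explicit computation of $\mathbb{E}_{\pi_k}[\|H^Z v_{\pi_k}\|_2^2]$ under the block-structured group. Expanding $H^Z v_\pi=v^*+\sum_{i=1}^k u_i$ and using independence across the $S_i$ blocks yields
\begin{equation*}
\mathbb{E}_{\pi_k}\bigl[\|H^Z v_{\pi_k}\|_2^2\bigr]=\|v^*\|_2^2+\sum_{i=1}^{k}\mathbb{E}\|u_i\|_2^2,
\end{equation*}
and for each block $\mathbb{E}\|u_i\|_2^2$ becomes, after expanding $u_i=\sum_{j\in S_i}a_j w_{\sigma_i(j)}$ and averaging the uniform $\sigma_i$, a bilinear form in the within-block averages of $c_j=a_j^2$ and $b_j=\|w_j\|_2^2$ plus a within-block cross term. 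The uniform-permutation expectation $\mathbb{E}_{\pi'}[\|H^Z v_{\pi'}\|_2^2]$ admits an analogous computation on the single block $\{1,\dots,n\}$, and subtracting the two evaluates, up to $o(1)\|v\|_2^2$ error controlled by the block sizes chosen in \textbf{Partition}~\ref{alg partition}, to $\sum_{\text{blocks}}|S_i|(\bar b_{S_i}-\bar b)(\bar c_{S_i}-\bar c)$. Here the rearrangement step~\ref{alg rearrange} ensures that after merging elements into $J_1$, only the contributions of $J_2$ and $J_3$ survive, and by construction of $I_2,I_3$ the two factors $(\bar b_{J_k}-\bar b)$ and $(\bar c_{J_k}-\bar c)$ carry opposite signs, so each term is nonpositive. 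Checking that \textbf{Rearrange} does not flip these signs and that the subset sizes chosen by \textbf{Partition} are simultaneously large enough for the concentration in Step~1 and small enough that the residual block-averaging error is $o(1)\|v\|_2^2$ is the delicate combinatorial bookkeeping where I expect the proof to be technically densest.
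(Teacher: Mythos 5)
Your overall architecture matches the paper's proof exactly: upper-bound $\lambda_2(X,Z,\mathcal{P}_K,\tfrac14\alpha)$ by $\mathbb{E}_{\pi_k}[\tfrac12 v^T_{\pi_k}v+\|H^{Z}v_{\pi_k}\|_2^2]+o(1)\|v\|_2^2$ via Proposition~\ref{proposition: high probability bound on group decomposition}, lower-bound $\lambda_2(X,Z,\mathcal{P}_n,\tfrac12\alpha)$ by $\mathbb{E}_{\pi'}[\cdots]-O(\alpha)\|v\|_2^2$ via Theorem~\ref{thm: lower bound of lambda informal} (formally Theorem~\ref{thm: value of optimization objective}), and then compare the two expectations through the explicit block-structured computation culminating in~\eqref{value of final optimization objective}. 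That is the paper's route.

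There is, however, a concrete mathematical error in your handling of the inner-product term. You assert that ``$v^T_\pi v=\|v\|_2^2$ for any permutation,'' which is false: $v^T_\pi v=\sum_i v_{\pi(i)}v_i$, and this equals $\|v\|_2^2$ only when $\pi$ is the identity. (Take $v=(1,-1)$ and $\pi$ the transposition: $v^T_\pi v=-2\neq 2=\|v\|_2^2$.) In fact this term is permutation-group dependent, and the paper's analysis explicitly computes $\mathbb{E}_{\pi'}[v^T_{\pi'}v]=n\bar v^2$ for the uniform permutation while for the block group $\mathbb{E}_{\pi_k}[v^T_{\pi_k}v]=n\bar v^2+\sum_{i=1}^{k}\tfrac{1}{|S_i|}\bigl[\sum_{j\in S_i}(v_j-\bar v)\bigr]^2$, which is \emph{not} automatically equal to $n\bar v^2$. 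It is precisely the design constraint $|\sum_{j\in S_i}a_j|\le O(\sqrt{\sum_i a_i^2})$ enforced by \textbf{Rearrange}~\ref{alg rearrange} and \textbf{Partition}~\ref{alg partition} that makes the excess term $o(1)\sum_i a_i^2$, so that the two expectations agree up to negligible error. Your claim that the term is ``trivial'' skips the place where one of the algorithm's nontrivial constraints is actually used; the conclusion that the comparison reduces to the projection term happens to survive, but only because the algorithm is built to make it so, and this needs to be argued, not assumed.
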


\paragraph{Remark} 
We have established an efficient algorithm that optimizes over $\lambda_2(X,Z,\mathcal{P}_{K},\frac{1}{4}\alpha)$, a valid quantity to reduce the Type II error.
In particular, 
when $Z$ satisfies the constraints of Lemma \ref{lem: upper and lower of optimization objective informal} (Sub-Gaussian satisfies the constraints), the optimization objective $\lambda_2(X,Z,\mathcal{P}_{K},\frac{1}{4}\alpha)$ yields a theoretically guaranteed upper estimate for $\lambda_1(X,Z,\mathcal{P}_{K},\frac{1}{2}\alpha)$, where $\lambda_1(X,Z,\mathcal{P}_{K},\frac{1}{2}\alpha)$ is a sufficient and necessary optimization problem for Type II error control. 
As for the effectiveness of our algorithm, if we denote $\mathcal{P}_K^{algorithm}$ obtained from Algorithm~\ref{alg structure}, $\mathcal{P}_n$ denote the i.i.d. random permutation in Assumption~\ref{assump::iidpermute}. From \Cref{prop: gap between ours and random permutation}, we have $\lambda_2(X,Z,\mathcal{P}_{K}^{algorithm},\frac{1}{4}\alpha)\leq \lambda_2(X,Z,\mathcal{P}_n,\frac{1}{2}\alpha)+O(\alpha)\Vert v\Vert^{2}_{2}$, which shows that our algorithm can yield a permutation group that outperforms i.i.d. permutations.

\subsection{Numerical Experiment of Type II Error Control}
We compare the experimental results of our algorithm with those of the random permutation method proposed by \cite{guan2024}. We plot the Type II error curve as a function of $|b|$. Since $\vert\mathcal{P}_{K}\vert$ in our algorithm is prohibitively large and $\phi_1, \phi_2$ cannot be computed directly, we estimate $\phi_1$ and $\phi_2$ using $m \in \Omega(1/\alpha^2)$ samples drawn uniformly at random from $\mathcal{P}_{K}$. Formally, $\phi_1$ is estimated by
$$
\phi_1\approx \frac{1}{m}\sum_{k=1}^{m}\mathbbm{1}\left\{X^{T}H^{ZZ_{\pi_k}}Y\geq X^{T}_{\pi}H^{ZZ_{\pi_k}}Y\right\}
$$
with $P_{k}$ i.i.d. chosen from $\mathcal{P}_{K}$, and the estimation for $\phi_2$ is obtained by replacing ``$\geq$'' with ``$\leq$''. These statistics provide accurate approximations to $\phi_1,\phi_2$, while retaining well-controlled Type I and Type II errors:
\begin{lemma}\label{lem: type II error of finite sample}
	Suppose that we have $P_1,P_2,...,P_m$ sampled independently and uniformly from $P_{K}$, and let $\pi_i$ denote the permutation of $\{1,2,...,n\}\to\{1,2,...,n\} $ corresponding to $P_i$. Let 

$$
\phi^{'}_1=\frac{1}{m}\sum_{i=1}^{m}\mathbbm{1}\left\{X^{T}H^{ZZ_{\pi_{i}}}X\leq X^{T}H^{ZZ_{\pi_{i}}}Y \right\}\,,
$$
$$
\phi^{'}_2=\frac{1}{m}\sum_{i=1}^{m}\mathbbm{1}\left\{X^{T}H^{ZZ_{\pi_{i}}}X\geq X^{T}H^{ZZ_{\pi_{i}}}Y \right\}\,,
$$
and we accept $\mathcal{H}_0$ when $\phi^{'}_1,\phi^{'}_2\in[\alpha, 1]$. Then when $m\geq \Omega(\frac{1}{\alpha^{2}})$, for any constant $c$ we have:
\begin{align*}
\mathbb{P}\left[\mathcal{H}_{0}\text{ is rejected }\vert b=0 \right] \leq 4(1+c)\alpha + e^{-\Omega(\alpha^{-1})}\,,
\end{align*}
\begin{align*}
\mathbb{P}\left[\mathcal{H}_{0}\text{ is accepted } \right]\leq \mathbb{P}\left[ \min(\phi_1,\phi_2)\leq (1-c)\alpha \right]+e^{-\Omega(\alpha^{-1})}\,.
\end{align*}
This implies a well-controlled Type I error and a well-approximated Type II error compared with using $\phi_1,\phi_2$ corresponding to the full permutation group.
\end{lemma}

For the numerical results, we first consider the simplest case where both $X$ and $\epsilon$ follow an i.i.d. standard normal distribution.
In the first experiment, we evaluate both our method and the random permutation method of \cite{guan2024} with $n=200$ and varying $p$, and set $\alpha=0.1$. $Z$ is generated from several distributions, including Gaussian, $t_1$, and $t_2$ distributions. The experimental results are presented in Figure~\ref{fig:gaussian X and gaussian noise}.

For comparison, we also generate $X$ from various distributions. Figure~\ref{fig:t2 X and gaussian noise} illustrates the performance under $t_2$ i.i.d. $X$ and Gaussian noise. These results indicate that our algorithm performs at least as well as the random permutation method, and the performance gap depends on both the distribution of $Z$ and the data dimension $p$. In particular, we highlight a key observation: when $Z$ is heavy-tailed, the improvement of our method over that of \cite{guan2024} can be substantial, especially when $p/n$ is not small (e.g., $p \geq \frac{1}{4}n$). A detailed discussion of these findings is provided in \ref{subsubsection: comparison with random permutation}.

\begin{figure}[t]
    \centering
    \begin{subfigure}[b]{0.32\textwidth}
        \centering
        \includegraphics[width=\textwidth]{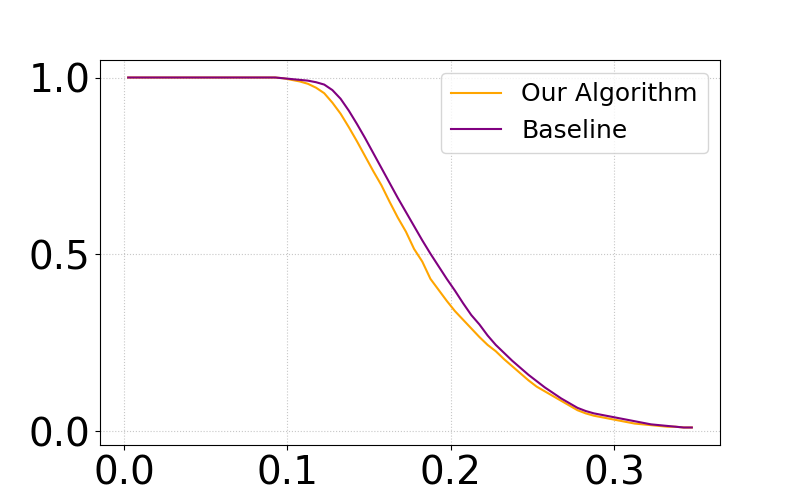}
        \caption{$p=40$, $Z\sim$ Gaussian}
        \label{fig:200_40_g_g_g}
    \end{subfigure}
    \hfill
    \begin{subfigure}[b]{0.32\textwidth}
        \centering
        \includegraphics[width=\textwidth]{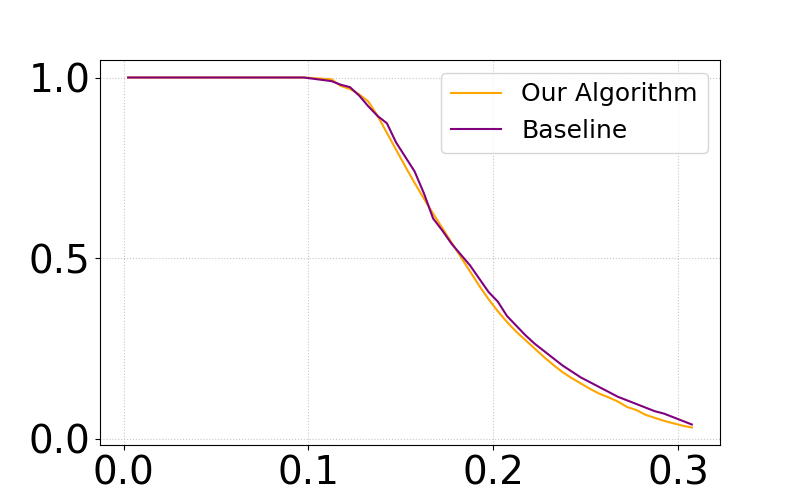}
        \caption{$p=40$, $Z \sim t_1$}
        \label{fig:200_40_g_t1_g}
    \end{subfigure}
    \hfill
    \begin{subfigure}[b]{0.32\textwidth}
        \centering
        \includegraphics[width=\textwidth]{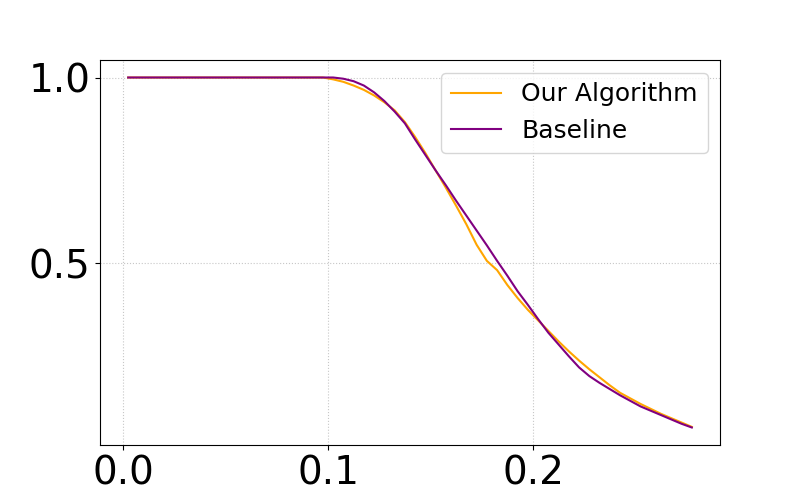}
        \caption{$p=40$, $Z\sim t_2$}
        \label{fig:200_40_g_t2_g}
    \end{subfigure}

    \vspace{0.5cm} 

    \begin{subfigure}[b]{0.32\textwidth}
        \centering
        \includegraphics[width=\textwidth]{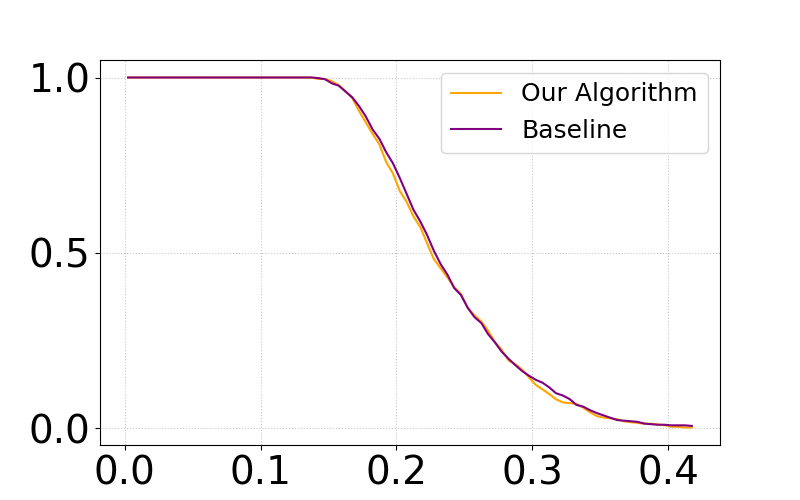}
        \caption{$p=60$, $Z\sim$ Gaussian}
        \label{fig:200_60_g_g_g}
    \end{subfigure}
    \hfill
    \begin{subfigure}[b]{0.32\textwidth}
        \centering
        \includegraphics[width=\textwidth]{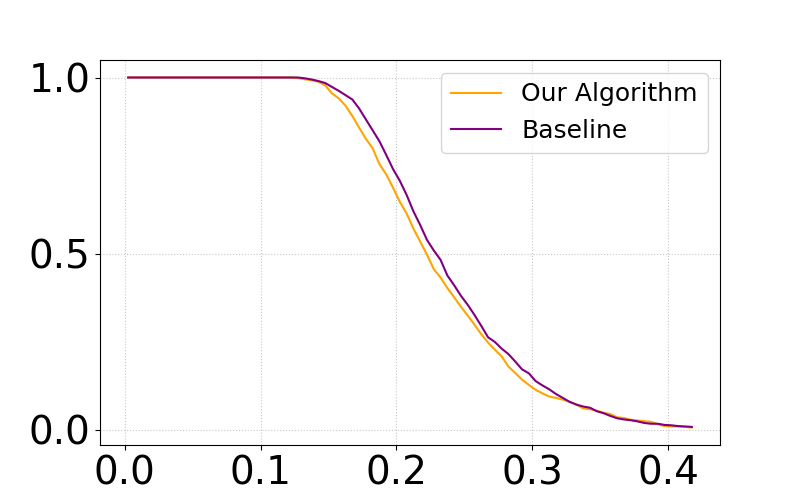}
        \caption{$p=60$, $Z\sim t_1$}
        \label{fig:200_g_t1_g}
    \end{subfigure}
    \hfill
    \begin{subfigure}[b]{0.32\textwidth}
        \centering
        \includegraphics[width=\textwidth]{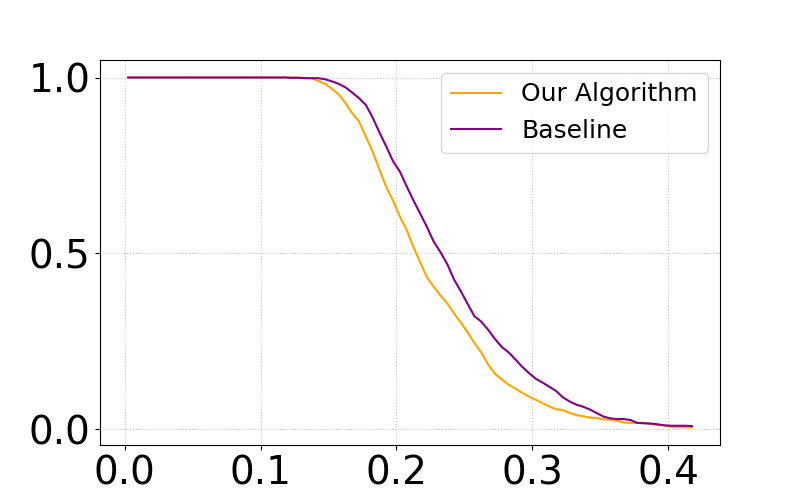}
        \caption{$p=60$, $Z\sim t_2$}
        \label{fig:200_60_g_t2_g}
    \end{subfigure}

    \vspace{0.5cm}

    \begin{subfigure}[b]{0.32\textwidth}
        \centering
        \includegraphics[width=\textwidth]{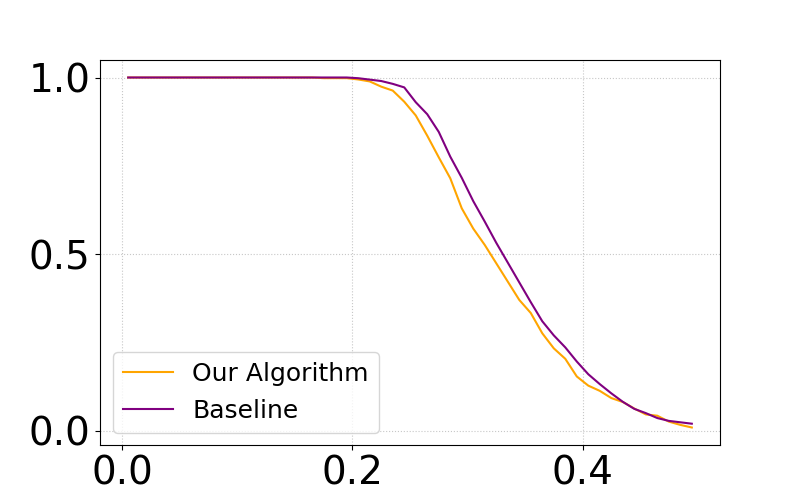}
        \caption{$p=80$, $Z\sim$ Gaussian}
        \label{fig:200_80_g_g_g}
    \end{subfigure}
    \hfill
    \begin{subfigure}[b]{0.32\textwidth}
        \centering
        \includegraphics[width=\textwidth]{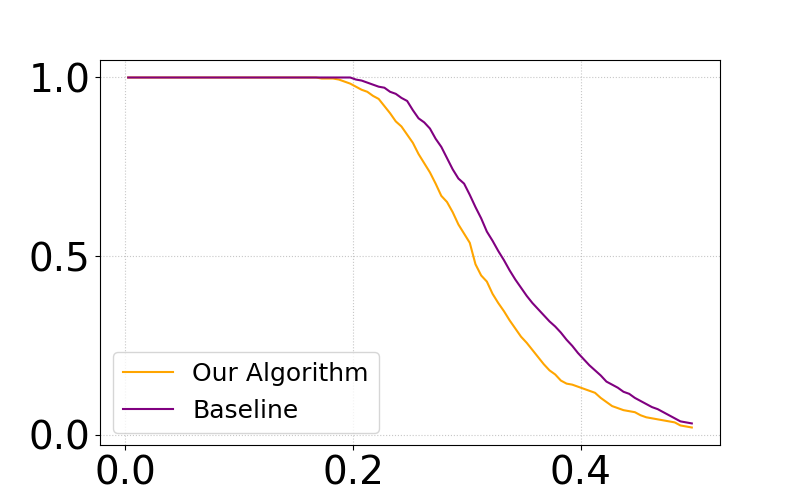}
        \caption{$p=80$, $Z\sim t_1$}
        \label{fig:200_80_g_t1_g}
    \end{subfigure}
    \hfill
    \begin{subfigure}[b]{0.32\textwidth}
        \centering
        \includegraphics[width=\textwidth]{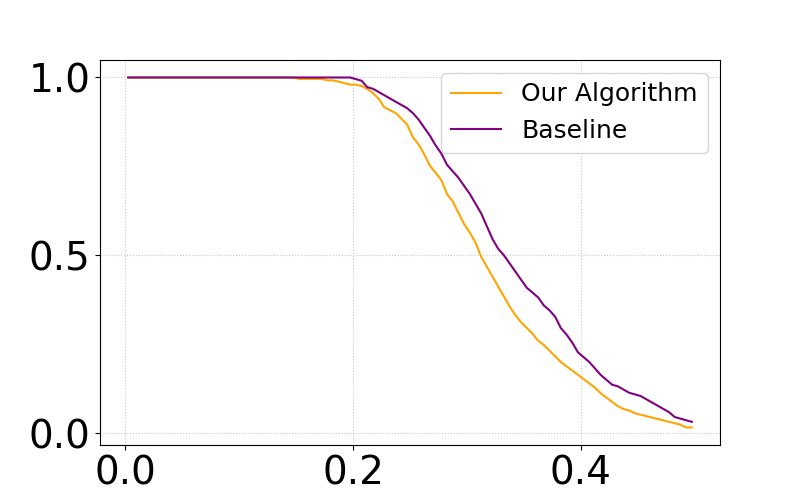}
        \caption{$p=80$, $Z\sim t_2$}
        \label{fig:200_80_g_t2_g}
    \end{subfigure}

    \caption{Type II Error for $X\sim$ Gaussian and $\epsilon\sim$ Gaussian, with $n=200$ and $\alpha=0.1$. Each row corresponds to a different dimension $p$, and each column shows a different distribution for $Z$.}
    \label{fig:gaussian X and gaussian noise}
\end{figure}

\begin{figure}[t]
    \centering
    \begin{subfigure}[b]{0.32\textwidth}
        \centering
        \includegraphics[width=\textwidth]{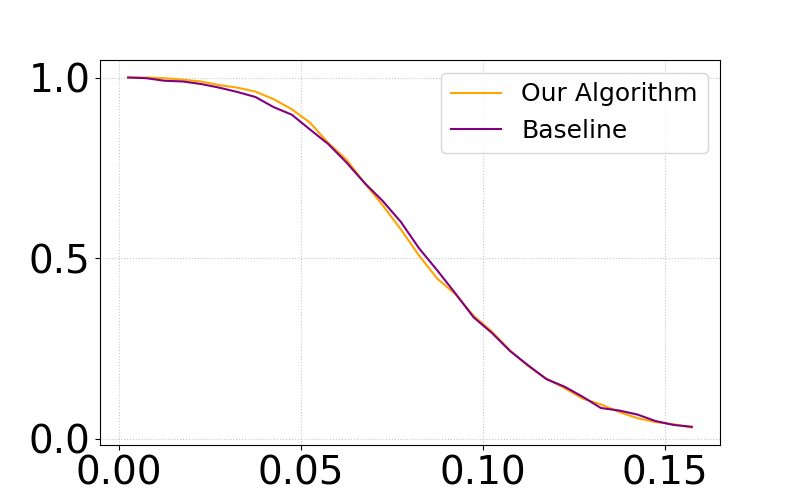}
        \caption{$p=50$, $Z\sim$ Gaussian}
        \label{fig:200_50_t2_g_g}
    \end{subfigure}
    \hfill
    \begin{subfigure}[b]{0.32\textwidth}
        \centering
        \includegraphics[width=\textwidth]{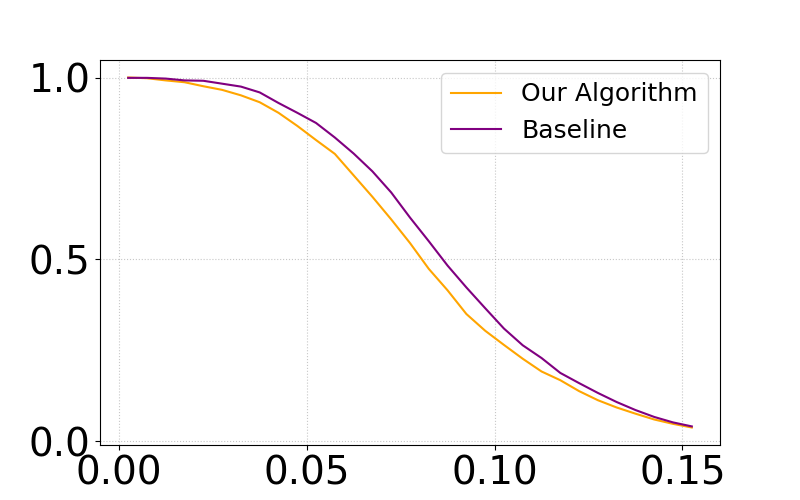}
        \caption{$p=50$, $Z\sim t_1$}
        \label{fig:200_50_t2_t1_g}
    \end{subfigure}
    \hfill
    \begin{subfigure}[b]{0.32\textwidth}
        \centering
        \includegraphics[width=\textwidth]{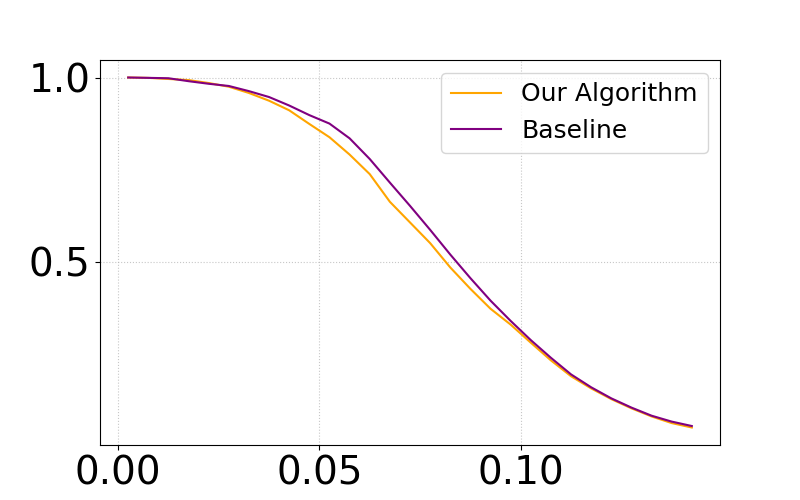}
        \caption{$p=50$, $Z\sim t_2$}
        \label{fig:200_50_g_t2_g}
    \end{subfigure}

    \vspace{0.4cm} 

    \begin{subfigure}[b]{0.32\textwidth}
        \centering
        \includegraphics[width=\textwidth]{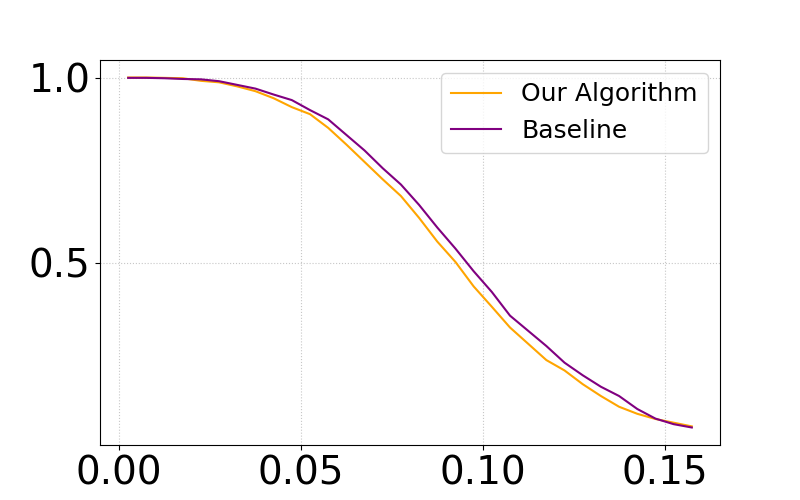}
        \caption{$p=60$, $Z\sim$ Gaussian}
        \label{fig:200_60_t2_g_g}
    \end{subfigure}
    \hfill
    \begin{subfigure}[b]{0.32\textwidth}
        \centering
        \includegraphics[width=\textwidth]{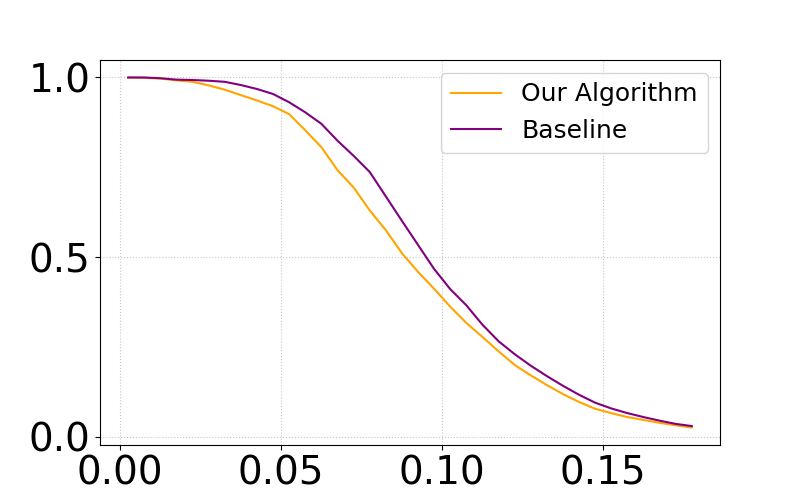}
        \caption{$p=60$, $Z\sim t_1$}
        \label{fig:200_t2_t1_g}
    \end{subfigure}
    \hfill
    \begin{subfigure}[b]{0.32\textwidth}
        \centering
        \includegraphics[width=\textwidth]{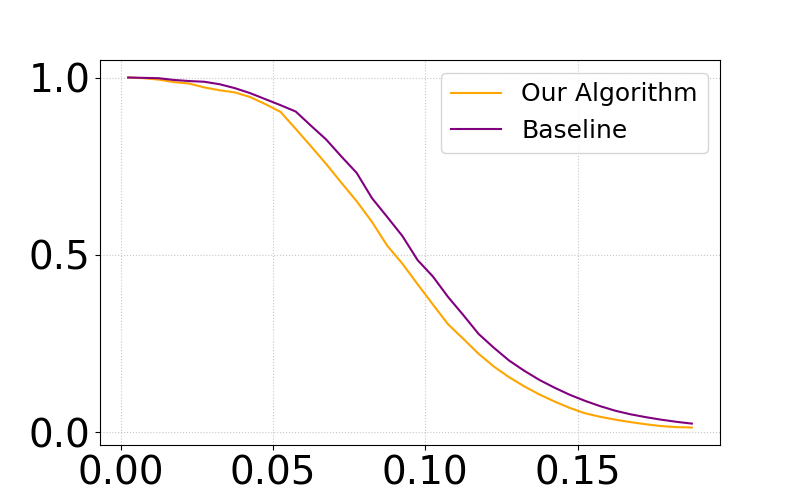}
        \caption{$p=60$, $Z\sim t_2$}
        \label{fig:200_60_t2_t2_g}
    \end{subfigure}

    \vspace{0.4cm}

    \begin{subfigure}[b]{0.32\textwidth}
        \centering
        \includegraphics[width=\textwidth]{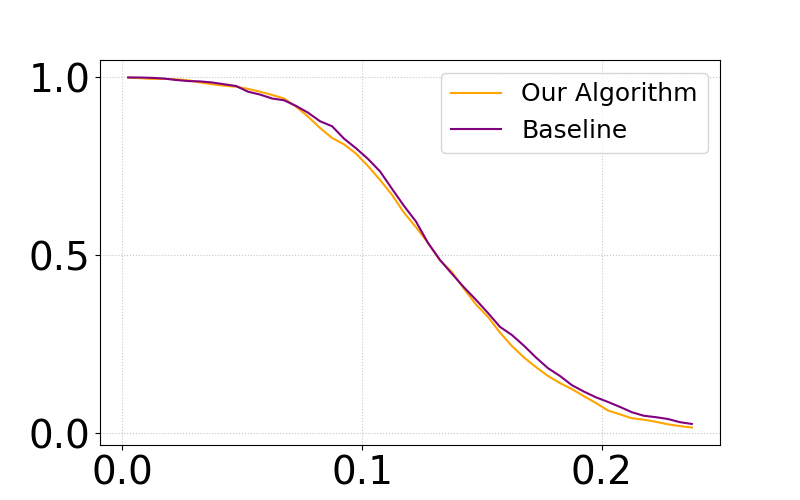}
        \caption{$p=80$, $Z\sim$ Gaussian}
        \label{fig:200_80_t2_g_g}
    \end{subfigure}
    \hfill
    \begin{subfigure}[b]{0.32\textwidth}
        \centering
        \includegraphics[width=\textwidth]{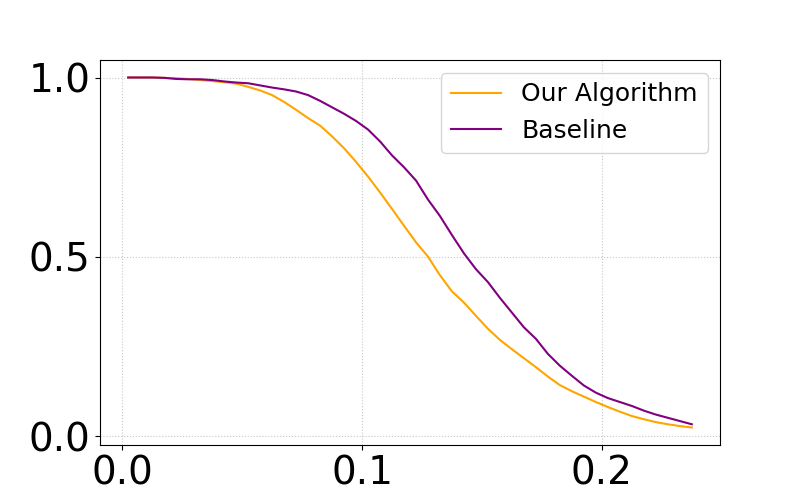}
        \caption{$p=80$, $Z\sim t_1$}
        \label{fig:200_80_t2_t1_g}
    \end{subfigure}
    \hfill
    \begin{subfigure}[b]{0.32\textwidth}
        \centering
        \includegraphics[width=\textwidth]{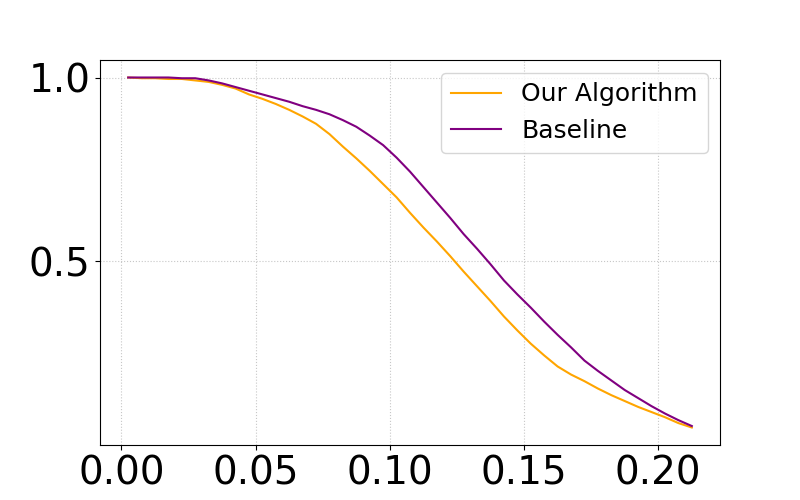}
        \caption{$p=80$, $Z\sim t_2$}
        \label{fig:200_80_t2_t2_g}
    \end{subfigure}

    \caption{Type II Error comparisons for $X\sim t_2$ and $\epsilon \sim$ Gaussian, with $n=200$ and $\alpha=0.1$. Each row represents a different dimension $p$, and columns represent different distributions of $Z$.}
    \label{fig:t2 X and gaussian noise}
\end{figure}

\subsection{CPT with group permutation}
\label{sec::CPT with group permutation}
In this section, we consider relaxing \Cref{assump::leftshifting} in CPT to \Cref{assump::grouppermutation}, and we reformulate the proof of the Type I error control of CPT under \Cref{assump::exchanglable}. The two aspects we address in this section provide insight into controlling the Type I error of CPT with \Cref{assump::grouppermutation} and without \Cref{assump::exchanglable} in \Cref{sec::CPTnonexchangeable}.

Initially, the CPT constructed from \citet{lei2021} relies on invariance under the left-shifting operator, as shown in \Cref{sec::exact_tests submateral}. In this section, we generalize CPT from the left-shifting group $P_K^L:=\{P_k\}_{k=0}^K$ to an arbitrary permutation group $\mathcal{P}_K$ satisfying \Cref{assump::grouppermutation}, where $K$ is the hyperparameter.

Assuming $\epsilon$ is exchangeable, we can also provide a Type I error bound. Through the group permutation set $\mathcal{P}_K$ and the corresponding pairwise exchangeability of $\epsilon$, with the same intuition as \cite{lei2021}, we can obtain a set of test statistics $\{Y^{T}\eta_0, \cdots,Y^T\eta_k,\cdots,Y^T\eta_{K}\}$, where $Y^{T}\eta_0=Y^{T}\eta^*$. These statistics satisfy constraints analogous to those in Condition \ref{con::lei}, 
\begin{condition}
\label{condition 2}
    There exist $\gamma\in \mathbb{R}^p$ such that 
    $Z^T\eta_{k}=\gamma$,where $k\in \{0,\cdots,K\}$.
\end{condition}
\begin{condition}
\label{condition 3}
    For $k\in\{0,\cdots,K\}$,
      $Y^T\eta_k=Y^T (P^T_k\eta^*), P_k\in \mathcal{P}_K.$
\end{condition}
As shown in \Cref{prop::Proposition for group permutation}, whenever $P_k\in \mathcal{P}_K$, we also have $P_k^T \in \mathcal{P}_K$.
Then, regarding the existence of the $(\eta^*,\gamma)$, since $\eta_j$ is a linear transformation of $\eta^*$, Condition \ref{condition 2} and Condition \ref{condition 3} imply that 
\begin{equation}
\label{eq::grouped CPT equation}\begin{pmatrix}
        -I_{p} & Z^TP_0 \\
        -I_{p} & Z^TP^T_1 \\
        \vdots & \vdots \\
        -I_{p} & Z^TP^T_K 
    \end{pmatrix}\begin{pmatrix}
        \gamma_{} \\
        \eta^*
    \end{pmatrix}=
    \begin{pmatrix}
        -I_{p} & Z^TP_0 \\
        -I_{p} & Z^TP_1 \\
        \vdots & \vdots \\
        -I_{p} & Z^TP_K 
    \end{pmatrix}
    \begin{pmatrix}
        \gamma_{} \\
        \eta^*
    \end{pmatrix}
    = 0,
\end{equation}
where the first equation follows from \Cref{prop::Proposition for group permutation}, since whenever $P_k\in \mathcal{P}_K$, we also have $P^T_k\in \mathcal{P}_K$. The above linear system has $(K+1)\cdot p$ equations and $n+p-1$ unknowns. Then, there always exists a non-zero solution if $(K+1)p<n+p-1.$

Also, from \Cref{prop::Proposition for group permutation}, we know that $I\in \mathcal{P}_K$, so we denote $P_0:=I$. The test statistic is defined as 
$$R_0=\frac{1}{K+1}\sum_{j=1}^K 1\{Y^T\eta^*\leq Y^T\eta_{j}\},$$
Then we can also define 
$$R_{k}=\frac{1}{K+1}\sum_{j=0,j\neq k}^K 1\{Y^T(P^T_k\eta)\leq Y^T\eta_{j}\},\,\,k=1,\cdots,K.$$
where $\eta_{k}=P^T_k\eta^*$, $\eta_0=\eta^*$.
\begin{theorem}
\label{thm::thorem of group CPT}
Suppose that $(X,Z,Y)$ is generated under model \eqref{eq::model}. Suppose $\mathcal{P}_K=\{P_k:k=0,\cdots,K\}$ satisfies \Cref{assump::grouppermutation}, and the noise $\epsilon$ satisfies \Cref{assump::exchanglable}.
\begin{enumerate}
    \item The linear system (\ref{eq::grouped CPT equation}) has a non-zero solution iff
     $n/p>K.$
     \item  Under $H_0$, and for any solution of $(\eta^*,\gamma)$, we have 
    $$P(R_0\leq \frac{\lceil(1-\alpha)(K+1)\rceil}{K+1})=P(R_0 \leq Q_{1-\alpha}(\sum_{k=0}^{K}\frac{1}{K+1}\cdot\delta_{R_k}))\geq 1-\alpha,$$
    where $Q_{\tau}(\cdot)$ denotes the $\tau$- quantile of its argument, $\delta_{a}$ denotes the point mass at a, and the first equation holds when there are no ties.
\end{enumerate}
    \end{theorem}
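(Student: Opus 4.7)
\emph{Plan.} The theorem splits into a dimension-counting statement (part~1) and a rank-exchangeability statement (part~2). For part~1 I would eliminate $\gamma$ using the $P_0=I$ block (which lies in $\mathcal{P}_K$ by \Cref{prop::Proposition for group permutation}): the first $p$ equations give $\gamma = Z^T \eta^{*}$, and substituting into the remaining $K$ blocks produces the reduced system $Z^T(I - P_k^T)\eta^{*}=0$ for $k=1,\ldots,K$, i.e., $Kp$ equations in the $n$ unknowns of $\eta^{*}$. Generically in $Z$ these rows are linearly independent, so a non-trivial $\eta^{*}$ (and hence a non-trivial $(\eta^{*},\gamma)$) exists iff $Kp < n$, which is exactly $n/p > K$.

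For part~2, under $H_0$ we have $Y=Z\beta+\epsilon$, so $Y^T \eta_k = \beta^T Z^T \eta_k + \epsilon^T \eta_k = \beta^T \gamma + g_k$, where $g_k := \epsilon^T \eta_k = (P_k \epsilon)^T \eta^{*}$ and I use Condition~\ref{condition 2}. The constant $\beta^T \gamma$ cancels in every indicator, so each $R_k$ is a function only of $(g_0,\ldots,g_K)$. The crux is to show that this vector is transitively symmetric: for any $P_j \in \mathcal{P}_K$, exchangeability gives $\epsilon \overset{d}{=} P_j\epsilon$, and the substitution $\epsilon \leftarrow P_j\epsilon$ sends $g_k$ to $(P_k P_j \epsilon)^T \eta^{*} = g_{\sigma_j(k)}$, where $\sigma_j(k)$ is the unique index with $P_{\sigma_j(k)}=P_k P_j$, well-defined by group closure (\Cref{assump::grouppermutation}). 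The family $\{\sigma_j\}_{j=0}^K$ is a group of permutations of $\{0,\ldots,K\}$ (the right-regular action of $\mathcal{P}_K$ on itself) and acts transitively, since $\sigma_k(0)=k$ for every $k$. Hence $(g_{\sigma_j(0)},\ldots,g_{\sigma_j(K)}) \overset{d}{=} (g_0,\ldots,g_K)$ for every $j$, and the same symmetry transports to $(R_0,\ldots,R_K)$.

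Transitive symmetry together with no ties forces the rank of $R_0$ (equivalently, of $g_0$) among $\{R_0,\ldots,R_K\}$ to be uniform on $\{1,\ldots,K+1\}$: every marginal rank is identically distributed by the group action, and the $K+1$ ranks form a deterministic permutation of $\{1,\ldots,K+1\}$, so each marginal is uniform. The event $\{R_0 \leq Q_{1-\alpha}(\tfrac{1}{K+1}\sum_k \delta_{R_k})\}$ then coincides with $\{\mathrm{rank}(R_0) \leq \lceil(1-\alpha)(K+1)\rceil\}$, which has probability $\lceil(1-\alpha)(K+1)\rceil/(K+1)\geq 1-\alpha$; ties are absorbed into the $\geq$ by the standard conservative stochastic-dominance argument used in CPT/RPT. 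The main obstacles I anticipate are (i) making the assignment $P_j \mapsto \sigma_j$ watertight as a group action, which requires closure of $\mathcal{P}_K$ under multiplication, the presence of the identity, and the fact that $P_k^T \in \mathcal{P}_K$ from \Cref{prop::Proposition for group permutation}, and then showing that this induces the full joint invariance rather than just marginal identity; and (ii) justifying the ``iff'' in part~(1) when $Z$ is not in general position, which likely requires a genericity hypothesis on $Z$ or a rephrasing as existence of a suitable design.
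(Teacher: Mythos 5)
Your argument is correct and follows essentially the same route as the paper: group closure of $\mathcal{P}_K$ gives invariance of the score vector under the right-regular action (the paper derives $R_0\overset{d}{=}R_m$ by substituting $\epsilon\mapsto P_m^T\epsilon$ and invoking the bijectivity of $\mathbbm{P}_{\pi_m}$ from \Cref{prop::Proposition for group permutation}), and the final step is then a rank-counting argument, which the paper phrases via the strange-point set $S(R)=\{m:R_m>Q_{1-\alpha}(\cdot)\}$ together with $|S(R)|\le\alpha(K+1)$ and $\mathbb{P}(0\in S(R))=\tfrac{1}{K+1}\mathbb{E}|S(R)|$, while you use the equivalent uniform-rank-of-$R_0$ consequence of transitivity. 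Your caveats are also apt: the appendix proves only part~(2), the text's justification of part~(1) is just the naive count $(K+1)p<n+p-1$ (i.e.\ $Kp<n-1$), which gives only the sufficient direction and does not quite match the stated threshold $n/p>K$, so the ``iff'' does indeed require the full-rank/genericity hypothesis you flag, and your elimination of $\gamma$ via the $P_0=I$ block to reduce to $Z^T(I-P_k^T)\eta^*=0$ for $k=1,\dots,K$ is the cleaner way to see this dimension count.
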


    \textbf{Remark}. From \Cref{thm::thorem of group CPT}, we have $P(R_0>\frac{\lceil(1-\alpha)(K+1)\rceil}{K+1})<\alpha.$ Compared with the original theorem \eqref{eq::lei type I error bound} in \citet{lei2021}, we find that the original work uses $[0,\alpha]$ as the rejection region for the hypothesis $H_0$, while in the above \Cref{thm::thorem of group CPT}, if we choose $[\frac{\lceil(1-\alpha)(K+1)\rceil}{K+1},1]$ as the rejection region for the hypothesis $H_0$, we can still obtain an $\alpha$-level Type I error bound. So under the group permutation, the probability density of the test statistic $R_0 \in [0,1]$ has very small tail probabilities for both sides.

The reason for changing the rejection region to the other side is that it provides more insight to extend the exchangeable case to the nonexchangeable case, as in Theorem \ref{thm::Theorem nonexchangeable grouped CPT}. Then, the discussion of Type II error in the original work can be naturally extended to the side we chose, and it is also natural to consider a two-sided rejection region. Based on the above discussion on $R_{0}$, we can slightly extend the algorithm by~\cite{lei2021} to obtain an optimization algorithm that solves for an $\eta^{*}$ with good Type II error performance for any permutation group $\mathcal{P}_{K}=\{P_{0}=I_{n},P_{1},...,P_{K}\}$. The optimization objective is as follows:

\begin{equation}
\begin{aligned}
&\max_{\eta\in R^{n}, \gamma\in R^{p},\Vert \eta\Vert_{2}=1,\gamma_1,\delta\in\mathbb{R}} \delta\quad \text{subject to} \\
& \begin{pmatrix}
        -I_{p} & Z^TP_0 \\
        -I_{p} & Z^TP_1 \\
        \vdots & \vdots \\
        -I_{p} & Z^TP_K 
    \end{pmatrix}
    \begin{pmatrix}
        \gamma_{} \\
        \eta
    \end{pmatrix}
    = 0,\
\begin{pmatrix}
X^{T}\\
X^{T}P_{1}\\
\vdots \\
X^{T}P_{K}
\end{pmatrix}
\eta 
=\begin{pmatrix}
\gamma_{1}+\delta\\
\gamma_1\\
\vdots \\
\gamma_1
\end{pmatrix}.
\end{aligned}
\end{equation}

\section{Relaxing exchangeability to nonexchangeability}
\label{sec::The relax of exchangeable to nonexchangeable}
As in model \eqref{eq::model}, we restate the setting where observations $(Z,X,Y)\in \mathbb{R}^{n \times p}\times \mathbb{R}^{n}\times \mathbb{R}^{n}$ satisfy the following model:
\begin{equation}
    Y=Z\beta+bX+\epsilon,
\end{equation}
where $\epsilon:=(\epsilon_1,\cdots,\epsilon_n)^T$ is an $n$-dimensional noise vector, and our goal is to test the null hypothesis $H_0: b=0$ against the alternative hypothesis $H_1: b \neq 0$.
  We also keep the permutation group $\mathcal{P}_K:=\{P_1,\cdots,P_K\}$ under the group permutation in \Cref{assump::grouppermutation};  $(\pi_k)_{k=1}^{K}$ is defined correspondingly to $\{P_k\}_{k=1}^K \in \mathcal{P}_K$.

  In this section, we want to relax the assumption of exchangeable noise in \Cref{assump::exchanglable}: for any permutation matrix $P\in \mathcal{P}$,
$$(\epsilon_1,\cdots,\epsilon_n) \overset{d}{=} P(\epsilon_1,\cdots,\epsilon_n).$$

As mentioned above in \Cref{sec::exact_tests}, previous work treats exchangeability as the core assumption. The core idea behind the methods in \cite{guan2024,lei2021,wen2025} is to extract exchangeability from $\epsilon$ to obtain identical distributions between their test statistics to ensure exact Type I error control. Can we still obtain Type I error control without the exchangeability assumption? We compare our setting with conformal prediction to illustrate the similarities and differences in \Cref{sec::Exchangeability in conformal prediction}. 

\textbf{Remark}: The insight is that in prediction tasks, for each statistic $i$, the current data point $i$ and the data point $n+1$ that needs to be predicted will have a higher \textbf{``status''} than other data points because what we ultimately want to obtain is a prediction of this data point $n+1$. Therefore, we only need exchangeability with respect to pairwise permutation matrices to measure the relationship between the predicted data point $n+1$ and the current data point $i$. For the inference problem, our goal is to estimate unknown parameters using all existing $n$ data points. Then, in this problem, we can consider that these $n$ data points have the same \textbf{``status''} for our research. For statistics, only the test statistic $R_0$ has a different weight because of the covariate shifting problem, while the other $R_k,k\in\{1,\cdots, K\}$ have the same weight. Therefore, we consider group permutation matrices to measure the overall relationship between all data points and these unknown parameters. The theoretical details can be referred to \Cref{sec::Exchangeability in conformal prediction}.

\subsection{CPT with group permutation beyond exchangeability}
\label{sec::CPTnonexchangeable}
We now turn to the nonexchangeable case of model \eqref{eq::model}, where $\epsilon$ no longer satisfies \Cref{assump::exchanglable}. We first consider CPT under group permutation. We retain a pair ($\gamma ,\eta^*$) satisfying \eqref{eq::grouped CPT equation} and the permutation group $\mathcal{P}_K$ satisfying \Cref{assump::grouppermutation}.
We also denote the test statistic $R_0$ as in \Cref{sec::CPT with group permutation}, which is given by 
$$R_0:=R_0(Y)=\frac{1}{K+1}\sum_{j=1}^K1\{Y^T\eta_0\leq Y^T\eta_j\},$$
and 
\begin{equation}
       \label{R_k}
       R_k:=R_k(Y)=\frac{1}{K+1}\sum_{j=0,j\neq k}^K1\{Y^T\eta_k\leq Y^T\eta_j\}, k=1,\cdots,K.
\end{equation}

where $\eta_{k}=P_k^T\eta^*,\eta_0=\eta^*$. Under the null hypothesis $H_0:b=0$, recall the reconstruction of  CPT with group permutation in \Cref{sec::CPT with group permutation}, the rejection region is constructed as 
$$ [Q_{1-\alpha}(\sum_{i=0}^{K}\frac{1}{K+1}\cdot \delta_{R_i(Y)}),1].$$

We randomly permute the data sequence in our dataset using our permutation group $\mathcal{P}_K$. First, draw a random index $\mathcal{K}\in [K]$ from the multinomial distribution that assigns probability $1/(K+1)$ to each index $i$:
\begin{equation}
\label{mathcalK}
    \mathcal{K} \sim \sum_{i=0}^{K} \frac{1}{K+1} \cdot \delta_{\{i\}}.
\end{equation}
In this case, define 
$$R_k(Y^{\mathcal{K}})=\frac{1}{K+1}\sum_{j=0,j\neq k'}^K1\{(P_{\mathcal{K}}Y)^T(P^T_k\eta^*)\leq (P_{\mathcal{K}}Y)^T\eta_j\}, k=1,\cdots,K.$$

Then, without the exchangeability of $\epsilon$, assuming only that the permutation group $\mathcal{P}_K$ satisfies \Cref{assump::grouppermutation} and the constraint $n/p \geq K$ obtained from \eqref{eq::grouped CPT equation}, we can achieve Type I error control as follows:
\begin{theorem}
\label{thm::Theorem nonexchangeable grouped CPT}
 Under $H_0$, if the permutation group $\mathcal{P}_K:=\{P_0,P_1,\cdots,P_K\}$ satisfies \Cref{assump::grouppermutation}
  then we obtain 
    \begin{equation*}
        \begin{aligned}
        P(R_0 \leq Q_{1-\alpha}
    (\sum_{k=0}^{K}\frac{1}{K+1}\cdot\delta_{R_k}))& \geq 1-\alpha-\sum_{k=1}^K\frac{1}{K+1}\,\,(d_{TV}(R(\epsilon),R(\epsilon^k)) \\
    & \geq 1-\alpha-\sum_{k=1}^K\frac{1}{K+1}\,\,(d_{TV}(\epsilon,\epsilon_{\pi_k})),
        \end{aligned}
    \end{equation*}
    where $Q_{\tau}(\cdot)$ denotes the $\tau$- quantile of its argument, $\delta_{a}$ denotes the point mass at $a$, and $R(\epsilon)$ is a $K+1$-dimensional vector that $(R(\epsilon))_{i}=R_{i-1}(\epsilon),i\in\{1,\cdots,K+1\}.$
    \end{theorem}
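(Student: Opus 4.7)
The plan is to cast the statement as a coupling/total-variation bound in the spirit of \citet{barber2023conformal}, with the group closure in \Cref{assump::grouppermutation} supplying the symmetry that exchangeability would have supplied classically. First I would eliminate the nuisance part of $Y$: under $H_0$ we have $Y = Z\beta + \epsilon$, and the constraints in \eqref{eq::grouped CPT equation} yield $Z_{\pi_k}^T \eta^* = \gamma$, so $Y^T \eta_k = Y^T(P_k^T \eta^*) = \gamma^T \beta + (P_k \epsilon)^T \eta^*$. The constant $\gamma^T\beta$ cancels in every comparison defining $R_k$, so $R_k(Y)$ depends on the data only through the $K+1$ scalars $S_j := (P_j \epsilon)^T \eta^*$. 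Viewing $R_k$ as a function $R_k(\varepsilon)$ of a generic noise vector, the group closure $P_j P_k = P_{\sigma(j,k)}$ together with the fact that $j\mapsto \sigma(j,k)$ is a bijection of $\{0,\ldots,K\}$ sending $0$ to $k$ yields the pointwise symmetry
\begin{equation*}
R_j(P_k \epsilon) = R_{\sigma(j,k)}(\epsilon), \qquad j,k\in\{0,\ldots,K\},
\end{equation*}
and hence the empirical measure $\frac{1}{K+1}\sum_j \delta_{R_j(\cdot)}$, together with its $(1-\alpha)$-quantile, is invariant under $\epsilon\mapsto P_k\epsilon$.

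Next I would run the standard ``rotation around the group'' decomposition. Define
\begin{equation*}
g_k(\varepsilon) = \mathbf{1}\Bigl\{R_k(\varepsilon) > Q_{1-\alpha}\Bigl(\tfrac{1}{K+1}\sum_{j=0}^K \delta_{R_j(\varepsilon)}\Bigr)\Bigr\}.
\end{equation*}
By the definition of an empirical quantile, $\sum_{k=0}^K g_k(\varepsilon) \leq \alpha(K+1)$ deterministically; and by the symmetry above combined with quantile invariance, $g_0(P_k\epsilon) = g_k(\epsilon)$ pointwise, so $\E[g_0(P_k\epsilon)] = \E[g_k(\epsilon)]$. Writing $(K+1)\E[g_0(\epsilon)] = \sum_{k=0}^K \E[g_0(\epsilon)]$ and adding and subtracting $\E[g_0(P_k\epsilon)]$ in each summand gives
\begin{equation*}
\E[g_0(\epsilon)] = \tfrac{1}{K+1}\sum_{k=0}^K \E[g_k(\epsilon)] + \tfrac{1}{K+1}\sum_{k=1}^K \bigl(\E[g_0(\epsilon)] - \E[g_0(P_k\epsilon)]\bigr) \leq \alpha + \tfrac{1}{K+1}\sum_{k=1}^K d_{TV}\bigl(R(\epsilon), R(\epsilon_{\pi_k})\bigr),
\end{equation*}
where the first sum is controlled by the deterministic inequality above, and each difference in the second sum is bounded by $d_{TV}$ because $g_0$ is a $\{0,1\}$-valued function of $R(\varepsilon)$. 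A second application of data processing yields $d_{TV}(R(\epsilon), R(\epsilon_{\pi_k})) \leq d_{TV}(\epsilon, \epsilon_{\pi_k})$, giving the two inequalities in the theorem after rewriting $\E[g_0(\epsilon)] = P(R_0 > Q_{1-\alpha}(\cdot))$.

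The main technical hurdle, and where the full strength of \Cref{assump::grouppermutation} is used, is establishing the pointwise identity $R_j(P_k\epsilon) = R_{\sigma(j,k)}(\epsilon)$: one must verify that for any fixed $k$, the map $j\mapsto\sigma(j,k)$ defined by $P_j P_k = P_{\sigma(j,k)}$ is a bijection on $\{0,\ldots,K\}$ that sends $\{j : j\neq 0\}$ bijectively onto $\{m : m\neq k\}$, so that the sum defining $R_0(P_k\epsilon)$ reindexes exactly into the sum defining $R_k(\epsilon)$. This bijectivity is precisely group closure together with the existence of an identity and inverses inside $\mathcal{P}_K$ (cf.\ \Cref{prop::Proposition for group permutation}), and it is what lets the argument bypass exchangeability of $\epsilon$; the remainder is bookkeeping parallel to Theorem~2 of \citet{barber2023conformal}.
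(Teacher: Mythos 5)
Your argument is correct and rests on the same pillars as the paper's proof: group closure (plus the existence of inverses in $\mathcal{P}_K$) yields the pointwise reindexing $R_j(P_k\epsilon) = R_{\sigma(j,k)}(\epsilon)$ and the consequent invariance of the empirical quantile, which combined with the deterministic strange-points bound and the TV data-processing inequality gives the result. The only difference is bookkeeping: the paper follows \citet{barber2023conformal} in drawing an auxiliary random index $\mathcal{K}\sim\mathrm{Unif}\{0,\ldots,K\}$, pushing the event through the group action, and decomposing over the value of $\mathcal{K}$, whereas you unroll that device into the equivalent telescoping identity $\E[g_0(\epsilon)] = \tfrac{1}{K+1}\sum_k \E[g_k(\epsilon)] + \tfrac{1}{K+1}\sum_{k\geq 1}\bigl(\E[g_0(\epsilon)] - \E[g_0(P_k\epsilon)]\bigr)$ directly; these are the same argument in different clothing.
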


The above theorem explains why exchangeability yields the validity bound in \Cref{thm::thorem of group CPT}: exchangeability makes the total variation (TV) distance between $\epsilon$ and $\epsilon_{\pi_k}$ equal to $0$ for every $\pi_k\in \mathcal{P}_K$. In practice, achieving exchangeability may be hard; however, we can ensure that the TV distance described above is not too large so that it can still yield an acceptable Type I error bound. The detailed proof and a more generalized version of Lemma \ref{lem::Theorem nonexchangeable grouped CPT} can be found in Section
\ref{detailed proof of theorem nonexchangeable grouped CPT}.

\subsection{PALMRT with group permutation beyond exchangeability}
To overcome the constraint $n/p \geq K$ implied by \eqref{eq::grouped CPT equation}, we turn our attention to the Grouped PALMRT without exchangeability in Section \ref{Grouped PALMRT} which has the form
 \begin{equation}
       \phi= \frac{1}{K+1}\sum_{k=1}^{K}1\{X^{T}(I-H^{ZZ_{\pi_k}})Y > (X)_{\pi_k}^{T}(I-H^{ZZ_{\pi_k}})Y\}.
   \end{equation}

The permutation group $\mathcal{P}_K:=\{I(:=P_0),P_1,\cdots,P_K\}$ also satisfies \Cref{assump::grouppermutation}, since from \Cref{prop::Proposition for group permutation}, $I\in\mathcal{P}_K$, for simplicity we denote $P_0:=I$. We then define the matrix of core residual statistics $F(\epsilon)\in \mathbb{R}^{(K+1)\times(K+1)}$. 
 \begin{equation*}
 (F(\epsilon))_{i,j}:=F(\pi_{i-1},\pi_{j-1};x,Z,\epsilon)=X_{\pi_{i-1}}^{T}(I-H^{Z_{\pi_{i-1}}Z_{\pi_{j-1}}})\epsilon, i,j\in\{1,\cdots, K+1\}.
   \end{equation*}
 We directly observe that $\pi_k$ belongs to the corresponding set of permutations.
 $X^T(I-H^{ZZ_{\pi_k}})Y=X^T(I-H^{ZZ_{\pi_k}})\epsilon,\,(X)_{\pi_k}^T(I-H^{ZZ_{\pi_k}})Y=(X)_{\pi_k}^T(I-H^{ZZ_{\pi_k}})\epsilon.$
   \begin{theorem}
   \label{thm::Theorem nonexchangeable grouped PALMRT}
Under $H_0$, if $\mathcal{P}_K:=\{I(:=P_0),P_1,\cdots,P_K\}$ satisfies \Cref{assump::grouppermutation}, then we have 
 \begin{equation*}
     \begin{aligned}
         P(\sum_{k=1}^K \frac{1}{K+1} \cdot 1\{X^{T}(I-H^{ZZ_{\pi_k}})&Y > (X)_{\pi_k}^{T}(I-H^{ZZ_{\pi_k}})Y\}< 1-\alpha)\\&\geq 1-2\alpha-\sum_{k=1}^K\frac{1}{K+1}\cdot d_{TV}(T(\epsilon),T((\epsilon)^k))
         \\&\geq 1-2\alpha-\sum_{k=1}^K\frac{1}{K+1} \cdot d_{TV}(\epsilon,\epsilon_{\pi_k}).
     \end{aligned}
 \end{equation*}
    \end{theorem}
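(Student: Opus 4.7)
The plan is to import the argument of \Cref{Thm::PALMRT} into a weighted conformal-style framework parallel to the one used for \Cref{thm::Theorem nonexchangeable grouped CPT}. Using $\mathbbm{1}\{A\le B\}+\mathbbm{1}\{A>B\}=1$ coordinate-wise, the event in the statement can be rewritten as $\{\phi>\alpha\}$, with $\phi$ as in \eqref{eq::PALMRTphi}; the claim is therefore equivalent to
\begin{equation*}
P(\phi\le\alpha)\;\le\;2\alpha\,+\,\frac{1}{K+1}\sum_{k=1}^{K} d_{TV}\bigl(T(\epsilon),T(\epsilon_{\pi_k})\bigr),
\end{equation*}
which is precisely the exchangeable $2\alpha$ bound of \Cref{Thm::PALMRT} degraded additively by a weighted sum of TV distances.

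The first step is to re-express $\phi$ through the bilinear object $F(\pi_a,\pi_b;X,Z,\epsilon)=X_{\pi_a}^{T}(I-H^{Z_{\pi_a}Z_{\pi_b}})\epsilon$ of \Cref{prop::Finverse}, together with the comparison matrix $r_{ab}(\epsilon)=\mathbbm{1}\{F(\pi_a,\pi_b;\epsilon)<F(\pi_b,\pi_a;\epsilon)\}+\tfrac12\mathbbm{1}\{F(\pi_a,\pi_b;\epsilon)=F(\pi_b,\pi_a;\epsilon)\}$ and its row sums $R_a(\epsilon)=\tfrac{1}{K+1}\sum_b r_{ab}(\epsilon)$, exactly as in \Cref{Thm::PALMRT}. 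Combining \Cref{prop::Finverse} with the group closure in \Cref{assump::grouppermutation}, I would prove the deterministic re-indexing identity
\begin{equation*}
R_a(\epsilon_{\pi_c}) \;=\; R_{a'}(\epsilon),\qquad \pi_{a'}=\pi_c^{-1}\pi_a,
\end{equation*}
holding pathwise for every $a,c\in\{0,\ldots,K\}$: conjugation by $\pi_c^{-1}$ sends $(\pi_a,\pi_b)\mapsto(\pi_c^{-1}\pi_a,\pi_c^{-1}\pi_b)$, and as $b$ ranges over the index set, so does $b'$ with $\pi_{b'}=\pi_c^{-1}\pi_b$, so the row sum is merely relabelled.

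The second step turns this identity into an averaged exchangeable-style bound that does \emph{not} require exchangeability of $\epsilon$. Using the bijection $c\mapsto a'(c)$ on $\{0,\ldots,K\}$ induced by $c\mapsto \pi_c^{-1}$,
\begin{equation*}
\frac{1}{K+1}\sum_{c=0}^{K} P\bigl(R_0(\epsilon_{\pi_c})\le \alpha\bigr)
\;=\; \frac{1}{K+1}\sum_{a=0}^{K} P\bigl(R_a(\epsilon)\le \alpha\bigr)
\;=\; E\!\left[\frac{|S(\epsilon)|}{K+1}\right] \;\le\; 2\alpha,
\end{equation*}
where the last inequality is the deterministic row-counting bound $|S|/(K+1)\le[2\alpha(K+1)]/(K+1)$ from \Cref{Thm::PALMRT}, whose proof rests only on the complementarity $r_{ab}+r_{ba}=1$ and therefore transfers to arbitrary noise laws.

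The third step is the TV swap. For every $c$, both $\{R_0(\epsilon)\le\alpha\}$ and $\{R_0(\epsilon_{\pi_c})\le\alpha\}$ are measurable functions of the matrix-valued statistic $T(\cdot):=\bigl(F(\pi_a,\pi_b;X,Z,\cdot)\bigr)_{a,b=0}^{K}$, so
\begin{equation*}
P\bigl(R_0(\epsilon)\le \alpha\bigr) - P\bigl(R_0(\epsilon_{\pi_c})\le \alpha\bigr)
\;\le\; d_{TV}\bigl(T(\epsilon),T(\epsilon_{\pi_c})\bigr)
\;\le\; d_{TV}\bigl(\epsilon,\epsilon_{\pi_c}\bigr),
\end{equation*}
the last inequality being the data-processing inequality. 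Averaging over $c$ (the $c=0$ term vanishes) and combining with the previous display yields
\begin{equation*}
P(\phi\le\alpha)\;\le\;2\alpha\,+\,\frac{1}{K+1}\sum_{k=1}^{K} d_{TV}\bigl(T(\epsilon),T(\epsilon_{\pi_k})\bigr)\;\le\;2\alpha\,+\,\frac{1}{K+1}\sum_{k=1}^{K} d_{TV}\bigl(\epsilon,\epsilon_{\pi_k}\bigr),
\end{equation*}
which is the contrapositive of the bound to be proved.

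The main obstacle is the careful treatment of ties when transporting $r_{ab}$ through \Cref{prop::Finverse}: one must verify that the half-weighting on the equality event commutes with the substitution $\epsilon\mapsto\epsilon_{\pi_c}$, so that the pathwise identity $R_a(\epsilon_{\pi_c})=R_{a'}(\epsilon)$ survives even on the measure-zero coincidence set. A secondary subtlety is that, following the remark after \Cref{Thm::PALMRT}, the clean bound is naturally established for the tie-adjusted statistic $\phi'$ from \eqref{phi'}; since $\phi\le\phi'$ pointwise, the resulting control on $P(\phi'\le\alpha)$ immediately transfers to $P(\phi\le\alpha)$, completing the argument.
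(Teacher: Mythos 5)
Your proposal is correct and reaches the bound by essentially the same conceptual ingredients as the paper—group re-indexing via \Cref{prop::Finverse} and \Cref{prop::Proposition for group permutation}, a deterministic bound on the number of ``strange'' indices, and the total-variation data-processing swap—but you package them more directly. The paper proves a weighted generalization (Lemma \ref{lem::Theorem nonexchangeable grouped PALMRT}) by introducing a randomly drawn index $\mathcal{K}$, passing to the permuted vector $\epsilon^{\mathcal{K}}$, defining a \emph{matrix-valued} strange set $S(r)\subseteq\{0,\dots,K\}$ for $r\in\mathbb{R}^{(K+1)\times(K+1)}$, and invoking an external lemma that $\sum_{k\in S(r)}w_k\le2\alpha$; you instead stay with the scalar row sums $R_a(\epsilon)$ from \Cref{Thm::PALMRT}, note that the pathwise bound $\phi\ge R_0$ and the counting argument $\sum_a P(R_a\le\alpha)=E|S|\le2\alpha(K+1)$ depend only on the complementarity $r_{ab}+r_{ba}=1$ (and hence need no exchangeability), and then average over the fixed group elements $c$ before applying $\lvert P(R_0(\epsilon)\le\alpha)-P(R_0(\epsilon_{\pi_c})\le\alpha)\rvert\le d_{TV}(T(\epsilon),T(\epsilon_{\pi_c}))\le d_{TV}(\epsilon,\epsilon_{\pi_c})$. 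This buys a self-contained argument that reuses the machinery of \Cref{Thm::PALMRT} verbatim and avoids both the randomized index and the separate matrix strange-set lemma, at the cost of only giving the uniform-weight case (the theorem itself), not the full weighted Lemma. Your tie-handling concern is resolved exactly as you suspect: the re-indexing $r_{ab}(\epsilon_{\pi_c})=r_{a'b'}(\epsilon)$ is a pointwise identity in $\epsilon$ because both the strict-inequality and equality indicators transport under the substitution $F(\pi_a,\pi_b;\epsilon_{\pi_c})=F(\pi_c^{-1}\pi_a,\pi_c^{-1}\pi_b;\epsilon)$, so no measure-zero pathology arises.
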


In summary, Theorem~\ref{thm::Theorem nonexchangeable grouped PALMRT} quantifies how the exact Type I validity of group-based PALMRT under exchangeability extends to nonexchangeable settings, with explicit error inflation controlled by total variation distances.

\section{Conclusion}\label{conclusion}

This paper develops a group-based perspective on permutation testing for linear models,
clarifying how finite-sample validity, power, and robustness are governed by the underlying
symmetry structure.
By formulating permutation-augmented regression tests within an explicit group framework,
we establish sharp Type I error bounds, provide a principled handle on Type II error through
group optimization, and show how exact-style inference can be extended beyond exchangeability
via stability guarantees inspired by weighted conformal inference.
Together, these results unify classical exact tests and their robust counterparts within a
single finite-sample framework.

Several limitations and directions for future work remain.
Designing permutation groups that are provably optimal for power under complex dependence
structures is largely open.
Extending to high-dimensional or multiple testing problems, and to sequential
or adaptive experimental designs, also presents substantial challenges.
Finally, while total variation distance yields transparent robustness bounds, developing less conservative discrepancy measures for structured nonexchangeability is also important for further research.


\clearpage


\bibliography{reference}

\newpage
\appendix
\begin{center}
    {\huge Supplementary material for ``Group Permutation Testing in Linear Models: Sharp Validity, Power Improvement, and Extension Beyond Exchangeability''}
\end{center}

This supplementary material provides the detailed theoretical proofs, algorithmic specifics, and expanded numerical analyses for the results presented in the main text. After the literature review (\Cref{sec:lit}), the contents are organized as follows:

\Cref{sec::Section A} contains the formal proofs for the exchangeable noise setting, specifically addressing the sharp Type I error bounds for PALMRT under group permutations (\Cref{Thm::PALMRT} in \Cref{Sec::section of proof of theorem 1}) and the construction of worst-case scenarios to demonstrate tightness (\Cref{proposition of 2alpha} in \Cref{sec::Proof of proposition factor 2}). It also includes the proof for the group-based Cyclic Permutation Test (CPT) under exchangeability (\Cref{thm::thorem of group CPT} in \Cref{sec::Proof of grouped CPT}).

\Cref{sec::Section B} provides a comprehensive analysis of Type II error control. It details the formulation of the optimization problem in \Cref{sec::optimization problem}, the combination of group decomposition with the optimization function (\eqref{optimization objective for alg}), together with a high-probability guarantee, in \Cref{sec::Group high prob guarantee}. The full specification and theoretical guarantees for Algorithm 1 and its subroutines (Algorithms 2–6) used for adaptive group construction in \Cref{sec::The algorithm for group decomposition} are also provided. Then, we carefully compare our algorithm with the random permutation in \Cref{subsubsection: comparison with random permutation}. The detailed proofs of the theorems and lemmas are shown in \Cref{sec::Type ii error proof of main results}.

Section~\ref{proof_nonex} extends the framework beyond the exchangeability assumption. It leverages connections to weighted conformal inference to prove the robustness guarantees for CPT (\Cref{thm::Theorem nonexchangeable grouped CPT} in \Cref{detailed proof of theorem nonexchangeable grouped CPT}) and PALMRT (\Cref{thm::Theorem nonexchangeable grouped PALMRT} in \Cref{sec::Proof of PALMRT beyond exchangeable}) in nonexchangeable settings, quantifying error inflation through total variation distances.

\section{Literature review}\label{sec:lit}

Exact finite-sample inference for regression coefficients has a long history but has recently seen renewed interest driven by modern regimes where the design is fixed, $n$ is moderate, and the noise law
may be complex or only partially characterized.
This paper concerns testing a single target coefficient $b$ in the fixed-design linear model
$Y = Z\beta + bX + \epsilon$ with nuisance $Z$ and unknown noise $\epsilon$.
Our review emphasizes three intertwined themes: (i) how exactness is obtained from symmetry,
(ii) how the chosen transformation set affects power, and (iii) how one can quantify the loss of calibration
when symmetry is only approximate.

\subsection{Randomization and permutation inference}


A convenient way to formalize finite-sample randomization inference is through an explicit
\emph{group action}. Let $\mathcal{Y}$ denote the sample space of the observed data object $Y$
(e.g., the response vector under a fixed design), and let $\mathcal{G}$ be a finite set of transformations
$g:\mathcal{Y}\to\mathcal{Y}$. We say that $\mathcal{G}$ acts on $\mathcal{Y}$ if it forms a group under
composition (contains the identity, is closed under composition, and contains inverses), so that
each $y\in\mathcal{Y}$ generates an \emph{orbit} $\mathcal{O}(y)=\{g\cdot y: g\in\mathcal{G}\}$.
A null model $H_0$ is \emph{$\mathcal{G}$-invariant} if the law of $Y$ satisfies
$Y \stackrel{d}{=} g\cdot Y$ for all $g\in\mathcal{G}$ (equivalently, the likelihood is constant on orbits);
see, e.g., \citet{eaton1989group,lehmann2005testing,good2005permutation,pesarin2010permutation}. Under $\mathcal{G}$-invariance, exact randomization $p$-values arise by comparing a test statistic
$T(Y)$ to its orbit $\{T(g\cdot Y): g\in\mathcal{G}\}$ when $g$ is drawn uniformly from $\mathcal{G}$.
The key point is that group structure makes the induced randomization distribution {well-defined} and {orbit-wise identical}: the conditional law of $\{T(g\cdot Y): g\in\mathcal{G}\}$ does not depend on which
representative of the orbit we start from, yielding finite-sample exactness (up to the usual discreteness/ties)
for level-$\alpha$ tests; see the classical symmetry arguments in \citet{hoeffding1952large} and modern
treatments in \citet{lehmann2005testing,edgington2007randomization}.

In many classical linear testing problems---e.g., ANOVA under exchangeable errors---the admissible
transformations are naturally a permutation group (often the full symmetric group, or a transitive subgroup)
acting on sample indices \citep{fisher1935design,pesarin2010permutation}. In regression with nuisance
parameters or data-dependent preprocessing, however, the transformations actually used in practice
(e.g., residual permutation schemes, studentization, or permutation after fitting) may no longer be closed under
composition or may depend on the data through estimated nuisance quantities. In such cases, the effective
collection of transformations fails to form a group, and the exact orbit-invariance argument can break down
\citep{anderson2001permutation,freedman1983nonstochastic,winkler2014permutation}.
This tension between finite-sample exactness and model complexity motivates methods that either (i) recover an
explicit group action by construction or (ii) relax exact invariance to approximate symmetry with explicit error
control \citep{janssen1997studentized,chung2013exact,canay2017randomization,diciccio2017robust}.

\subsection{Exact finite-sample tests for fixed-design regression under exchangeable errors}

Building on the group-invariance perspective discussed above
\citep{eaton1989group,lehmann2005testing,hoeffding1952large},
a growing literature studies \emph{finite-sample exact} hypothesis testing for regression
coefficients under a fixed design by exploiting the exchangeability of the regression errors.
Rather than relying on asymptotic normality, these approaches aim to construct test
statistics whose null distribution is invariant---exactly or conservatively---under
collections of transformations that preserve the joint law of the error vector
\citep{freedman1983nonstochastic,pesarin2010permutation,edgington2007randomization}.

First, a representative example is the cyclic permutation test (CPT) of \citet{lei2021}, which
achieves exactness by explicitly encoding exchangeability through a structured cyclic
permutation group.
By imposing linear constraints that eliminate nuisance coefficients, CPT constructs linear
statistics whose joint distribution under $H_0$ is invariant to cyclic shifts.
This yields finite-sample exact tests under minimal distributional assumptions on the errors,
fully aligned with the classical group-invariance paradigm
\citep{eaton1989group,lehmann2005testing}.
From a regression perspective, CPT can be interpreted as testing the independence between the
target coefficient and the response after projecting out nuisance effects.
A key limitation is that the associated linear constraint system admits nontrivial solutions
only under restrictive dimensional regimes; these constraints become increasingly severe when
finer randomization resolution or richer fixed designs are desired. Second, to relax the algebraic constraints required by CPT, \citet{wen2025} propose the residual
permutation test (RPT).
Instead of enforcing an explicit group action, RPT elegantly projects the response onto orthogonal
complements of augmented design spaces before applying permutation inference, in the spirit
of residual-based permutation schemes for regression
\citep{freedman1983nonstochastic,winkler2014permutation}.
This relaxation substantially broadens applicability while preserving finite-sample Type~I
error control under exchangeable errors.
Viewed through the lens of coefficient independence testing, RPT aggregates evidence across
permutations via a conservative minimum operator.
While such aggregation guarantees validity even in the absence of exact group closure, it
typically induces conservativeness; consequently, the power properties of the test---and
their dependence on the chosen permutation set---are nontrivial to characterize sharply
\citep{janssen1997studentized,chung2013exact}.

More recently, \citet{guan2024} proposes permutation-augmented linear model regression tests
(PALMRT), which replace conservative aggregation by pairwise permutation comparisons.
This modification can yield improved empirical power and admits explicit nonasymptotic
Type~I error guarantees under exchangeable errors.
However, the resulting Type~I bounds are generally not sharp when
viewed through the classical orbit-invariance lens
\citep{lehmann2005testing,chung2013exact}.
Moreover, sampling permutations from the full symmetric group obscures the geometry of the
transformation set, a phenomenon already noted in the broader permutation-testing literature
\citep{pesarin2010permutation,edgington2007randomization}, limiting interpretability and
principled power optimization for coefficient independence testing.

Taken together, these works illustrate a fundamental trade-off between explicit invariance,
algebraic feasibility, and power characterization in finite-sample regression testing under
exchangeable errors.

\subsection{Beyond exchangeability: robustness under approximate symmetry}

The preceding sections highlight that finite-sample exactness in randomization and
permutation inference is fundamentally tied to explicit symmetry or group-invariance
assumptions \citep{eaton1989group,lehmann2005testing,hoeffding1952large}.
In fixed-design regression, recent advances demonstrate that such symmetry can be
operationalized under exchangeable errors to yield exact tests for regression coefficients
\citep{lei2021,wen2025,guan2024}.
At the same time, these constructions also make clear that exact validity hinges on the
correct specification of the invariance structure and may fail under even mild departures
from exchangeability.

In practice, exact symmetry assumptions are frequently violated by heteroskedasticity,
dependence, batch effects, or other forms of structured noise.
A substantial classical literature therefore studies permutation procedures that remain
\emph{asymptotically} valid under non-i.i.d.\ sampling by means of studentization,
self-normalization, or related devices; see, for example,
\citet{janssen1997studentized}, \citet{chung2013exact}, and \citet{romano2005stepdown}.
While these approaches recover asymptotic size control under broad conditions, their validity
rests on large-sample approximations and typically does not admit explicit nonasymptotic
calibration guarantees indexed by a measurable deviation from exchangeability or symmetry.

A complementary and more quantitative perspective emerges from conformal prediction.
Under exact exchangeability, conformal methods are known to achieve exact finite-sample
coverage without distributional assumptions
\citep{vovk2005algorithmic,lei2018distribution,shafer2008tutorial}.
More recently, \citet{barber2023conformal} show that when exchangeability is violated,
weighted conformal procedures retain coverage up to an explicit degradation term controlled
by total variation distances between the data distribution and its swapped counterparts.
Related stability-based guarantees appear in
\citet{fannjiang2022conformal,kivaranovic2024conformal}, and build on conservative
importance-weighting arguments developed in the covariate-shift and domain-adaptation
literature \citep{harrison2012importance,sugiyama2007covariate}.
Collectively, these results replace exact invariance with a quantitative notion of
\emph{approximate symmetry}, yielding finite-sample guarantees that degrade gracefully with
the degree of symmetry violation.

Importantly for our setting, this stability-based conformal viewpoint aligns naturally with group-structured permutation designs. When the admissible transformations form a finite and interpretable collection of group
actions---as in the randomization and regression settings discussed above---departures from
exact invariance can be localized to a small, structured family of transformations.
This localization renders stability measures both computable and meaningful.
By contrast, in settings involving unrestricted permutations, complex dependence graphs, or
sequential data, the number of admissible swaps grows combinatorially, causing
total-variation-based stability bounds to become either vacuous or analytically intractable
\citep{pesarin2010permutation,edgington2007randomization}.
The alignment between group-structured transformations and stability-aware conformal
calibration therefore potentially provides a natural and principled bridge between exact randomization
inference and robust finite-sample guarantees under approximate symmetry.
\subsection{Preliminaries for linear exact test with permutation methods}
\label{sec::exact_tests submateral}


Recently, several works have focused on inference for regression parameters in the linear model~\eqref{eq::model}, leveraging permutation methods under exchangeable noise. Since the noise vector $\epsilon$ is unobserved and only enters the data through the response $Y$, a common theme across this literature is to transfer (or ``propagate'') the exchangeability of $\epsilon$ to suitably constructed test statistics. This guiding principle is made transparent by the discussion below. \citet{lei2021} introduce the cyclic permutation test (CPT).
    Based on the model \eqref{eq::model}, they construct the linear statistics as $
      S_j=Y^T\eta_{j}, j=0,1,\cdots,m,
 $
  where $m \leq n,m\in \mathbb{N}^*$ and  $\eta_j$ satisfy the following two conditions,
  \begin{condition}[Conditions on the vectors $\eta_j$]\label{con::lei}
    (i)
     $(Z^T\eta_j)^T=C,j=0,1,\cdots,m$.
    (ii)
    $\eta_j= \eta^*_{\pi_j}, j=0,1\cdots,m,$~where permutation operators  $\{\pi_k\}_{k=1}^n$ satisfy Assumption~\ref{assump::leftshifting} below.
  \end{condition}

  \begin{assumption}[Left shifting permutation]
\label{assump::leftshifting}
We construct a set of permutation matrices $\mathcal{P}^L_{m}:=\{P_k\}_{k=0}^{m}, P_k\in \mathcal{P}_n$, corresponding to the operator $\{\pi_k\}_{k=0}^{m}$. 

For each $j\in \{0,1,\cdots,m\}$, the permutation matrix $P_j$ is defined as follows: Let the block size be $t=\lfloor n/(m+1)\rfloor$. The entry $(P_j)_{rs}$ of the matrix $P_j$, for row and column indices $r,s\in\{0,1,\cdots,n-1\}$, is given by:
\begin{equation*}
  (P_j)_{rs} = 
  \begin{cases} 
    1 & \text{if } r, s < (m+1)t \text{ and } \lfloor s/t \rfloor = (\lfloor r/t \rfloor + j) \pmod{m+1} \text{ and } s \pmod{t} \\ 
    1 & \text{if } r, s \ge (m+1)t \text{ and } r = s \\ 
    0 & \text{otherwise} 
  \end{cases}
\end{equation*}
For simplicity, for $\eta \in \mathbb{R}^{n}$, let  $t = \lfloor n / (m+1) \rfloor$, we have 

\[
\eta
= \Bigl(
\underbrace{\eta_{1},\ldots,\eta_{t}}_{\text{block }1},
\ldots,
\underbrace{\eta_{mt+1},\ldots,\eta_{(m+1)t}}_{\text{block }(m+1)},
\underbrace{\eta_{(m+1)t+1},\ldots,\eta_{n}}_{\text{tail}}
\Bigr)^{\top}.
\]

\[P_j \eta = \Bigl( \underbrace{\eta_{jt+1}, \dots, \eta_{(j+1)t}}_{\text{block~} (j+1)}, \dots, \underbrace{\eta_{mt+1}, \dots, \eta_{(m+1)t}}_{\text{block~} (m+1)}, \underbrace{\eta_{1}, \dots, \eta_{t}}_{\text{block 1}}, \dots, \underbrace{\eta_{(j-1)t+1}, \dots, \eta_{jt}}_{\text{block j}}, \underbrace{\eta_{(m+1)t+1}, \dots, \eta_{n}}_{\text{tail part}} \Bigr)^{\top}.\]

\end{assumption}

 The definition of $P_j$ in~\Cref{assump::leftshifting} is the permutation matrix version of the definition of the left shifting operator $\pi_L^{j}$ in \citet[C~2]{lei2021}. The two constraints for $\eta_j$ in Condition \ref{con::lei} allow us to establish a system of linear equations to determine the value of $(\eta^*,C).$ Then, their test statistic is the rank of $S_0$ among $\{S_j\}_{j=0}^{m}$ in descending order, denoted as $R_0$.
The idea behind their method is that they extract the exchangeability from the noise $\epsilon$ through the parameter $\eta$. Under $H_0: b=0$, their method divides the $Y^T\eta_j$ into two parts:
  $$Y^T\eta_j=\underbrace{(Z^T\eta_j)^T\beta}_{\text{deterministic part}}+\underbrace{\epsilon^T\eta_j }_{\text{stochastic part}},$$
where $\eta_j, j=1,\cdots,m$ is determined through the two constraints above. Then, $\epsilon$ satisfies \Cref{assump::exchanglable}, which ensures the  control of Type I error.
  \begin{equation}
  \label{eq::lei type I error bound}
      P(\frac{R_0}{m+1}\leq \alpha)\leq \alpha.
  \end{equation}                                
For Type II error, under $H_1$ the invariance of S no longer holds. The idea is therefore to make $S_0$ sufficiently separated from the other $S_j$, $j = 1,\ldots,m,$ and then formulate an optimization problem to control the theoretical Type II error. However, the system of linear equations formed by Condition \ref{con::lei} has a non-zero solution for $(\eta^*,C)$ if
  \begin{equation}
   \label{eq::lei constraint}
      n>(\frac{1}{\alpha}-1)p\,
  \end{equation} which is difficult to obtain in large dimensions. In order to relax the constraint \eqref{eq::lei constraint}, \citet{wen2025} introduce the residual permutation method (RPT) to improve the extraction of exchangeability from $\epsilon$, thereby avoiding the use of the two constraints for $\eta_j$. They introduce the permutation matrices $\mathcal{P}_K:=\{P_k\}_{k=1}^K$ that satisfy \Cref{assump::grouppermutation}. They construct their statistics as: 
  \begin{equation}
  S_k=X^T(I-H^{ZZ_{\pi_k}})Y ;~~S_k'=X^T(I-H^{ZZ_{\pi_k}})Y_{\pi_k},
  \end{equation}
  where  $H^{*}$ denotes the projection matrix onto the column space of its arguments, and suppose $(Z,Z_{\pi_k})$ has full column rank.
  Under $H_0:\,\,b=0$, their method can be simplified as: 
$$X^T(I-H^{ZZ_{\pi_k}})Y=\underbrace{X^T(I-H^{ZZ_{\pi_k}})Z\beta}_{\text{orthogonal part}}+\underbrace{X^T(I-H^{ZZ_{\pi_k}})\epsilon }_{\text{stochastic part}},$$
and 
$$X^T(I-H^{ZZ_{\pi_k}})Y_{\pi_k}=\underbrace{X^T(I-H^{ZZ_{\pi_k}})Z_{\pi_k}\beta}_{\text{orthogonal part}}+\underbrace{X^T(I-H^{ZZ_{\pi_k}})\epsilon_{\pi_k} }_{\text{stochastic part}}$$
Since both $Z$ and $Z_{\pi_k}$ lie in the column space onto which $H_{ZZ_{\pi_k}}$ projects, the orthogonal terms vanish: $X^T (I - H_{ZZ_{\pi_k}}) Z \beta = 0$ and $X^T (I - H_{ZZ_{\pi_k}}) Z_{\pi_k} \beta = 0$. Through the above construction, they transfer the exchangeability from noise $\epsilon$ and $\epsilon_{\pi_k}$ to their statistics $S_k$ and $S'_k$. To preserve the stochastic-dominance property of the statistics relative to the uniform distribution (see Lei and Bickel [2021, Proposition 1]), they introduce the minimum operator and retain a rank-based test statistic $\phi_1$:

\begin{equation*}
       \phi_1:=\frac{1}{K+1}(1+\sum_{k=1}^K\mathbbm{1}\{\min_{\widetilde{H}\in\{H^{ZZ_{\pi_1}},\cdots,H^{ZZ_{\pi_{K}}}\}}X^T(I-H^{ZZ_{\pi_k}})Y\leq X^T(I-H^{ZZ_{\pi_k}})Y_{\pi_k}\}).
   \end{equation*}
By projecting onto the complementary space, they transfer the exchangeability to the statistics $S_k,S_k'$ without using Condition \ref{con::lei}. They can also ensure Type I error control.
 \begin{equation}
     \label{eq::Wen type I error control}
     P(\phi_1\leq \alpha)\leq \alpha,
 \end{equation}
 under the constraints $p<\frac{n}{2}$, which is a substantial relaxation of the previous constraints \eqref{eq::lei constraint}. 

 However, to maintain the Type I constraint in (\ref{eq::Wen type I error control}), \citet{wen2025} introduce the minimum operator, which makes this test conservative and affects the Type II error. \citet{guan2024} aims to remove the minimum operator. Their method is also built on the exchangeability of $\epsilon$ and on permutation-based inference. However, they introduce \Cref{assump::iidpermute} to construct the permutation matrices.
 $(\pi_k)_{k=0}^{K}$ is defined in  correspondence with permutation matrices $\{P_0,\cdots, P_K\} \in \mathcal{P}_n$. After constructing the permutation matrices, they define their test statistic as follows:
 $$\begin{aligned}
     \phi_2 =\frac{1}{K+1}(1+\sum_{k=1}^K\mathbbm{1}\{X^T(I-H^{ZZ_{\pi_k}})Y\leq X_{\pi_k}^T(I-H^{ZZ_{\pi_k}})Y\}).
 \end{aligned}$$
The test uses pairwise comparisons rather than a rank-based construction, leveraging both exchangeability of $\epsilon$ and Assumption \ref{assump::iidpermute}. Also, with the constraint $p<\frac{n}{2}$, they ensure  Type I error control.
 \begin{equation}
 \label{guan Type I error}
     P(\phi_2\leq \alpha)\leq 2\alpha,
 \end{equation} 
Their empirical results show that, while maintaining valid Type I error control, the procedure achieves better Type II error performance than \cite{wen2025}’s statistic $\phi_1$.

\subsection{Exchangeability in conformal prediction}
\label{sec::Exchangeability in conformal prediction}

Conformal prediction is now a widely used method for model-free prediction, and it also relies heavily on the exchangeability assumption. 
Suppose we have i.i.d. training data $Z_i=(X_i,Y_i)\in\mathbb{R}^d\times\mathbb{R},i=1,\cdots,n,$ (we denote $Z_{1:n}:=\{Z_1,\cdots,Z_n\}$ and $Z_{-i}=Z_{1:n}\backslash\{Z_i\}$) and a new test point $X_{n+1},Y_{n+1}$ drawn independently from the same distribution $P$. The problem of conformal prediction is as follows: if we observe training data $\{(X_i,Y_i)\}_{i\in\{1,\cdots,n\}}$ and are given a new feature vector $X_{n+1}$ for the new test point, can we form a prediction interval $C$ depending on $\{(X_i,Y_i)\}_{i\in\{1,\cdots,n\}}$ and $X_{n+1}$, denoted as $\hat{C}_n(X_{n+1})$, to guarantee that, without assumptions on the data distribution $P$, $$P\{Y_{n+1}\in \hat{C}_n(X_{n+1})\}\geq 1-\alpha,$$
for some target coverage level $1-\alpha$.

We focus on classical full conformal prediction as in~\citet{vovk2005algorithmic}, \citet{lei2018distribution}. The core idea behind full conformal prediction is to choose a \textit{nonconformity score} $S((x,y),Z)$. Informally, the nonconformity score $S$ is used to verify the correlation between $(x,y)$ and Z, and it measures how well the point $(x,y)$ conforms to $Z$. A high value of $S((x,y),Z)$ indicates that $(x,y)$ is atypical relative to the points in $Z$.

In its basic form, the full conformal prediction uses the score function $$S(y,z)=|y-\hat{\mu}(x)|,$$
where $\hat{\mu}$ represents a regression model $\hat{\mu}:\mathbb{R}^d\xrightarrow[]{}\mathbb{R}$ that was fitted using the training data $(y,x)$ and $Z$.
Then, conformity scores for each data point are calculated $$V^{(x,y)}_i=S(Z_i,Z_{-i}\cup\{(x,y)\}),i=1,\cdots,n \,\, \text{and } V^{(x,y)}_{n+1}=S((x,y),Z_{1:n}),$$ and the prediction interval is constructed using the former conformity scores. \begin{equation}
\label{conformal prediction set}
    \hat{C}_n(X)=\{y\in \mathbb{R}:V_{n+1}^{(x,y)}\leq \text{Quantile}(1-\alpha;V^{(x,y)}_{1:n}\cup\{\infty\})\}. 
\end{equation}
\begin{assumption}(Pairwise exchangeable)
\label{assumption for pairwise exchangeable}
We write $Z_i=(X_i,Y_i)$ to denote the $i$-th data point and we write $$Z=(Z_1,\cdots,Z_n,Z_{n+1}),$$
to denote the whole data set (training and test). Consider $n$ permutation matrices $P^{pairwise}_i\in \mathcal{P}$, for each $i\in\{1,\cdots,n\},$ 
$$P^{pairwise}_i(Z_1,\cdots,Z_n,Z_{n+1})=(Z_1,\cdots,Z_{i-1},Z_{n+1},Z_{i+1},\cdots,Z_n,Z_{i}),$$
 $\mathcal{P}^{pairwise}_n:=\{P^{pairwise}_i\}_{i=1}^n$ and pairwise exchangeability means that $$Z\overset{d}{=}P^{pairwise}_{i}Z,i\in\{1,\cdots,n\},$$
we say that $Z$ is pairwise exchangeable under $\{I,P^{pairwise}_1,\cdots,P^{pairwise}_n\}.$
\end{assumption}
Then, from \cite{vovk2005algorithmic} and \cite{lei2018distribution}, the full conformal prediction has the following theoretical guarantee: for any $\alpha\in(0,1)$ and $(X_i,Y_i)\in\mathbb{R}^d\times\mathbb{R},i=1,\cdots,n+1$ satisfying Assumption $\ref{assumption for pairwise exchangeable}$ and $\hat{C}_n$ formed as (\ref{conformal prediction set}), then we obtain \begin{equation}
\label{Therom bound for confromal prediction}
    P\{Y_{n+1}\in\hat{C}_n(X_{n+1})\}\geq 1-\alpha,
\end{equation}

\textit{Proof sketch}: In the original statement of the theorem in \cite{vovk2005algorithmic,lei2018distribution}, it is necessary for $(X_i,Y_i)\in\mathbb{R}^d\times\mathbb{R},i=1,\cdots,n+1$ to be exchangeable; however, through a simple modification, we can weaken the exchangeable assumption to the pairwise-exchangeable setting in Assumption \ref{assumption for pairwise exchangeable} to obtain the same error bound as in (\ref{Therom bound for confromal prediction}). The reason for weakening this assumption is to clarify the contribution of exchangeable (pairwise exchangeable) to ensuring prediction accuracy, making it easier to shift to discussions on the nonexchangeable situation.
The core idea of the proof is that 
$$Y_{n+1}\in\hat{C}_n(X_{n+1}) \Longleftrightarrow V^{(X_{n+1},Y_{n+1})}_{n+1}\leq\text{Quantile}(1-\alpha;V^{(X_{n+1},Y_{n+1})}_{1:n}\cup\{\infty\})$$

which yields identical distributions for $V^{(X_{n+1},Y_{n+1})}_{i},i=1,\cdots,n,n+1$
under the assumption of pairwise exchangeability.

There is also a growing literature on relaxing the exchangeability assumption. \citet{cauchois2024robust} and \citet{tibshirani2019conformal} consider the shift of the distributions; through their theoretical proof, they need some further conditions and assumptions about the distribution of $X_i$, which are beyond the scope of our discussion. \cite{barber2023conformal} consider full conformal prediction beyond exchangeability. Under nonexchangeability, the identical distribution of $V^{(X_{n+1},Y_{n+1})}_{i},i=1,\cdots,n,n+1$ defined above will not exist anymore; they define the weights $\tilde{w}_i$ as
\begin{equation}
\label{tildew_i}
    \tilde{w}_{i}=\frac{w_{i}}{w_{1}+\cdots+w_{n}+1}, \quad i=1, \ldots, n \quad \text { and } \quad \tilde{w}_{n+1}=\frac{1}{w_{1}+\cdots+w_{n}+1},
\end{equation}
where $w_i$s are user-defined hyper-parameters.  
$\tilde{w}_i$s are used as weights for each unconformity score $V^{(X_{n+1},Y_{n+1})}_{i}$, so that reweighting the prediction interval is (informally)  
\begin{equation}
\label{reweighting conformal prediction set}
    \hat{C}_n(X)=\{y\in \mathbb{R}:V_{n}^{(x,y)}\leq \text{Quantile}(1-\alpha;\sum_{i=1}^n\tilde{w}_i\delta _{V^{(x,y)}_{i}}\}. 
\end{equation}
Then, from Theorem 2 in \cite{barber2023conformal}, informally, the nonexchangeable full conformal method defined in (\ref{reweighting conformal prediction set}) satisfies 
\begin{equation}
    \label{conformal prediction beyond exchangeable}
    P(Y_{n+1}\in \hat{C}_n(X_{n+1}))\geq 1-\alpha-\sum_{i=1}^n\tilde{w}_i\cdot d_{TV}(Z,P^{pairwise}_iZ),
\end{equation}
where $P^{pairwise}_i\in\mathcal{P}$ is defined in Assumption \ref{assumption for pairwise exchangeable}.

To connect this with our setting, the covering rate of the prediction interval in the full conformal prediction method is determined by the relationship between $Z$ and its transforms $P^{pairwise}_iZ,i=1,\cdots,n$, which are determined by the pairwise permutation matrices $\{I,P_1^{pairwise},\cdots,P_n^{pairwise}\}$, specifically, by the TV distances between $Z$ and $P^{pairwise}_iZ$, $i=\{1,\cdots,n\}$. When the exchangeability holds, the TV distance becomes zero, so that the conclusion (\ref{Therom bound for confromal prediction}) holds.

 The brief introduction of full conformal prediction under exchangeability and nonexchangeability provides the key insight for studying parameter inference under a linear model beyond exchangeability. Having discussed this above, the guarantee of prediction coverage is determined by the relationship amongst $Z$ and its transforms $P^{pairwise}_iZ$, which are determined by the matrices $\mathcal{P}^{pairwise}_{n}=\{I,P_1^{pairwise},\cdots,P_n^{pairwise}\}$. 
For the parameter inference under a linear model in \eqref{eq::model}, we consider the permutation matrices $\mathcal{P}_K:=\{I,P^{group}_1,\cdots,P^{group}_n\}$ that satisfy \Cref{assump::grouppermutation}. For CPT with group permutation discussed in \Cref{sec::CPT with group permutation}, we can prove below in \Cref{thm::Theorem nonexchangeable grouped CPT} that without exchangeability,  
$$P(R_0 \leq Q_{1-\alpha}
    (\sum_{k=0}^{K}w_k\cdot\delta_{R_k})) \geq 1-\alpha-\sum_{i=1}^Kw_k\cdot d_{TV}(\epsilon,P_i^{group}\epsilon),$$
where we defined 
\begin{equation}
    w_i:=\frac{1-w_0}{K}, w_0\geq w_i ,i\in\{1,\cdots,K\},
\end{equation}
$w_0$ is the user-specified hyper-parameter. Compared with equation (\ref{conformal prediction beyond exchangeable}), we see that the Type I error bound of the parameter inference under the linear model is determined by the TV distance between $\epsilon$ and $P_i^{group}\epsilon, i\in \{1,\cdots,n\}$.

The permutation matrices used in the prediction task are naturally pairwise permutation matrices since $\{I,P_1^{pairwise},\cdots,P_n^{pairwise}\}$ satisfies Assumption \ref{assumption for pairwise exchangeable}, while the inference task considers  group permutation matrices since $\{I,P_1^{group},\cdots,P_n^{group}\}$ satisfies \Cref{assump::grouppermutation}.

\section{Proofs for the exchangeable case}
\label{sec::Section A}
\subsection{Proof of \texorpdfstring{\Cref{Thm::PALMRT}}{Theorem 2}}
\label{Sec::section of proof of theorem 1}
\subsubsection{Proof of Proposition \ref{prop::Finverse}}
Initially, for the statistic $X_{\pi_1}^{T}(I-H^{Z_{\pi_1}Z_{\pi_2}})\epsilon_\sigma$, we consider the following regression model,
$$\epsilon_\sigma= Z_{\pi_1}\beta_1+Z_{\pi_2}\beta_2+\eta_1,
$$
where $\eta_1$ is the error term. If we use least squares to estimate $(\hat{\beta_1},\hat{\beta_2})$ for the coefficients $(\beta_1,\beta_2)$, since $(Z_{\pi_1},Z_{\pi_2})$ is full column rank, the residual $\bar{\eta}$ can be written as:
$$\bar{\eta}_1=\epsilon_\sigma-(Z_{\pi_1},Z_{\pi_2})((\hat{\beta_1})^T,(\hat{\beta_2})^T)^T=(I-H^{Z_{\pi_1}Z_{\pi_2}})\epsilon_\sigma,$$
the vector $(I-H^{Z_{\pi_1}Z_{\pi_2}})\epsilon_\sigma$ is the least-squares residual, where $\epsilon_\sigma$ is treated as the dependent variable, and $(Z_{\pi_1},Z_{\pi_2})$ as the regressors. Then, we rewrite it as,
\begin{equation*}
    P_{\sigma}\epsilon=P_{1}Z\hat{\beta_1}+P_{2}Z\hat{\beta_2}+\bar{\eta}_1,
\end{equation*}
since, by definition, $\epsilon_\sigma=P_{\sigma}\epsilon,$ $Z_{\pi_1}=P_{1}Z,$ $Z_{\pi_2}=P_{2}Z$, and hence, after a simple transformation,
\begin{equation}
\label{residual epsilon}
\epsilon=P^T_{\sigma}P_{1}Z\hat{\beta_1}+P^T_{\sigma}P_{2}Z\hat{\beta_2}+P^T_{\sigma}\bar{\eta}_1,
\end{equation}
Denote $(P_{1}Z,P_{2}Z):=Z*$. Next, we show that $(\hat{\beta_1},\hat{\beta_2})=((Z*)^T Z*)^{-1}(Z*)^T\epsilon_{\sigma}$ is still the least squares estimator for $(\beta_3,\beta_4)$ in the following model:
$$\epsilon=P^T_{\sigma}P_1Z\beta_3+P^T_{\sigma}P_2Z\beta_4+\eta_2,$$
where $\eta_2$ is the error term. Through the definition, since $P_{\sigma}$ is a permutation matrix,
\begin{equation*}
\begin{aligned}
(\hat{\beta_1},\hat{\beta_2})&=((Z*)^T Z*)^{-1}(Z*)^TP_{\sigma}\epsilon\\
&=[(P_{\sigma}^TZ*)^T(P_{\sigma}^TZ*)]^{-1}(P_{\sigma}^TZ*)^TP_{\sigma}^TP_{\sigma}\epsilon\\
&=[(P_{\sigma}^TZ*)^T(P_{\sigma}^TZ*)]^{-1}(P_{\sigma}^TZ*)^T\epsilon,
\end{aligned}
\end{equation*}
where $P_{\sigma}^TZ*=(P^T_{\sigma}P_{1}Z,P^T_{\sigma}P_{2}Z)$, so we obtain
\begin{equation}
\label{residual epsilon2}
\epsilon=P^T_{\sigma}P_{1}Z\hat{\beta_1}+P^T_{\sigma}P_{2}Z\hat{\beta_2}+\bar{\eta}_2,
\end{equation}
where $\bar{\eta}_2$ is the residual that results from regressing $\epsilon$ onto the column space of $Z*$, $\bar{\eta}_2=(I-H^{Z_{\sigma^{-1}\circ\pi_1}Z_{\sigma^{-1}\circ\pi_2}})\epsilon,$ where $Z_{\sigma^{-1}\circ\pi_1}=P^T_{\sigma}P_1Z,$ $Z_{\sigma^{-1}\circ\pi_2}=P^T_{\sigma}P_2Z.$ Comparing (\ref{residual epsilon}) and (\ref{residual epsilon2}), we obtain
\begin{equation*}
    P^T_{\sigma}(I-H^{Z_{\pi_1}Z_{\pi_2}})\epsilon=P^T_{\sigma}\bar{\eta}=\bar{\eta}_2=(I-H^{Z_{\sigma^{-1}\circ\pi_1}Z_{\sigma^{-1}\circ\pi_2}})\epsilon
\end{equation*}
This implies 
\begin{equation*}
    X_{\pi_1}^{T}(I-H^{Z_{\pi_1}Z_{\pi_2}})\epsilon_\sigma=X_{\sigma^{-1}\circ\pi_1}^{T}(I-H^{Z_{\sigma^{-1}\circ\pi_1}Z_{\sigma^{-1}\circ\pi_2}})\epsilon
\end{equation*}
\begin{proposition}
\label{prop::Proposition for group permutation}
    For the set of permutation matrices $\mathcal{P}_K:=\{P_0,\cdots,P_K\}$ satisfying \Cref{assump::grouppermutation}, we can define the function $\mathbbm{P}_{\pi_k}(\pi_j):=P_jP_k,k=0,\cdots,K,j=1,\cdots,K$, for any fixed $k$, $\mathbbm{P}_{\pi_k}:\mathcal{P}_K \mapsto \mathcal{P}_K$, then
    \begin{enumerate}
        \item For fixed $P_k \in \{P_0,\cdots,P_K\}$, $\mathbbm{P}_{\pi_k}$ is a bijection.
        \item $I \in \mathcal{P}_K$, and for any $P_k \in \mathcal{P}_K$, $P_k^T \in \mathcal{P}_K$.
    \end{enumerate}
\end{proposition}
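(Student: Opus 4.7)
The plan is to recognize this proposition as the standard fact that a non-empty finite subset of a group that is closed under multiplication is itself a subgroup, applied here to the general linear group $GL_n(\mathbb{R})$ with $\mathcal{P}_K$ closed under multiplication by Assumption~\ref{assump::grouppermutation}. The two parts should be proved in the order listed, since part 2 follows as a direct consequence of part 1.

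For part 1, I would argue as follows. Fix $P_k \in \mathcal{P}_K$ and consider $\mathbb{P}_{\pi_k}(\pi_j) = P_j P_k$. The map is well-defined as a map $\mathcal{P}_K \to \mathcal{P}_K$ by the closure property in Assumption~\ref{assump::grouppermutation}. Injectivity follows because $P_k$ is invertible in $GL_n(\mathbb{R})$: if $P_j P_k = P_{j'} P_k$, then right-multiplying by $P_k^{-1}$ gives $P_j = P_{j'}$. Since $\mathcal{P}_K$ is a finite set, an injection from $\mathcal{P}_K$ to itself is automatically a surjection, hence a bijection.

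For part 2, I would deduce both statements from the bijection in part 1. First, pick any $P_k \in \mathcal{P}_K$ (for concreteness, $P_0$); surjectivity of $\mathbb{P}_{\pi_k}$ implies that $P_k$ itself lies in the image, so there exists some $P_{j_0} \in \mathcal{P}_K$ with $P_{j_0} P_k = P_k$, giving $P_{j_0} = I$ and hence $I \in \mathcal{P}_K$. Next, given any $P_k \in \mathcal{P}_K$, apply surjectivity once more: since $I \in \mathcal{P}_K$ is in the image of $\mathbb{P}_{\pi_k}$, there exists $P_{j_1} \in \mathcal{P}_K$ with $P_{j_1} P_k = I$, so $P_{j_1} = P_k^{-1}$. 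Because permutation matrices are orthogonal, $P_k^{-1} = P_k^T$, establishing $P_k^T \in \mathcal{P}_K$.

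I do not expect any real obstacle in this proof. The only subtle point is the appeal to finiteness of $\mathcal{P}_K$ to upgrade injectivity to bijectivity, which relies on $\mathcal{P}_K$ having finite cardinality $K+1$; this is implicit in Assumption~\ref{assump::grouppermutation}. One notational caveat worth flagging is that the assumption as written quantifies over $P_i, P_j \in \mathcal{P}_n$ rather than $\mathcal{P}_K$, but the intent is clearly closure within $\mathcal{P}_K$, and the argument requires nothing stronger than that closure.
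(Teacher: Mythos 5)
Your proof is correct and takes essentially the same approach as the paper's: closure gives well-definedness, cancellation by the invertible $P_k$ gives injectivity, finiteness upgrades this to a bijection, and surjectivity applied first to $P_k$ and then to $I$ yields the identity and the inverse $P_k^T$. Your wording is actually cleaner---the paper's proof mislabels the well-definedness step as establishing ``injective'' and the cancellation step as establishing ``surjection''---and your caveat about the $\mathcal{P}_n$ versus $\mathcal{P}_K$ slip in Assumption~\ref{assump::grouppermutation} is apt.
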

 \textit{Proof}.

From the definition, we know that for fixed k, $\mathbbm{P}_{\pi_k}:\mathcal{P}_K \mapsto \mathcal{P}_K$. Then for any $j\in\{0,\cdots,K\}$, $\mathbbm{P}_{\pi_k}(\pi_j)=P_jP_k$. Since $P_k,P_j \in \mathcal{P}_K$, by \Cref{assump::grouppermutation}, $\mathbbm{P}_{\pi_k}(\pi_j)=P_jP_k \in \mathcal{P}_K$, $\mathbbm{P}_{\pi_k}$ is injective.

For any $i,j \in \{0,\cdots,K\}$ and $i \neq j$, if $\mathbbm{P}_{\pi_k}(\pi_i)=\mathbbm{P}_{\pi_k}(\pi_j)$, then $P_iP_k=P_jP_k$
\begin{equation*}
    P_iP_kP_k^T=P_jP_kP_k^T \Rightarrow P_i=P_j,
\end{equation*}
Since $P_k$ is invertible, this implies $P_i=P_j$, and hence $i=j$. Since $P_K$ is finite, injectivity implies surjectivity. Hence $P_{\pi_k}$ is a bijection.

For any fixed $k \in \{0,\cdots,K\}$, we traverse the entire set $\{0,\cdots,K\}$ and calculate
\begin{equation*}
  \mathbbm{P}_{\pi_k}(\pi_i)=P_iP_k,\,\,\,\, i\in \{1,\cdots,K\},
\end{equation*}
Since $\mathbbm{P}_{\pi_k}$ is a bijection from $\mathcal{P}_K$ to $\mathcal{P}_K$ , then there must exist $i' \in \{0,\cdots,K\}$, s.t. $\mathbbm{P}_{\pi_k}(\pi_{i'})=P_k$. It equals $P_{i'}P_k=P_k$, since $P_k$ is a permutation matrix, hence invertible, then we obtain $P_{i'}=I$, so that $I \in \mathcal{P}_K$.

Also, for any fixed $k \in \{0,\cdots,K\}$, we traverse the entire set $\{0,\cdots,K\}$ and calculate
\begin{equation*}
  \mathbbm{P}_{\pi_k}(\pi_i)=P_iP_k,\,\,\,\, i\in \{0,\cdots,K\},
\end{equation*}
Since $\mathbbm{P}_{\pi_k}$ is a bijection from $\mathcal{P}_K$ to $\mathcal{P}_K$, and $I \in \mathcal{P}_K$, then there must exist $i'' \in \{0,\cdots,K\}$, s.t. $\mathbbm{P}_{\pi_k}(\pi_{i''})=I$. It equals $P_kP_{i''}=I$, $P_{i''}=P_k^T$, so that $P_k^T \in \mathcal{P}_K$.
\begin{proposition}
\label{proposition for E|S|}
    For K identically distributed rank statistics $R_1(\epsilon),\cdots,R_K(\epsilon)$, define the random subset $S(\epsilon) \subseteq \{R_1,\cdots,R_K\}$, then the probability $$\sum_{k=1}^K P(R_k \in S) = E|S|.$$
\end{proposition}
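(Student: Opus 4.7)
The plan is to unpack $|S(\epsilon)|$ as a sum of indicator functions and then apply linearity of expectation; the identical distribution of the $R_m$'s plays no role in this particular identity but is presumably used separately in the overall argument (e.g., to reduce $(K+1)P(R_0\in S)$ to $\sum_m P(R_m\in S)$).

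More concretely, I would first interpret the statement consistently with the notation of Theorem~\ref{Thm::PALMRT}: $S(\epsilon)=\{m\in\{0,1,\dots,K\}:R_m(\epsilon)\le\alpha\}$ is a random subset of indices, and ``$R_m\in S$'' should be read as the event $\{m\in S(\epsilon)\}=\{R_m(\epsilon)\le\alpha\}$. With this reading, the cardinality admits the pointwise representation
\begin{equation*}
|S(\epsilon)|=\sum_{m=0}^{K}\mathbbm{1}\{R_m(\epsilon)\le\alpha\}=\sum_{m=0}^{K}\mathbbm{1}\{m\in S(\epsilon)\}.
\end{equation*}

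Next, I would take expectation on both sides and exchange the expectation with the finite sum, which is justified since $|S|$ is bounded by $K+1$. This yields
\begin{equation*}
E|S(\epsilon)|=\sum_{m=0}^{K}E\bigl[\mathbbm{1}\{m\in S(\epsilon)\}\bigr]=\sum_{m=0}^{K}P(m\in S(\epsilon)),
\end{equation*}
which is exactly the claimed identity (up to the apparent typo in the indexing on the right-hand side). Under identical distribution of $R_0,\dots,R_K$, each summand equals $P(R_0\le\alpha)$, giving the simpler bound $E|S|=(K+1)P(R_0\le\alpha)$; this is the form that later feeds into the Type~I control argument $P(\phi\le\alpha)\le\tfrac{1}{K+1}E|S|$.

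There is no real obstacle here: the only subtlety is making the set-vs.-indicator notation precise and noting that the exchange of sum and expectation is trivial because the sum is finite and the indicators are nonnegative. The identical-distribution hypothesis is logically redundant for the stated equality, so I would flag this explicitly in the write-up and point out that it is used elsewhere (together with the group structure of $\mathcal{P}_K$ established in Proposition~\ref{prop::Proposition for group permutation}) to conclude the uniform bound on $|S|$ in Theorem~\ref{Thm::PALMRT}.
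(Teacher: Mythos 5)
Your proof is correct, and it takes a genuinely different and substantially cleaner route than the paper's. The paper conditions on the event $\{|S|=a\}$, writes $\sum_k P(R_k\in S)=\sum_a\sum_k P(R_k\in S\mid |S|=a)P(|S|=a)$, and then goes through a fairly opaque disjoint-union decomposition of $\{R_k\in S\}$ and a covering/counting argument to show that the inner sum equals $a$, arriving at $\sum_a a\,P(|S|=a)=E|S|$. You instead write $|S|=\sum_{m}\mathbbm{1}\{m\in S\}$ and apply linearity of expectation to a finite sum of bounded indicators. The content is the same --- both ultimately count membership --- but your version is a one-line argument with no conditioning, no case decomposition, and no counting of overlaps, which makes it much easier to verify and far less error-prone (the paper's displayed chain even drops a $\sum_a$ in the final line). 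You are also right to flag that the ``identically distributed'' hypothesis in the proposition statement is logically redundant for this identity: neither your proof nor the paper's uses it here, and it enters only later (in the proof of Theorem~\ref{Thm::PALMRT}) to replace $P(R_0\le\alpha)$ by $\frac{1}{K+1}\sum_k P(R_k\le\alpha)$, together with the group structure from Proposition~\ref{prop::Proposition for group permutation}. One small housekeeping point worth stating explicitly in a final write-up: the proposition's displayed sum $\sum_{m=0}^{k}P(R_k\in S)$ is a typo for $\sum_{m=0}^{K}P(R_m\in S)$, and ``$R_m\in S$'' should be read as ``$m\in S$'' (membership of the index), exactly as you interpret it.
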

\textit{Proof}.
\begin{equation*}
    \sum^{K}_{k=1} P(R_k(\epsilon) \in S(\epsilon)) =\sum^{K}_{a=1}\sum^{K}_{k=1} P(R_k \in S||S|=a)P(|S|=a) 
\end{equation*}
For any event $\{R_k \in S\}$, it can be decomposed into the following disjoint unions:
\begin{equation*}
    \begin{aligned}
        \{R_k \in S\}=& \{R_k \in S || \forall i \neq k, R_i \notin S\}\\
        &\cup\{R_k \in S || \exists k' \neq k , R_{k'} \in S; \forall i \neq k,k', R_i \notin S\}\\
        &\cup\{R_k \in S || \exists k',k'' \neq k , R_{k'} \in S, R_{k''} \in S; \forall i \notin \{k,k',k''\}, R_i \notin S\}\\
        &\cup \cdots\\
        &\cup \{\forall i ,R_i \in S\}
    \end{aligned}
\end{equation*}
then,
$$P(\{R_k \in S\}||S|=a)=P(\{R_k \in S, k' \in N, R_{k'} \in S; \forall i \notin \{k,N\} , R_i \notin S\}),$$
where $N \subset \{1,\cdots,K\}/\{k\}, |N|=a-1$.

Also, we can decompose $\{|S|=a\}$,
$$\{|S|=a\}=\{\exists k_1,\cdots,k_a, R_{k_1},\cdots,R_{k_a} \in S, \forall i\neq \{k_1,\cdots,k_a\}, R_{i}\notin S \}.$$
For specific $k'_1,\cdots,k'_a$,  the event $\{R_{k'_1},\cdots,R_{k'_a}\in S, \forall i\neq \{k'_1,\cdots,k'_a\}, R_{i}\notin S\}$ can be covered in events 
$$\{R_{j} \in S, k' \in N, R_{k'} \in S; \forall i \notin \{k,N\} , R_i \notin S\},j=k'_1,\cdots,k'_a,$$
which is counted exactly $a$ times.

Therefore,
\begin{equation*}
    \begin{aligned}
         \sum^{K}_{k=1} P(R_k(\epsilon) \in S(\epsilon)) &=\sum^{K}_{a=1}\sum^{K}_{k=1} P(R_k \in S||S|=a)P(|S|=a) \\
         &= a \times P(|S|=a)= E(|S(\epsilon)|)
    \end{aligned}
\end{equation*}
\subsubsection*{Proof of \Cref{Thm::PALMRT}}
We restate our test statistic
$$\phi= \frac{1}{K+1}(1+\sum_{k=1}^{K}\mathbbm{1}\{X^{T}(I-H^{ZZ_{\pi_k}})Y\leq X_{\pi_k}^{T}(I-H^{ZZ_{\pi_k}})Y\}),$$
as noted above, $I-H^{ZZ_{\pi_k}}$ is orthogonal to the column space of $(Z,Z_{\pi_k}).$ Hence, $\phi$ can be expressed as 
$$\phi= \frac{1}{K+1}(1+\sum_{k=1}^{K}\mathbbm{1}\{X^{T}(I-H^{ZZ_{\pi_k}})\epsilon\leq X_{\pi_k}^{T}(I-H^{ZZ_{\pi_k}})\epsilon\}).$$
Let $r_{ab}=\mathbbm{1}\{F(\pi_a,\pi_b;x,Z,\epsilon) < F(\pi_b,\pi_a;x,Z,\epsilon)\}+\frac{1}{2}\mathbbm{1}\{F(\pi_a,\pi_b;x,Z,\epsilon) = F(\pi_b,\pi_a;x,Z,\epsilon)\}$ and denote $R_{a}:=\frac{1}{K+1}\sum^{K}_{b=0} r_{ab}$, $\forall a,b, \pi_a,\pi_b \in \mathcal{P}_K$, and $\mathcal{P}_K$ is generated through \Cref{assump::grouppermutation}.

Recall \Cref{prop::Finverse} and the fact that $I \in \mathcal{P}_K$by \Cref{prop::Proposition for group permutation}. Without loss of generality, write $\pi_0:=I$. Our test statistic can then be written as

$$       \phi= \frac{1}{K+1}(1+\sum_{k=1}^{K}\mathbbm{1}\{F(\pi_0,\pi_k;x,Z,\epsilon_{\sigma}) \leq F(\pi_k,\pi_0;x,Z,\epsilon_{\sigma})\})\geq R_{0},$$
so we change our attention from $\phi$ to $R_{0}$ through
$$P(\phi\leq \alpha)\leq P(R_0\leq\alpha),$$
so we turn to analyze the property of $R_0$. For fixed $m\in \{0,1,\cdots,K\}$, $P_m\in\mathcal{P}_K$ is a permutation matrix, then $P^T_m\in \mathcal{P}_K$ through \Cref{prop::Proposition for group permutation}. From \Cref{assump::exchanglable}, $\epsilon \overset{d}{=} P_m^T\epsilon:=\epsilon_{\pi_m^{-1}}$.  Using Proposition \ref{prop::Finverse}, we have 
\begin{equation*}
\begin{aligned}
    R_0&=\frac{1}{K+1}\sum^{K}_{k=0} \mathbbm{1}\{T(\pi_0,\pi_k;\epsilon) <  T(\pi_k,\pi_0;\epsilon)\} +\frac{1}{2}\mathbbm{1}\{T(\pi_0,\pi_k;\epsilon) = T(\pi_k,\pi_0;\epsilon)\}\\
    &\overset{d}{=} \frac{1}{K+1}\sum^{K}_{k=0} \mathbbm{1}\{T(\pi_0,\pi_k;\epsilon_{\pi_m^{-1}}) < T(\pi_k,\pi_0;\epsilon_{\pi_m^{-1}})\}+\frac{1}{2}\mathbbm{1}\{T(\pi_0,\pi_k;\epsilon_{\pi_m^{-1}}) =  T(\pi_k,\pi_0;\epsilon_{\pi_m^{-1}})\}\\
    &=\frac{1}{K+1}\sum^{K}_{k=0} \mathbbm{1}\{T(\pi_m \circ \pi_0,\pi_m \circ\pi_k;\epsilon) < T(\pi_m \circ\pi_k,\pi_m \circ\pi_0;\epsilon)\}\\
    &\,\,\,\,\,\,\,\,+\frac{1}{2}\mathbbm{1}\{T(\pi_m \circ \pi_0,\pi_m \circ\pi_k;\epsilon) = T(\pi_m \circ\pi_k,\pi_m \circ\pi_0;\epsilon)\}\\
    &=\frac{1}{K+1}\sum^{K}_{j=0} \mathbbm{1}\{T(\pi_m, \pi_j;\epsilon) < T(\pi_j,\pi_m;\epsilon)\}+\frac{1}{2}\mathbbm{1}\{T(\pi_m ,\pi_j;\epsilon)=T(\pi_j,\pi_m;\epsilon)\}\\
    &=R_m,
\end{aligned}
\end{equation*}
where the last equation uses the bijection property of $\mathbbm{P}_{\pi_m}$ established in \Cref{prop::Proposition for group permutation}. This implies that $R_m \overset{d}{=} R_0, \forall m \in \{1,\cdots,K\}$.

Then as above, we define the comparison  $S:=\{m:R_m\leq \alpha \}$, from Proposition \ref{proposition for E|S|}, we have 
\begin{equation*}
    \begin{aligned}
        P(R_0\leq \alpha) \overset{d}{=}\frac{1}{K+1}\sum^{K}_{k=0} P(R_k\leq \alpha)= \frac{1}{K+1}\sum^{K}_{k=0} P(R_k(\epsilon) \in S(\epsilon)) =\frac{1}{K+1} E|S|.
    \end{aligned}
\end{equation*}
For every fixed $k\in\{0,1,\cdots,K\}$, in $R_0$, there exists an item 
$$\mathbbm{1}\{T(\pi_0,\pi_k;\epsilon) < T(\pi_k,\pi_0;\epsilon)\}+\frac{1}{2}\mathbbm{1}\{T(\pi_0,\pi_k;\epsilon) = T(\pi_k,\pi_0;\epsilon)\},$$  correspondingly, since Proposition \ref{prop::Proposition for group permutation} that for any $P_k \in \mathcal{P}_K$, $P_k^T \in \mathcal{P}_K$, there exists an entry in $R_k$ that
\begin{equation*}
\begin{aligned}
     \mathbbm{1}\{T( \pi_k \circ \pi_0,\pi_k \circ \pi^T_k;\epsilon) < &T( \pi_k \circ \pi^T_k,\pi_k \circ \pi_0;\epsilon)\}\\&+\frac{1}{2}\mathbbm{1}\{T( \pi_k \circ \pi_0,\pi_k \circ \pi^T_k;\epsilon) = T( \pi_k \circ \pi^T_k,\pi_k \circ \pi_0;\epsilon)\}\\
\end{aligned}
\end{equation*}

This implies for any $a,b \in \{0,1,\cdots,K\}, a\neq b$, $r_{ab}+r_{ba}= 1$,$r_{aa}=1, r_{ab} \geq 0$. The two elements of matrix $R=(r_{ab})_{a,b\in[0,\cdots,K]}$ are complementary with respect to diagonal symmetry, and their sum is 1, and $R_i, i\in\{0,1,\cdots ,K\}$ can be viewed as a sum of the i-th row in matrix $R$.

To bound $|S|$, we first derive a lower bound. Since $S:=\{m: R_m\leq \alpha\}$, the lower bound is obtained by considering the smallest possible number of values $R_m, m=1,\cdots,K$, that belong to this set, in other words, the value of $R_m$ is as large as possible. Consider an implementation where $R_{m'}$ takes the maximum value $1$, because for any $m''\in \{0,1,\cdots,K\}, m''\neq m'$, $r_{m''m'}+r_{m'm''}= 1$, and $r_{m'm''}=1\in R_{m'}$, $r_{m''m'}\in R_{m''}$, the second largest value takes $\frac{K}{K+1}$, and so on, we get a sequence of equal differences from $1$ to $\frac{1}{K+1}$,
hence $|S|\geq [\alpha (K+1)] $.

On the other hand, since $S:=\{m: R_m\leq \alpha\}$, then
\begin{equation*}
    \begin{aligned}
        \alpha |S| \geq \sum_{m \in S} R_m \geq \frac{1}{K+1} \sum_{\{m,k\}\in S} r_{mk} =\frac{1}{2(K+1)}|S|(|S|+1),
    \end{aligned}
\end{equation*}
where the last step uses the fact that $a,b \in \{0,1,\cdots,K\}, a\neq b$, $r_{ab}+r_{ba}= 1$, $r_{aa}=1$. As a result, $|S|\leq max(0,2\alpha(K+1)-1)\leq[2\alpha(K+1)].$
Therefore,
$$P(\phi \leq \alpha)\leq P(R_0\leq \alpha)=\frac{1}{K+1} E|S|,$$
and 
$$\frac{[\alpha(K+1)]}{K+1}\leq\frac{1}{K+1}|S|\leq\frac{[2\alpha(K+1)]}{K+1}.$$
\subsubsection*{Proof of Proposition \ref{proposition of 2alpha}}
\label{sec::Proof of proposition factor 2}
For the statistic $\phi'$ as defined in (\ref{phi'}), we restate it as 
\begin{equation*}
    \begin{aligned}
        \phi'= &\frac{1}{K+1}(1+\sum_{k=1}^{K}\mathbbm{1}\{X^{T}(I-H^{ZZ_{\pi_k}})\epsilon < X_{\pi_k}^{T}(I-H^{ZZ_{\pi_k}})\epsilon\}\\ &+\frac{1}{2}(\mathbbm{1}\{X^{T}(I-H^{ZZ_{\pi_k}})\epsilon = X_{\pi_k}^{T}(I-H^{ZZ_{\pi_k}})\epsilon\})).
    \end{aligned}
\end{equation*}
In this case, we construct a special example, in which there exists $\alpha\in[0,1]$, such that $P(\phi'\leq\alpha)=2\alpha.$
For the sake of concise presentation, we consider the situation $n=5$ initially. We consider the special construction of $(X,Z)$, since in the former theorem we treat $(X,Z)$ as fixed. We consider 
\begin{equation}
z=
\begin{bmatrix}
  1 & 0 \\
  0 & 1 \\
  0 &0\\
  0&0\\
  0&0\\
\end{bmatrix}
\end{equation}
in order to simplify its generated column space. And we consider the permutation group as the down shifting permutation group $\mathcal{P}_5$ in $n=5$, $\mathcal{P}_5=\{P_0,P_1,\cdots,P_4\},P_0=I$ for $x=(x_1,x_2,\cdots,x_5)^T,$ this group satisfies 
$$P_ix=(x_{5-i+1},\cdots,x_5,x_1,\cdots,x_{5-i})^T, i=\{1,2,3,4,5\}.$$
We design this down shifting permutation group in order to simplify the column space generated by $Z_{\pi_k}$ ($\pi_k$ corresponds to $P_k,k=\{0,1,2,3,4\}$ defined above). We also simplify the joint column space generated by $(Z,Z_{\pi_k}),k=0,\cdots,4.$
We then calculate each term $f(k,X,\epsilon)$ in $\phi'$  to see whether it has a lower bound,
$$f(k,X,\epsilon):=\mathbbm{1}\{X^{T}(I-H^{ZZ_{\pi_k}})\epsilon < X_{\pi_k}^{T}(I-H^{ZZ_{\pi_k}})\epsilon\}+\frac{1}{2}(\mathbbm{1}\{X^{T}(I-H^{ZZ_{\pi_k}})\epsilon = X_{\pi_k}^{T}(I-H^{ZZ_{\pi_k}})\epsilon\}).$$
When $k=0$, here $\operatorname{span}(Z,Z_{\pi_0})=\operatorname{span}(Z)$, so that,
\begin{equation*}
I-H^{ZZ_{\pi_0}}=
\begin{bmatrix}
  0 &  &\cdots&&0\\
   & 0 & &0 &\\
  \vdots & &1&&\vdots  \\
  & 0&    &1&\\
  0&&\cdots&&1\\
\end{bmatrix}, \,\,\,\,f(0,X,\epsilon)=\frac{1}{2}.
\end{equation*}

When $k=1$,
\begin{equation*}
I-H^{ZZ_{\pi_1}}=
\begin{bmatrix}
  0 &  &\cdots&&0\\
   & 0 & &0 &\\
  \vdots & &0&&\vdots  \\
  & 0&    &1&\\
  0&&\cdots&&1\\
\end{bmatrix}, 
\end{equation*}
\begin{equation*}
     f(1,X,\epsilon)=\mathbbm{1}\{X_4\epsilon_4+X_5\epsilon_5< X_3\epsilon_3+X_4\epsilon_5\}+\frac{1}{2}\mathbbm{1}\{X_4\epsilon_4+X_5\epsilon_5= X_3\epsilon_3+X_4\epsilon_5\}.
\end{equation*}
When k=2, from the same discussion as above,
\begin{equation*}
    I-H^{ZZ_{\pi_2}}=
\begin{bmatrix}
  0 &  &\cdots&&0\\
   & 0 & &0 &\\
  \vdots & &0&&\vdots  \\
  & 0&    &0&\\
  0&&\cdots&&1\\
\end{bmatrix}, \,\,\,\,
    f(2,X,\epsilon)=\mathbbm{1}\{X_5\epsilon_5< X_3\epsilon_5\}+\frac{1}{2}\mathbbm{1}\{X_5\epsilon_5= X_3\epsilon_5\}.
\end{equation*}

When k=3, we know that 
\begin{equation*}
    I-H^{ZZ_{\pi_3}}=
\begin{bmatrix}
  0 &  &\cdots&&0\\
   & 0 & &0 &\\
  \vdots & &1&&\vdots  \\
  & 0&    &0&\\
  0&&\cdots&&0\\
\end{bmatrix}, \,\,\,\,
    f(3,X,\epsilon)=\mathbbm{1}\{X_3\epsilon_3< X_5\epsilon_3\}+\frac{1}{2}\mathbbm{1}\{X_3\epsilon_3= X_5\epsilon_3\}.
\end{equation*}

When k=4, we know that 
\begin{equation*}
    I-H^{ZZ_{\pi_4}}=
\begin{bmatrix}
  0 &  &\cdots&&0\\
   & 0 & &0 &\\
  \vdots & &1&&\vdots  \\
  & 0&    &1&\\
  0&&\cdots&&0\\
\end{bmatrix}, \,\,\,\,
\end{equation*}
\begin{equation*}
    f(4,X,\epsilon)=\mathbbm{1}\{X_3\epsilon_3+X_4\epsilon_4< X_4\epsilon_3+X_5\epsilon_4\}+\frac{1}{2}\mathbbm{1}\{X_3\epsilon_3+X_4\epsilon_4= X_4\epsilon_3+X_5\epsilon_4\}.
\end{equation*}
Since $\epsilon$ is exchangeable in \Cref{assump::exchanglable}, we set $X_1=X_2=X_3=X_4=X_5$, then $\phi=\frac{1}{2}$. Setting $\alpha=\frac{1}{2},$
$$P(\phi\leq\frac{1}{2})=1,$$
so that the bound $2\alpha$ in \Cref{Thm::PALMRT} is tight.

\subsection{Proof of \Cref{thm::thorem of group CPT}}
\label{sec::Proof of grouped CPT}
\begin{proof}
    Initially, we recall the lemmas about quantiles. From the definition of $R_i,i=0,\cdots,K$, and the assumptions of $\mathcal{P}_K$ and $\epsilon$, it is easy to see that $$R_0\overset{d}{=}R_1\overset{d}{=}\cdots\overset{d}{=}R_K.$$ Recall that the quantile function $\hat{Q}_{1-\alpha}$ of random variables $R_0,\cdots,R_K$ is the quantile function with respect to the empirical CDF $\hat{F}_n(r):=\frac{1}{K+1}\sum_{k=0}^K \mathbbm{1}\{R_i\leq r\}$. It has the explicit formula,
    $$Q_{1-\alpha}(\frac{1}{K+1}\sum^K_{k=0}\delta_{R_k})=R_{(\lceil \alpha(K+1)\rceil )},$$ where $R_{(k)}$ denotes the $k$-th smallest value in $R_0,\cdots, R_K$.

    Define the set of ``strange'' points 
    $$S(R)=\{i\in\{0,1,\cdots,K\}:R_i> Q_{1-\alpha}(\sum_{k=0}^{K}\frac{1}{K+1}\cdot\delta_{R_k}))\}$$
    That is, an index $i$ corresponds to a ``strange''
 point if its rank statistic $R_i$ is one of the $[\alpha(K+1)]$ largest elements of the list $R_1,\cdots,R_{K+1}.$ By definition, this can include at most $\alpha(K+1)$ entries of the list, that is 
 $$|S(R)|\leq \alpha(K+1).$$ 
 \begin{equation*}
 \begin{aligned}
      R_0&=\frac{1}{K+1}\sum_{j=0,j\neq0}^K 1\{Y^T\eta\leq Y^T\eta_{j}\}\\&=\frac{1}{K+1}
      \sum_{j=0,j\neq 0}^K 1\{\epsilon^T\eta_0\leq \epsilon^T\eta_{j}\}=\frac{1}{K+1}\sum_{j=1}^{K}1\{(P_0\epsilon)^T\eta\leq (P_j \epsilon)^T\eta)\}\\
      &\overset{d}{=}\frac{1}{K+1}\sum_{j=1}^{K}1\{(P_0P_k\epsilon)^T\eta\leq (P_j P_k\epsilon)^T\eta)=\frac{1}{K+1}\sum_{j=0,j\neq k'}^K 1\{(P_kY)^T\eta\leq (P_j^TY)^T\eta\}(*)\\
      &=\frac{1}{K+1}\sum_{j=0,j\neq k}^K 1\{Y^T(P^T_k\eta)\leq Y^T\eta_j\}=R_k,\,\, k=1,\cdots, K.
 \end{aligned}
 \end{equation*}
 Here $P_k \in \mathcal{P}_K$, and (*) holds since $\mathcal{P}_K$ meets \Cref{assump::grouppermutation} and $\epsilon$ is exchangeable. Therefore, we have 
 \begin{equation}
     \begin{aligned}
         P(R_0 >Q_{1-\alpha}(\sum_{k=0}^{K}\frac{1}{K+1}\cdot\delta_{R_k}))&=P(K+1 \in S(R))=\frac{1}{K+1}\sum_{k=0}^{K}P(k \in S(R))\\
         &=\frac{1}{K+1}\mathbb{E}[\sum_{k=0}^{K}1{k\in S(R)}]=\frac{1}{K+1}E[|S(R)|]\\
         &=\frac{1}{K+1}\cdot \alpha (K+1)=\alpha.
     \end{aligned}
\end{equation}
\end{proof}

\section{Analysis of Type II error control}
\label{sec::Section B}
In this section, we complete our analyses of Type II error control. We first provide detailed analyses omitted from the main text, and then we complete the proof of our main theorems, lemmas, etc.

\subsection{Supplementary analysis of our Type II error control}\label{subsection: supplementary of Type II}
In this part, we complete the detailed analyses on our Type II error, and we first complete the overall analysis for the statistics $\phi_1,\phi_2$ in~\eqref{statistics for type II}.

Generally, about half of the cases where $\vert (X_{\pi_{k}}^{T}-X^{T})(I-H^{ZZ_{\pi_{k}}})\epsilon\vert>\vert b (X^{T}-X_{\pi_{k}}^{T})(I-H^{ZZ_{\pi_{k}}})X\vert$ will result in $b (X^{T}-X_{\pi_{k}}^{T})(I-H^{ZZ_{\pi_{k}}})X < (X_{\pi_{k}}^{T}-X^{T})(I-H^{ZZ_{\pi_{k}}})\epsilon $. Consequently, \eqref{optimization: hypothesis} requires that 
\begin{align}\label{necessary for optimization}
\frac{1}{1+K}\sum_{k=0}^{K}\mathbbm{1}\left\{\left\vert (X^{T}-X^{T}_{\pi_k})(I-H^{ZZ_{\pi_k}})X\right\vert\geq\vert b\vert^{-1}\left\vert(X_{\pi_{k}}^{T}-X^{T})(I-H^{ZZ_{\pi_{k}}})\epsilon\right\vert\right\}\geq 1-2\alpha
\end{align}
hold with high probability under the randomness of $\epsilon$, in order to ensure a small Type II error. In particular, when $\epsilon\overset{d}{=}-\epsilon$, condition \eqref{necessary for optimization} becomes a strict necessary condition if we require a well-controlled Type II error under all possible exchangeable noise (see \ref{subsubsection: optimization: intuitive}).
Furthermore, the absolute value notation can be removed, since $$\frac{1}{1+K}\sum_{k=0}^{K}\mathbbm{1}\left\{ (X^{T}-X^{T}_{\pi_k})(I-H^{ZZ_{\pi_k}})X\leq-\vert b\vert^{-1}\left\vert(X_{\pi_{k}}^{T}-X^{T})(I-H^{ZZ_{\pi_{k}}})\epsilon\right\vert\right\}\geq 1-2\alpha$$  
is generally suboptimal in terms of Type II error control (also see \ref{subsubsection: optimization: intuitive}). This suggests that
$$
\frac{1}{1+K}\sum_{k=0}^{K}\mathbbm{1}\left\{ (X^{T}-X^{T}_{\pi_k})(I-H^{ZZ_{\pi_k}})X\geq\vert b\vert^{-1}\left\vert(X_{\pi_{k}}^{T}-X^{T})(I-H^{ZZ_{\pi_{k}}})\epsilon\right\vert\right\}\geq 1-2\alpha
$$
is a necessary condition for~\eqref{optimization: hypothesis}. On the other hand, although the distribution of the noise term $(X^{T}-X^{T}_{\pi_{k}})(I-H^{ZZ_{\pi_k}})\epsilon$ is difficult to analyze, previous studies such as \citep{wen2025} have demonstrated that under certain mild conditions, this noise term is bounded by
$
\left\vert\ (X^{T}-X^{T}_{\pi_k})(I-H^{ZZ_{\pi_k}})\epsilon\right\vert\leq o(X^{T}X)
$
with high probability (see \ref{paragraph: Type II guaranty}). This yields a sufficient condition for controlling the Type II error: if for some $\lambda_0 \in \Omega(X^{T}X)$ (as $n,p \to \infty$) we have
$$
\frac{1}{1+K}\left[\sum_{k=0}^{K}\mathbbm{1}\left\{ X^{T}(I-H^{ZZ_{\pi_k}})X-X^{T}_{\pi_k}(I-H^{ZZ_{\pi_k}})X\geq \lambda_0  \right\}\right]\geq 1-\frac{1}{2}\alpha\,,
$$
then for any nonzero constant $b$, the Type II error converges to $0$. Synthesizing both perspectives, the solution to the optimization problem~\eqref{optimization: intuitive} furnishes both necessary and sufficient conditions for Type II error control:
\begin{align}\label{optimization: intuitive}
\max_{\mathcal{P}_{K}} \lambda, s.t. \frac{1}{K+1}\left[\sum_{k=0}^{K}\mathbbm{1}\{X^{T}(I-H^{ZZ_{\pi_{k}}})X-X^{T}_{\pi_k}(I-H^{ZZ_{\pi_k}})X\geq \lambda\}\right]\geq 1-\frac{1} {2}\alpha\,,
\end{align}
which is equivalent to solving problem~\ref{eq: optimization1} due to the fact that
$\forall k\in[1,K]$,
$$
X^{T}(I-H^{ZZ_{\pi_k}})X-X^{T}_{\pi_k}(I-H^{ZZ_{\pi_k}})X = X^{T}X-\left[X^{T}H^{ZZ_{\pi_k}}X+X^{T}_{\pi_k}(I-H^{ZZ_{\pi_k}}))X   \right]\,,
$$
where the first component $X^{T}X$ is independent of the choice of $\mathcal{P}_K$. 

In fact, we do not care about the exact constant factor preceding $\alpha$, as our primary emphasis is on Type II error control under a general $\alpha>0$, particularly in the regime $\alpha\to 0_{+}$. Accordingly, we adopt the optimization problem~\eqref{optimization: intuitive} as the foundational objective of our Type II error control procedures.

\subsubsection{The optimization problem \texorpdfstring{\eqref{optimization: intuitive}}{(36)}}

\label{subsubsection: optimization: intuitive}
\label{sec::optimization problem}
\paragraph{Necessary condition for $\lambda$ in~\eqref{optimization: intuitive}}
Suppose $\alpha < \frac{1}{4}$ and that we aim to achieve a small Type II error under any exchangeable noise $\epsilon$. We first demonstrate the necessity of
\begin{align*}
\frac{1}{1+K}\left[\sum_{k=0}^{K}\mathbbm{1}\left\{(X^{T}-X^{T}_{\pi_k})(I-H^{ZZ_{\pi_k}})X\geq\vert b\vert^{-1}\left\vert(X_{\pi_{k}}^{T}-X^{T})(I-H^{ZZ_{\pi_{k}}})\epsilon\right\vert\right\}\right]
\geq 1-2\alpha
\end{align*}
being satisfied with high probability over the distribution of $\epsilon$.

First, consider the case where $\vert b\vert$ is not too small and satisfies
\begin{align*}
&\frac{1}{1+K}\left[\sum_{k=0}^{K}\mathbbm{1}\left\{\left\vert (X^{T}-X^{T}_{\pi_k})(I-H^{ZZ_{\pi_k}})X\right\vert\geq\vert b\vert^{-1}\left\vert(X_{\pi_{k}}^{T}-X^{T})(I-H^{ZZ_{\pi_{k}}})\epsilon\right\vert\right\}\right]\\
&> 2\alpha\,, w.h.p.
\end{align*}
For such $b$ we prove that, in order to control the Type II error under any exchangeable noise $\epsilon$, at least one of the following must hold with high probability:
\begin{align}\label{eq:positive}
\frac{1}{1+K}\left[\sum_{k=0}^{K}\mathbbm{1}\left\{(X^{T}-X^{T}_{\pi_k})(I-H^{ZZ_{\pi_k}})X\geq\vert b\vert^{-1}\left\vert(X_{\pi_{k}}^{T}-X^{T})(I-H^{ZZ_{\pi_{k}}})\epsilon\right\vert\right\}\right]\geq 1-2\alpha
\end{align}
\begin{align}\label{eq:negative}
\frac{1}{1+K}\left[\sum_{k=0}^{K}\mathbbm{1}\left\{(X^{T}-X^{T}_{\pi_k})(I-H^{ZZ_{\pi_k}})X\leq-\vert b\vert^{-1}\left\vert(X_{\pi_{k}}^{T}-X^{T})(I-H^{ZZ_{\pi_{k}}})\epsilon\right\vert\right\}\right]\geq1-2\alpha\,.
\end{align}

We consider a special case where $\epsilon\overset{d}{=}-\epsilon $. In this case, for any $k$ such that\\ $\left\vert(X^{T}-X^{T}_{\pi_k})(I-H^{ZZ_{\pi_k}})X\right\vert<\vert b\vert^{-1}\left\vert(X_{\pi_{k}}^{T}-X^{T})(I-H^{ZZ_{\pi_{k}}})\epsilon\right\vert$, we must have:
\begin{align*}
&\mathbbm{1}\left\{b (X^{T}-X_{\pi_{k}}^{T})(I-H^{ZZ_{\pi_{k}}})X \geq (X_{\pi_{k}}^{T}-X^{T})(I-H^{ZZ_{\pi_{k}}})\epsilon \right\}+\\
&\mathbbm{1}\left\{b (X^{T}-X_{\pi_{k}}^{T})(I-H^{ZZ_{\pi_{k}}})X \geq (X_{\pi_{k}}^{T}-X^{T})(I-H^{ZZ_{\pi_{k}}})(-\epsilon) \right\}=1\,.
\end{align*}
Let $A = \frac{1}{1+K}\left[\sum_{k=0}^{K}\mathbbm{1}\left\{(X^{T}-X^{T}_{\pi_k})(I-H^{ZZ_{\pi_k}})X\geq\vert b\vert^{-1}\left\vert(X_{\pi_{k}}^{T}-X^{T})(I-H^{ZZ_{\pi_{k}}})\epsilon\right\vert\right\}\right]$,\\
$B=\frac{1}{1+K}\left[\sum_{k=0}^{K}\mathbbm{1}\left\{(X^{T}-X^{T}_{\pi_k})(I-H^{ZZ_{\pi_k}})X\leq-\vert b\vert^{-1}\left\vert(X_{\pi_{k}}^{T}-X^{T})(I-H^{ZZ_{\pi_{k}}})\epsilon\right\vert\right\}\right]$, 
and denote $\phi_1(\mathcal{P}_{K},X,Z,\epsilon)=\frac{1}{1+K}\sum_{k=0}^{K}1\left\{ X^{T}(I-H^{ZZ_{\pi_k}})Y \leq X^{T}_{\pi_k}(I-H^{ZZ_{\pi_k}})Y      \right\}$, where $Y$ follows~\eqref{eq::model}.
For any $\epsilon$ such that $A,B< 1-2\alpha$ and $A+B>2\alpha$, we have:

(1): $\vert \phi_1(\mathcal{P}_{K},X,Z,\epsilon)-\phi_1(\mathcal{P}_{K},X,Z,-\epsilon)\vert\leq 1-A-B$. This is because for any $k$ such that $\left\vert(X^{T}-X^{T}_{\pi_k})(I-H^{ZZ_{\pi_k}})X\right\vert<\vert b\vert^{-1}\left\vert(X_{\pi_{k}}^{T}-X^{T})(I-H^{ZZ_{\pi_{k}}})\epsilon\right\vert$, \\
$\mathbbm{1}\left\{b (X^{T}-X_{\pi_{k}}^{T})(I-H^{ZZ_{\pi_{k}}})X \geq (X_{\pi_{k}}^{T}-X^{T})(I-H^{ZZ_{\pi_{k}}})\epsilon \right\}$ does not change when replacing $\epsilon$ by $-\epsilon$.

(2): $\vert \phi_1(\mathcal{P}_{K},X,Z,\epsilon)+\phi_1(\mathcal{P}_{K},X,Z,-\epsilon)-1\vert=A-B$. This holds because summing up $k$ such that $\left\vert(X^{T}-X^{T}_{\pi_k})(I-H^{ZZ_{\pi_k}})X\right\vert<\vert b\vert^{-1}\left\vert(X_{\pi_{k}}^{T}-X^{T})(I-H^{ZZ_{\pi_{k}}})\epsilon\right\vert$ contributes to $1-A-B$, and the remaining $k$ contributes to either $2A$ or $2B$, which depends on whether $b>0$ or not. 

Therefore, when $A,B<1-2\alpha$ and $A\geq B$, we have: 
\begin{align*}
&\min(\phi_1(\mathcal{P}_{K},X,Z,\epsilon),\phi_1(\mathcal{P}_{K},X,Z,-\epsilon))\\&\geq \frac{1}{2}\left[\vert\phi_1(\mathcal{P}_{K},X,Z,\epsilon)+\phi_1(\mathcal{P}_{K},X,Z,-\epsilon)\vert-\vert \phi_1(\mathcal{P}_{K},X,Z,\epsilon)-\phi_1(\mathcal{P}_{K},X,Z,-\epsilon)\vert\right]\\
&\geq \frac{1}{2}[(1-A+B)-(1-A-B)]\\
&\geq B\,,
\end{align*}
and $\min(\phi_1(\mathcal{P}_{K},X,Z,\epsilon),\phi_1(\mathcal{P}_{K},X,Z,-\epsilon))\leq \frac{1}{2}(\phi_1(\mathcal{P}_{K},X,Z,\epsilon)+\phi_1(\mathcal{P}_{K},X,Z,-\epsilon))\leq \frac{1}{2}(1+A-B)$.
Thus, we have:  $\min(\phi_1(\mathcal{P}_{K},X,Z,\epsilon),\phi_1(\mathcal{P}_{K},X,Z,-\epsilon))<1-\alpha$.
On the other hand, 
\begin{align*}
&\max(\phi_1(\mathcal{P}_{K},X,Z,\epsilon),\phi_1(\mathcal{P}_{K},X,Z,-\epsilon))\\&\leq \min(\phi_1(\mathcal{P}_{K},X,Z,\epsilon),\phi_1(\mathcal{P}_{K},X,Z,-\epsilon))+\vert \phi_1(\mathcal{P}_{K},X,Z,\epsilon)-\phi_1(\mathcal{P}_{K},X,Z,-\epsilon)\vert\\
&\leq \min(\phi_1(\mathcal{P}_{K},X,Z,\epsilon),\phi_1(\mathcal{P}_{K},X,Z,-\epsilon))+(1-A-B)\,.
\end{align*}
If
$\min(\phi_1(\mathcal{P}_{K},X,Z,\epsilon),\phi_1(\mathcal{P}_{K},X,Z,-\epsilon))\leq \alpha$, then \\$\max(\phi_1(\mathcal{P}_{K},X,Z,\epsilon),\phi_1(\mathcal{P}_{K},X,Z,-\epsilon))\leq \alpha+1-A-B< 1-\alpha$, while, at the same time, \\
$\max(\phi_1(\mathcal{P}_{K},X,Z,\epsilon),\phi_1(\mathcal{P}_{K},X,Z,-\epsilon))\geq \frac{1}{2}(\phi_1(\mathcal{P}_{K},X,Z,\epsilon)+\phi_1(\mathcal{P}_{K},X,Z,-\epsilon))\geq \frac{1}{2}(1-A+B)>\alpha$.

Thus, $\mathcal{H}_0$ will be accepted for at least one of $\epsilon$ and $-\epsilon$. This conclusion also holds when $A,B\leq 1-2\alpha$ and  $A\leq B$ due to a similar calculation.
This implies that, when $\epsilon\overset{d}{=}-\epsilon$
\begin{align*}
\mathbb{P}\left[\mathcal{H}_{0} \text{ is accepted}  \right]&\geq \frac{1}{2}\mathbb{P}\left[\phi_1(\mathcal{P}_{K},X,Z,\epsilon)\in(\alpha,1-\alpha)\right]+\frac{1}{2}\mathbb{P}\left[\phi_1(\mathcal{P}_{K},X,Z,-\epsilon)\in(\alpha,1-\alpha)\right]\\
&\geq \frac{1}{2}\mathbb{P}\left[A,B<1-2\alpha\cap A+B>2\alpha\right]\\
&\geq \frac{1}{2}\mathbb{P}[A,B<1-2\alpha]-\frac{1}{2}\mathbb{P}[A+B\leq2\alpha]\,.
\end{align*}
Therefore, when $\alpha<\frac{1}{4}$ and $\epsilon\overset{d}{=}-\epsilon$, a small Type II error requires a small $\mathbb{P}[A,B<1-2\alpha]-\mathbb{P}[A+B\leq2\alpha]$. 

For the case where $A+B\leq 2\alpha$, i.e., the regression residual $b(X^{T}-X^{T}_{\pi})(I-H^{ZZ_{\pi}})X$ has a smaller absolute value compared with the noise term. In this case, the Type II error depends on the signs of each dimension of $(X-X_{\pi_k})(I-H^{ZZ_{\pi_k}})\epsilon$, which is impossible to  be well guaranteed unless we know the exact distribution of $\epsilon$.  Consequently, to control the Type II error in the absence of such distributional knowledge, both $\mathbb{P}[A, B \leq 1 - 2\alpha]$ and $\mathbb{P}[A + B \leq 2\alpha]$ must be sufficiently small.

Since our goal is to control the Type II error under any exchangeable noise, the Type II error must also be controlled when $\epsilon\overset{d}{=}-\epsilon$. Thus, we can conclude that $\max(A,B)\geq 1-2\alpha$ must hold with high probability.

Finally, we explain why we do not take $B\geq 1-2\alpha$ into consideration. Consider 
\begin{align*}
(X^{T}-X^{T}_{\pi_k})(I-H^{ZZ_{\pi_k}})X&=(X-X_{\pi_k})^{T}X-(X-X_{\pi_k})H^{ZZ_{\pi_k}}X\\
&=\frac{1}{2}(X^{T}X-2X^{T}_{\pi_k}X+X^{T}_{\pi_{k}}X_{\pi_k})-(X-X_{\pi_k})H^{ZZ_{\pi_k}}X
\\&=\frac{1}{2}\Vert X-X_{\pi_k}\Vert^{2}_{2}-(X-X_{\pi_k})H^{ZZ_{\pi_k}}X\\
&\geq -\frac{1}{2}\Vert H^{ZZ_{\pi_k}}X\Vert^{2}_{2}\,,
\end{align*}
where the last inequality is usually not achievable. Moreover, $\Vert H^{ZZ_{\pi_k}}X\Vert^{2}_{2}$ is not guaranteed to reach $\Omega(X^{T}X)$ when $n/p$ increases. This implies that for $b\neq0$ and $X,Z$ to satisfy $B\geq 1-2\alpha$, which causes $\mathcal{H}_{0}$ to be rejected, $\vert b\vert$ must be sufficiently large. Therefore, we only consider optimizing over $A\geq 1-2\alpha$.

\paragraph{Type II guarantee of $\lambda$ in~\eqref{optimization: intuitive}}\label{paragraph: Type II guaranty}
We demonstrate that a sufficiently large $\lambda$ in \eqref{optimization: intuitive} guarantees $\phi_1, \phi_2 \notin [\alpha, 1-\alpha]$ with high probability.

Suppose that for some $\lambda_0$, we have
$$
\frac{1}{1+K}\left[\sum_{k=0}^{K}1\{X^{T}(I-H^{ZZ_{\pi_{k}}})X-X^{T}_{\pi_k}(I-H^{ZZ_{\pi_k}})X\geq \lambda_0\}\right]\geq 1-\frac{1} {2}\alpha\,,
$$
and at the same time, the noise term can be bounded by $g(X,\gamma)$ such that for every permutation $\pi$, 
$$
\mathbb{P}\left[\left\vert X^{T}(I-H^{ZZ_{\pi}})\epsilon -X^{T}_{\pi}(I-H^{ZZ_{\pi}})\epsilon\right\vert\geq g(X,\gamma)\right]\leq \gamma\,.
$$
It is then typically the case that, for any $\gamma > 0$, $g(X,\gamma) \in o(X^T X)$. As an illustration, suppose each coordinate $X_i$ of $X$ has variance bounded below by a positive constant $c_e > 0$, and that $X^T X \geq \frac{1}{2} c_e n$ with high probability—a reasonable assumption commonly adopted in the literature (e.g., \citep{wen2025}). Furthermore, if the $\epsilon_i$ are independent, satisfy $\mathbb{E}[\epsilon_i] = 0$, and $\mathbb{E}[|\epsilon_i|^{1+t}] \leq C_e$ for some $t \in (0,1]$, then it can be shown that, for any permutation $\pi$ and whenever $X^T X \geq \frac{1}{2} c_e n$, the following holds:
\begin{align}\label{upper bound of noise}
\mathbb{P}\left[\vert (X-X_{\pi})^{T}(I-H^{ZZ_{\pi}})\epsilon\vert \geq (X^{T}X)^{\frac{1+\frac{1}{2}t}{1+t}} \right]\leq O((X^{T}X)^{-\frac{1}{2}t})\,.
\end{align}
This means for any constant $\gamma>0$, $g(X,\gamma)\leq o(1)\max(X^{T}X,\frac{1}{2}c_e\cdot n)$.
On the other hand, when $\vert b\vert \geq \lambda_0^{-1}g(X,\gamma)$, we can upper bound $\mathbb{P}[\phi_1,\phi_2\geq \alpha]$ by:
\begin{align*}
\mathbb{P}[\phi_1,\phi_2\geq \alpha]\leq \mathbb{P}\left[\sum_{k=0}^{K}\mathbbm{1}\left\{\left\vert (X^{T}-X^{T}_{\pi_k})(I-H^{ZZ_{\pi_k}})\epsilon\right\vert\geq g(X,\gamma)\right\}\geq \frac{1}{2}\alpha(1+K)\right]\leq\frac{2\gamma}{\alpha}\,.
\end{align*}
Therefore, when $\vert b\vert\geq \lambda^{-1}_0g(X,\frac{1}{2}\alpha\cdot \theta)$, the Type II error is at most $\theta$. Based on this result, if $\lambda_0\geq \Omega(X^{T}X)$, then we can bound the Type II error for any nonzero constant $b$.

However, $g(X,\gamma)$ is difficult to estimate accurately unless the exact distribution of $\epsilon$ is known. Therefore, we focus on maximizing $\lambda$ by selecting an appropriate $\mathcal{P}_{K}$ that satisfies \Cref{assump::grouppermutation}.

We finally prove~\eqref{upper bound of noise}. We first provide the following inequality:
\begin{align*}
\vert x+y\vert^{1+t}\leq \vert x\vert^{1+t}+\vert y\vert^{1+t}+(1+t)\vert x\vert^{t}y\,,\,\forall x\geq 0\,.
\end{align*}
\begin{proof}
The above inequality is obvious when $x=0$. Now consider $x>0$. Let $f(y)=\vert x\vert^{1+t}+\vert y\vert^{1+t}+(1+t)\vert x\vert^{t}y-\vert x+y\vert^{1+t}$. When $y\geq 0$, we have: $f^{'}(y)=(1+t)y^{t}+(1+t)x^{t}-(1+t)(x+y)^{t}$. Since $0<t\leq 1$, $(x+y)^{t}\leq x^{t}+y^{t}$, so that $f(y)\geq 0$, implying that $f(y)\geq f(0)=0,\forall y\geq 0$. \\   
For $y<0$, we let $y=-ax$, and it suffices to show that
\begin{align*}
g(a)=1+a^{1+t}-(1+t)a-\vert 1-a\vert^{1+t}\geq 0
\end{align*}
When $a\in[0,1)$, $g^{'}(a)=(1+t)(a^{t}+(1-a)^{t}-1)\geq 0$. $g(1)=1-t\geq 0$.\\
When $a\geq 1$, $g(a)=1+a^{1+t}-(1+t)a-(a-1)^{1+t}$, $g^{'}(a)=(1+t)(a^{t}-1-(a-1)^{t})\geq 0$, implying that $g(a)\geq g(1)\geq0$. Therefore, $g(a)\geq 0,\forall a\geq 0$.
\end{proof}
By the inequality we know that, if $x,y$ are independent and $\mathbb{E}[\vert x\vert^{1+t}],\mathbb{E}[\vert y\vert^{1+t}]<\infty$, $\mathbb{E}[x]=\mathbb{E}[y]=0$, then $\mathbb{E}[\vert x+y\vert^{1+t}]\leq\mathbb{E}[\vert x\vert^{1+t}]+\mathbb{E}[\vert y\vert^{1+t}] $. Thus, for any $a_1,a_2,...,a_n\in\mathbb{R}$, we have 
\begin{align*}
\mathbb{E}\left[\vert\sum_{i=1}^{n}a_{i}\epsilon_{i} \vert^{1+t}\right]\leq \mathbb{E}[\vert a_1\epsilon_1\vert^{1+t}]+\mathbb{E}\left[\vert\sum_{i=2}^{n}a_{i}\epsilon_{i} \vert^{1+t}\right]\leq...\leq\sum_{i=1}^{n}\vert a_{i}\vert^{1+t}\mathbb{E}[\vert \epsilon_i\vert^{1+t}]\,.
\end{align*}
Let $a_{i}=[(I-H^{ZZ_{\pi}})(X-X_{\pi})]_i$. Then $\sum_{i=1}^{n}a^{2}_{i}\leq 4X^{T}X$, and when $X^{T}X\geq \frac{1}{2}c_en$, since $\vert a_i\vert^{1+t}\leq 1+a^{2}_i$, there exists some constant $c^{'}>0$ such that $\sum_{i=1}^{n}\vert a_{i}\vert^{1+t}\leq c^{'}(X^{T}X)$. This means for some constant $C>0$, $\mathbb{E}\left[\vert (X-X_{\pi})^{T}(I-H^{ZZ_{\pi}})\epsilon \vert^{1+t}\right]\leq CX^{T}X$, which leads to the probability bound:
\begin{align*}
\mathbb{P}\left[\vert(X-X_{\pi})^{T}(I-H^{ZZ_{\pi}})\epsilon\vert^{1+t}\geq (X^{T}X)^{1+\frac{1}{2}t}  \right]\leq O((X^{T}X)^{-\frac{1}{2}t})\to 0\,.
\end{align*}

\paragraph{The effectiveness of the optimization problem~\eqref{optimization objective for alg}}\label{paragraph: upper and lower of optimization objective}

We first show the formal form of Lemma~\ref{lem: upper and lower of optimization objective informal}

\begin{lemma}[Upper and lower bounds of $X^{T}H^{ZZ_{\pi}}X$, formal]\label{lem: upper and lower of optimization objective formal}
Suppose the rows of $Z$ are independent, and the $r$-th row $z_r$ follows a distribution $f_r(z_1,z_2,...,z_{p-1})$, with $\mathbb{E}_{f_r}[z_{i}]=0$. If there exist some constants $c_1, c_2>0$ such that, for any $f_r (r=1,2,...,n)$ and $v\in\mathbb{R}^{p-1}$, $\Vert v\Vert^{2}_{2}=1$, we have:
\begin{align*}
\mathbb{E}\left[(z^{T}v)^{2}\right]\geq c_1,\qquad \mathbb{E}\left[e^{tz^{T}v}\right]\leq e^{c_2 t^{2}},\forall t\in \mathbb{R}\,.
\end{align*}
Then there exists a constant $C>0$ depending only on $c_1,c_2$ such that, for any permutation $\pi$ with probability at least $1-40e^{-p}$ over the distribution of $Z$, we have:
\begin{align*}
X^{T}H^{ZZ_{\pi}}X=\Vert H^{ZZ_{\pi}}X\Vert^{2}_{2}\leq \Vert H^{Z}X\Vert^{2}_{2}+\frac{1}{1-\frac{C(p+tr(P_{\pi}))}{n}}\Vert H^{Z_{\pi}}(X-H^{Z}X)\Vert^{2}_{2}\,,\forall X\,.
\end{align*}

On the other hand, we have a strict lower bound of $X^{T}H^{ZZ_{\pi}}X$:
\begin{align*}
X^{T}H^{ZZ_{{\pi}}}X\geq \Vert H^{Z}X\Vert^{2}_{2}+\Vert H^{Z_{{\pi}}}(X-H^{Z}X)\Vert^{2}_{2}.    
\end{align*}
\end{lemma}
\paragraph{Remark} In the special case that $z_1,z_2,...,z_{p-1}$ are also independent, with $\mathbb{E}_{f_r}[z_{i}z_{j}]=\delta_{ij},\forall r$, we simply have $c_1 = 1$. If we further know that all $z_{i}$'s are subgaussian, that is, for some $K$, we have
$$
\mathbb{E}[e^{tz_{i}}]\leq e^{K^{2}t^{2}},\forall i=1,2,...,p-1\,,
$$
then we obtain:
$$
\mathbb{E}(e^{tz^{T}v}) = \prod_{i=1}^{p-1}\mathbb{E}[e^{tz_{i}v_{i}}]\leq\prod_{i=1}^{p-1}e^{K^{2}t^{2}v^{2}_{i}} = e^{K^{2}t^{2}}\,.
$$ 
This implies a valid $c_2=K^{2}$.

Denote $v = (I - H^{Z})X$ for simplicity, and assume $Z$ is as given in Lemma~\ref{lem: upper and lower of optimization objective informal}. We first establish the comparison between $\lambda_1(X,Z,\mathcal{P}_{K},t)$ and $\lambda_2(X,Z,\mathcal{P}_{K},t)$.

On one hand, for any given $X$, $Z$, $\mathcal{P}_K$, and $\alpha$, the quantity $\lambda_1$ admits the following lower bound:
\begin{align}
\lambda_1(X,Z,\mathcal{P}_{K},\frac{1}{2}\alpha)\geq\lambda_{2}(X,Z,\mathcal{P}_{K},\frac{1}{2}\alpha)+\Vert H^{Z}X\Vert^{2}_{2}-\frac{1}{2}\Vert v\Vert^{2}_{2}\,.
\end{align}
On the other hand, for any given $\mathcal{P}_{K}$, if $\alpha\geq 8e^{-\frac{1}{2}p}$ and there exists some $m$ such that $\vert \{\pi_k \vert tr(P_{\pi_k})\leq m\}\vert\geq (1-\frac{1}{8}\alpha)(1+K)$, then, with probability at least $1 - 40e^{-\frac{1}{2}p}$ over the distribution of $Z$, from  Lemma \ref{lem: upper and lower of optimization objective informal}, we have (detailed proof is provided in \ref{paragraph: upper and lower of optimization objective}):
\begin{align}
\lambda_1(X,Z,\mathcal{P}_{K},\frac{1}{2}\alpha)\leq \frac{n}{n-C(p+m)}\lambda_2(X,Z,\mathcal{P}_{K},\frac{1}{4}\alpha)+\frac{n}{2n-2C(p+m)}\Vert v\Vert^{2}_{2}+\Vert H^{Z}X\Vert^{2}_{2}\,.
\end{align}
When $\mathcal{P}_{K}$ is independent of $Z$ ($\mathcal{P}_{K}$ may depend on $X$), and $tr(P_{\pi})\in o(n)$ for at least $1-\frac{1}{8}\alpha$ fraction of $\pi\in\mathcal{P}_{K}$, then \eqref{upper of original optimization objective} provides an upper estimate of $\lambda_1(X,Z,\mathcal{P}_{K},\frac{1}{2}\alpha)$. Specifically, as $n/p\to\infty$, \eqref{upper of original optimization objective} implies that $\lambda_1(X,Z,\mathcal{P}_{K},\frac{1}{2}\alpha)\leq (1+o(1))\lambda_2(X,Z,\mathcal{P}_{K},\frac{1}{4}\alpha)+(\frac{1}{2}+o(1))\Vert v\Vert^{2}_{2}+\Vert H^{Z}X\Vert^{2}_{2}$, assuming $Z$ follows the conditions in Lemma~\ref{lem: upper and lower of optimization objective informal}. 
In summary, \eqref{lower of original optimization objective} and \eqref{upper of original optimization objective} demonstrate the consistency between the two families of statistics $\lambda_1(X,Z,\mathcal{P}_{K},t)$ and $\lambda_2(X,Z,\mathcal{P}_{K},t)$ (for $t \in (0,1)$), with a strict lower bound that holds for all $\mathcal{P}_K$ and $Z$, and a high-probability upper bound valid when $Z$ satisfies Lemma~\ref{lem: upper and lower of optimization objective informal} and $\mathcal{P}_{K}$ is independent of $Z$.

Therefore, we consider the statistic $\lambda_2(X,Z,\mathcal{P}_{K},t)$ for all $t\in(0,1)$, with particular interest in the regime $t \to 0^{+}$. Since our objective is to identify a permutation group $\mathcal{P}_{K}$ that performs well simultaneously for any constant $t>0$, we adopt $\lambda_2(X,Z,\mathcal{P}_{K},\frac{1}{4}\alpha)$  as the optimization target and construct $\mathcal{P}_{K}$ for \eqref{optimization objective for alg}  over an arbitrary given $\alpha > 0$, without concern for the constant factor preceding $\alpha$.

\paragraph{Theoretical guarantee with $Z$ in Lemma~\ref{lem: upper and lower of optimization objective informal}}
Given that $\pi$ is independent of $Z$ and that $Z$ satisfies the conditions in Lemma~\ref{lem: upper and lower of optimization objective informal}, we now verify both the upper and lower bounds of $\lambda_1(X,Z,\mathcal{P}_{K},\frac{1}{2}\alpha)$.

Lemma~\ref{lem: upper and lower of optimization objective informal} together with \eqref{lower and upper bound term 2} yields both a high-probability upper bound and a strict lower bound for $X^{T}H^{ZZ_{\pi}}X+X^{T}_{\pi}(I-H^{ZZ_{\pi}})X$: 
\begin{align*}
&\frac{1}{2}X^{T}_{\pi}(I-H^{Z_{\pi}})(I-H^{Z})X+\Vert H^{Z}X\Vert^{2}_{2}+\frac{n}{n-C(p+tr(P_{\pi}))}\Vert H^{Z_{\pi}}(X-H^{Z}X)\Vert^{2}_{2}+\frac{1}{2}\Vert (I-H^{Z})X\Vert^{2}_{2}\\
&\geq X^{T}H^{ZZ_{\pi}}X+X^{T}_{\pi}(I-H^{ZZ\pi})X\\
&\geq \frac{1}{2}X^{T}_{\pi}(I-H^{Z_{\pi}})(I-H^{Z})X+\Vert H^{Z}X\Vert^{2}_{2}+\Vert H^{Z_{\pi}}(X-H^{Z}X)\Vert^{2}_{2}-\frac{1}{2}\Vert (I-H^{Z})X\Vert^{2}_{2}\,.
\end{align*}
Where the first inequality holds with probability $O(e^{-p})$, and the second always holds.

Let $\mathcal{P}_{K}$ be the permutation matrices of any permutation group and $\mathcal{K}=\{k\vert \frac{1}{2}v^{T}_{\pi_k}v+\Vert H^{Z_{\pi_k}}v\Vert^{2}_{2} \geq \lambda_2(X,Z,\mathcal{P}_{K},\frac{1}{2}\alpha)\}$. Then for $k\in \mathcal{K}$ we must have:
$$
X^{T}H^{ZZ_{\pi_k}}X+X^{T}_{\pi_k}(I-H^{ZZ\pi_k})X\geq \lambda_{2}(X,Z,\mathcal{P}_{K},\frac{1}{2}\alpha)+\Vert H^{Z}X\Vert^{2}_{2}-\frac{1}{2}\Vert (I-H^{Z})X\Vert^{2}_{2}
$$
Therefore, we have
\begin{align*}
&\frac{1}{1+K}\sum_{k=0}^{K}\mathbbm{1}\left\{ X^{T}H^{ZZ_{\pi_k}}X+X^{T}_{\pi_k}(I-H^{ZZ\pi_k})X \geq \lambda_2(X,Z,\mathcal{P}_{K},\frac{1}{2}\alpha)+\Vert H^{Z}X\Vert^{2}_{2}-\frac{1}{2}\Vert (I-H^{Z})X\Vert^{2}_{2} \right\}\\
&\geq \frac{1}{1+K}\sum_{k=0}^{K}\mathbbm{1}\left\{ \frac{1}{2}v^{T}_{\pi_k}v+\Vert H^{Z_{\pi_k}}v\Vert^{2}_{2} \geq \lambda_2(X,Z,\mathcal{P}_{K},\frac{1}{2}\alpha) \right\}\\
&>\frac{1}{2}\alpha\,.
\end{align*}
This means $\lambda_1(X,Z,\mathcal{P}_{K},\frac{1}{2}\alpha)\geq\lambda_2(X,Z,\mathcal{P}_{K},\frac{1}{2}\alpha)+\Vert H^{Z}X\Vert^{2}_{2}-\frac{1}{2}\Vert (I-H^{Z})X\Vert^{2}_{2}$.
On the other hand, it holds with probability $1-40e^{-p}$ that
\begin{align*}
&X^{T}H^{ZZ_{\pi}}X+X^{T}_{\pi}(I-H^{ZZ\pi})X\leq \frac{1}{2}\Vert (I-H^{Z})X\Vert^{2}_{2}+\Vert H^{Z}X\Vert^{2}_{2}\\
&+\frac{n}{n-C(p+tr(P_{\pi_k}))}-\frac{C(p+tr(P_{\pi}))}{2n-2C(p+tr(P_{\pi_k}))}v^{T}_{\pi_k}v\\
&\leq \Vert H^{Z}X\Vert^{2}_{2}+\frac{n}{n-C(p+tr(P_{\pi_k}))}\left[\frac{1}{2}v^{T}_{\pi_k}v+\Vert H^{Z_{\pi_k}}v\Vert^{2}_{2}\right]+\frac{n}{2n-2C(p+tr(P_{\pi_k}))}\Vert v\Vert^{2}_{2}\,,
\end{align*}
where the last inequality is because $v^{T}_{\pi_k}v\geq-\Vert v\Vert^{2}_{2}$.
Now, suppose that $\vert\{\pi_{k}\vert tr(P_{\pi_k})\leq m\}\vert\geq (1-\frac{1}{8}\alpha)(1+K)$ for some $m\leq o(n)$. Let $\mathcal{K}^{'}=\{k\vert tr(P_{\pi_k})\leq m, \frac{1}{2}v^{T}_{\pi_k}v+\Vert H^{Z_{\pi_k}}v\Vert^{2}_{2}\leq \lambda_2(X,Z,\mathcal{P}_{K},\frac{1}{4}\alpha)\} $

Then $\vert \mathcal{K}^{'}\vert\geq (1-\frac{1}{4}\alpha)(1+K)-\frac{1}{8}\alpha(1+K)=(1-\frac{3}{8}\alpha)(1+K)$. On the other hand, by Lemma~\ref{lem: upper and lower of optimization objective informal}, for any $k\in\mathcal{K^{'}}$, 
\begin{align*}
&X^{T}H^{ZZ_{\pi_k}}X+X^{T}_{\pi_k}(I-H^{ZZ\pi_k})X\leq\\&\frac{n}{n-C(p+m)}\lambda_2(X,Z,\mathcal{P}_{K},\frac{1}{2}\alpha)+\Vert H^{Z}X\Vert^{2}_{2}+\frac{n}{2n-2C(p+m)}\Vert v\Vert^{2}_{2}
\end{align*}
holds with probability at least $1-40e^{-p}$. This implies that

\begin{align*}
&\mathbb{P}\left[\lambda_1(X,Z,\mathcal{P}_K,\frac{1}{2}\alpha)>\frac{n}{n-C(p+m)}\lambda_2(X,Z,\mathcal{P}_{K},\frac{1}{2}\alpha)+\Vert H^{Z}X\Vert^{2}_{2}+\frac{n}{2n-2C(p+m)}\Vert v\Vert^{2}_{2}\right]\\
&\leq \frac{(1+K)\cdot 40e^{-p}}{\frac{1}{8}\alpha(1+K)}\\
&\leq 40e^{-\frac{1}{2}p}\,.
\end{align*}


\subsubsection{Group decomposition for high probability guarantee}
\label{sec::Group high prob guarantee}
In the optimization problem~\eqref{optimization objective for alg}, we first consider the second term $\Vert H^{Z_{\pi}}v\Vert^{2}_{2}$. This is equivalent to analyzing $\Vert H^{Z}v_{\pi}\Vert^{2}_{2}$, since $\Vert H^{Z_{\pi}}v\Vert^{2}_{2}=\Vert H^{Z}v_{\pi^{-1}}\Vert^{2}_{2}$, and $P_{\pi^{-1}}\in\mathcal{P_{K}}$ if and only if $P_{\pi}\in \mathcal{P}_{K}$. In our algorithm, the set $\{\pi\vert P_{\pi}\in\mathcal{P}_{K}\}$ is constructed as $\mathcal{Q}_{1}\circ\mathcal{Q}_{2}\circ ...\circ\mathcal{Q}_{k}$, and $H^{Z}v_{\pi}$ admits the decomposition $H^{Z}v_{\pi}=\sum_{i=1}^{k}u_{i}$. Leveraging this structure, we present Theorem~\ref{thm: high prob bound of optimization objective}—the formal counterpart of Theorem~\ref{thm: informal for high prob bound}—along with Proposition~\ref{proposition: high probability bound on group decomposition}, which together specify the requirements on the group decomposition and the distributions of $X$ and $Z$.

\begin{theorem}\label{thm: high prob bound of optimization objective}
Suppose that $v_{i}(i=1,2,...,m)$ is a sequence of independent vectors with  $\mathbb{E}(\Vert v_{i}\Vert^{2}_{2})=t_i$ and let $S$ be any parameter such that $S\geq\sum_{i=1}^{m}t_i$. Also, suppose these vectors satisfy the following conditions:

(1): $\mathbb{E}(v_{i})=0$.

(2): $\Vert v_{i}\Vert_{2}\leq a$, with $a^{2}\leq \frac{S}{\ln^{4}n}$.

(3): $\forall w\in\mathcal{S}^{n-1}$, we have: $S_{w}:=\sum_{i=1}^{m}\mathbb{E}\left[(v^{T}_{i}w)^{2}\right]\leq \frac{S}{\ln^{4}n}$.

Then we have: for any $c,k>0$, there exists $C$ such that 

$$
\mathbb{P}\left[\Vert \sum_{i=1}^{m}v_{i}\Vert^{2}_{2}\geq \sum_{i=1}^{m}t_i+cS\right]\leq Cn^{-k}.
$$
\end{theorem}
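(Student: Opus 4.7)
The plan is to rewrite $\Vert V\Vert_{2}:=\Vert\sum_{i}v_{i}\Vert_{2}$ as the supremum of a centered empirical process and apply a Bousquet/Klein-Rio concentration inequality, exploiting that conditions (2) and (3) control respectively the envelope and the weak variance of the linear class $\{f_{w}(x)=w^{T}x:w\in\mathcal{S}^{n-1}\}$. Using the variational identity $\Vert V\Vert_{2}=\sup_{w\in\mathcal{S}^{n-1}}\sum_{i=1}^{m}f_{w}(v_{i})$, each $f_{w}$ is centered by (1); its envelope satisfies $\sup_{w,x}|f_{w}(x)|\leq a\leq\sqrt{S}/\ln^{2}n$ by (2); and $\sup_{w}\sum_{i}\mathbb{E}[f_{w}(v_{i})^{2}]=\sup_{w}S_{w}\leq S/\ln^{4}n$ by (3). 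Jensen together with independence and $\mathbb{E}[v_{i}]=0$ gives $\mathbb{E}\Vert V\Vert_{2}\leq(\mathbb{E}\Vert V\Vert_{2}^{2})^{1/2}=(\sum_{i}t_{i})^{1/2}\leq\sqrt{S}$.

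Bousquet's inequality (applied to a countable dense subnet of $\mathcal{S}^{n-1}$, justified by continuity of $w\mapsto w^{T}V$) then yields, for every $t\geq 0$,
\[
\mathbb{P}\bigl[\Vert V\Vert_{2}\geq\mathbb{E}\Vert V\Vert_{2}+t\bigr]\leq\exp\!\left(-\frac{t^{2}}{2\bigl(\sigma^{2}+2a\,\mathbb{E}\Vert V\Vert_{2}\bigr)+\tfrac{2}{3}at}\right),\quad\sigma^{2}\leq S/\ln^{4}n.
\]
To convert the target event $\{\Vert V\Vert_{2}^{2}\geq\sum_{i}t_{i}+cS\}$ into a one-sided deviation, I use the elementary inequality $\sqrt{u+v}-\sqrt{u}\geq v/(2\sqrt{u+v})$ with $u=\sum_{i}t_{i}\leq S$ and $v=cS$, which forces $\Vert V\Vert_{2}-\mathbb{E}\Vert V\Vert_{2}\geq t_{\star}:=c\sqrt{S}/(2\sqrt{1+c})$. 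Substituting $t=t_{\star}$, the bounds $2a\,\mathbb{E}\Vert V\Vert_{2}\leq 2S/\ln^{2}n$ and $at_{\star}\leq cS/\ln^{2}n$ collapse the denominator to $O(S/\ln^{2}n)$, while the numerator is $t_{\star}^{2}\asymp c^{2}S$. The resulting exponent is of order $-c^{2}\ln^{2}n/C'$, so the probability is at most $\exp(-c''\ln^{2}n)=n^{-c''\ln n}$, which dominates any prescribed $n^{-k}$ for $n$ sufficiently large; the remaining finite-$n$ regime is absorbed into a constant $C=C(c,k)$.

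The main obstacle is the sharpness demanded by the $\sqrt{u+v}-\sqrt{u}$ reduction: in the worst case $\sum_{i}t_{i}\asymp S$ the induced gap shrinks to only $\Theta(c\sqrt{S})$, so the concentration for $\Vert V\Vert_{2}$ must exploit the small weak variance $\sigma^{2}\leq S/\ln^{4}n$ rather than the naive $\mathbb{E}\Vert V\Vert_{2}^{2}\asymp S$. Standard matrix/vector Bernstein with variance parameter $\max(\Vert\Sigma\Vert_{\mathrm{op}},\sum_{i}t_{i})=\Theta(S)$ only produces a constant exponent and therefore fails; the Klein-Rio sharpening of Bousquet, which replaces $\mathbb{E}\Vert V\Vert_{2}^{2}$ by $\sigma^{2}+2a\,\mathbb{E}\Vert V\Vert_{2}$, is essential. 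The factor $\ln^{4}n$ in conditions (2) and (3) is calibrated precisely so that the dominant Klein-Rio correction $2a\,\mathbb{E}\Vert V\Vert_{2}\sim S/\ln^{2}n$ is smaller than $t_{\star}^{2}\asymp c^{2}S$ by a factor of $\ln^{2}n$, which is what promotes the exponent to a super-polynomial function of $\ln n$.
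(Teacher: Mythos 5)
Your argument is correct, and it takes a genuinely different route from the paper. The paper decomposes $\Vert\sum_i v_i\Vert_2^2 = Y_m + \sum_i\Vert v_i\Vert_2^2$, with $Y_i=\Vert X_i\Vert_2^2-\sum_{j\le i}\Vert v_j\Vert_2^2$ built on partial sums $X_i=\sum_{j\le i}v_j$, and runs a hands-on exponential-martingale argument on truncated versions of $e^{\lambda Y_i}$. It needs two auxiliary lemmas (a Freedman-type bound on $\sup_i|X_i^Tw|$ and an exponential-martingale bound on $\sup_i\Vert X_i\Vert_2$) to justify the truncation events, plus a separate MGF bound for $\sum_i\Vert v_i\Vert_2^2$ versus $\sum_i t_i$. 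Your proposal instead writes $\Vert V\Vert_2 = \sup_{w\in\mathcal{S}^{n-1}}\sum_i w^Tv_i$ and invokes Klein--Rio/Bousquet concentration for suprema of (independent, not necessarily identically distributed) empirical processes, which exactly packages the small weak variance $\sigma^2=\sup_w S_w\le S/\ln^4 n$ and the small envelope $a\le\sqrt{S}/\ln^2 n$ into the variance proxy $\sigma^2+2a\,\mathbb{E}\Vert V\Vert_2$. The conversion $\{\Vert V\Vert_2^2\ge\sum_i t_i+cS\}\subseteq\{\Vert V\Vert_2-\mathbb{E}\Vert V\Vert_2\ge c\sqrt{S}/(2\sqrt{1+c})\}$ via Jensen and $\sqrt{u+v}-\sqrt{u}\ge v/(2\sqrt{u+v})$ is correct, and the resulting exponent of order $c^2\ln^2 n$ dominates any prescribed $k\ln n$ for $n$ large, with small $n$ absorbed into $C(c,k)$. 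Both proofs rely on the same underlying structural fact --- the weak variance is $\ln^4 n$ times smaller than the expected squared norm, which is what upgrades Bernstein-type concentration from a constant exponent to a super-poly-logarithmic one --- but your route reaches it in one step by outsourcing the martingale machinery to Bousquet, whereas the paper rebuilds it from scratch (keeping the argument self-contained at the cost of several extra lemmas and a more delicate truncation bookkeeping). The only caveat you should state explicitly if you write this up is that you need the non-i.i.d.\ (Klein--Rio) version of the inequality, since the $v_i$ are independent but not identically distributed.
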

Following this, if we regard $u_i' = u_i - \mathbb{E}[u_i]$ as a sequence of independent random variables, then we obtain the following proposition:
\begin{proposition}\label{proposition: high probability bound on group decomposition}
Suppose that a partition of $S_{i}(i=1,2,...,k)$ satisfies:

(1): 
\begin{align}\label{condition for group decomposition}
\max_{i\in[1,k]}\frac{1}{\vert S_{i}\vert}\vert \sum_{j\in S_{i}}(v_{j}-\bar{v})\vert \in o\left(\sqrt{\frac{\sum_{i=1}^{n}(v_{i}-\bar{v})^{2}}{n}}\right)\,,
\end{align}


(2) For any $j$ we have: $(v_{j}-\bar{v})^{2}\leq\frac{1}{\ln^{6}n}\sum_{i=1}^{n}(v_{i}-\bar{v})^{2}$

(3): For any $i$, we have: $\frac{\sum_{j\in S_{i}}(v_{j}-\bar{v})^{2}}{\sum_{i=1}^{n}(v_{i}-\bar{v})^{2}}\leq \frac{1}{\ln^{4}n}$

(4): $\sum_{i=1}^{n}(v_{i}-\bar{v})^{2}\to+\infty$\,.

Let $c$ be any constant such that $c>0$; then if $\sum_{i=1}^{k}\mathbb{E}[\Vert u_{i}-\mathbb{E}(u_{i})\Vert^{2}_{2}]\geq \Omega(\sum_{i=1}^{n}(v_{i}-\bar{v})^{2})$, we have:
\begin{align}
\lim_{n\to \infty}\mathbb{P}\left[\Vert H^{Z}v_{\pi}\Vert^{2}_{2}\leq(1+c)\mathbb{E}[\Vert H^{Z}v_{\pi}\Vert^{2}_{2}]\right]\to 1\,,
\end{align}
where the probability is defined on the randomness of $\pi$ and the given $n$, $\pi$ is sampled uniformly from $\mathcal{P}_{K}$.
Otherwise, we apply another bound
\begin{align}
\lim_{n\to \infty}\mathbb{P}\left[\Vert H^{Z}v_{\pi}\Vert^{2}_{2}\leq\mathbb{E}[\Vert H^{Z}v_{\pi}\Vert^{2}_{2}]+c\sum_{i=1}^{n}(v_{i}-\bar{v})^{2}\right]\to 1\,.
\end{align}
On the other hand, when~\eqref{condition for group decomposition} and condition (3) hold we also have:
\begin{align}\label{upper bound of vv_pi}
 v^{T}_{\pi}v \leq \mathbb{E}[ v^{T}_{\pi}v]+o(1)\sum_{i=1}^{n}(v_{i}-\bar{v})^{2}\,, w.h.p.
\end{align}
\end{proposition}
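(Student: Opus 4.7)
The plan is to reduce concentration of $\|H^{Z}v_{\pi}\|_2^2$ and of $v_\pi^\top v$ to the concentration of sums of independent random vectors and scalars, and then invoke Theorem~\ref{thm: high prob bound of optimization objective}. Writing $u_i = \sum_{j\in S_i}(v_j-\bar{v})w_{\sigma_i(j)}$, the vectors $u_1,\ldots,u_k$ are independent because the restricted permutations $\sigma_i$ act on disjoint index sets $S_i$ and are sampled independently. Setting $u'_i := u_i - \mathbb{E}[u_i]$, the decomposition $H^{Z}v_{\pi} = \mathbb{E}[H^{Z}v_{\pi}] + \sum_{i=1}^{k} u'_i$ yields
\begin{equation*}
\|H^{Z}v_{\pi}\|_2^2 - \mathbb{E}\|H^{Z}v_{\pi}\|_2^2 = 2\,\mathbb{E}[H^{Z}v_{\pi}]^{\top}\Big(\sum_{i=1}^{k} u'_i\Big) + \Big(\|\sum_{i=1}^{k} u'_i\|_2^2 - \sum_{i=1}^{k}\mathbb{E}\|u'_i\|_2^2\Big),
\end{equation*}
so it suffices to give a one-sided control on the vector fluctuation $\|\sum_i u'_i\|_2^2$ and on the scalar cross term.

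For the vector fluctuation I would apply Theorem~\ref{thm: high prob bound of optimization objective} to $\{u'_i\}_{i=1}^{k}$ with total budget $S$ taken as $\sum_i \mathbb{E}\|u'_i\|_2^2$ in the multiplicative regime and as a small constant times $\sum_j(v_j-\bar{v})^2$ in the additive regime. The three hypotheses have to be verified: (i) mean zero is automatic; (ii) the pointwise norm bound $\|u'_i\|_2 \leq a$ with $a^2 \leq S/\ln^4 n$ follows from $\|w_l\|_2 \leq 1$ and Cauchy--Schwarz, giving $\|u_i\|_2 \leq \sqrt{|S_i|\sum_{j\in S_i}(v_j-\bar{v})^2}$, which is small by partition condition (3); (iii) for the directional variance bound, write $u_i\cdot w = \sum_{j\in S_i}(v_j-\bar{v})(w^{\top}w_{\sigma_i(j)})$ and use the fact that, under a uniform permutation of $S_i$, the variance of this scalar is controlled by the product of the two $\ell^2$-energies $\sum_{j\in S_i}(v_j-\bar{v})^2$ and $\sum_{l\in S_i}(w^{\top}w_l - m_i)^2$, where $m_i$ is the mean of $w^{\top}w_l$ over $l\in S_i$; the first factor is small by condition (3), while the second is bounded uniformly in $w$ by $\sum_l (w^{\top}w_l)^2 = \|H^{Z}w\|_2^2 \leq 1$. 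Theorem~\ref{thm: high prob bound of optimization objective} then yields $\|\sum_i u'_i\|_2^2 \leq \sum_i \mathbb{E}\|u'_i\|_2^2 + cS$ with probability at least $1 - Cn^{-k}$, producing both regimes in the proposition.

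For the scalar cross term, $\sum_i X_i$ with $X_i := \mathbb{E}[H^{Z}v_{\pi}]^{\top}u'_i$ is a sum of independent, bounded, mean-zero scalars satisfying $|X_i| \leq \|\mathbb{E}[H^{Z}v_{\pi}]\|_2\cdot\|u'_i\|_2$ and $\mathbb{E}[X_i^2] \leq \|\mathbb{E}[H^{Z}v_{\pi}]\|_2^2\,\mathbb{E}\|u'_i\|_2^2$, so a Bernstein inequality absorbs this term into the same $o(1)\sum_j(v_j-\bar{v})^2$ error. For the bound~\eqref{upper bound of vv_pi}, decompose $v_\pi^{\top}v = \sum_{i=1}^{k} T_i$ with $T_i := \sum_{j\in S_i} v_j v_{\sigma_i(j)}$ independent across $i$; centering at $\bar{v}$ shows that $T_i - \mathbb{E}[T_i]$ equals the centered cross-product $\sum_{j\in S_i}(v_j-\bar{v})(v_{\sigma_i(j)}-\bar{v})$ minus its mean, giving $|T_i - \mathbb{E}[T_i]| \leq 2\sum_{j\in S_i}(v_j-\bar{v})^2 \leq (2/\ln^{4}n)\sum_j(v_j-\bar{v})^2$ by condition (3), and a scalar Bernstein inequality together with condition (1) (to rule out a coherent subset bias) then delivers~\eqref{upper bound of vv_pi}.

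The main technical obstacle is hypothesis (iii) in the application of Theorem~\ref{thm: high prob bound of optimization objective}, because $\mathbb{E}[(u'_i \cdot w)^2]$ must be controlled \emph{uniformly} over all unit $w \in \mathbb{R}^n$. Condition~(1), forcing the partition-subset means $|S_i|^{-1}\sum_{j\in S_i}(v_j-\bar{v})$ to be of lower order than $\sqrt{\mathrm{Var}(v)/n}$, is the essential device that prevents the $u_i$'s from carrying a coherent, non-vanishing component in the direction of some $w_l$; without it one could only bound $\mathbb{E}[(u_i\cdot w)^2]$ rather than $\mathbb{E}[(u'_i\cdot w)^2]$ uniformly in $w$. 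I expect the remaining steps to be routine once this is in place, with care needed only to align the logarithmic thresholds in conditions (2)--(3) with the $\ln^4 n$ factor demanded by the theorem.
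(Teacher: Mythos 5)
Your overall plan mirrors the paper's: decompose $H^{Z}v_{\pi}$ around its mean, treat $\{u_{i}-\mathbb{E}(u_{i})\}_{i}$ as independent centered vectors, and invoke Theorem~\ref{thm: high prob bound of optimization objective} for the vector fluctuation and a Bernstein/MGF bound for the cross term and for $v_{\pi}^{T}v$. But there is a concrete gap in how you verify hypothesis (ii) of the theorem, and the bound you get there is too weak to proceed.

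You obtain $\Vert u_{i}\Vert_{2}\le\sqrt{\vert S_{i}\vert\sum_{j\in S_{i}}(v_{j}-\bar{v})^{2}}$ by the triangle inequality plus $\Vert w_{l}\Vert_{2}\le 1$, and you claim condition~(3) makes this small enough. But condition~(3) only gives $\sum_{j\in S_{i}}(v_{j}-\bar{v})^{2}\le S/\ln^{4}n$, so your bound is $\Vert u_{i}\Vert^{2}_{2}\le\vert S_{i}\vert\,S/\ln^{4}n$, which is larger than the required $a^{2}\le S/\ln^{4}n$ by a factor $\vert S_{i}\vert$ (and $\vert S_{i}\vert$ can be as large as $n^{0.55}$ or more). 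The $\sqrt{\vert S_{i}\vert}$ loss comes from using the triangle inequality on the $w_{\sigma_i(j)}$, which ignores that the underlying vectors $e_{\sigma_i(j)}$ are orthonormal. The correct route is to write
$u_{i}-\mathbb{E}(u_{i})=H^{Z}\sum_{j\in S_{i}}\bigl((v_{j}-\bar{v})-\tfrac{1}{\vert S_{i}\vert}\sum_{l\in S_{i}}(v_{l}-\bar{v})\bigr)e_{\sigma_{i}(j)}$
and use that $H^{Z}$ is a contraction together with orthonormality of $\{e_{l}\}$, giving $\Vert u_{i}-\mathbb{E}(u_{i})\Vert^{2}_{2}\le\sum_{j\in S_{i}}(v_{j}-\bar{v})^{2}\le S/\ln^{4}n$, with no $\vert S_{i}\vert$ factor. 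That is the bound the theorem actually needs, and condition~(3) is exactly calibrated to it.

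A second, smaller issue is your attribution of condition~(1). You present it as the device making the directional variance $\mathbb{E}[(u_{i}'\cdot w)^{2}]$ uniformly small in $w$; in fact the uniform control of $S_{w}=\sum_{i}\mathbb{E}[(w^{\top}u_{i}')^{2}]$ follows from a Cauchy--Schwarz in $j\in S_{i}$ plus the identity $\sum_{l=1}^{n}(w^{\top}w_{l})^{2}=\Vert H^{Z}w\Vert^{2}_{2}\le 1$, and is governed by condition~(3). Condition~(1) is what controls $\Vert\mathbb{E}(u_{i})\Vert_{2}$ and $\Vert\sum_{i}\mathbb{E}(u_{i})\Vert_{2}$, hence the size of the deterministic center $\mathbb{E}[H^{Z}v_{\pi}]$ and the leverage it exerts in the cross term $2\,\mathbb{E}[H^{Z}v_{\pi}]^{\top}\sum_{i}u_{i}'$. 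With these two corrections, the rest of your argument (the scalar Bernstein bound for the cross term and for $v_{\pi}^{T}v$ via $A_{i}=\sum_{j\in S_{i}}(v_{\pi(j)}-\bar{v})(v_{j}-\bar{v})$ with $\vert A_{i}\vert\le\sum_{j\in S_{i}}(v_{j}-\bar{v})^{2}$) goes through and matches the paper.
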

Here, condition $(1)$ implies that $\Vert \sum_{i=1}^{k}\mathbb{E}(u_{i})\Vert^{2}_{2}\in o(\sum_{i=1}^{n}(v_{i}-\bar{v})^{2})$, which in turn yields the conclusion via Theorem~\ref{thm: high prob bound of optimization objective}. Note that conditions (2) and (4) pertain to the properties of the residual $X - H^{Z}X$. Specifically, condition (2) ensures that this residual does not exhibit excessively heavy tails; both (2) and (4) are commonly satisfied in real‑world data.

\subsubsection{The algorithm for group decomposition}
\label{sec::The algorithm for group decomposition}

In this section, we complete the analysis of our algorithm for Type II error control, including both how Algorithm~\ref{alg structure} guarantees a small $\lambda_2$ in \eqref{optimization objective for alg} and the detailed implementation of Algorithm~\ref{alg structure}.

Firstly, we discuss how $\{1,2,...,n\}$ is partitioned. When the conditions in Proposition~\ref{proposition: high probability bound on group decomposition} are satisfied, the following Lemma~\ref{lem: approximation of optimization objective} demonstrates that
\begin{align*}
\sum_{i=1}^{k}\left(\sum_{j\in S_{i}}\frac{1}{\vert S_{i}\vert}a_{j}^{2}\right)\left( \sum_{j\in S_{i}}b_j \right)
\end{align*}
can be used to effectively bound $\mathbb{E}[\Vert H^{Z}v_{\pi}\Vert^{2}_{2}]$
under the additional constraints that 
$\vert \sum_{j\in S_{i}}a_{j}\vert\leq O(\sqrt{\sum_{i=1}^{n}a^{2}_{i}})$ and $\vert S_{i}\vert\geq n^{0.55}$.

\begin{lemma}\label{lem: approximation of optimization objective}
Suppose that we can divide $\{1,2,...,n\}$ into $S_{1},...,S_{k}$ such that conditions (2), and (3) in Proposition~\ref{proposition: high probability bound on group decomposition} hold, and
$$
\vert \sum_{j\in S_{i}}a_{j}\vert\leq O(1)\sqrt{\sum_{i=1}^{n}a^{2}_{i}}\,,\,\,\,\,\,\, \vert S_{i}\vert\geq n^{0.55},\forall i\in\{1,2,...,k\}
$$
Then we have:
\begin{align}
\left\vert \left[ \frac{1}{2}n\bar{v}^{2}+\sum_{i=1}^{k}\left(\sum_{j\in S_{i}}\frac{1}{\vert S_{i}\vert}a_{j}^{2}\right)\left( \sum_{j\in S_{i}}b_j \right) +\Vert v^{*}\Vert^{2}_{2} \right] -\mathbb{E}\left[\frac{1}{2} v^{T}_{\pi}v+\Vert H^{Z}v_{\pi}\Vert^{2}_{2}  \right]    \right\vert\leq o(1)\sum_{i=1}^{n}a^{2}_{i}\,.
\end{align}
\end{lemma}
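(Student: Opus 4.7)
The plan is to expand $\mathbb{E}\bigl[\tfrac{1}{2}v_\pi^T v + \|H^Z v_\pi\|_2^2\bigr]$ directly, match its diagonal part against the stated main term, and bound everything else by $o(1)\sum_i a_i^2$. Two structural facts will be used throughout. First, since $v = (I - H^Z) X$, we have $H^Z v = 0$, and decomposing $v = \bar v\,\vec{1} + a$ with $a_i = v_i - \bar v$ immediately yields the identity
\[
v^{*} = -H^Z a,\qquad \text{hence}\qquad \|v^{*}\|_2^2 = \|H^Z a\|_2^2 \leq \|a\|_2^2 = \sum_{i=1}^n a_i^2.
\]
Second, $\sum_i a_i = 0$ by construction and $P_\pi\,\vec{1} = \vec{1}$ for every permutation, so $v_\pi = \bar v\,\vec{1} + a_\pi$ and $H^Z v_\pi = v^{*} + H^Z a_\pi$ for every $\pi$ in the group.

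Let $\sigma_\ell := \sum_{j\in S_\ell} a_j$, $q_\ell := \sum_{j\in S_\ell} a_j^2$, $B_\ell := \sum_{i\in S_\ell} b_i$, $W_\ell := \sum_{i\in S_\ell} w_i$, and write $u := \mathbb{E}_\pi[a_\pi]$ for the within-block-mean vector, whose value on $S_\ell$ equals $\bar a_\ell := \sigma_\ell/|S_\ell|$. A direct calculation, using that $\pi$ permutes each $S_\ell$ uniformly and independently across blocks, gives
\[
\mathbb{E}[v_\pi^T v] = n\bar v^2 + a^T u = n\bar v^2 + \|u\|_2^2,\qquad \mathbb{E}\|H^Z v_\pi\|_2^2 = \|v^{*}\|_2^2 + 2\langle v^{*}, H^Z u\rangle + \mathbb{E}\|H^Z a_\pi\|_2^2.
\]
Expanding $\|H^Z a_\pi\|_2^2 = \sum_i (a_\pi)_i^2\,b_i + \sum_{i\neq j}(a_\pi)_i (a_\pi)_j\,\langle w_i, w_j\rangle$ shows that the diagonal part has expectation exactly $\sum_\ell B_\ell q_\ell/|S_\ell|$, which matches the stated main term. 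The lemma thus reduces to showing that $\tfrac{1}{2}\|u\|_2^2$, the cross term $2\langle v^{*}, H^Z u\rangle$, and the two off-diagonal pieces of $\mathbb{E}\|H^Z a_\pi\|_2^2$ are all of order $o(1)\sum_i a_i^2$.

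The size estimates follow by combining $|\sigma_\ell| \leq O(1)\sqrt{\sum a_i^2}$ with $|S_\ell| \geq n^{0.55}$, which forces at most $k \leq n^{0.45}$ blocks, giving $\|u\|_2^2 = \sum_\ell \sigma_\ell^2/|S_\ell| \leq O(n^{-0.1})\sum a_i^2$. The same-block off-diagonal contribution is $\sum_\ell \frac{\sigma_\ell^2 - q_\ell}{|S_\ell|(|S_\ell|-1)}(\|W_\ell\|_2^2 - B_\ell)$, which I bound term by term using $|\sigma_\ell^2 - q_\ell| \leq O(\sum a_i^2)$ together with $|\|W_\ell\|_2^2 - B_\ell| \leq |S_\ell|$ (both $\|W_\ell\|_2^2$ and $B_\ell$ lie in $[0,|S_\ell|]$), again summing to $O(n^{-0.1})\sum a_i^2$. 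The different-block off-diagonal contribution collapses to $\|H^Z u\|_2^2 - \sum_\ell \bar a_\ell^2\|W_\ell\|_2^2$, and both pieces are dominated by $\|u\|_2^2$.

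The step I expect to be the main obstacle is the cross term $2\langle v^{*}, H^Z u\rangle$. A naive Cauchy--Schwarz bound $|\langle v^{*}, H^Z u\rangle| \leq \|v^{*}\|_2 \|u\|_2$ is not by itself sufficient, because $\|v^{*}\|_2^2$ can in principle scale with $n\bar v^2$ and is not a priori controlled by $\sum a_i^2$. This is precisely where the identity $v^{*} = -H^Z a$ becomes critical: it forces $\|v^{*}\|_2 \leq \sqrt{\sum a_i^2}$, which combined with $\|u\|_2^2 = o(\sum a_i^2)$ gives $|\langle v^{*}, H^Z u\rangle| \leq \sqrt{\sum a_i^2}\cdot \sqrt{o(\sum a_i^2)} = o(\sum a_i^2)$. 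Assembling all the error bounds then yields the claimed approximation error of order $o(1)\sum_{i=1}^n a_i^2$.
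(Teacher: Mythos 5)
Your proof is correct and follows essentially the same strategy as the paper's: expand $\mathbb{E}\bigl[\tfrac{1}{2}v_\pi^T v + \|H^Z v_\pi\|_2^2\bigr]$ via the block structure, identify the diagonal term $\sum_\ell (q_\ell/|S_\ell|)B_\ell$ as the main contribution, and bound the within-block off-diagonal part by $\sum_\ell \frac{|\sigma_\ell^2 - q_\ell|}{|S_\ell|(|S_\ell|-1)}|S_\ell| = O(n^{-0.1})\sum_i a_i^2$, the cross-block part and $\tfrac{1}{2}\|u\|_2^2$ by $O(n^{-0.1})\sum_i a_i^2$. The notation differs (the paper organizes the computation around $u_i = \sum_{j\in S_i}a_j w_{\sigma_i(j)}$ and writes $\mathbb{E}\|H^Z v_\pi\|_2^2 = \|v^*+\sum_i\mathbb{E}(u_i)\|_2^2 + \sum_i\mathbb{E}\|u_i-\mathbb{E}(u_i)\|_2^2$, whereas you expand entry by entry), but these decompositions are algebraically identical.

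The one substantive difference is your identity $v^* = -H^Z a$, obtained from $H^Z v = 0$, which gives the clean bound $\|v^*\|_2^2 = \|H^Z a\|_2^2 \leq \sum_i a_i^2$. The paper does not invoke this; it instead handles the cross term via the Young-type inequality $\|v^*+\sum_i\mathbb{E}(u_i)\|_2^2 \leq (1+t)\|v^*\|_2^2 + (1+\tfrac{1}{t})\|\sum_i\mathbb{E}(u_i)\|_2^2$, which yields an error of $o(1)\bigl(\|v^*\|_2^2 + \sum_i a_i^2\bigr)$ rather than $o(1)\sum_i a_i^2$. To recover the lemma as stated, the paper's argument implicitly still needs $\|v^*\|_2^2 = O(\sum_i a_i^2)$. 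Your observation supplies exactly that fact (and gives you the Cauchy--Schwarz bound on $\langle v^*, H^Z u\rangle$ without the $(1+t)$ detour), so your proof is in fact slightly more complete than the one printed in the paper.
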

This provides an alternative optimization problem as in~\eqref{optimization objective approximated}, which is applied to our algorithm design.

\begin{align}\label{optimization objective approximated}
\min_{\mathcal{P}_{K}}\sum_{i=1}^{k}\left(\sum_{j\in S_{i}}\frac{1}{\vert S_{i}\vert}a_{j}^{2}\right)\left( \sum_{j\in S_{i}}b_j \right)
\end{align}

We summarize our constraints and targets of Algorithm~\ref{alg structure} for finding $S_{1},...,S_{k}$ as follows:
\begin{enumerate}
    \item For each $S_i$, control $\vert\sum_{j\in S_i}a_j\vert \leq O(\sqrt{\sum_{i=1}^{n}a^{2}_{i}}) $ and $\vert S_{i}\vert\geq n^{0.55}$ so that $\mathbb{E}(\vert v^{T}_{\pi}v\vert)$ is near optimal  and $ \sum_{i=1}^{k}\mathbb{E}(\Vert u_i\Vert_2^{2}) $ can be well approximated by an easier optimization problem of~\eqref{optimization objective approximated}. 
    \item Control $\sum_{j\in S_{i}}a^{2}_{j}\in o(\sum_{j=1}^{n}a^{2}_{j})$ for all $S_{i}$, satisfying (3) in Proposition~\ref{proposition: high probability bound on group decomposition}, to guarantee a high probability bound of $\Vert H^{Z}v_{\pi}\Vert^{2}_{2}\leq (1+o(1))\mathbb{E}(\Vert H^{Z}v_{\pi}\Vert^{2}_{2})=(1+o(1))[\Vert v^{*}\Vert^{2}_{2}+\sum_{i=1}^{k}\mathbb{E}(\Vert u_{i}\Vert^{2}_{2})]$. This is the key approach for removing the indicator function in \eqref{optimization objective for alg}.
    \item Given the above constraints and guarantee, optimize over \eqref{optimization objective approximated}.
\end{enumerate}

\paragraph{Algorithm design for finding permutation group}\label{paragraph: algorithm detail}
We now demonstrate how to obtain a valid solution to \eqref{optimization objective approximated} under the constraints specified above. For simplicity, denote $c_{i}=a^{2}_{i}$, $\bar{b}=\frac{1}{n}\sum_{i=1}^{n}b_{i}$, and $\bar{c}=\frac{1}{n}\sum_{i=1}^{n}c_{i}$. Then we have:
\begin{align*}
\sum_{i=1}^{k}\left(\sum_{j\in S_{i}}\frac{1}{\vert S_{i}\vert}a_{j}^{2}\right)\left( \sum_{j\in S_{i}}w^{T}_{j}w_{j} \right)
&=\sum_{i=1}^{k}\frac{1}{\vert S_{i}\vert}\left[\sum_{j\in S_{i}}(c_{j}-\bar{c})\right]\sum_{j\in S_{i}}b_{j}+\sum_{i=1}^{k}\bar{c}\sum_{j\in S_{i}}b_{j}\\
&=\sum_{i=1}^{k}\frac{1}{\vert S_{i}\vert}\left[\sum_{j\in S_{i}}(c_{j}-\bar{c})\right]\left[\sum_{j\in S_{i}}(b_{j}-\bar{b})\right]+n\bar{b}\bar{c}\,.
\end{align*}
Then it suffices to minimize $\sum_{i=1}^{k}\frac{1}{\vert S_{i}\vert}\left[\sum_{j\in S_{i}}(c_{j}-\bar{c})\right]\left[\sum_{j\in S_{i}}(b_{j}-\bar{b})\right]$. To accomplish this, a key observation is that if either all the $j\in S_{i}$ satisfy $b_{j}\leq \bar{b}, c_{j}\geq \bar{c}$ or all of them satisfy $b_{j}\geq \bar{b},c_j\leq \bar{c}$, then $\frac{1}{\vert S_{i}\vert}\left[\sum_{j\in S_{i}}(c_{j}-\bar{c})\right]\left[\sum_{j\in S_{i}}(b_{j}-\bar{b})\right]\leq 0$. Conversely, if there exist $b_{j_1}\leq \bar{b}, c_{j_1}\leq \bar{c}$ and $b_{j_2}\geq \bar{b}, c_{j_2}\geq \bar{c}$,  then placing both $j_1$ and $j_2$ in the same subset $S_i$ reduces the objective.

Based on this insight, we develop Algorithm~\ref{alg structure}, which partitions the $n$ indices into three subsets $J_1, J_2, J_3$. This is achieved by first categorizing the elements into three groups according to the signs of $b_i - \bar{b}$ and $c_i - \bar{c}$, followed by an adjustment procedure (Algorithm~\ref{alg rearrange}) that ensures a bounded $$\sum_{S_{i}\in J_{1}}\frac{1}{\vert S_{i}\vert}\left[\sum_{j\in S_{i}}(c_{j}-\bar{c})\right]\left[\sum_{j\in S_{i}}(b_{j}-\bar{b})\right]\,,$$ 
while the remaining $J_{2},J_{3}$ contribute to reducing the overall sum
\begin{align}\label{optimization objective: final form}
\sum_{i=1}^{k}\frac{1}{\vert S_{i}\vert}\left[\sum_{j\in S_{i}}(c_{j}-\bar{c})\right]\left[\sum_{j\in S_{i}}(b_{j}-\bar{b})\right]\,.
\end{align}

\begin{algorithm}[t]
	
	\begin{algorithmic}
    \caption{Rearrange} \label{alg rearrange}
		\STATE \textbf{Input:} $(a_1,c_1),(a_2,c_2),...,(a_n,c_n)$, 3 subsets $I_1,I_2,I_3$ of $\{1,2,...,n\}$ and parameter $M$.
        \STATE Let $S=\sum_{i=1}^{n} a^{2}_{i}$, $\bar{c}=\frac{1}{n}\sum_{i=1}^{n}c_{i}$

        \STATE Determine $I^{'}_{1}\bigcup I^{'}_2\bigcup I^{'}_{3}=\{1,2,...,n\}$ in the following steps:
        (i) Find a subset $I^{'}_2\subseteq I_2$ with the maximum number of elements such that $\vert \sum_{i\in I^{'}_{2}} a_i \vert\leq \sqrt{S} $.
        (ii) Find a subset $I^{'}_3\subseteq I_3$ with the maximum number of elements such that $\vert \sum_{i\in I^{'}_{3}} a_i \vert\leq \sqrt{S} $.
        (iii) Let $J=\{1,2,...,n\}\setminus(I^{'}_2\bigcup I^{'}_3)$, check whether $\vert \sum_{i\in J} a_{i}\vert^{2}+\frac{1}{M}\vert \sum_{i\in J} (c_{i}-\bar{c}) \vert^{2}\leq 8S $. 
        \STATE If so, let $I^{'}_1=J$. Otherwise add elements to $J$ under the following two cases:
        \IF{$\sum_{i\in J}(c_{i}-\bar{c})>0$}
        \STATE Call \textbf{Remove}$(I^{'}_2,\sum_{i\in J}(c_{i}-\bar{c}))$ and obtain output $I^{"}_2$, update $I^{'}_1\leftarrow J\bigcup I^{"}_2$, $I^{'}_2\leftarrow I^{'}_2\setminus I^{"}_2$.
        \ELSE
        \STATE Call \textbf{Remove}$(I^{'}_3,\sum_{i\in J}(c_{i}-\bar{c}))$ and obtain output $I^{"}_3$, update $I^{'}_1\leftarrow J\bigcup I^{"}_3$, $I^{'}_3\leftarrow I^{'}_3\setminus I^{"}_3$.
        \ENDIF
        \STATE Check whether $\vert I_1^{'}\vert\geq n^{0.9}$, if not, at each step move an index $i\in I^{'}_{2}\bigcup I^{'}_{3}$ to $I^{'}_1$ until $\vert I^{'}_{1}\vert\geq n^{0.55}$, with $i= \argmin_{i\in I^{'}_2\bigcup I^{'}_3}\left\{\sum_{\mathscr{A} = I^{'}_2,I^{'}_3}(\sum_{j\in \mathscr{A}\setminus i}a_j)^{2}+\frac{1}{M}\sum_{\mathscr{A}=I^{'}_2,I^{'}_3}[\sum_{j\in \mathscr{A}\setminus i}(c_j-\bar{c})]^{2}\right\}$. Let $J_1=\{(a_{i},c_i)\vert i\in I^{'}_1\}$, $J_2=\{(a_i,c_i)\vert i\in I^{'}_2 \}$, $J_3=\{(a_i,c_i)\vert i\in I^{'}_3\}$ be the sets of vectors corresponding to $I^{'}_1$, $I^{'}_2$, $I^{'}_3$. For $J_1,J_2,J_3$, call Scale~\ref{alg:scale}  with parameter $(1,\sqrt{\frac{1}{M}})$.
      \IF{$\vert J_2\vert,\vert J_3\vert\geq n^{0.9}$}
        \STATE Return $J_1,J_2,J_3$
        \ELSE
        \STATE Return $J_1\bigcup J_2\bigcup J_3,\emptyset,\emptyset$.
        \ENDIF  
    \end{algorithmic}
    
\end{algorithm}

On the other hand, we need to ensure that $\vert \sum_{j\in S_{i}}a_{i}\vert \leq O(\sum_{i=1}^{n}a^{2}_{i})$, which is to make sure that $\mathbb{E}_{\pi_k}\left[v^{T}_{\pi_k}v\right]\leq n\bar{v}^{2}+o(1)\sum_{i=1}^{n}a^{2}_{i}$ as well as 
$$
\mathbb{E}_{\pi_k\in\mathcal{P}_{K}}\left[\Vert H^{Z}v_{\pi_k}\Vert^{2}_{2}\right]\leq \sum_{i=1}^{k}\frac{1}{\vert S_{i}\vert}\left[\sum_{j\in S_{i}}(c_{j}-\bar{c})\right]\left[\sum_{j\in S_{i}}(b_{j}-\bar{b})\right] + o(1)\sum_{i=1}^{n}a^{2}_{i}\,.
$$
Therefore, we need to upper bound $\vert \sum_{(a_i,c_i)\in J_{t}}a_{i}\vert(t=1,2,3)$, in order to guarantee the existence of a valid partition into subsets $S_1, \dots, S_k$. This is achieved via Algorithm~\ref{alg rearrange}, which adjusts the vectors based on the initial grouping into $I_1, I_2, I_3$. The core of this adjustment is presented in Algorithm~\ref{alg:remove}, which efficiently removes from $I_2$ and $I_3$ those vectors whose component sums are close to prescribed values. A subsequent scaling step, described in Algorithm~\ref{alg:scale}, is then applied to facilitate the final subset assignment, with the complete partitioning procedure detailed in Algorithm~\ref{alg partition}.
\begin{algorithm}[h]
\caption{Remove}\label{alg:remove}
    \begin{algorithmic}
        \STATE \textbf{Input:} $(a_1,b_1),(a_2,b_2),...,(a_n,b_n)$, parameter $S\leq \sum_{i=1}^{n}\vert b_{i}\vert$
        \STATE \textbf{Output}: $I\subseteq \{1,2,...,n\}$ such that $\vert \sum_{i\notin I}a_i\vert\leq \max\{\vert \sum_{i\in I} a_i\vert,\max_{i}(\vert a_{i}\vert)\}$ and $ \vert\sum_{i\in I}\vert b_i\vert -S\vert\leq \max_{i}\vert b_{i}\vert$.
        \STATE Initial: $I=\emptyset$, $I_1=\{i\vert a_i\geq 0$\}, $I_{2}=\{i\vert a_{i}<0\}$, $sum(a)=\sum_{i=1}^{n}a_i$, $sum(b)=0$
        \FOR{t=1,2,...,n}
            \IF{$I_1=\emptyset$ or $I_2=\emptyset$}
            \STATE Randomly choose $i\in I$.
            \STATE Update $I\leftarrow I\cup\{i\}$, $I_1\leftarrow I_{1}\setminus\{i\}$, $sum(a)\leftarrow sum(a)-a_i$, $sum(b)\leftarrow sum(b)+\vert b_i\vert$.
            \ENDIF
            \IF{$sum(a)>0$}
                \STATE Randomly choose $i\in I_1$.
                \STATE Update $I\leftarrow I\cup\{i\}$, $I_1\leftarrow I_{1}\setminus\{i\}$, $sum(a)\leftarrow sum(a)-a_i$, $sum(b)\leftarrow sum(b)+\vert b_i\vert$.
            \ELSE
                \STATE Randomly choose $i\in I_2$.
                \STATE Update $I\leftarrow I\cup\{i\}$, $I_2\leftarrow I_{2}\setminus\{i\}$, $sum(a)\leftarrow sum(a)-a_i$, $sum(b)\leftarrow sum(b)+\vert b_i\vert$.
            \ENDIF
            \IF{$sum(b)\geq S$}
                \STATE Return $I$.
            \ENDIF
        \ENDFOR
    \STATE Return $I$.
    \end{algorithmic}
\end{algorithm}

\begin{algorithm}[h]
\caption{Scale}\label{alg:scale}
    \begin{algorithmic}
        \STATE \textbf{Input:} $(a_1,b_1),(a_2,b_2),...,(a_n,b_n)$, parameter $(a,b)$
        \STATE Compute $\bar{a}=\frac{1}{n}\sum_{i=1}^{n}a_i$, $\bar{b}=\frac{1}{n}\sum_{i=1}^{n}b_i$.
        \STATE $a_{i}\leftarrow a(a_i-\bar{a})$, $b_i\leftarrow b(b_{i}-\bar{b})$.
    \end{algorithmic}
\end{algorithm}

Based on this, Algorithm~\ref{alg partition} further divides each subset $J_i$ ($i = 1,2,3$) into smaller subsets subject to the following requirements, thereby ensuring that the optimization objective $c\vert v^{T}v_{\pi}\vert +\Vert H^{Z}v_{\pi}\Vert^{2}_{2}$ is well controlled.
\begin{enumerate}
    \item For each subset $S_{i}$, $\mathbb{E}[\Vert u_{i}\Vert^{2}_{2}]\in O(\sum_{i=j}^{n}a^{2}_{j})$, and $\frac{1}{\vert S_i\vert}\sum_{j\in S_{i}} a^{2}_{j}$ is close to $\frac{1}{\vert J\vert}\sum_{i\in J}a^{2}_{i}$, so that $\mathbb{E}\left[\sum_{i=1}^{k}\Vert u_{i}\Vert^{2}_{2}\right]$ can be controlled.
    \item Each $\vert S_{i}\vert$ is also not too small so that Lemma~\ref{lem: approximation of optimization objective} holds.
\end{enumerate}
The core idea of Algorithm~\ref{alg partition} is that we construct a sequence $i_1,i_2,...,i_m$ such that $\Vert\sum_{j=1}^{l}(a_{i_j},b_{i_j})\Vert_2$ is small for all $l$, thereby facilitating the formation of the subsets $S_i$.
\begin{algorithm}[h]
\caption{Partitioning Set}
\label{alg partition}
    \begin{algorithmic}
        \STATE \textbf{Input:} $(a_1,b_1),(a_2,b_2),...,(a_k,b_k)$, parameter $M$.
        \STATE \textbf{Target:}
        Partition $\{1,2,...,k\}$ into subsets $I_1,I_2,...,I_m$ such that 
        \STATE(1): For all $1\leq j\leq m$, $(\sum_{i\in I_{1},I_2,...,I_{j}}a_i)^{2}+(\sum_{i\in I_{1},I_2...,I_{j}}b_i)^{2}\leq \sum_{i=1}^{k}a_i^{2}+\sum_{i=1}^{k}b^{2}_{i}$.
        \STATE(2): For any $j\in[1,m]$, $M\leq \sum_{i\in I_{j}}(a_i^{2}+b_{i}^{2})\leq 2M+\max_{i}(a^{2}_{i}+b^{2}_{i})$.
        \STATE \textbf{Step 1:} Construct a stream $s$ of indices $\{1,2,...,k\}$ as follows:
        \STATE \textbf{Initial:} Let $a^{2}_{i_1}+b^{2}_{i_1}=\max_{i\in[1,k]}a^{2}_i+b^{2}_i$, $s=(i_1)$.
        \FOR{$j=1,2,...,k-1$}
            \STATE \textbf{If} $(\sum_{l=1}^{j}a_{i_l})+(\sum_{l=1}^{j}b_{i_l})^{2}\leq \sum_{i=1}^{k}(a^{2}_{i}+b^{2}_{i})$:
            \STATE Find $a^{2}_{i_{j+1}}+b^{2}_{i_{j+1}}=\max_{i\in[1,k],i\notin s}a^{2}_{i}+b^{2}_{i}$
            \STATE Update $s=(i_1,i_2,...,i_{j+1})$.
            \STATE \textbf{Else:}
            \STATE Let $u=(\sum_{l=1}^{j}a_{i_l},\sum_{l=1}^{j}b_{i_l})$, find $i_{j+1}\notin s$  such that $\Vert u+(a_{i_{j+1}},b_{i_{j+1}})\Vert^{2}_{2}\leq \Vert u\Vert^{2}_{2}-\Vert(a_{i_{j+1}},b_{i_{j+1}})\Vert^{2}_{2}$.
            \STATE Update $s=(i_1,...,i_{j+1})$.
        \ENDFOR      
        \STATE \textbf{Step 2:} Partition $s$ into $I_1,I_2,...,I_m$
        \STATE \textbf{Initial:} $j_0=0$
        \FOR{$l=1,2,...$; $j_{l-1}<k$}
        \STATE Find the smallest integer $j_l>j_{l-1}$ such that 
        \STATE $\sum_{t=j_{l-1}+1}^{j_l}(a_{i_t}^{2}+b^{2}_{i_t})\geq M$
        \STATE If such $j_l$ does not exist then simply let $j_l=k$.
        \ENDFOR
        \STATE Let $m=l$, $I_t=\{i_{j_{t-1}+1},...,i_{t}\}$ for $t=1,2,...,l$.
        \STATE \textbf{If} $\sum_{i\in I_{m}}(a^{2}_{i}+b^{2}_{i})<M$:
        \STATE $m\leftarrow m-1$, $I_{m-1}\leftarrow I_{m-1}\bigcup I_m$
    \end{algorithmic}
\end{algorithm}
\paragraph{Discussion on our optimization algorithm} We now summarize our algorithms and discuss the alternative implementation. In Algorithm~\ref{alg structure}, we introduced Algorithm~\ref{alg rearrange} and Algorithm~\ref{alg:remove} to reorganize the three initial subsets $I_1, I_2, I_3$ into a new partition $J_1,J_2,J_3$ of $\{1,2,...,n\}$ so that an upper bound of $\lambda_2$ is guaranteed. The subroutine Algorithm~\ref{alg partition} further partitions each of $J_1, J_2, J_3$ into small subsets while imposing strict control on the sum of the $a$-component within each subset. In fact, the only condition required for the quantity in \eqref{optimization objective: final form} to serve as a valid approximation of the objective in \eqref{optimization objective for alg} is that $\sum_{i}\frac{1}{\vert S_{i}\vert}(\sum_{j\in S_i}a_j)^{2}$ must be well bounded. An alternative implementation of Algorithm~\ref{alg partition} proceeds by randomly selecting $\Omega(n^{0.5+\epsilon})$ elements from each of $J_1, J_2, J_3$ to form each $S_i$. Although this approach lacks theoretical guarantees, it performs adequately in practice when the tail of $X - H^{Z}X$ is not excessively heavy. The details of this randomized variant are provided in Algorithm~\ref{alg partition random}.

\begin{algorithm}[h]
\caption{Partitioning Set (Alternative)}
\label{alg partition random}
    \begin{algorithmic}
        \STATE \textbf{Input:} $(a_1,b_1),(a_2,b_2),...,(a_k,b_k)$, $N=\left\lfloor\frac{k}{n^{\frac{1}{2}+\epsilon}}\right\rfloor$ is the target number of subsets.
        \STATE \textbf{Initial:} $S_{i}\leftarrow \emptyset (i=1,2,...,k)$.
        \FOR{t=1,2,...,k}
            \STATE Sample $i$ uniformly from $\{1,2,...,N\}$.
            \STATE $S_{i}\leftarrow S_{i}\cup \{(a_{t},b_{t})\}$
        \ENDFOR
    \end{algorithmic}
\end{algorithm}
\subsubsection{Comparison with random permutation}\label{subsubsection: comparison with random permutation}
In this section, we demonstrate that the value of $\lambda_2$ in \eqref{optimization objective for alg}, obtained under the permutation group constructed by our algorithm, is no worse than that achieved by uniformly sampling permutations from the full symmetric group—and, in most cases, is strictly better.

Let $\pi^{'}$ be sampled uniformly from all the permutations $[1,n]\to [1,n]$.
We first compute $\mathbb{E}[v^{T}_{\pi^{'}}v]$ and $\mathbb{E}[\Vert H^{Z}v_{\pi^{'}}\Vert^{2}_{2}]$ respectively.

\begin{align*}
\mathbb{E}[v^{T}_{\pi^{'}}v]&=\mathbb{E}\left[\sum_{i=1}^{n}v_{\pi^{'}(i)}v_{i}\right]\\
&=\mathbb{E}\left[\sum_{i=1}^{n}\left(\frac{1}{n}\sum_{j=1}^{n}v_{j}\right)v_{i}\right]\\
&=\frac{1}{n}\left(\sum_{i=1}^{n}v_{i}\right)^{2}\\
&=n\bar{v}^{2}
\end{align*}

\begin{align*}
\mathbb{E}[\Vert H^{Z}v_{\pi^{'}}\Vert^{2}_{2}]
&=\mathbb{E}\left[\left\Vert \bar{v}H^{Z}\vec{1}+\sum_{i=1}^{n}(v_i-\bar{v})w_{\pi^{'}(i)}\right\Vert^{2}_{2} \right]\\
&=\Vert v^{*}\Vert^{2}_{2}+\mathbb{E}\left[\left\Vert \sum_{i=1}^{n}(v_i-\bar{v})w_{\pi^{'}(i)}\right\Vert^{2}_{2} \right]\\
&=\Vert v^{*}\Vert^{2}_{2}+\sum_{i,j=1}^{n} a_{i}a_{j}\mathbb{E}[w^{T}_{\pi^{'}(i)}w_{\pi^{'}(j)}]\\
&=\Vert v^{*}\Vert^{2}_{2}+\frac{1}{n}\sum_{i=1}^{n}a^{2}_{i}\sum_{i=1}^{n}\Vert w_{i}\Vert^{2}_{2}+\frac{1}{n(n-1)}\sum_{1\leq i\neq j\leq n}a_{i}a_{j}\sum_{1\leq i\neq j\leq n}w^{T}_{i}w_{j}
\end{align*}

Similar to the derivation for $\mathbb{E}[\Vert H^{Z}v_{\pi}\Vert^{2}_{2}]$, we have:
$$
\left\vert\frac{1}{n(n-1)}(\sum_{1\leq i\neq j\leq n}a_ia_j)(\sum_{1\leq i\neq j\leq n}w^{T}_{i}w_{j})\right\vert\leq \frac{1}{n-1}\sum_{i=1}^{n}a^{2}_{i}\,.
$$
Therefore, we obtain:
\begin{align}
\mathbb{E}[\Vert H^{Z}v_{\pi^{'}}\Vert^{2}_{2}]\geq \Vert v^{*}\Vert^{2}_2+n\bar{b}\bar{c}-\frac{1}{n-1}\sum_{i=1}^{n}a^{2}_{i}
\end{align}
Combining this with Lemma~\ref{lem: approximation of optimization objective}, we finally obtain that:
\begin{align}
&\mathbb{E}_{\pi_k}[\frac{1}{2} v^{T}_{\pi_k}v+\Vert H^{Z}v_{\pi_k}\Vert^{2}_{2}]-\mathbb{E}_{\pi^{'}}[\frac{1}{2} v^{T}_{\pi^{'}}v+\Vert H^{Z}v_{\pi^{'}}\Vert^{2}_{2}]\\
&\leq \vert J_{2}\vert(\bar{b}_2-\bar{b})(\bar{c}_2-\bar{c})+\vert J_3\vert(\bar{b}_3-\bar{b})(\bar{c}_3-\bar{c})+o(1)\sum_{i=1}^{n}a^{2}_{i}\,,
\end{align}
where both $\vert J_{2}\vert(\bar{b}_2-\bar{b})(\bar{c}_2-\bar{c})$ and $\vert J_3\vert(\bar{b}_3-\bar{b})(\bar{c}_3-\bar{c})$ are smaller than 0, and their absolute values can be as large as $\Omega(\sum_{i=1}^{n}a^{2}_{i})$, with $\sum_{i=1}^{n}a^{2}_{i}=\Vert v\Vert^{2}_{2}-n\bar{v}^{2}$.

Finally, combining Theorem~\ref{thm: value of optimization objective}, which provides a lower bound for $\lambda_2(X,Z,\mathcal{P}_n,\frac{1}{2}\alpha)$, with Proposition~\ref{proposition: high probability bound on group decomposition}, which yields an upper bound for $\lambda_2(X,Z,\mathcal{P}_{K},\frac{1}{4}\alpha)$, we establish the following comparison between our algorithm and the random permutation method in the regime $\alpha \to 0_+$:
\begin{enumerate}
    \item When the conditions in Proposition~\ref{proposition: high probability bound on group decomposition} hold, and additionally $\max_{i}\{(v_{i}-\bar{v})^{2}\}\leq \frac{1}{poly\ln(n)}\sum_{i=1}^{n}(v_i-\bar{v})^{2}$ for some polynomial, our algorithm has a solution with $\lambda$ in~(\ref{optimization objective for alg}) provably not worse than uniformly sampling permutations.
    \item The provable gap in which our solution surpasses the random permutation is presented by \\$\left\vert \vert J_{2}\vert(\bar{b}_2-\bar{b})(\bar{c}_2-\bar{c})+\vert J_{3}\vert(\bar{b}_3-\bar{b})(\bar{c}_3-\bar{c})\right\vert-o(1)\Vert v\Vert^{2}_{2}$, which depends on the exact $X,Z$.
\end{enumerate}
Now we explain how different distribution of $Z$ as well as $n,p$ influences the gap $\vert \vert J_{2}\vert(\bar{b}_2-\bar{b})(\bar{c}_2-\bar{c})+\vert J_{3}\vert(\bar{b}_3-\bar{b})(\bar{c}_3-\bar{c})\vert$ of $\lambda_2$ in~\eqref{optimization objective for alg}.
\paragraph{The impact of $p$.}
Recall the optimization problem~\eqref{optimization: intuitive} where we finally provide a lower bound:
\begin{align*}
&X^{T}(I-H^{ZZ_{\pi_k}})X-X^{T}_{\pi_k}(I-H^{ZZ_{\pi_k}})X\\
&=X^TX-[X^{T}H^{ZZ_{\pi_{k}}}X+X^{T}_{\pi_{k}}(I-H^{ZZ_{\pi_k}})X]\\
&\leq \left[X^{T}X-\Vert H^{Z}X\Vert^{2}_{2}+\frac{1}{2}\Vert (I-H^{Z})X\Vert^{2}_{2}\right]-\left[\Vert H^{Z_{\pi_k}}(X-H^{Z}X)\Vert^{2}_{2}+\frac{1}{2}X^{T}_{\pi_{k}}(I-H^{Z_{\pi_k}})(I-H^{Z})X\right]
\\&
=\frac{3}{2}\Vert v\Vert^{2}_{2}-\left[\Vert H^{Z_{\pi_k}}v\Vert^{2}_{2}+\frac{1}{2} v^{T}_{\pi_k}v  \right]\,,
\end{align*}
where in the final expression, the first term corresponds to the projection residual $X - H^{Z}X$, which is independent of any permutation, while the second term constitutes our optimization objective. We first conclude that, as $p/n$ increases, the maximum achievable gap between our algorithm and uniformly random permutation also widens.

First, $\sum_{i=1}^{n}b_{i}=tr((H^{Z})^{2})=rank(Z)\leq p-1$, which implies $\bar{b}\leq \frac{p-1}{n}$. On the other hand, we have:
\begin{align*}
\vert J_{2}\vert \vert \bar{b}_{2}-\bar{b}\vert&=\left\vert \sum_{j\in J_{2}}b_{j}-\bar{b}\vert J_{2}\vert\right\vert\\
&=\max(\bar{b}\vert J_{2}\vert-\sum_{j\in J_{2}}b_{j},\sum_{j\in J_{2}}b_{j}-\bar{b}\vert J_{2}\vert)\\
&\leq \max\left\{ \bar{b}\vert J_{2}\vert-\max(0,n\bar{b}-\vert J_{2}\vert), \min(\vert J_{2}\vert,n\bar{b})-\bar{b}\vert J_{2}\vert \right\}\\
&\leq n\bar{b}(1-\bar{b})
\end{align*}
Combining with the fact that $\bar{c}_{2}\geq 0$, we obtain:
\begin{align*}
\vert J_{2}\vert \vert \bar{b}_{2}-\bar{b}\vert\vert \bar{c}_{2}-\bar{c}\vert\leq n\bar{b}(1-\bar{b})\bar{c}\leq \frac{p}{n}(1-\frac{p}{n})\sum_{i=1}^{n}a^{2}_{i}\,.
\end{align*}
For $J_{3}$, we use the fact that $\vert \bar{b}_{3}-\bar{b}\vert\leq \bar{b}$ and obtain:
\begin{align*}
\vert J_{3}\vert\vert\bar{b}_{3}-\bar{b}\vert\vert \bar{c}_{3}-\bar{c}\vert\leq \bar{b}\sum_{i=1}^{n}c_{i}\leq \frac{p}{n}\sum_{i=1}^{n}a^{2}_{i}.
\end{align*}
Combining with these two upper bounds, we have:
\begin{align*}
\vert \vert J_{2}\vert (\bar{b}_{2}-\bar{b})(\bar{c}_2-\bar{c})+\vert J_{3}\vert (\bar{b}_{3}-\bar{b})(\bar{c}_3-\bar{c})  \vert\leq \frac{p}{n}(2-\frac{p}{n})\sum_{i=1}^{n}a^{2}_{i}.
\end{align*}
Therefore, we conclude that as $p/n$ increases (under the standing assumption that $n \geq 2p$), the potential gap between our algorithm and the uniformly random permutation widens. Intuitively, this suggests that the value of $\lambda$ in \eqref{optimization: intuitive} achieved by our algorithm is expected to yield a greater improvement over the permutation group consisting of all permutations.

\paragraph{The impact of $Z$.} 
In this section, we demonstrate how a heavy-tailed distribution of $Z$ can enlarge the performance gap. For the term $\vert J_{2}\vert\vert \bar{b}_{2}-\bar{b}\vert\vert \bar{c}_{2}-\bar{c}\vert$, we have the following bound:
\begin{align*}
\vert J_{2}\vert  \vert \bar{b}_{2}-\bar{b}\vert\cdot \vert \bar{c}_{2}-\bar{c}\vert \leq \vert \bar{b}_2-\bar{b}\vert\cdot \frac{\vert J_{2}\vert}{n} \sum_{i=1}^{n}a^{2}_{i}.
\end{align*}
On the other hand, we also have:
\begin{align*}
\vert J_{3}\vert \vert b_{3}-\bar{b}\vert \vert c_{3}-\bar{c}\vert \leq \vert b_{3}-\bar{b}\vert\cdot \sum_{i=1}^{n} a^{2}_{i}\,,
\end{align*}
with $\vert b_{3}-\bar{b}\vert\leq \frac{p-1}{n}$.
This indicates that when the values $b_i$ are concentrated around $\bar{b}$, the performance of our algorithm is substantially limited. Conversely, when the $b_i$ are more likely to deviate from $\bar{b}$, our algorithm is expected to achieve better performance.

We now return to explain the experimental results shown in Figures~\ref{fig:gaussian X and gaussian noise} and \ref{fig:t2 X and gaussian noise}. To this end, we compute the probability density of $\Vert H^{Z}e_{i}\Vert^{2}_{2}$ for the values of $n$ and $p$ tested previously, where $e_i \in \mathbb{R}^n$ denotes the unit vector with a $1$ in the $i$-th coordinate.
Figure~\ref{fig:distribution of HZ} shows that a heavier tail in the distribution of $Z$ leads to greater variance in $\Vert H^{Z} e_{i}\Vert^{2}_{2}$. This, in turn, increases both $\vert \bar{b}_{2}-\bar{b}\vert$ and $\vert \bar{b}_{3}-\bar{b}\vert$. In particular, a heavy tail prevents the projected $\ell_2$ norm of each standard basis vector from concentrating around its expectation, thereby enabling a larger performance gap.

\begin{figure}[ht]
    \centering
    \begin{subfigure}[b]{0.32\textwidth}
        \includegraphics[width=\textwidth]{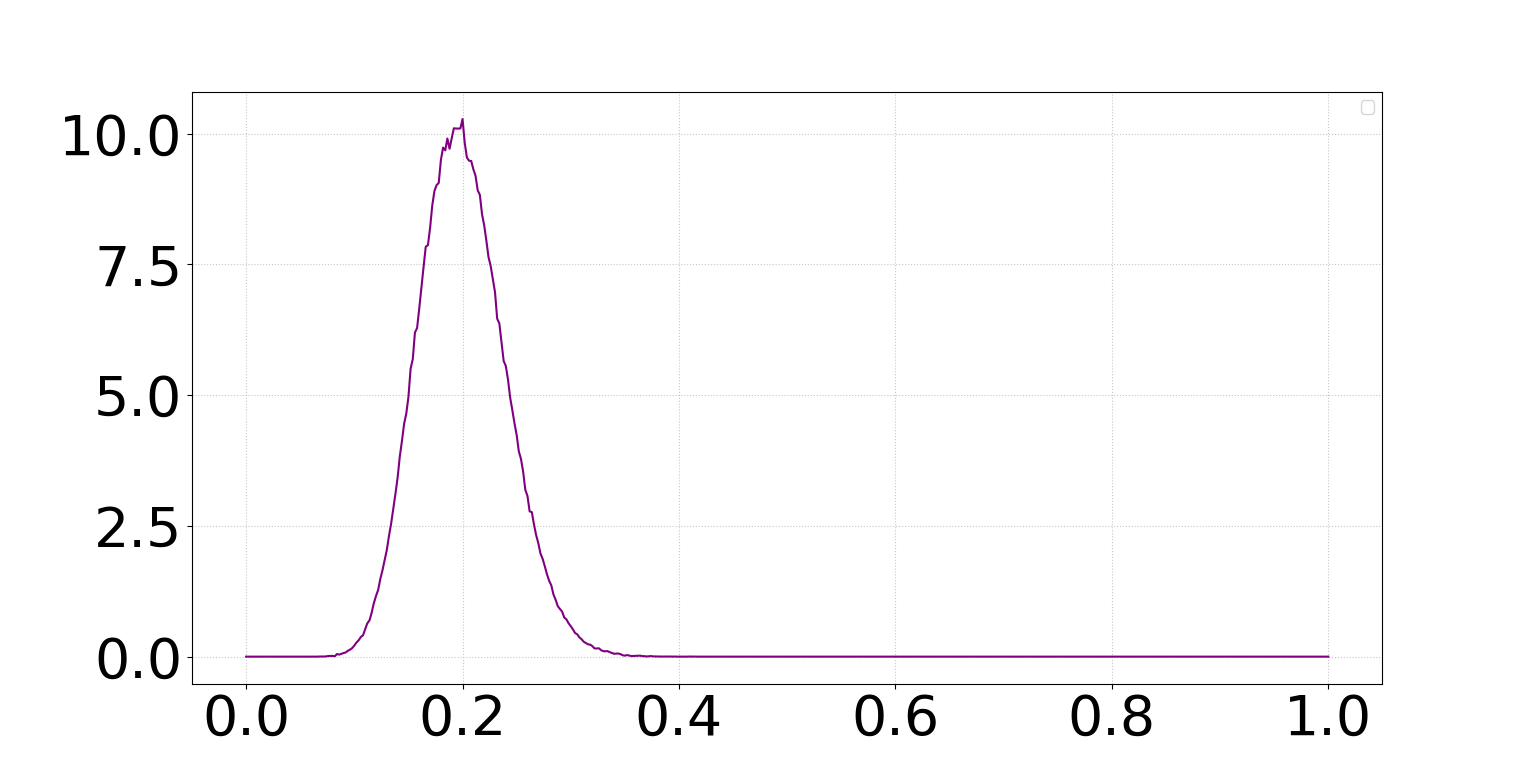}
        \caption{n=200,p=40, Gaussian Z}
        \label{fig:HZ_200_40_g_p40}
    \end{subfigure}
    \hfill
    \begin{subfigure}[b]{0.32\textwidth}
        \includegraphics[width=\textwidth]{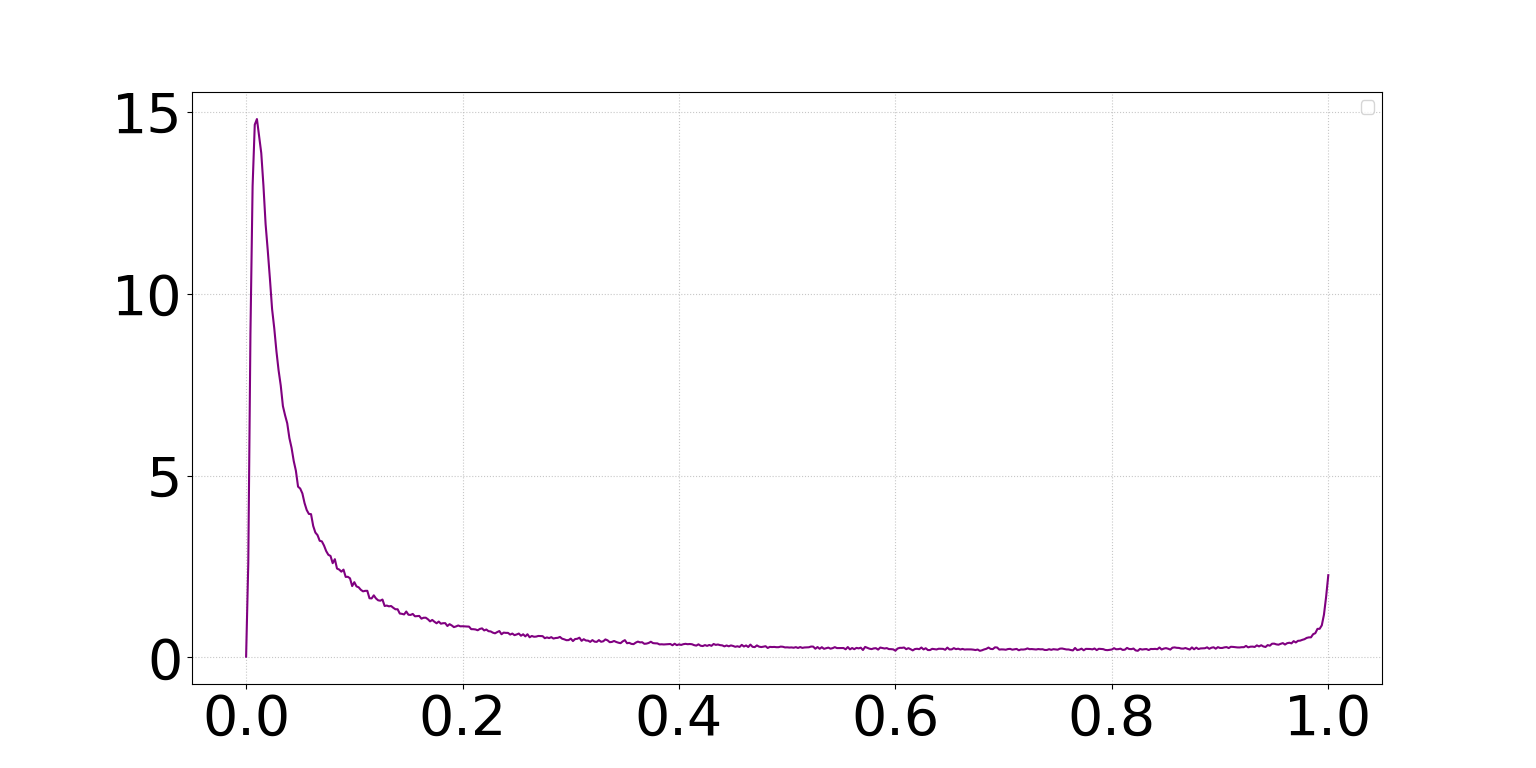}
        \caption{n=200,p=40, $t_1$ Z}
        \label{fig:HZ_200_40_t1}
    \end{subfigure}
    \hfill
    \begin{subfigure}[b]{0.32\textwidth}
        \includegraphics[width=\textwidth]{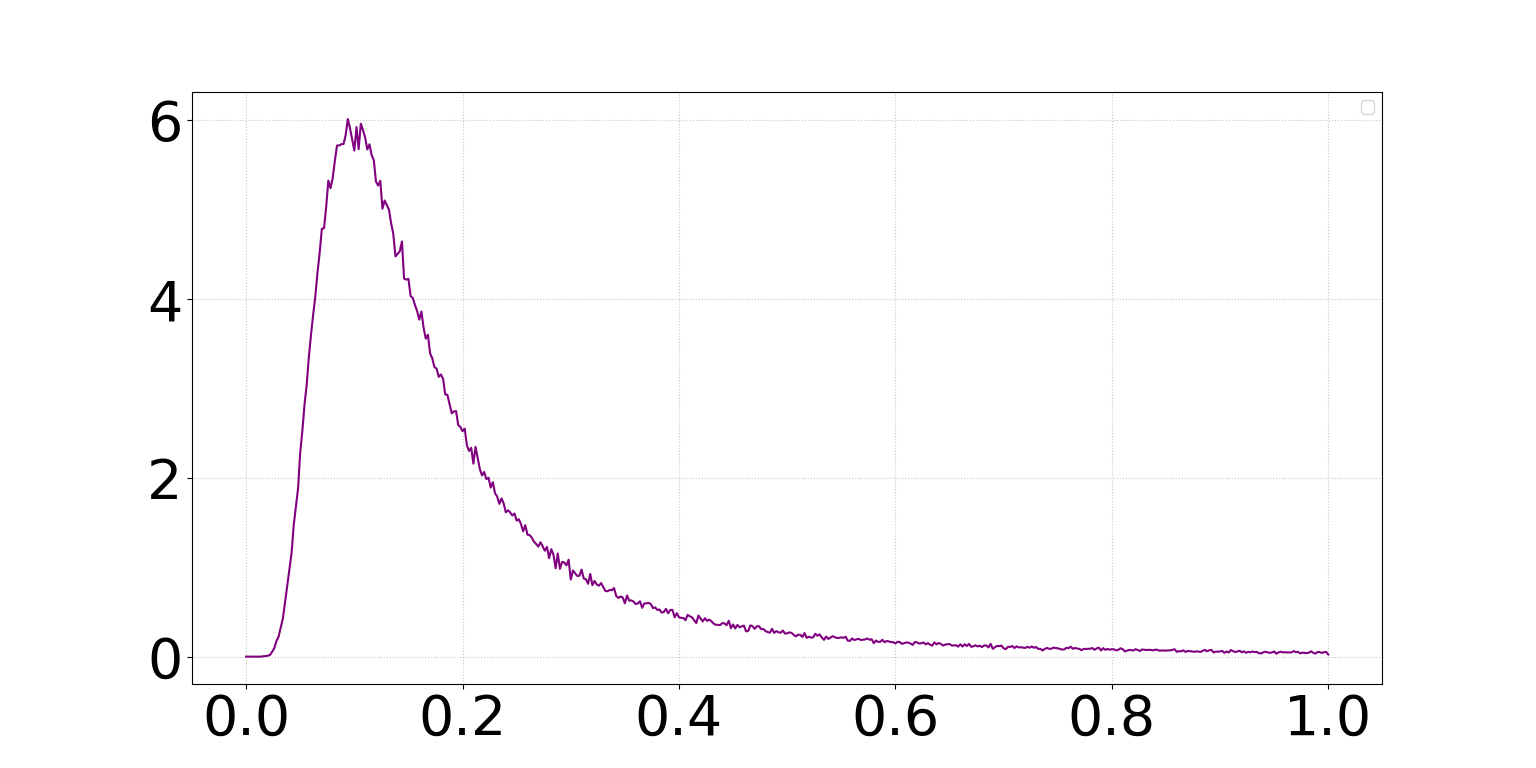}
        \caption{n=200,p=40, $t_2$ Z}
        \label{fig:HZ_200_40_t2}
    \end{subfigure}
    
    \vfill
     \begin{subfigure}[b]{0.32\textwidth}
        \includegraphics[width=\textwidth]{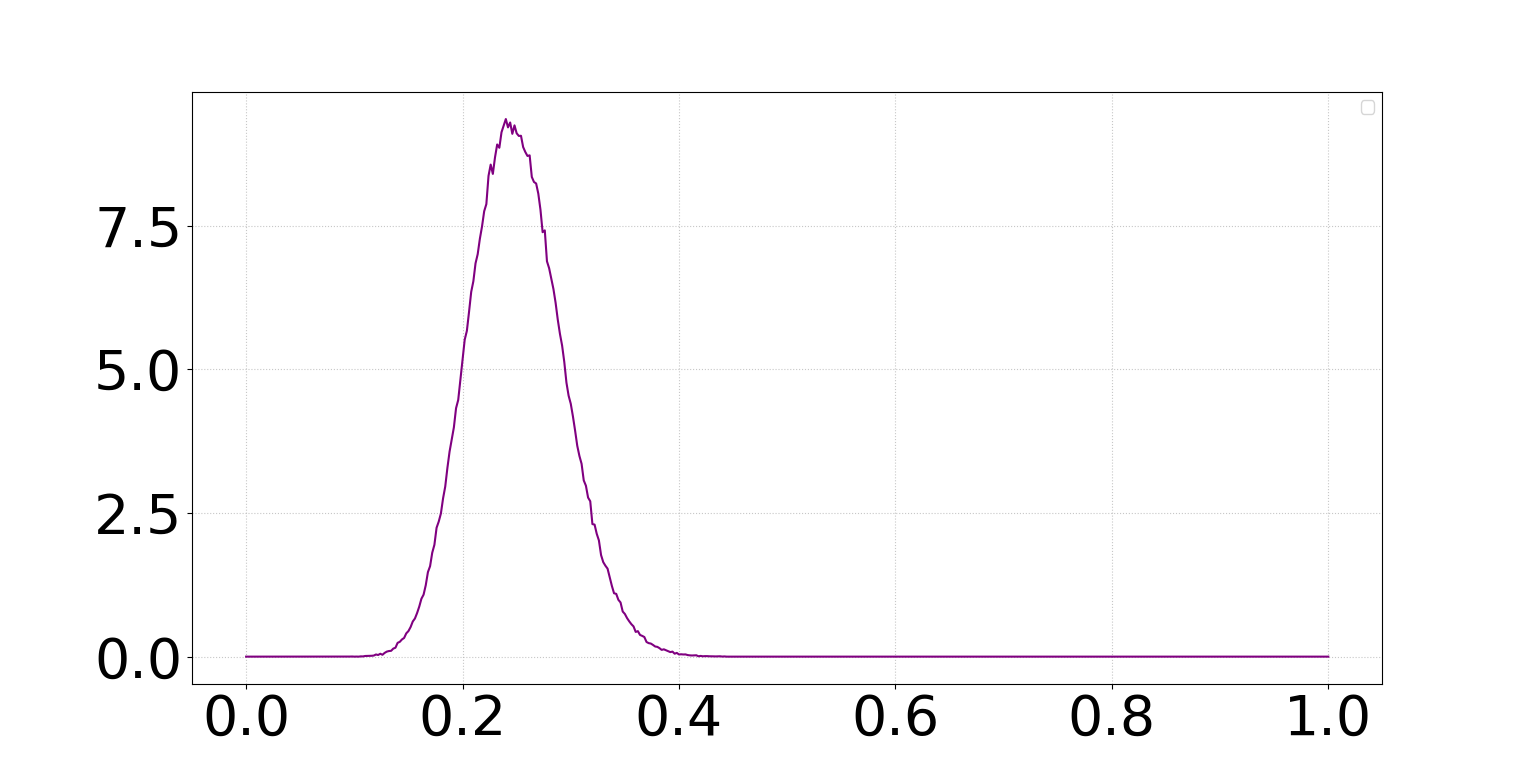}
        \caption{n=200,p=50, Gaussian Z}
        \label{fig:HZ_200_40_g_p50}
    \end{subfigure}
    \hfill
    \begin{subfigure}[b]{0.32\textwidth}
        \includegraphics[width=\textwidth]{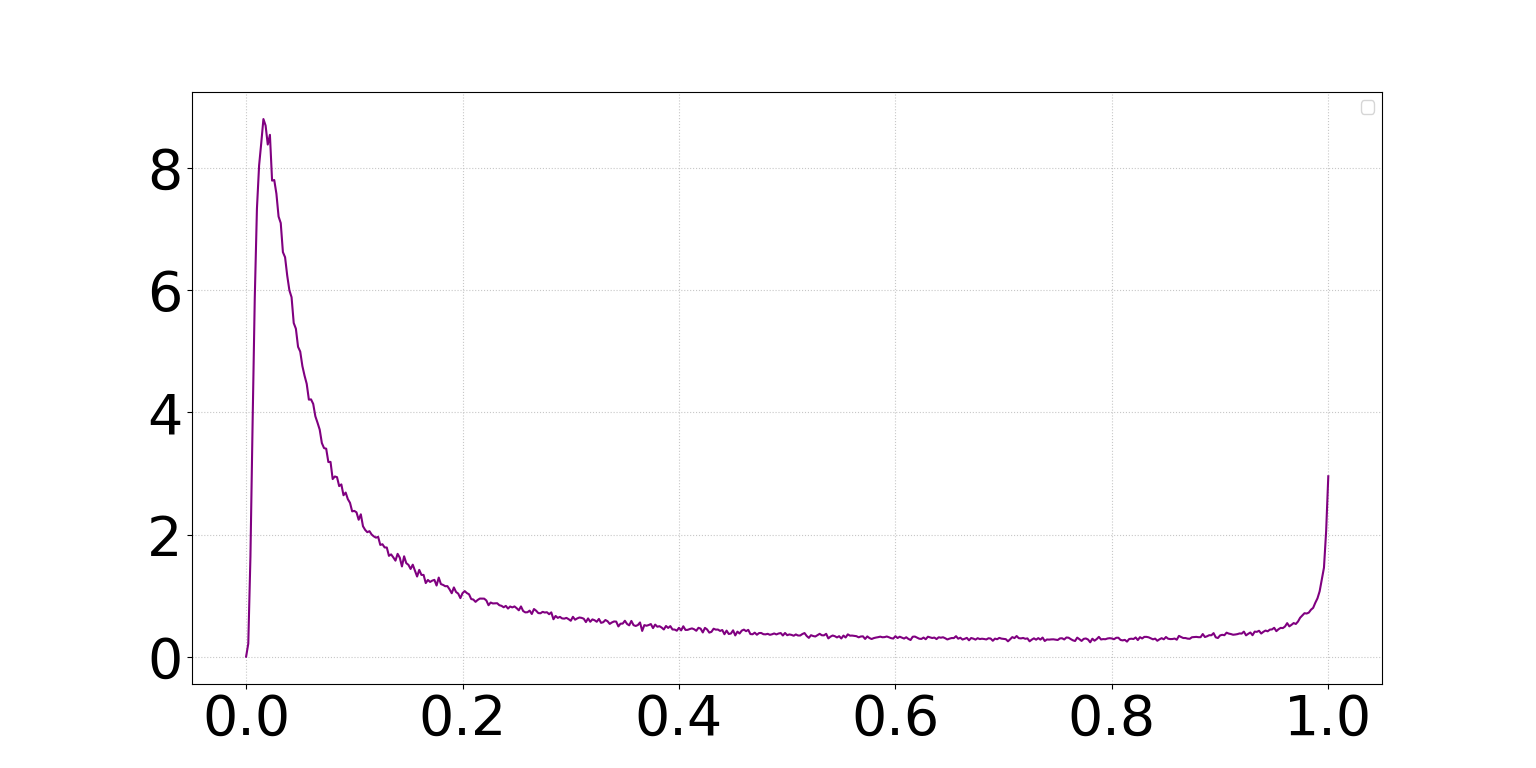}
        \caption{n=200,p=50, $t_1$ Z}
        \label{fig:HZ_200_50_t1}
    \end{subfigure}
    \hfill
    \begin{subfigure}[b]{0.32\textwidth}
        \includegraphics[width=\textwidth]{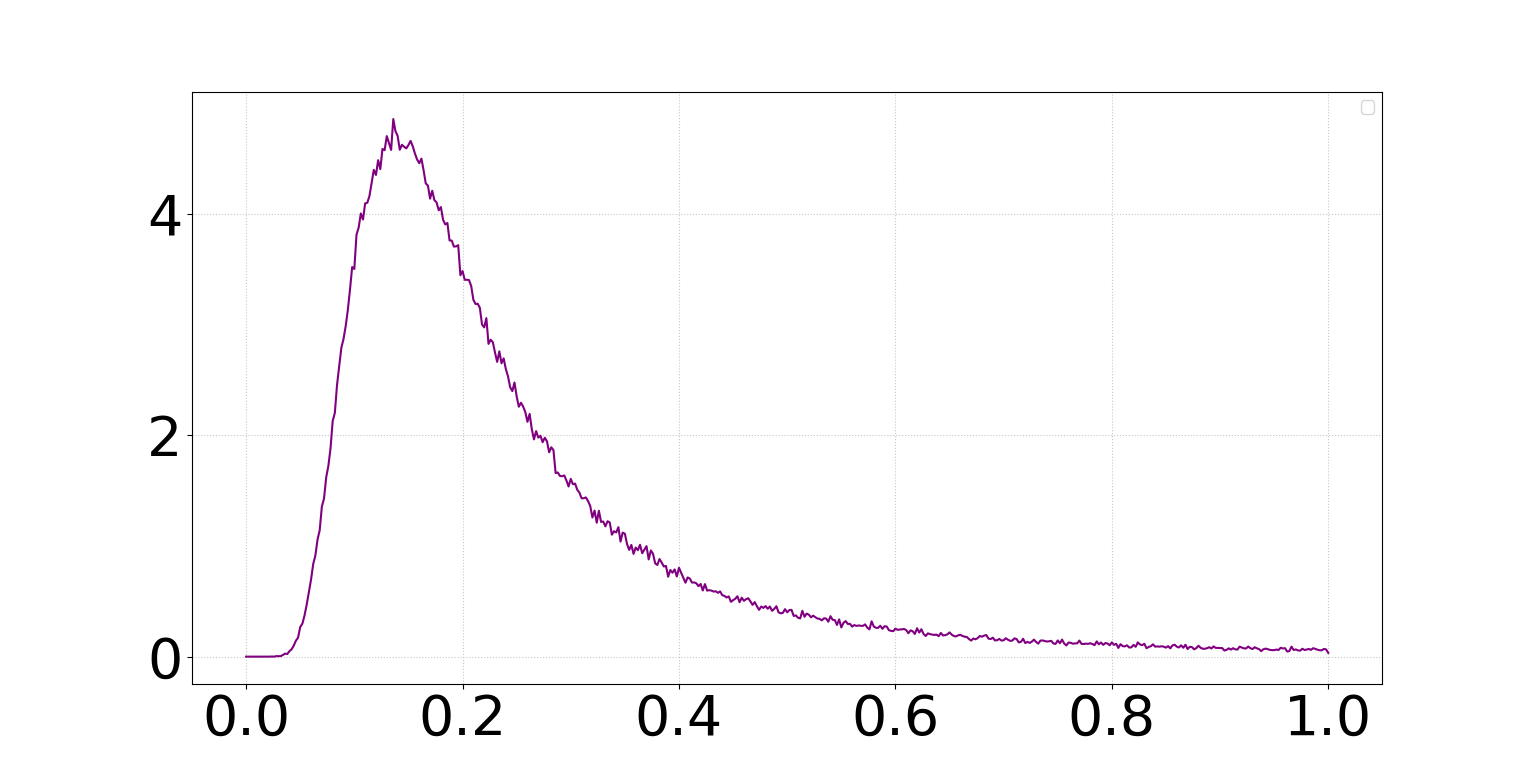}
        \caption{n=200,p=50, $t_2$ Z}
        \label{fig:HZ_200_50_t2}
    \end{subfigure}
    
    \vfill
    
    \begin{subfigure}[b]{0.32\textwidth}
        \includegraphics[width=\textwidth]{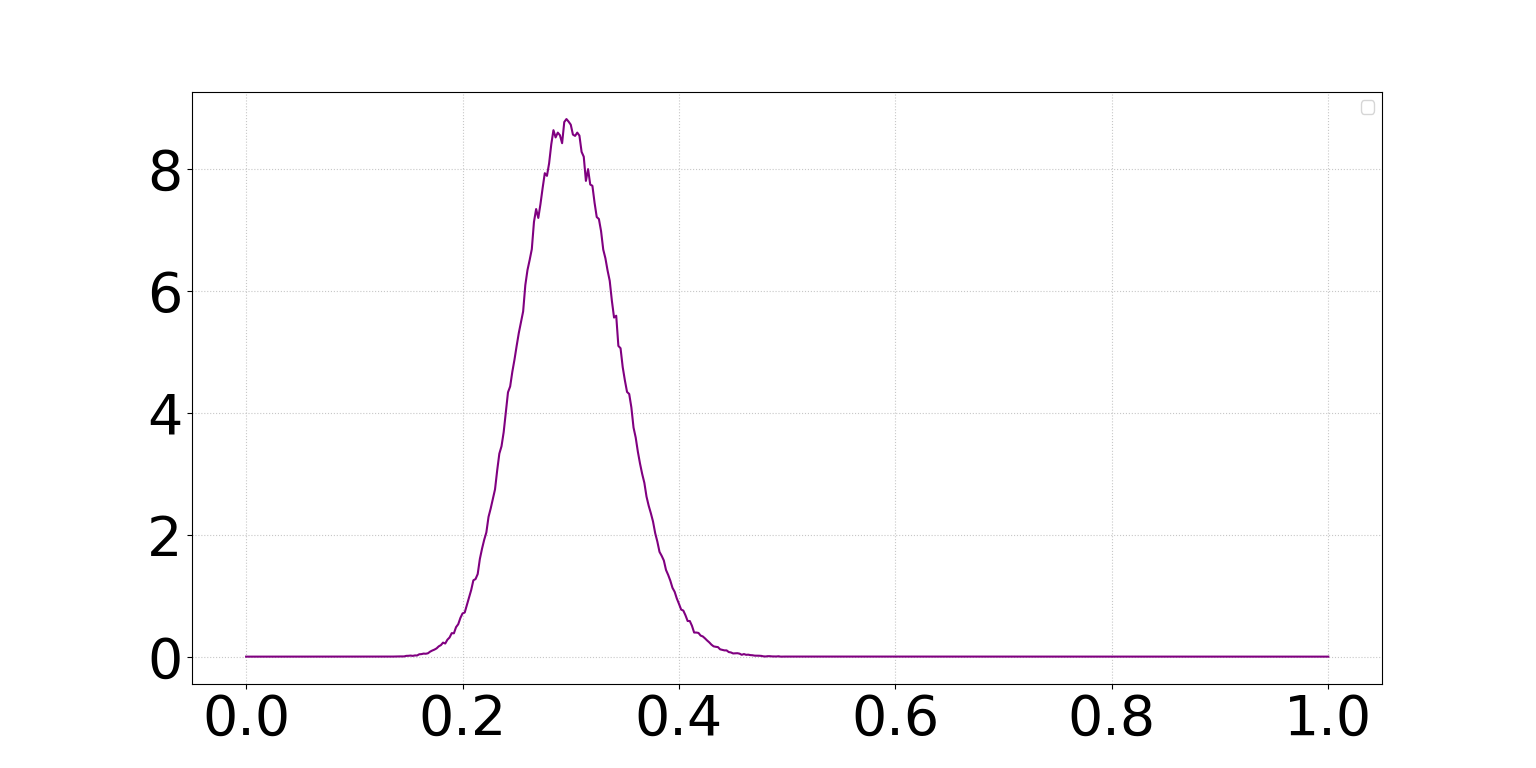}
        \caption{n=200,p=60, Gaussian Z}
        \label{fig:HZ_200_60_g}
    \end{subfigure}
    \hfill
    \begin{subfigure}[b]{0.32\textwidth}
        \includegraphics[width=\textwidth]{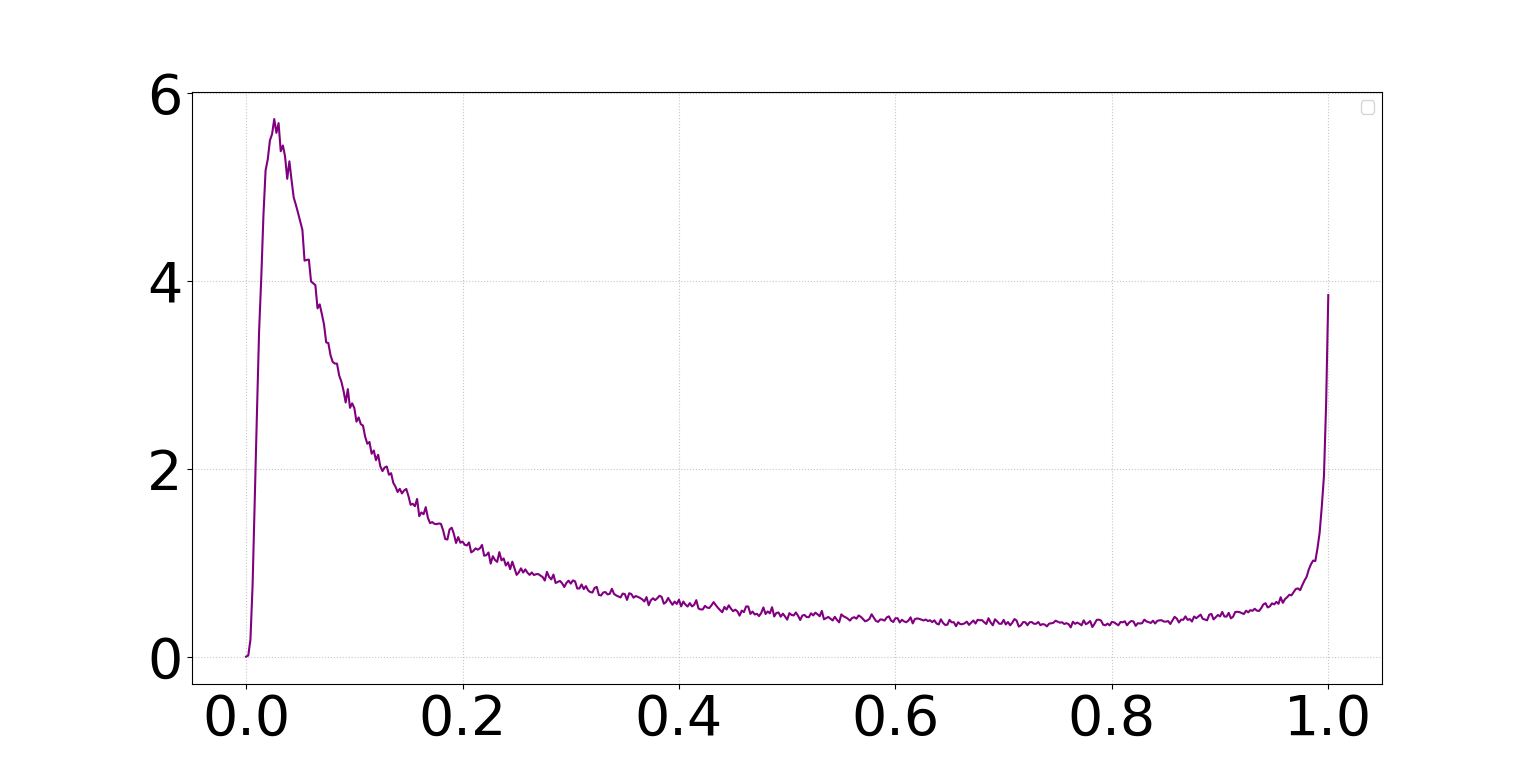}
        \caption{n=200,p=60, $t_1$ Z}
        \label{fig:HZ_200_t1}
    \end{subfigure}
    \hfill
    \begin{subfigure}[b]{0.32\textwidth}
        \includegraphics[width=\textwidth]{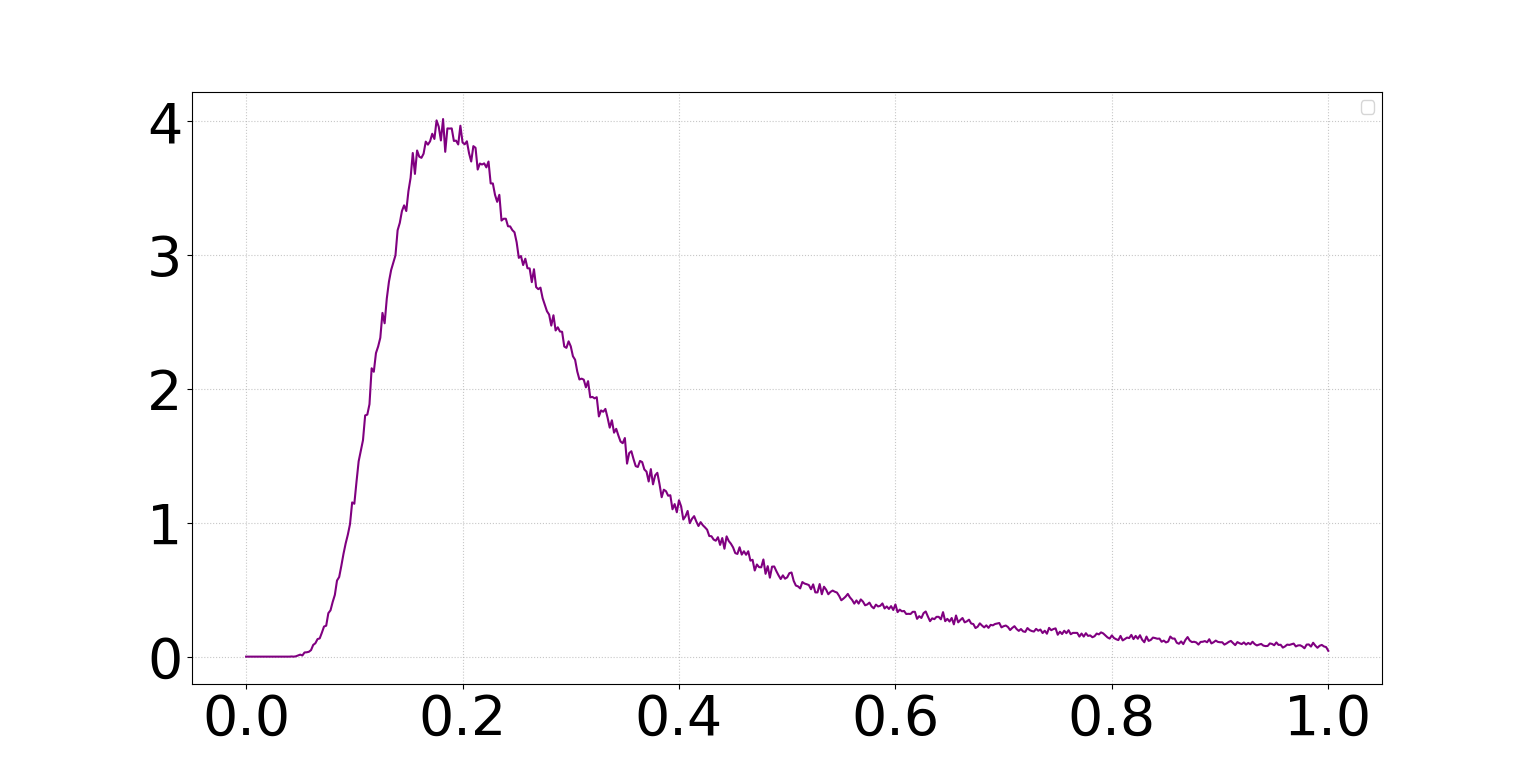}
        \caption{n=200,p=60, $t_2$ Z}
        \label{fig:HZ_200_60_t2}
    \end{subfigure}
    
    \vfill
    
    \begin{subfigure}[b]{0.32\textwidth}
        \includegraphics[width=\textwidth]{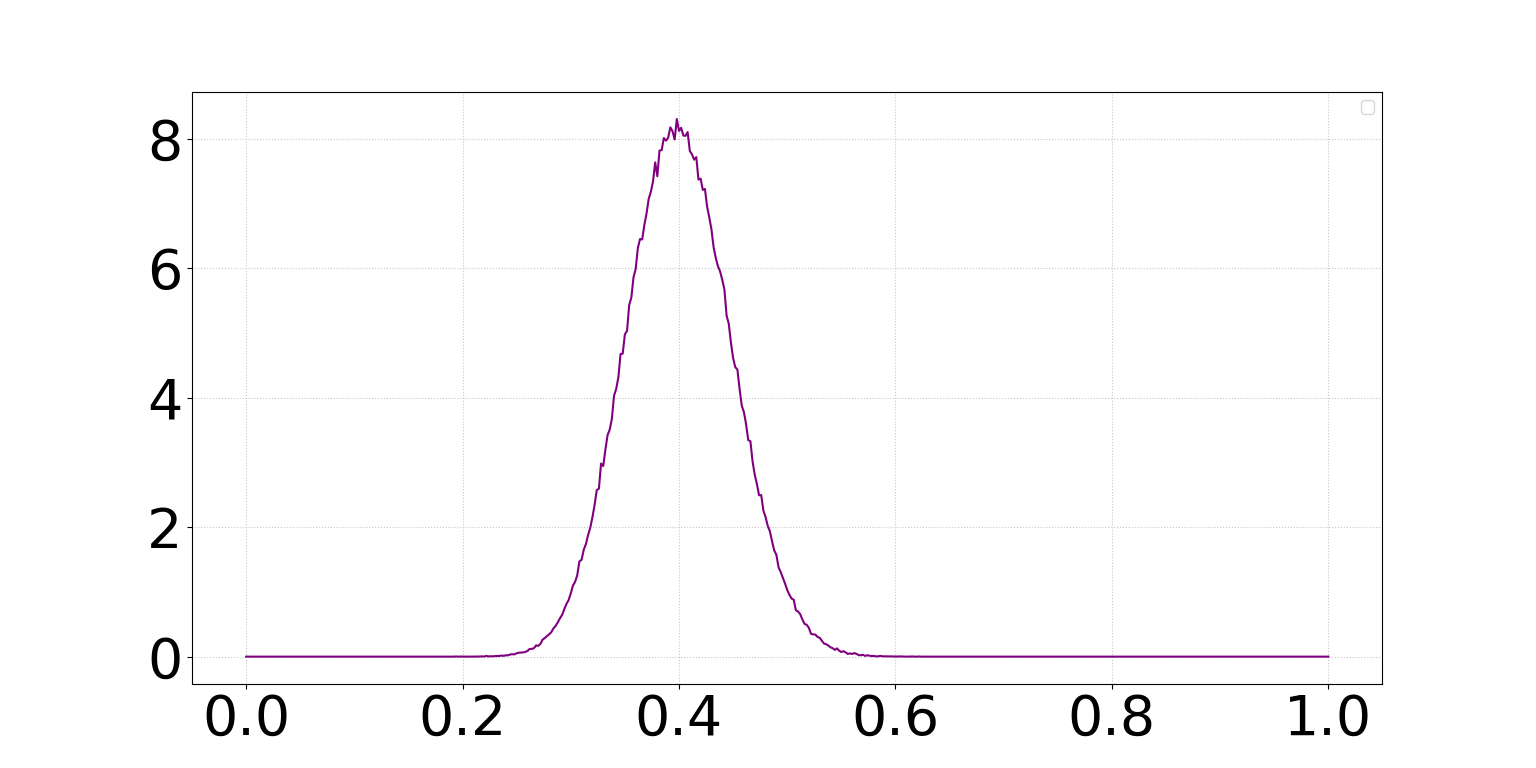}
        \caption{n=200,p=80, Gaussian Z}
        \label{fig:HZ_200_80_g}
    \end{subfigure}
    \hfill
    \begin{subfigure}[b]{0.32\textwidth}
        \includegraphics[width=\textwidth]{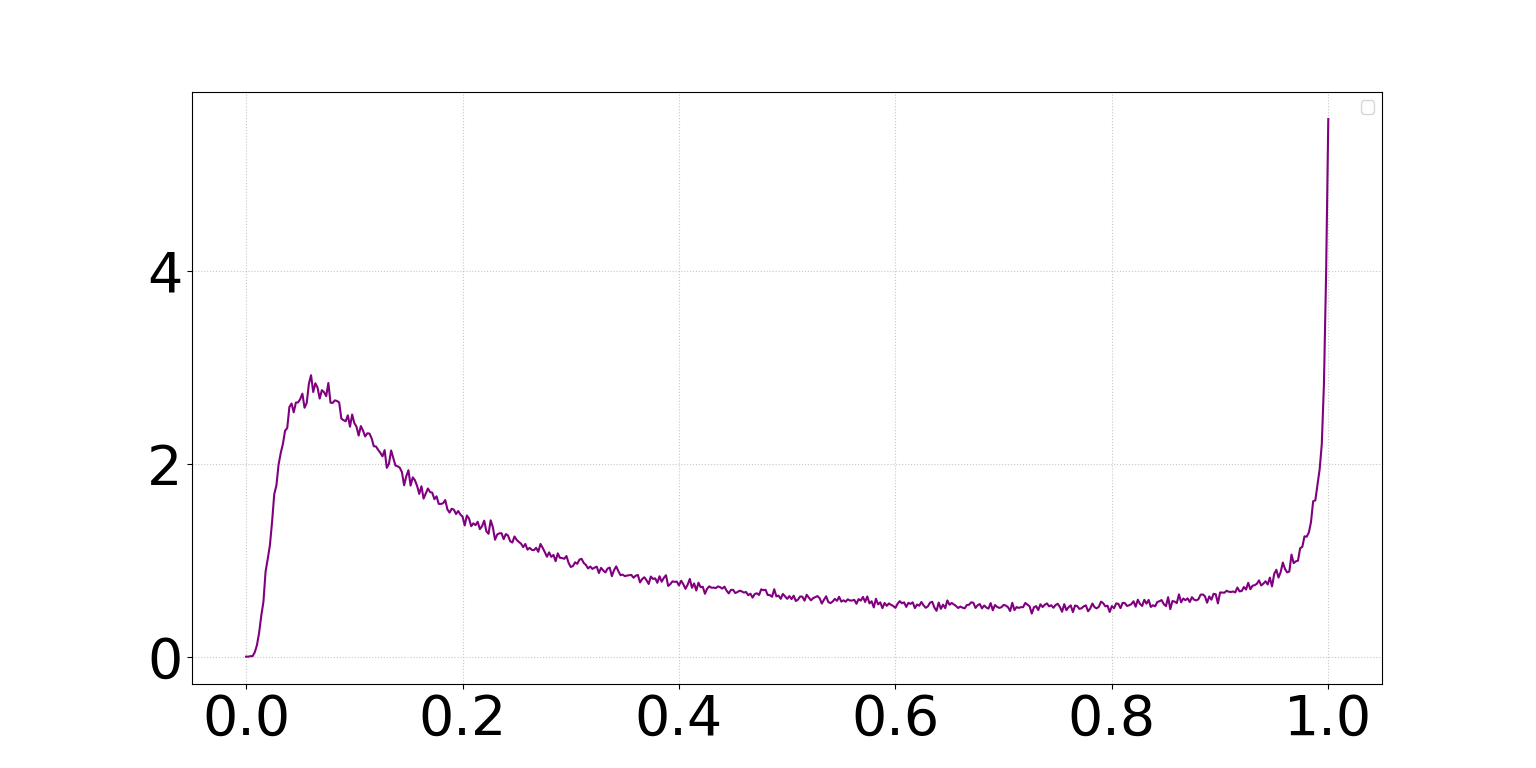}
        \caption{n=200,p=80, $t_1$ Z}
        \label{fig:HZ_200_80_t1}
    \end{subfigure}
    \hfill
    \begin{subfigure}[b]{0.32\textwidth}
        \includegraphics[width=\textwidth]{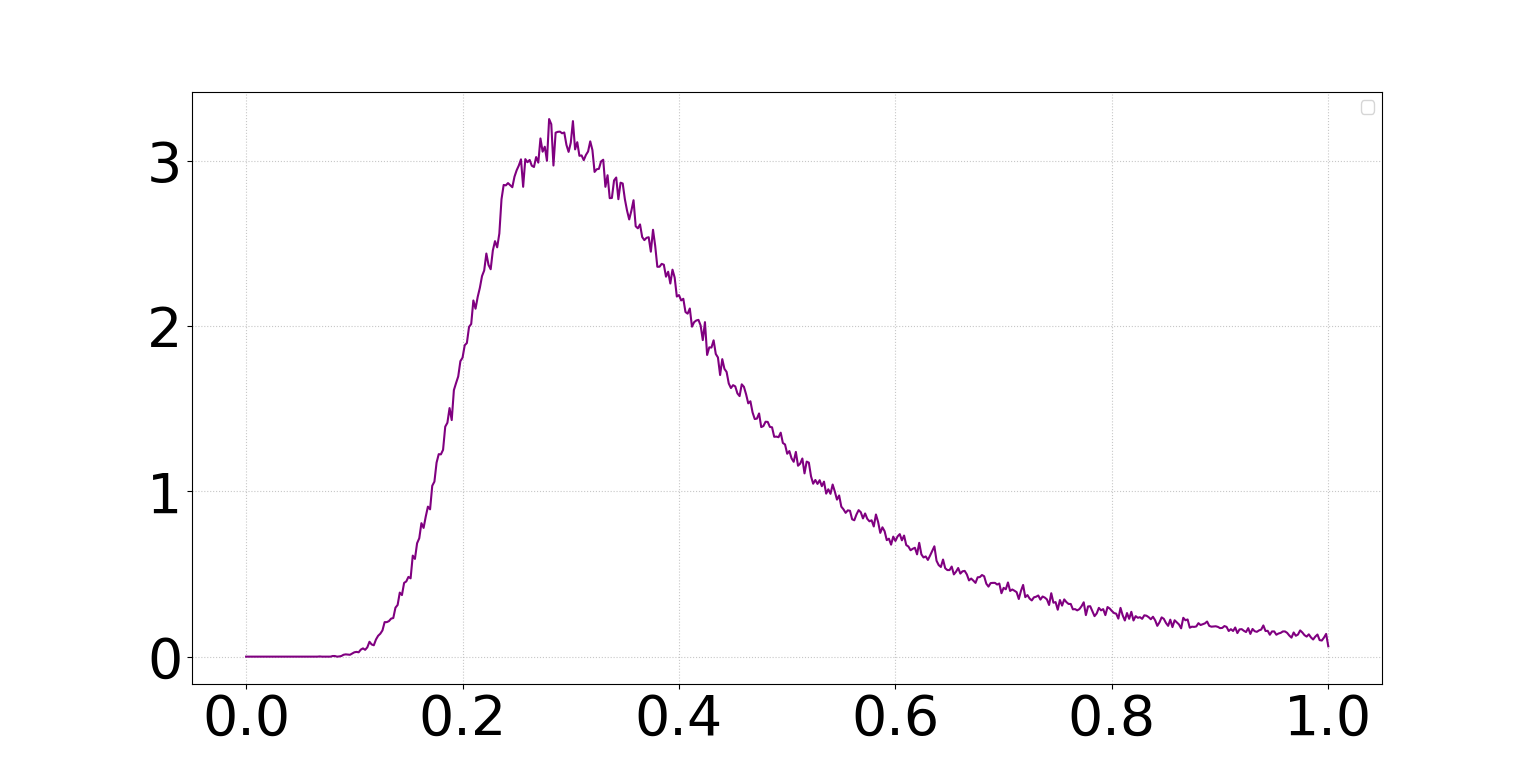}
        \caption{n=200,p=80, $t_2$ Z}
        \label{fig:HZ200_80_t2}
    \end{subfigure}
    
    \caption{Probability density of $\Vert H^{Z}e_{i}\Vert^{2}_{2}$}
    \label{fig:distribution of HZ}
\end{figure}

In addition to the theoretical guarantees for the optimization objective, we provide a numerical example to offer further insight into why our algorithm outperforms the uniform random selection of permutations.

\paragraph{A naive example.} Consider $Z=I_{p}$, $X_{i}\sim\mathcal{N}(0,I_{n})$, $\epsilon\sim\mathcal{N}(0,I_n)$ (for simplicity). 

\textbf{Case 1:} For $\pi$ uniformly sampled from the set of all permutations, the vector $(I - H^{ZZ_{\pi}})X$ retains the coordinates of $X$ indexed by $I=\{i\vert i>p, \pi(j)\neq i,\,\forall 1\leq j\leq p\}$. It can be shown that $\mathbb{E}[\vert I\vert]=(n-p)(1-\frac{p}{n})$. Now consider the inequality $X^{T}(I-H^{ZZ\pi})Y\leq X^{T}_{\pi}(I-H^{ZZ_{\pi}})Y$. If $b>0$, we have:
\begin{align*}
\mathbbm{1}\left\{X^{T}(I-H^{ZZ\pi})Y\leq X^{T}_{\pi}(I-H^{ZZ_{\pi}})Y \right\}= \mathbbm{1}\left\{b(X-X_{\pi})^{T}(I-H^{ZZ_{\pi}})X\leq(X-X_{\pi})^{T}(I-H^{ZZ_{\pi}})\epsilon\right\}\,.
\end{align*}
By taking expectation, we obtain:
\begin{align*}
&\mathbb{E}\left[1\left\{X^{T}(I-H^{ZZ\pi})Y\leq X^{T}_{\pi}(I-H^{ZZ_{\pi}})Y \right\}\right]\\
&=\mathbb{E}_{X}\mathbb{E}_{\epsilon}\left[1\left\{b(X-X_{\pi})^{T}(I-H^{ZZ_{\pi}})X\leq(X-X_{\pi})^{T}(I-H^{ZZ_{\pi}})\epsilon\right\}\right]\\
&=\mathbb{E}_{X}\left[\Phi(\frac{\Vert (I-H^{ZZ_{\pi}})X\Vert_{2}}{b(X-X_{\pi})^{T}X}\right]\,,
\end{align*}
where $\Phi(x)=\frac{1}{\sqrt{2\pi}}\int_{x}^{\infty}e^{-\frac{1}{2}t^{2}}dt$.
As $n,p\to +\infty$, $n>2p$, for any $c>0$, $$\mathbb{P}\left[\frac{\Vert (I-H^{ZZ_{\pi}})X\Vert^{2}_{2}}{(n-p)(1-p/n)}\in[1-c,1+c]\right]\to 1\,,$$
$$
\mathbb{P}\left[\frac{(X-X_{\pi})^{T}X}{(n-p)(1-p/n)}\in[1-c,1+c]\right]\to 1\,.
$$
\textbf{Case 2:} For $\pi$ chosen in our algorithm, we notice that $J_1=\emptyset$ because for any $i$ we have $(b_i-\bar{b})(c_i-\bar{c})<0$, $J_{2}$ contains the first $p$ components and $J_3$ contains the last $n-p$ components. Then we have:
$$
\mathbb{E}\left[1\left\{X^{T}(I-H^{ZZ\pi})Y\leq X^{T}_{\pi}(I-H^{ZZ_{\pi}})Y \right\}\right]=\mathbb{E}_{X}\left[\Phi(\frac{\Vert (I-H^{ZZ_{\pi}})X\Vert_{2}}{b(X-X_{\pi})^{T}X}\right]\,.
$$
But differently, as $n,p\to +\infty$, $n>2p$, for any $c>0$, $$\mathbb{P}\left[\frac{\Vert (I-H^{ZZ_{\pi}})X\Vert^{2}_{2}}{n-p}\in[1-c,1+c]\right]\to 1\,,$$
$$
\mathbb{P}\left[\frac{(X-X_{\pi})^{T}X}{n-p}\in[1-c,1+c]\right]\to 1\,.
$$
This is because our algorithm divides $\{1,2,...,p\}$ and $\{p+1,...,n\}$ into distinct subsets, ensuring that for all $i \in [1,p]$, we have $\pi(i) \in [1,p]$. This structural difference implies that, to achieve the same Type II error, Case 1 requires a larger signal level, namely $b = \sqrt{n/(n-p)}\, b'$, compared with $b'$ under our algorithm. The case for $b < 0$ is symmetric and is addressed analogously.
In this example, our algorithm provably attains better performance in detecting a nonzero $b$. This aligns with the behavior of our optimization framework, where $\mathbb{E}[\frac{1}{2}\vert v^{T}_{\pi}v\vert +\Vert H^{Z}v_{\pi}\Vert^{2}_{2}]$ is strictly smaller under our algorithm, with a provable gap of $\frac{p}{n}(1-\frac{p}{n})X^{T}X$ compared to uniformly random permutations.

\subsection{Proof of main results}
\label{sec::Type ii error proof of main results}
In this part, we derive the proof of our main results about Type II error control, including the main theorems, lemmas, and detailed computations.

\subsubsection{Proof of Lemma~\ref{lem: upper and lower of optimization objective formal}}\label{subsubsection: proof of thm: upper lower of optimization objective}
For \eqref{lower bound term 1}, consider two orthonormal bases $\{u_1,...,u_{m_1}\}$, $\{u_1,...,u_{m_1},v_{1},...,v_{m_2}\}$ where $\{u_1,...,u_{m_1}\}$ is an orthonormal basis of $H^Z$, $\{u_1,...,u_{m_1},v_1,...,v_{m_2}\}$ is an orthonormal basis of $H^{ZZ_{\pi}}$. Decompose $X$ as $X=u+v+w$ with $u\in \operatorname{span}(u_1,...,u_{m_1})$, $v\in \operatorname{span}(v_1,...,v_{m_2})$, and $H^{ZZ_{\pi_k}}w=0$.
Then we have $H^{Z} X = u$ and $H^{ZZ_{\pi_k}} X = u + v$, so that $\Vert H^{ZZ_{\pi_k}}X\Vert^{2}_{2}=\Vert u+v\Vert^{2}_{2}=\Vert u\Vert^{2}_{2}+\Vert v\Vert^{2}_{2}$. 

On the other hand,
$\Vert H^{Z_{\pi}}(X-H^{Z}X)\Vert^{2}_{2}=\Vert H^{Z_{\pi}}(v+w)\Vert^{2}_{2}=\Vert H^{Z_{\pi}}v\Vert^{2}_{2}\leq \Vert v\Vert^{2}_{2}\,,
$
which implies \eqref{lower bound term 1}.

We now prove the converse direction of Lemma~\ref{lem: upper and lower of optimization objective formal}. To this end, we first establish Lemma~\ref{lem: upper bound of HZZpiX}, which demonstrates that it suffices to bound $\sup_{u\in span(Z),v\in span(Z_{\pi})}\frac{u^{T}v}{\Vert u\Vert_{2}\Vert v\Vert_{2}}$.

\begin{lemma}\label{lem: upper bound of HZZpiX}
Suppose that we have $\sup_{u\in span(Z),v\in span(Z_{\pi})}\frac{u^{T}v}{\Vert u\Vert_{2}\Vert v\Vert_{2}}\leq t$, then we have:
\begin{align}\label{eq: upper bound of HZZpiX}
\Vert H^{ZZ_{\pi}}X\Vert^{2}_{2}\leq \Vert H^{Z}X\Vert^{2}_{2}+\frac{1}{1-t^{2}}\Vert H^{Z_{\pi}}(X-H^{Z}X)\Vert^{2}_{2}\,.
\end{align}
\end{lemma}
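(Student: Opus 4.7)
The plan is to reduce this to a subspace-geometry statement, then exploit the principal-angle hypothesis via a linear-functional duality. Writing $V = \mathrm{span}(Z)$, $W = \mathrm{span}(Z_{\pi})$, $Q = H^{Z}$, $P = H^{Z_{\pi}}$ and $X_{2} = (I-Q)X$, I first observe the orthogonal direct-sum decomposition $V + W = V \oplus \tilde W$ with $\tilde W := (I-Q)W \subseteq V^{\perp}$; the inclusion ``$\supseteq$'' is immediate, ``$\subseteq$'' follows from $w = Qw + (I-Q)w$ for $w\in W$, and orthogonality is by construction. This yields the Pythagorean identity
\[
\Vert H^{V+W}X\Vert_{2}^{2} \;=\; \Vert QX\Vert_{2}^{2} + \Vert H^{\tilde W}X_{2}\Vert_{2}^{2},
\]
using that $QX\in V \perp \tilde W$ (so the $\tilde W$-projection of $X$ equals that of $X_2$). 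Hence \eqref{eq: upper bound of HZZpiX} reduces to the single-subspace bound $\Vert H^{\tilde W}X_{2}\Vert_{2}^{2} \le (1-t^{2})^{-1}\Vert PX_{2}\Vert_{2}^{2}$.

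Next I convert the principal-angle hypothesis into a norm bound on the map $\phi: W \to \tilde W$, $\phi(b) = (I-Q)b$. For $b \in W$ we have $Pb = b$ and thus $Qb = QPb$; the hypothesis $\sup_{u\in V, v\in W} u^{T}v/(\Vert u\Vert_{2}\Vert v\Vert_{2})\le t$ is precisely $\Vert QP\Vert_{\mathrm{op}} \le t$, giving $\Vert Qb\Vert_{2}\le t\Vert b\Vert_{2}$ and therefore
\[
\Vert \phi(b)\Vert_{2}^{2} \;=\; \Vert b\Vert_{2}^{2} - \Vert Qb\Vert_{2}^{2} \;\ge\; (1-t^{2})\Vert b\Vert_{2}^{2}.
\]
Consequently (assuming $t<1$, as required for the stated bound to be meaningful) $\phi$ is a bijection whose inverse $\phi^{-1}: \tilde W \to W$ satisfies $\Vert \phi^{-1}\Vert_{\mathrm{op}} \le (1-t^{2})^{-1/2}$.

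The final step is a short duality argument. Because $X_{2}\perp V$ and $Qb \in V$, for every $b \in W$,
\[
\langle X_{2},\, \phi(b)\rangle \;=\; \langle X_{2}, (I-Q)b\rangle \;=\; \langle X_{2}, b\rangle.
\]
Viewing $\ell(b) := \langle X_{2}, b\rangle$ as a linear functional on $W$ with dual norm $\Vert PX_{2}\Vert_{2}$, and $\tilde\ell(\tilde w) := \langle X_{2}, \tilde w\rangle$ on $\tilde W$ with dual norm $\Vert H^{\tilde W}X_{2}\Vert_{2}$, the identity above says $\tilde\ell = \ell \circ \phi^{-1}$, so $\Vert \tilde\ell\Vert \le \Vert \ell\Vert\cdot \Vert \phi^{-1}\Vert_{\mathrm{op}} \le (1-t^{2})^{-1/2}\Vert PX_{2}\Vert_{2}$. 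Squaring and substituting into the Pythagorean identity from the first paragraph yields \eqref{eq: upper bound of HZZpiX}. The one point that requires care is the second step, namely the clean translation of the geometric supremum into the operator-norm statement $\Vert QP\Vert_{\mathrm{op}}\le t$ on the relevant subspace; once that is in hand, everything else is a routine consequence of the orthogonal decomposition and the linear-functional identity.
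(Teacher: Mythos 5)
Your proof is correct, and it takes a genuinely different route from the paper's. The paper works with an explicit orthogonal basis $\{u_1,\dots,u_{m_1},v_1,\dots,v_{m_2}\}$ of $\mathrm{span}(Z,Z_\pi)$ extending one of $\mathrm{span}(Z)$, writes $X=u+v+w$ with $u=H^Z X$, $u+v=H^{ZZ_\pi}X$, further decomposes the ``new'' component $v$ as $v=u'+v'$ with $u'\in\mathrm{span}(Z)$, $v'\in\mathrm{span}(Z_\pi)$, establishes $\|u'\|_2\le\sqrt{t^2/(1-t^2)}\,\|v\|_2$, and then solves the quadratic inequality $x^2\ge 1-\tfrac{t^2}{\sqrt{1-t^2}}x$ in $x=\|H^{Z_\pi}v\|_2/\|v\|_2$ to get $x\ge\sqrt{1-t^2}$. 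You instead set up the orthogonal direct sum $V+W=V\oplus\tilde W$ with $\tilde W=(I-Q)W$, reduce the claim via Pythagoras to the single-subspace inequality $\|H^{\tilde W}X_2\|_2^2\le(1-t^2)^{-1}\|PX_2\|_2^2$, show the natural map $\phi:W\to\tilde W$, $\phi(b)=(I-Q)b$, is a bijection with $\|\phi^{-1}\|_{\mathrm{op}}\le(1-t^2)^{-1/2}$ (which is exactly where the principal-angle hypothesis enters, through $\|Qb\|_2^2=(Qb)^Tb\le t\|Qb\|_2\|b\|_2$ for $b\in W$), and close by the duality identity $\langle X_2,\phi(b)\rangle=\langle X_2,b\rangle$ valid because $X_2\perp V$. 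The two arguments are of similar length, but yours is cleaner conceptually: the constant $1/(1-t^2)$ appears directly as $\|\phi^{-1}\|_{\mathrm{op}}^2$ rather than emerging from a quadratic, and the decomposition $V\oplus\tilde W$ makes the Pythagorean reduction transparent. One small point of hygiene you already flag: both arguments require $t<1$ (for $\phi$ to be invertible in yours, for the quadratic manipulations in the paper's), which is fine since the stated bound is vacuous at $t=1$.
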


\begin{proof}
Let $\{u_1,...,u_{m_1}\}$ be an orthogonal basis of $Z$, and $\{u_1,...,u_{m_1},v_1,...,v_{m_2}\}$ be an orthogonal basis of $ZZ_{\pi}$. Then we can let $X=u+v+w$ with $u\in span(\{u_1,...,u_{m_1}\})$, $v\in span(\{v_1,...,v_{m_2}\})$. Then $H^{Z}X=u$, $H^{ZZ_{\pi}}X=u+v$, $H^{Z_{\pi}}(X-H^{Z}X)=H^{Z_{\pi}}v$.

Let $v = u^{'}+v^{'}$ with $u^{'}\in span(\{u_1,...,u_{m_1}\})=span(Z)$, $v^{'}\in span(Z_{\pi})$. Then $(u^{'})^{T}v=0$, $\vert(u^{'})^{T}v^{'}\vert\leq t\Vert u^{'}\Vert_{2}\Vert v^{'}\Vert_{2}$, $\Vert v\Vert^{2}_{2}=v^{T}v=v^{T}(u^{'}+v^{'})=v^{T}v^{'}$ and
$$\Vert H^{Z_{\pi}}v\Vert^{2}_{2}=v^{T}H^{Z_{\pi}}v=v^{T}H^{Z_{\pi}}(u^{'}+v^{'})=v^{T}v^{'}+v^{T}H^{Z_{\pi}}u^{'}\geq \Vert v\Vert^{2}_{2}-t\Vert u^{'}\Vert_2\Vert H^{Z_{\pi}}v\Vert_{2}\,.$$

Now we upper bound $\Vert u^{'}\Vert_{2}$. Since $\Vert v\Vert_{2}^{2}=\Vert v^{'}\Vert^{2}_{2}+\Vert u^{'}\Vert^{2}_{2}+2(u^{'})^{T}v^{'}\geq \Vert v^{'}\Vert^{2}_{2}+\Vert u^{'}\Vert^{2}_{2}-2t\Vert u^{'}\Vert_{2}\Vert v^{'}\Vert_{2}$, and $\Vert v\Vert^{2}_{2}=\Vert v^{'}\Vert^{2}_{2}-\Vert u^{'}\Vert^{2}_{2}$, we obtain $\Vert u^{'}\Vert \leq t\Vert v^{'}\Vert_{2}$, implying that $\Vert u^{'}\Vert_{2}\leq\sqrt{\frac{t^{2}}{1-t^{2}}}\Vert v\Vert_{2}$.

Finally, let $\Vert H^{Z_\pi}v\Vert_2=x\Vert v\Vert_2$, $x$ must satisfy:
\begin{align*}
x^{2}\geq 1-\frac{t^{2}}{\sqrt{1-t^{2}}}x\,.
\end{align*}
Since $x\geq 0$, by solving the above inequality we obtain: $x\geq \sqrt{1-t^{2}}$. This implies $\Vert H^{Z_\pi}(X-H^{Z}X)\Vert^{2}_{2}\geq (1-t^{2})\left[\Vert H^{ZZ_{\pi}}X\Vert^{2}_{2}-\Vert H^{Z}X\Vert^{2}_{2} \right] $.
\end{proof}
Next we provide Lemma~\ref{lem: upper of linearity of Z} to upper bound $\sup_{u\in span(Z),v\in span(Z_{\pi})}\frac{u^{T}v}{\Vert u\Vert_{2}\Vert v\Vert_{2}}$.
\begin{lemma}\label{lem: upper of linearity of Z}
Suppose $Z$ satisfies the condition in Lemma~\ref{lem: upper and lower of optimization objective formal}.
Then there exist some constants $C>0$, such that for any permutation $\pi:[1,n]\to[1,n]$,
\begin{align*}
\mathbb{P}\left[\sup_{u\in span(Z), v\in span(Z_{\pi}),u,v\neq 0}\frac{u^{T}v}{\Vert u\Vert_{2}\Vert v\Vert_{2}}\geq \sqrt{\frac{C(p+tr(P_{\pi}))}{n}}   \right]\leq 40e^{-p}\,.
\end{align*}
\end{lemma}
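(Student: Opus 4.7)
The plan is to reduce the geometric supremum to a ratio of a numerator operator norm and a denominator minimum eigenvalue, and then apply standard sub-Gaussian concentration with an $\epsilon$-net argument separately to each. Every $u\in\mathrm{span}(Z)$, $v\in\mathrm{span}(Z_\pi)$ can be written as $u=Z\alpha$, $v=P_\pi Z\beta$; since $P_\pi$ is orthogonal, $\|v\|_2=\|Z\beta\|_2$, so
\[
\sup_{u,v}\frac{u^{T}v}{\|u\|_2\|v\|_2}
\;=\;\sup_{\alpha,\beta\neq 0}\frac{\alpha^{T}Z^{T}P_\pi Z\beta}{\|Z\alpha\|_2\|Z\beta\|_2}
\;\leq\;\frac{\|Z^{T}P_\pi Z\|_{\mathrm{op}}}{\lambda_{\min}(Z^{T}Z)}.
\]
Hence it suffices to lower-bound $\lambda_{\min}(Z^{T}Z)$ by $\Omega(n)$ and upper-bound $\|Z^{T}P_\pi Z\|_{\mathrm{op}}$ by $O(\sqrt{n(p+\mathrm{tr}(P_\pi))})$, both with failure probability at most $O(e^{-p})$.

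For the denominator, fix a unit vector $\alpha\in\mathbb{R}^{p-1}$. The hypothesis $\mathbb{E}[(z_r^{T}\alpha)^2]\geq c_1$ together with the sub-Gaussian bound gives $\|Z\alpha\|_2^2\geq \tfrac12 c_1 n$ with probability $1-2e^{-cn}$ for each fixed $\alpha$; a standard $1/4$-net on $\mathcal{S}^{p-2}$ (of size $\leq 9^{p-1}$) combined with the union bound upgrades this to a uniform lower bound of $c_1 n/2$ on the entire unit sphere with probability at least $1-10\,e^{-p}$ once $n$ is a large enough constant multiple of $p$. For the numerator, I decompose along the fixed-point set $F=\{r:\pi(r)=r\}$, writing $Z^{T}P_\pi Z=M_F+M_{F^c}$ with $M_F=\sum_{r\in F}z_r z_r^{T}$ and $M_{F^c}=\sum_{r\notin F}z_r z_{\pi(r)}^{T}$. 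For any unit $\alpha,\beta$, the form $\alpha^{T}M_F\beta=\sum_{r\in F}(z_r^{T}\alpha)(z_r^{T}\beta)$ is a sum of $|F|=\mathrm{tr}(P_\pi)$ independent sub-exponential variables of parameter $O(1)$, so Bernstein with the $\epsilon$-net over both $\alpha,\beta$ yields $\|M_F\|_{\mathrm{op}}\leq C(|F|+\sqrt{p|F|}+p)$ with probability $1-20\,e^{-p}$.

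The hard part will be the non-fixed piece $M_{F^c}$, where the terms $z_r z_{\pi(r)}^{T}$ share a row whenever $r$ and $\pi(r)$ lie in the same $\pi$-cycle, so naive independence arguments fail. The key structural observation is that different cycles of $\pi|_{F^c}$ involve disjoint rows of $Z$, hence the cycle-wise pieces of $\alpha^{T}M_{F^c}\beta$ are independent; inside each cycle of length $k\geq 2$, a direct second-moment calculation shows that essentially all cross terms vanish by mean-zero of the rows, so the cycle's variance is $O(k)$. Summing over cycles gives overall variance $O(n-|F|)$. The sub-exponential tail of each term then allows Hanson--Wright/Bernstein to yield $\mathbb{P}[|\alpha^{T}M_{F^c}\beta|>t]\leq 2\exp(-c\min(t^2/(n-|F|),t/K^2))$ for fixed $\alpha,\beta$; choosing $t=C\sqrt{(n-|F|)p}$ and applying the $\epsilon$-net of size $9^{2(p-1)}$ produces $\|M_{F^c}\|_{\mathrm{op}}\leq C\sqrt{(n-|F|)p}$ with probability $1-10\,e^{-p}$. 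Combining all three high-probability events via a union bound, and using the elementary estimates $|F|\leq \sqrt{n(p+|F|)}$ and $\sqrt{(n-|F|)p}+\sqrt{|F|p}\lesssim\sqrt{n(p+|F|)}$, yields $\|Z^{T}P_\pi Z\|_{\mathrm{op}}\leq C'\sqrt{n(p+\mathrm{tr}(P_\pi))}$. Dividing by $\lambda_{\min}(Z^{T}Z)\geq c_1 n/2$ gives the claimed bound $\sqrt{C(p+\mathrm{tr}(P_\pi))/n}$ with total failure probability $\leq 40\,e^{-p}$.
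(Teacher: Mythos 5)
Your overall architecture is a clean and genuinely different reformulation from the paper's: you reduce the cosine of the angle between the subspaces to the ratio $\|Z^{T}P_\pi Z\|_{\mathrm{op}}/\lambda_{\min}(Z^{T}Z)$, split the numerator into a fixed-point part $M_F$ and a non-fixed part $M_{F^c}$, and handle the denominator via a small-ball estimate; the paper instead works directly with the ratio $\frac{\vert (Zu)^{T}(Z_{\pi}v)\vert}{\Vert Zu\Vert_2\Vert Z_{\pi}v\Vert_{2}}$ on an $\epsilon$-net of the $(u,v)$ sphere pair, keeping the denominator data-dependent. Both routes can work, and the fixed-point piece $M_F$, the denominator lower bound $\lambda_{\min}(Z^{T}Z)\gtrsim n$, the $\epsilon$-net machinery, and the elementary absorption $|F|+p+\sqrt{(n-|F|)p}\lesssim\sqrt{n(p+|F|)}$ are all handled correctly.

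The genuine gap is in the concentration of $\alpha^{T}M_{F^c}\beta$. You correctly observe that contributions from different cycles of $\pi|_{F^c}$ are independent and that the second moment of a cycle of length $k$ is $O(k)$, but that is where the rigor stops. Within a single cycle $(r_1,\dots,r_k)$ the summands $(z_{r_i}^{T}\alpha)(z_{r_{i+1}}^{T}\beta)$ share rows with their neighbours, so they are \emph{not} independent, and a variance bound plus per-term sub-exponential tails does not yield a Bernstein inequality. The Hanson--Wright inequality in its standard (Rudelson--Vershynin) form also does not apply, because the natural random vector here has independent sub-Gaussian \emph{blocks} $z_r\in\mathbb{R}^{p-1}$ rather than independent sub-Gaussian scalar coordinates; the sub-Gaussian projection hypothesis in Theorem~\ref{thm: upper and lower of optimization objective} does not give you entrywise independence within a row. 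A block or decoupled variant of Hanson--Wright would be needed, and you neither state one nor give the decoupling. The paper sidesteps this cleanly: it 3-colors the cycle graph of $\pi|_{F^c}$, partitioning $F^{c}$ into three classes so that, within each class, the terms $x_{\pi(i)}y_i$ use pairwise-disjoint rows and are hence genuinely independent; ordinary sub-Gaussian concentration then applies to each color class, and a union bound over the three classes finishes it. If you want to keep the operator-norm framing, you can still graft the paper's 3-coloring onto $M_{F^c}=\sum_{j=1}^{3}\sum_{r\in I_j}z_r z_{\pi(r)}^{T}$ and bound each summand's operator norm by a one-sided (conditional) sub-Gaussian argument; as written, though, your step from ``cycle variance is $O(k)$'' to the claimed tail bound is not justified.
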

\begin{proof}
First, we consider $u,v\in\mathcal{S}^{p-2}$ and upper bound $\vert (Zu)^{T}Z_{\pi}v\vert$. Let row $i$ of $Z$ be $Z_{i}\in\mathbb{R}^{1\times (p-1)}$, so that row $\pi(i)$ of $Z_{\pi}$ is equal to $Z_i$. For simplicity, denote $x_{i}=Z_{i}u$, $y_{i}=Z_{i}v$. Then $Zu=[x_1,x_2,...,x_n]^{T}$ and $Z_{\pi}v=[y_{\pi^{-1}(1)},y_{\pi^{-1}(2)},...,y_{\pi^{-1}(n)}]^{T}$, with both $x_1,...,x_n$ and  $y_1,...,y_n$ forming two sets of i.i.d. random variables. Now we consider

\begin{align*}
(Zu)^{T}(Z_{\pi}v) = \sum_{i=1}^{n}x_{\pi(i)}y_{i}\,.
\end{align*}
When $i = \pi(i)$, we apply the bound $\vert x_{\pi(i)}y_{i}\vert \leq \vert x_{\pi(i)}\vert\cdot \vert y_{i}\vert$. For the remaining indices, let $I=\{i\vert \pi(i)\neq i  \}$, and define $\pi_{I}:I\to I, \pi_{I}(i)=\pi(i),\forall i\in I$. Then $\pi_{I}$ can be decomposed into cycles, each containing at least 2 elements. Since every such cycle admits a proper 3‑coloring such that adjacent vertices receive distinct colors, the set $I$ can be partitioned into three subsets $I_1, I_2, I_3$ with the property that for each $j = 1,2,3$ and every $i \in I_j$, we have $\pi_I(i) \notin I_j$. Consequently, for each $j$, the collection $\{x_{\pi(i)},y_{i}\vert i\in I_{j} \}$ consists of mutually independent elements.

Now we construct a maximal $\epsilon-$net $\mathcal{T}_{\epsilon}$ of the unit sphere $\mathcal{S}^{p-2}$, meaning that any two distinct points in $\mathcal{T}\epsilon$ are at distance at least $\epsilon$, and for every $u \in \mathcal{S}^{p-2}$, there exists some $v \in \mathcal{T}_\epsilon$ such that $\Vert u-v\Vert_{2}\leq \epsilon$. Since the balls of radius $\frac{1}{2}\epsilon$ centered at the points of $\mathcal{T}_\epsilon$ are pairwise disjoint and all lie within a ball of radius $1 + \frac{1}{2}\epsilon$, a standard volume argument yields $\vert \mathcal{T}_{\epsilon}\vert \leq (1+\frac{2}{\epsilon})^{p-1}$.

We first show that $\sup_{u,v\in\mathcal{T}_{\epsilon}}\frac{\vert(Zu)^{T}(Z_{\pi}v)\vert}{\Vert Zu\Vert_2\Vert Z_{\pi}v\Vert_{2}}$ is a good approximation to $\sup_{u,v\in\mathcal{S}^{p-2}}\frac{\vert(Zu)^{T}(Z_{\pi}v)\vert }{\Vert Zu\Vert_2\Vert Z_{\pi}v\Vert_{2}}$ (which is exactly $\sup_{u^{'}\in span(Z),v^{'}\in span(Z_{pi}),u^{'},v^{'}\neq 0}\frac{(u^{'})^{T}v^{'}}{\Vert u^{'}\Vert_{2}\Vert v^{'}\Vert_{2}}$). Let $\sup_{u,v\in\mathcal{S}^{p-2}}\vert (Zu)^{T}(Z_{\pi}v)\vert:=A$ and \\
$\sup_{u,v\in \mathcal{T}_{\epsilon}}\vert (Zu)^{T}(Z_{\pi}v)\vert:=B$, then $A,B$ has the following relationship:
\begin{align*}
\vert (Zu)^{T}(Z_\pi v)\vert &\leq \vert (Zu_i)^{T}(Z_\pi v)\vert +\Vert u_i-u\Vert_{2}\sup_{u^{'}\in\mathcal{S}^{p-1}}\vert (Zu^{'})^{T}(Zv)\vert\\&
\leq \vert (Zu_i)^{T}(Z_{\pi}v_j)\vert +\Vert v_j-v\Vert_{2}\sup_{v^{'}\in\mathcal{S}^{p-1}}\vert (Zu_i)^{T}(Z_{\pi}v^{'})\vert+\Vert u_i-u\Vert_{2}\sup_{u^{'}\in\mathcal{S}^{p-1}}\vert (Zu^{'})^{T}(Zv)\vert\\
&\leq \vert (Zu_i)^{T}(Z_{\pi}v_j)\vert+2\epsilon A \\
&\leq B+2\epsilon A\,,
\end{align*}
indicating that $A\leq \frac{1}{1-2\epsilon}B$.
Now if we let $D:=\sup_{u\in\mathcal{S}^{p-2}\Vert Zu\Vert_{2}}$, $E:=\inf_{u\in\mathcal{S}^{p-2}\Vert Zu\Vert_{2}}$, we can derive that
\begin{align*}
\vert(Zu)^{T}(Z_{\pi}v)\vert\leq \vert (Zu_i)^{T}Z_{\pi}v_i\vert+2\epsilon A\,,\quad \Vert Zu\Vert_{2}\geq  \Vert Zu_{i}\Vert_{2}-\epsilon D\,,\quad \Vert Z_{\pi}v\Vert_{2}\geq \Vert Z_{\pi}v_j\Vert_{2}-\epsilon D\,.
\end{align*}
This means that for all $u,v\in \mathcal{S}^{p-2}$, we have:
\begin{align}\label{upper bound of ep-net}
\frac{\vert(Zu)^{T}(Z_{\pi}v)\vert }{\Vert Zu\Vert_2\Vert Z_{\pi}v\Vert_{2}}\leq \sup_{u_i,v_j\in\mathcal{T}_{\epsilon}}\frac{\vert(Zu_i)^{T}(Z_{\pi}v_j)\vert }{\Vert Zu_i\Vert_2\Vert Z_{\pi}v_j\Vert_{2}}\left(1+\frac{\epsilon D}{E}\right)^{2}+\frac{2\epsilon B}{(1-2\epsilon)E^{2}}\,.
\end{align}

\paragraph{Upper bound $\vert (Zu)^{T}(Z_{\pi}v)\vert$}
We first upper bound $\sum_{i\in I_{j}}x_{\pi(i)}y_{i}$ for $j=1,2,3$ and all $u,v\in\mathcal{T}_{\epsilon}$. 
For any $\lambda$ and $\pi(i)\neq i$ we have:
\begin{align*}
\mathbb{E}[e^{\lambda x_{\pi(i)}y_{i}}\vert x_{\pi(i)}]\leq \mathbb{E}[e^{c_2\lambda^{2} x_{\pi(i)}^{2}}]\,.
\end{align*}
Therefore, for $j=1,2,3$ and any $t\geq 0$, we have:
\begin{align*}
\mathbb{P}\left[\vert \sum_{i\in I_j}x_{\pi(i)}y_{i}\vert\geq t\right]\leq 2 e^{c_2\lambda^{2}\sum_{i\in I_j}x^{2}_{\pi(i)}-\lambda t}\,, \forall \lambda\geq 0\,.
\end{align*}
	
Let $t = \sqrt{4c_2(p+2\ln(\vert \mathcal{T}_{\epsilon}\vert))\sum_{i\in I_j}x_{\pi(i)}^{2}}$ and $\lambda=\frac{t}{2c_2\sum_{i\in I_{j}}x^{2}_{\pi(i)}}$, we obtain:
\begin{align*}
\mathbb{P}\left[\vert \sum_{i\in I_j}x_{\pi(i)}y_{i}\vert\geq \sqrt{4c_2(p+2\ln(\vert \mathcal{T}_{\epsilon}\vert))\sum_{i\in I_j}x_{\pi(i)}^{2}}\right]\leq 2\vert \mathcal{T}_{\epsilon}\vert^{-2}e^{-p}\,.
\end{align*}
Due to the symmetry we also have:
\begin{align*}
\mathbb{P}\left[\vert \sum_{i\in I_j}x_{\pi(i)}y_{i}\vert\geq \sqrt{4c_2(p+2\ln(\vert \mathcal{T}_{\epsilon}\vert))\sum_{i\in I_j}y_{i}^{2}}\right]\leq 2\vert \mathcal{T}_{\epsilon}\vert^{-2}e^{-p}\,.
\end{align*}
Therefore we can obtain:
\begin{align*}
\mathbb{P}\left[\vert \sum_{i\in I_j}x_{\pi(i)}y_{i}\vert\geq \sqrt{4c_2(p+2\ln(\vert \mathcal{T}_{\epsilon}\vert))\min\left(\sum_{i\in I_j}x_{\pi(i)}^{2},\sum_{i\in I_j}y_{i}^{2}\right)}\right]\leq 4\vert \mathcal{T}_{\epsilon}\vert^{-2}e^{-p}\,.
\end{align*}

By summing up over $j=1,2,3$ we obtain:
\begin{align*}
\mathbb{P}\left[\vert \sum_{i\in I_j}x_{\pi(i)}y_{i}\vert\geq \sqrt{12c_2(p+2\ln(\vert \mathcal{T}_{\epsilon}\vert))\min\left(\sum_{i\in I}x_{\pi(i)}^{2},\sum_{i\in I}y_{i}^{2}\right)}\right]\leq 12\vert \mathcal{T}_{\epsilon}\vert^{-2}e^{-p}\,,
\end{align*}
where in this step we use Cauchy's inequality that $\sqrt{a}+\sqrt{b}+\sqrt{c}\leq \sqrt{3(a+b+c)}$. This implies a union bound for all $u,v\in\mathcal{T}_{\epsilon}$:
\begin{align*}
\mathbb{P}\left[\vert (Zu)^{T}Z_{\pi}v\vert\leq \sqrt{12c_2(p+2\ln(\vert \mathcal{T}_{\epsilon}\vert))\min\left(\sum_{i\in I}x_{\pi(i)}^{2},\sum_{i\in I}y_{i}^{2}\right)}+\vert \sum_{i\notin I} x_{i}y_{i}\vert \,, \forall u,v\in\mathcal{T}_{\epsilon}\right]\leq 12e^{-p}\,,
\end{align*}
where $x_i=(Zu)_i$ and $y_i=(Zv)_i$.

For $\vert\sum_{i\notin I}x_iy_i\vert$, we simply upper bound it by 
\begin{align*}
\vert\sum_{i\notin I}x_iy_i\vert\leq \sqrt{\sum_{i\notin I}x^{2}_{i}}\sqrt{\sum_{i\notin I}y^{2}_{i}}\,.
\end{align*}
This implies a probability bound of $\frac{\vert (Zu)^{T}(Z_{\pi}v)\vert}{\Vert Zu\Vert_2\Vert Z_{\pi}v\Vert_2}$: $\forall u,v\in\mathcal{T}_{\epsilon}$,
\begin{align*}
\mathbb{P}\left[\frac{\vert (Zu)^{T}(Z_{\pi}v)\vert}{\Vert Zu\Vert_2\Vert Z_{\pi}v\Vert_2}\geq \sqrt{\frac{12c_2(p+2\ln(\vert \mathcal{T}_{\epsilon}\vert))}{\max(\Vert Zu\Vert^{2}_{2},\Vert Z_{\pi}v\Vert^{2}_2)}}+\sqrt{\frac{\sum_{i\notin I} x^{2}_{i}}{\sum_{i=1}^{n}x^{2}_{i}}}\sqrt{\frac{\sum_{i\notin I} y^{2}_{i}}{\sum_{i=1}^{n}y^{2}_{i}}}  \right]\leq 12\vert \mathcal{T}_{\epsilon}\vert^{-2}e^{-p}
\end{align*}

Now it suffices to bound $\Vert Zu\Vert^{2}_{2}$. We first consider all $u\in\mathcal{T}_{\epsilon}$.
Since $\mathbb{E}[e^{tx_i}]\leq e^{c_2t^{2}}$ for all t, by setting $t = \frac{x}{2c_2}$ and $t = -\frac{x}{2c_2}$ we obtain:
\begin{align*}
\mathbb{P}[\vert x_i\vert \geq x]\leq 2 e^{-\frac{x^{2}}{4c_2}}\,.
\end{align*}
Therefore, when $\lambda<\frac{1}{4c_2}$, we can upper bound $\mathbb{E}[e^{\lambda x^{2}_i}]$ by
\begin{align*}
\mathbb{E}[e^{\lambda x^{2}_i}]&= \int_{t\geq 0}\mathbb{P}[e^{\lambda x^{2}_i}\geq t]dt\\
& =1+\int_{t\geq 1}\mathbb{P}[x^{2}_{i}\geq \frac{1}{\lambda}\ln t]dt\\
& \leq 1+\int_{t\geq 1}2e^{-\frac{1}{4c_2\lambda}\ln t}dt\\
& \leq 1+\int_{t\geq 1}2t^{-\frac{1}{4c_{2}\lambda}}dt\,.\\
& = 1+\frac{2}{\frac{1}{4c_{2}\lambda}-1}\\
& = 1+\frac{8c_{2}\lambda}{1-4c_{2}\lambda}\,.
\end{align*}
Let $\lambda = \frac{1}{10c_2}$, then $\mathbb{E}[e^{\lambda x^{2}_{i}}]\leq e$. Therefore, for any $t\geq 0$, we have:
\begin{align*}
\mathbb{P}\left[\sum_{i\notin I}x^{2}_{i}\geq t\right]\leq e^{\frac{1}{10c_2} (tr(P_{\pi})-t)}\,.
\end{align*}
This leads to an upper bound of $\sum_{i\notin I}x^{2}_{i}$:
\begin{align*}
\mathbb{P}\left[\sum_{i\notin I}x^{2}_{i}\leq 10c_2(tr(P_{\pi})+\ln (\vert\mathcal{T}_{\epsilon}\vert)+p)\right]\leq \vert \mathcal{T}_{\epsilon}\vert^{-1}e^{-p}\,.
\end{align*}
Now we lower bound $\sum_{i=1}^{n}x^{2}_{i}$. Notice that $e^{-x}\leq 1-x+\frac{1}{2}x^{2}(x\geq 0)$, for any $\lambda\geq 0$ we obtain that $\mathbb{E}[e^{-\lambda x^{2}_{i}}]\leq 1-\lambda\mathbb{E}[x_i^{2}]+\frac{1}{2}\lambda^{2}\mathbb{E}[x_i^{4}]\leq 1-c_1\lambda+\frac{1}{2}\lambda^{2}\mathbb{E}[x^{4}_{i}]$.
On the other hand,
\begin{align*}
\mathbb{E}[x^{4}_i]&=\int_{t\geq 0}\mathbb{P}[x^{4}_i\geq t]dt\\
&\leq \int_{t\geq 0}2e^{-\frac{1}{4c_2}\sqrt{{t}}}dt\\
&=\int_{x\geq 0} 4x e^{-\frac{1}{4c_2}x}dx\\
&=64c^{2}_{2}\,.
\end{align*} 
Thus, for $\lambda\leq \frac{c_1}{96c^{2}_2}$ we obtain: $\mathbb{E}[e^{-\lambda x^{2}_{i}}]\leq 1-\frac{2}{3}c_1\lambda\leq e^{-\frac{2}{3}c_1\lambda}$. Therefore, we conclude that for some constant $c_3$ depending on $c_1$ and $c_2$ ($c_3 = \frac{c_1}{96 c_2^2}$ is a valid choice and may be taken larger), we have:
\begin{align*}
\mathbb{P}\left[\sum_{i=1}^{n}x^{2}_{i}\leq \frac{2}{3}c_1 n-\frac{1}{c_3}(p+\ln(\vert \mathcal{T}_\epsilon\vert))\right]\leq \vert \mathcal{T}_{\epsilon}\vert^{-1}e^{-p}\,.
\end{align*}
This indicates that
\begin{align*}
\mathbb{P}\left[\inf_{u\in\mathcal{T}_{\epsilon}}\Vert Zu\Vert^{2}_{2}\leq \frac{2}{3}c_1 n-\frac{1}{c_3}(p+\ln(\vert \mathcal{T}_\epsilon\vert)) \right]\leq e^{-p}\,.
\end{align*}

Combining all the above results and applying a union bound over the probability estimates, we obtain the following lower bound: $\frac{\vert (Zu)^{T}(Z_{\pi}v)\vert}{\Vert Zu\Vert_{2}\Vert Z_{\pi}v\Vert_{2}}(u,v\in\mathcal{T}_{\epsilon})$:
\begin{align}\label{upper bound inner product in epsilon net}
\mathbb{P}\left[\sup_{u,v\in\mathcal{T}_{\epsilon}}\frac{\vert (Zu)^{T}(Z_{\pi}v)\vert}{\Vert Zu\Vert_{2}\Vert Z_{\pi}v\Vert_{2}}\geq \sqrt{\frac{12c_2(p+2\ln(\vert \mathcal{T}_{\epsilon}\vert))}{\frac{2}{3}c_1 n-\frac{1}{c_3}(p+\ln(\vert \mathcal{T}_\epsilon\vert))}}+\frac{10c_2(tr(P_{\pi})+\ln (\vert\mathcal{T}_{\epsilon}\vert)+p)}{\frac{2}{3}c_1 n-\frac{1}{c_3}(p+\ln(\vert \mathcal{T}_\epsilon\vert))}\right]\leq 14e^{-p}\,.
\end{align}

Since $\ln\vert \mathcal{T}_{\epsilon}\vert\leq p\ln(1+\frac{2}{\epsilon})$, we can conclude that for some $C_1>0$ ($C_1$ depends on $c_1,c_2$), we have:

\begin{align}
\mathbb{P}\left[\sup_{u,v\in\mathcal{T}_{\epsilon}}\frac{\vert (Zu)^{T}(Z_{\pi}v)\vert}{\Vert Zu\Vert_{2}\Vert Z_{\pi}v\Vert_{2}}\leq \sqrt{\frac{C_1(tr(P_{\pi})+\ln(1+\frac{2}{\epsilon})p)}{n}}\right]\geq 1-14e^{-p}\,.
\end{align}

Finally, we upper bound $B:=\sup_{u,v\in\mathcal{T}_{\epsilon}}\vert (Zu)^{T}(Z_{\pi}v)\vert$, $D:=\sup_{u\in\mathcal{S}^{p-2}}\Vert Zu\Vert_{2}$ and lower bound  $E:=\inf_{u\in\mathcal{S}^{p-2}}\Vert Zu\Vert_{2}$ to complete the proof. 

For $B$, by an argument analogous to the upper bound for $\sum_{i \notin I} x_i^2$, we obtain
\begin{align*}
\mathbb{P}\left[\sum_{i\in I}x^{2}_{i}\geq 10c_2(n-tr(P_{\pi})+p+\ln(\vert \mathcal{T}_{\epsilon}\vert))\right]\leq \vert \mathcal{T}_{\epsilon}\vert^{-1}e^{-p}\,,
\end{align*}
Combining this with the upper bounds for $\sum_{i \notin I} x_i^2$ and $\sum_{i \notin I} y_i^2$ yields:
\begin{align*}
&\mathbb{P}\left[\sup_{u,v\in\mathcal{T}_{\epsilon}}\vert (Zu)^{T}Z_\pi v\vert\geq \sqrt{12c_2(p+2\ln(\vert \mathcal{T}_{\epsilon}\vert))[10c_2(n+p+\ln(\vert \mathcal{T}_{\epsilon}\vert))]}+10c_2(tr(P_{\pi})+\ln (\vert\mathcal{T}_{\epsilon}\vert)+p)\right]\\&\leq 16 e^{-p}\,.
\end{align*}
Here we use the upper bound that $\sum_{i\notin I}\vert x_iy_i\vert\leq \sqrt{\sum_{i\notin I}x^{2}_i\sum_{i\notin I}y^{2}_i }$. Therefore,
\begin{align*}
\mathbb{P}\left[B\geq \sqrt{12c_2(p+2\ln(\vert \mathcal{T}_{\epsilon}\vert))[10c_2(n+p+\ln(\vert \mathcal{T}_{\epsilon}\vert))]}+10c_2(tr(P_{\pi})+\ln (\vert\mathcal{T}_{\epsilon}\vert)+p)\right]\leq 16e^{-p}\,.
\end{align*}

For $D$, since for any $u$ we have:
\begin{align*}
\mathbb{P}\left[\sum_{i=1}^{n}x^{2}_{i}\geq 10c_2(n+\ln (\vert\mathcal{T}_{\epsilon}\vert)+p)\right]\leq \vert \mathcal{T}_{\epsilon}\vert^{-1}e^{-p}\,,
\end{align*}
We can then obtain the following bound on $\mathcal{T}_\epsilon$: 
$\mathbb{P}\left[\sup_{u\in\mathcal{T}_{\epsilon}}\Vert Zu\Vert^{2}_{2}\geq 10c_2(n+\ln (\vert\mathcal{T}_{\epsilon}\vert)+p)\right]\leq e^{-p}$.
Moreover, note that for any $u \in \mathcal{S}^{p-2}$, $\Vert Zu\Vert_2\leq \sup_{u^{'}\in\mathcal{T}_{\epsilon}}\Vert Zu^{'}\Vert_{2}+\epsilon \sup_{u^{''}\in \mathcal{S}^{p-2}}\Vert Zu^{''}\Vert_2 $, which implies
$D\leq \frac{1}{1-\epsilon}\sup_{u\in\mathcal{T}_{\epsilon}}\Vert Zu\Vert^{2}_{2}$.
Consequently, we obtain:
\begin{align*}
\mathbb{P}\left[D\leq \frac{1}{1-\epsilon}\sqrt{10c_2(n+\ln (\vert\mathcal{T}_{\epsilon}\vert)+p)}\right]\geq 1-e^{-p}\,.
\end{align*}
On the other hand, we also have: $E\geq \inf_{u\in\mathcal{T}_{\epsilon}}\Vert Zu\Vert_2-\epsilon D$. This implies a lower bound of $E$:
\begin{align*}
\mathbb{P}\left[E\geq \sqrt{\frac{2}{3}c_1 n-\frac{1}{c_3}(p+\ln (\vert\mathcal{T}_\epsilon\vert))}-\frac{\epsilon}{1-\epsilon}\sqrt{10c_2(n+\ln (\vert\mathcal{T}_{\epsilon}\vert)+p)}\right]\geq 1-2e^{-p}\,.
\end{align*}
Combining all these probabilistic bounds on $B$, $D$, and $E$ with \eqref{upper bound of ep-net} and \eqref{upper bound inner product in epsilon net}, there exist constants $C', c' > 0$ and $\epsilon_0 > 0$ such that, for $\epsilon = \epsilon_0$ and any $n \geq C' p$, we have:
\begin{align}
\mathbb{P}\left[\sup_{u,v\in\mathcal{S}^{p-2}}\frac{\vert (Zu)^{T}Zv\vert}{\Vert Zu\Vert_2 \Vert Zv\Vert_2}\geq c^{'}\sqrt{\frac{p+tr(P_{\pi})}{n}}\right]\leq 40e^{-p}\,.
\end{align}

Since $\sup_{u,v\in\mathcal{S}^{p-2}}\frac{\vert (Zu)^{T}Zv\vert}{\Vert Zu\Vert_2 \Vert Zv\Vert_2}\leq 1$, we can find a constant $C>0$ satisfying:
\begin{align*}
\forall n>0\,,\,\mathbb{P}\left[\sup_{u,v\in\mathcal{S}^{p-2}}\frac{\vert (Zu)^{T}Zv\vert}{\Vert Zu\Vert_2 \Vert Zv\Vert_2}\geq C\sqrt{\frac{p+tr(P_{\pi})}{n}}\right]\leq 40e^{-p}\,.
\end{align*}
\end{proof}

\subsubsection{Proof of Lemma~\ref{lem: decomposition}}
\paragraph{Lemma~\ref{lem: decomposition}}
For all permutation $\pi$ and $X$ we have:
\begin{align*}
     \left\vert X^{T}_{\pi}(I-H^{ZZ_{\pi}})X-\frac{1}{2} X^{T}_{\pi} (I-H^{Z_{\pi}})(I-H^{Z})X\right\vert \leq \frac{1}{2}\Vert (I-H^{Z})X\Vert^{2}_{2}\,.
\end{align*}
\begin{proof}
Suppose $\{u_{1},...,u_{m}\}$ forms an orthogonal basis for $I-H^{ZZ_{\pi}}$, and let 
$\{u_{1},...,u_{m},v_1,...,v_{m_1}\}$ be a basis for $I-H^{Z}$ such that $\{v_1,...,v_{m_1}\}$ is orthogonal and each $u_i$ is orthogonal to every $v_j$. Similarly, we define $\{u_{1},...,u_{m},w_{1},...,w_{m_2}\}$ as a basis for $I-H^{Z_{\pi}}$ with $\{w_{1},...,w_{m_2}\}$ orthogonal and $u_{i}^{T}w_{j}=0$ for all $i, j$.

Now, let $(I-H^{ZZ_{\pi}})X=X^{u}$, $(I-H^{ZZ\pi})X_{\pi}=X_{\pi}^{u}$. Furthermore, write $(I-H^{Z})X=X^{u}+v$, $(I-H^{Z_{\pi}})X_{\pi}=X_{\pi}^{u}+w$, where $v \in \operatorname{span}(v_1,...,v_{m_1})$ and $w \in \operatorname{span}(w_1,...,w_{m_2})$. Then we have: 

$$X^{T}_{\pi}(I-H^{ZZ_{\pi}})X=(X_{\pi}^{u})^{T}X^{u}$$,
$$X_{\pi}^{T}(I-H^{Z_{\pi}})(I-H^{Z})X=(X^{u}_{\pi}+w)^{T}(X^{u}+v)=(X^{u}_{\pi})^{T}X^{u}+w^{T}v$$

By Cauchy's inequality, we have:

\begin{align}
(\vert (X^{u}_{\pi})^{T}X^{u}\vert+\vert w^{T}v\vert)^{2}\leq \left(\Vert X^{u}_{\pi}\Vert^{2}_{2}+\Vert w\Vert^{2}_{2}\right)\left(\Vert X^{u}\Vert^{2}_{2}+\Vert v\Vert^{2}_{2}\right)\leq \Vert (I-H^{Z})X\Vert^{4}_{2}\,.
\end{align}

Therefore, we obtain that $\vert (X^{u}_{\pi})^{T}X^{u}\vert+\vert w^{T}v\vert\leq\Vert  (I-H^{Z})X\Vert^{2}_{2}$. This implies
$$
\left\vert \frac{1}{2}X^{T}_{\pi}(I-H^{Z_{\pi}})(I-H^{Z})-X^{T}_{\pi}(I-H^{ZZ_{\pi}})X\right\vert\leq \frac{1}{2}\Vert (I-H^{Z})X\Vert^{2}_{2}\,.
$$
\end{proof}
\subsubsection{Formal version of Theorem~\ref{thm: informal for high prob bound}}
We provide Theorem~\ref{thm: value of optimization objective}, the formal version  of Theorem~\ref{thm: informal for high prob bound} and complete its proof here.
\begin{theorem}\label{thm: value of optimization objective}
For any $\alpha\in(0,\frac{1}{2}]$ and $\mathcal{P}_{K}$, $\lambda_2(X,Z,\mathcal{P}_{K},\alpha)$ satisfies
$$
\lambda_2(X,Z,\mathcal{P}_{K},\alpha)\geq \mathbb{E}\left[ \frac{1}{2}v^{T}_{\pi_k}v+\Vert H^{Z}v_{\pi_k}\Vert^{2}_{2}\right]-2\alpha\Vert v\Vert^{2}_{2}\,.
$$
Furthermore, suppose that $\pi_i(i=1,2,...,m)$ are sampled $i.i.d.$ and uniformly from $\mathcal{P}_{K}$, with $\lambda^{'}$ satisfying:
\begin{align*}
\lambda^{'}=\inf \lambda:\frac{1}{m}\sum_{i=1}^{m}\mathbbm{1}\left\{ \frac{1}{2}v^{T}_{\pi_i}v+\Vert H^{Z_{\pi_i}}v\Vert^{2}_{2}>\lambda    \right\}\leq \frac{1}{4}\alpha\,.
\end{align*}
Then when $m\geq \frac{1}{\alpha^{2}}$, we have:
\begin{align*}
\mathbb{P}\left[\lambda^{'}\geq \mathbb{E}_{\pi_k}\left[ \frac{1}{2}v^{T}_{\pi_k}v+\Vert H^{Z}v_{\pi_k}\Vert^{2}_{2}\right]-\alpha\Vert v\Vert^{2}_{2}    \right]\geq 1-e^{-0.05\alpha^{-1}}
\end{align*}
\end{theorem}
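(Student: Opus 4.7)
The theorem splits into a deterministic extremal bound on the population high quantile (Part 1) and a concentration statement transferring this bound to its empirical analogue (Part 2). Writing $S_k := \tfrac{1}{2}v^{T}_{\pi_k}v + \|H^{Z_{\pi_k}}v\|_2^{2}$, both parts rest on the elementary observation that $S_k$ is bounded: Cauchy--Schwarz gives $|v_{\pi_k}^{T}v|\leq\|v\|_2^{2}$ and $\|H^{Z_{\pi_k}}v\|_2^{2}\in[0,\|v\|_2^{2}]$, so $S_k\in[-\tfrac{1}{2}\|v\|_2^{2},\tfrac{3}{2}\|v\|_2^{2}]$.

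For Part 1, I would run the standard argument that the mean of a bounded random variable cannot exceed its high quantile by too much. By definition of $\lambda_2 := \lambda_2(X, Z, \mathcal{P}_K, \alpha)$, at most an $\alpha$ fraction of indices $k$ satisfy $S_k \geq \lambda_2$; splitting the average gives
\begin{align*}
\mathbb{E}_{\pi_k}[S_k]
&= \frac{1}{1+K}\sum_{k:\,S_k<\lambda_2} S_k + \frac{1}{1+K}\sum_{k:\,S_k\geq \lambda_2} S_k\\
&\leq (1-\alpha)\lambda_2 + \alpha\cdot \tfrac{3}{2}\|v\|_2^{2},
\end{align*}
and rearranging yields $\mathbb{E}[S_k]-\lambda_2\leq \alpha(\tfrac{3}{2}\|v\|_2^{2}-\lambda_2)$. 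Combining this with the trivial lower bound $\lambda_2\geq \min_k S_k\geq -\tfrac{1}{2}\|v\|_2^{2}$ gives $\mathbb{E}[S_k]-\lambda_2\leq 2\alpha\|v\|_2^{2}$, which is exactly the claim.

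For Part 2, set $T:=\mathbb{E}[S_\pi]-\alpha\|v\|_2^{2}$ and $p:=\mathbb{P}_\pi(S_\pi\geq T)$ with $\pi$ uniform on $\mathcal{P}_K$. Repeating the Part 1 splitting with $T$ in place of $\lambda_2$,
\begin{align*}
T+\alpha\|v\|_2^{2}=\mathbb{E}[S_\pi]\leq p\cdot \tfrac{3}{2}\|v\|_2^{2}+(1-p)T,
\end{align*}
and using $T\geq -(\tfrac{1}{2}+\alpha)\|v\|_2^{2}$, we rearrange to $p\geq \alpha/(2+\alpha)\geq \tfrac{2}{5}\alpha$ for $\alpha\leq \tfrac{1}{2}$. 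Since the indicators $X_i:=\mathbbm{1}\{S_{\pi_i}\geq T\}$ are i.i.d.\ Bernoulli with mean $p$, the multiplicative Chernoff inequality (equivalently the KL form $\mathbb{P}(\mathrm{Binom}(m,p)\leq mq)\leq \exp(-m(p-q)^{2}/(2p))$) applied with $q=\alpha/4$ and $m\geq 1/\alpha^{2}$ yields $\mathbb{P}(\tfrac{1}{m}\sum_{i=1}^{m} X_i\leq \alpha/4)\leq \exp(-0.05/\alpha)$ after pinning down the constants. On the complementary high-probability event, strictly more than an $\alpha/4$ fraction of the $\pi_i$'s satisfy $S_{\pi_i}\geq T$, and the definition of $\lambda'$ then forces $\lambda'\geq T$.

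The approach is conceptually elementary; the main delicate point is the bookkeeping needed to land on the advertised $\exp(-0.05/\alpha)$ bound. A naive Chernoff calculation with $p=\tfrac{2}{5}\alpha$ and $m=\alpha^{-2}$ gives an exponent of roughly $9/(320\alpha)\approx 0.028/\alpha$, so the tighter version must exploit either that $p\to \tfrac{1}{2}\alpha$ as $\alpha\to 0$ (making $\alpha/4$ essentially $p/2$), or the KL form above combined with a sharpened lower bound on $p$. Beyond carefully tracking these constants, no technique beyond the two-step averaging-plus-Bernoulli-concentration recipe should be required.
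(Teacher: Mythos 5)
Your Part 1 is exactly the paper's argument: split the average over the at-most-$\alpha$ fraction of indices with $S_k\geq\lambda_2$ (each bounded above by $\tfrac{3}{2}\|v\|_2^2$) and the rest (each at most $\lambda_2$), then feed in $\lambda_2\geq -\tfrac{1}{2}\|v\|_2^2$ to get $\mathbb{E}[S_k]-\lambda_2\leq 2\alpha\|v\|_2^2$. That part is correct and there is nothing to change.

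Part 2 has a genuine gap, and you already sense where it lies. Your direct averaging argument re-derives a lower bound on $p:=\mathbb{P}_\pi[S_\pi\geq T]$, but it only yields $p\geq\alpha/(2+\alpha)$, which for $\alpha\in(0,\tfrac12]$ is strictly below $\alpha/2$ and, as your own $\approx 0.028/\alpha$ estimate shows, is not enough to reach the advertised $e^{-0.05/\alpha}$. The missing move is that you should not re-derive $p$ from scratch at all: instead, apply Part 1 at level $\alpha/2$. It gives
\[
\lambda_2\bigl(X,Z,\mathcal{P}_K,\tfrac{\alpha}{2}\bigr)\;\geq\;\mathbb{E}[S]-2\cdot\tfrac{\alpha}{2}\|v\|_2^2\;=\;T\,.
\]
Because the empirical tail function $f(\lambda)=\frac{1}{K+1}\#\{k:S_k\geq\lambda\}$ is a left-continuous decreasing step function, its value at the infimum $\lambda_2(\cdot,\tfrac{\alpha}{2})$ still exceeds $\alpha/2$; since $T\leq\lambda_2(\cdot,\tfrac{\alpha}{2})$ and $f$ is decreasing, $p=f(T)\geq\alpha/2$. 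With this sharper $p$ the claimed constant falls out of the plain MGF bound (no multiplicative/KL Chernoff needed):
\[
\mathbb{P}\Bigl[\textstyle\sum_{i=1}^m a_i\leq\tfrac{\alpha}{4}m\Bigr]\leq e^{\alpha m/4}\,\bigl(1-p(1-e^{-1})\bigr)^m\leq \exp\Bigl(-\alpha m\bigl(\tfrac12(1-e^{-1})-\tfrac14\bigr)\Bigr)\,,
\]
and $\tfrac12(1-e^{-1})-\tfrac14\approx 0.066>0.05$, so $m\geq\alpha^{-2}$ gives $e^{-0.066/\alpha}\leq e^{-0.05/\alpha}$. In short: the fix is not a tighter concentration inequality or an asymptotic-in-$\alpha$ argument but the observation that Part 1, invoked with $\alpha'=\alpha/2$, already hands you $p\geq\alpha/2$.
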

\begin{proof}

Firstly, we observe that $\left\vert\frac{1}{2}v^{T}_{\pi_k}v+\Vert H^{Z_{\pi_k}}v\Vert^{2}_{2}\right\vert \leq \frac{1}{2}\Vert v\Vert^{2}_{2}+\Vert v\Vert^{2}_{2}=\frac{3}{2}\Vert v\Vert^{2}_{2}$, we also have: \\$ \frac{1}{2}v^{T}_{\pi_k}v+\Vert H^{Z_{\pi_k}}v\Vert^{2}_{2}\geq -\frac{1}{2}\Vert v\Vert^{2}_{2}$.
Therefore, we obtain:
\begin{align*}
\mathbb{E}\left[\frac{1}{2}v^{T}_{\pi_k}v+\Vert H^{Z_{\pi_k}}v\Vert^{2}_{2}\right]&\leq \lambda_2(X,Z,\mathcal{P}_{K},\alpha)\cdot(1-\alpha)+\alpha\cdot (2\Vert v\Vert^{2}_{2}+\lambda_2(X,Z,\mathcal{P}_{K},\alpha))\\
&\leq \lambda_2(X,Z,\mathcal{P}_{K},\alpha)+2\alpha\Vert v\Vert^{2}_{2}\,.
\end{align*}

When $\pi_i$ is sampled uniformly randomly from $\mathcal{P}_{K}$, we have:
\begin{align*}
\mathbb{P}\left[\frac{1}{2}v_{\pi_i}^{T}v+\Vert H^{Z_{\pi_i}}v\Vert^{2}_{2}<\mathbb{E}_{\pi_k}\left[ \frac{1}{2}v^{T}_{\pi_k}v+\Vert H^{Z}v_{\pi_k}\Vert^{2}_{2}\right]-\alpha\Vert v\Vert^{2}_{2}\right]\leq 1-\frac{1}{2}\alpha\,,
\end{align*}
which is because $\lambda_2(X,Z,\mathcal{P}_{K},\frac{1}{2}\alpha)\geq\mathbb{E}_{\pi_k}\left[ \frac{1}{2}v^{T}_{\pi_k}v+\Vert H^{Z}v_{\pi_k}\Vert^{2}_{2}\right]-\alpha\Vert v\Vert^{2}_{2}$.

Now we let $a_i = \mathbbm{1}\left\{ \frac{1}{2}v^{T}_{\pi_i}v+\Vert H^{Z_{\pi_i}}v\Vert^{2}_{2} \geq \mathbb{E}\left[\frac{1}{2}v^{T}_{\pi_i}v+\Vert H^{Z}v_{\pi_i}\Vert^{2}_{2}\right]-\alpha\Vert v\Vert^{2}_{2}    \right\}$, we have $\mathbb{E}[a_i]\geq \frac{1}{2}\alpha$.

Therefore, we can upper bound $\lambda^{'}$ by:
\begin{align*}
\mathbb{P}\left[ \lambda^{'}\geq \mathbb{E}\left[\frac{1}{2}v^{T}_{\pi_k}v+\Vert H^{Z}v_{\pi_k}\Vert^{2}_{2}\right]-\alpha\Vert v\Vert^{2}_{2}    \right]&\geq 1-\mathbb{P}\left[\sum_{i=1}^{m}a_i\leq \frac{1}{4}\alpha\cdot m\right]\\
&\geq 1-e^{\frac{1}{4}\alpha \cdot m}\mathbb{E}[e^{-\sum_{i=1}^{m}a_i}]\\
&\geq 1-e^{\frac{1}{4}\alpha^{-1}}(1-\frac{1}{2}\alpha+\frac{1}{2}\alpha \cdot e^{-1})^{-m}\\
&\geq 1-e^{-\alpha^{-1}(\frac{1}{2}(1-e^{-1})-\frac{1}{4})}\\
&\geq 1-e^{-0.05\alpha^{-1}}
\end{align*}
\end{proof}
\subsubsection{Proof of Theorem~\ref{thm: high prob bound of optimization objective}}
Define $X_{i}=\sum_{j=1}^{i}v_{j}$. $Y_{i}=\Vert X_{i}\Vert^{2}_{2}-\sum_{j=1}^{i}\Vert v_{i}\Vert^{2}_{2}$. Then we have the recurrence $Y_{i+1}=Y_{i}+2X_{i}^{T}v_{i+1}$.
We first present a lemma that provides an upper bound on $X_{i}$.

\begin{lemma}\label{lem: bound inner product}
Suppose that conditions (1),(2), and (3) are satisfied. Then, for any $w\in\mathcal{S}^{n-1}$, we have:
\begin{align*}
\mathbb{P}\left[\sup_{i}\vert X^{T}_{i}w\vert \leq \max\left(\sqrt{4S_w\ln\frac{2}{\delta}},2a\ln\frac{2}{\delta}\right) \right]\geq 1-\delta
\end{align*}
\end{lemma}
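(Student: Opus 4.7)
The plan is to recognize $M_i := X_i^T w = \sum_{j=1}^{i} v_j^T w$ as a sum of independent mean-zero random variables, each with $|v_j^T w| \leq \|v_j\|_2 \|w\|_2 \leq a$ by Cauchy--Schwarz and condition (2), and with total conditional variance $\sum_{j=1}^{m} \mathbb{E}[(v_j^T w)^2] = S_w$ by definition. Thus $(M_i)_{i\leq m}$ is a martingale with bounded differences and predictable quadratic variation bounded almost surely by $S_w$, which is exactly the setup for Freedman's maximal inequality (equivalently, a maximal version of Bernstein's inequality).

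First I would invoke Freedman's inequality to obtain, for each $t > 0$,
\[
\mathbb{P}\!\left[\sup_{i\leq m} M_i \geq t\right] \leq \exp\!\left(-\frac{t^2/2}{S_w + at/3}\right),
\]
and by the same argument applied to $-M_i$, a union bound yields
\[
\mathbb{P}\!\left[\sup_{i\leq m} |M_i| \geq t\right] \leq 2\exp\!\left(-\frac{t^2/2}{S_w + at/3}\right).
\]

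Next I would plug in $t = \max\bigl(\sqrt{4S_w \ln(2/\delta)},\, 2a \ln(2/\delta)\bigr)$ and check that the exponent is at least $\ln(2/\delta)$ in both regimes. In the variance-dominated regime $S_w \geq a^2 \ln(2/\delta)$, we have $t = \sqrt{4 S_w \ln(2/\delta)}$, and the condition $a\sqrt{\ln(2/\delta)} \leq \sqrt{S_w}$ gives $at/3 \leq \tfrac{2}{3}S_w$, so $S_w + at/3 \leq \tfrac{5}{3}S_w$ and $t^2/(2(S_w + at/3)) \geq \tfrac{6}{5}\ln(2/\delta) \geq \ln(2/\delta)$. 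In the boundedness-dominated regime $S_w < a^2 \ln(2/\delta)$, we have $t = 2a\ln(2/\delta)$, and $S_w + at/3 < \tfrac{5}{3}a^2 \ln(2/\delta)$, so $t^2/(2(S_w + at/3)) > \tfrac{6}{5}\ln(2/\delta)$ again. In either case the tail bound is at most $2\exp(-\ln(2/\delta)) = \delta$, which yields the stated inequality.

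The hard part, if any, is just choosing the correct maximal concentration inequality: Hoeffding on the supremum alone would give only the $\sqrt{S_w\ln(1/\delta)}$ term inflated by $\sqrt{\sum_j a^2}$ rather than by $S_w$, so exploiting the variance-proxy $S_w$ instead of the crude worst-case variance is essential. Freedman's inequality (or equivalently Bennett/Bernstein for martingales with its built-in maximal form) is the right tool, and the rest is the two-regime bookkeeping described above.
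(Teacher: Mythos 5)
Your proof is correct and follows essentially the same route as the paper's: both recognize $M_i = X_i^T w$ as a martingale with increments bounded by $a$ (Cauchy--Schwarz plus condition (2)) and with deterministic cumulative variance $S_w$, then apply an MGF-based maximal inequality and split into variance-dominated and boundedness-dominated regimes. The only cosmetic difference is that you invoke Freedman's inequality as a black box with the Bernstein denominator $S_w + at/3$, whereas the paper works from scratch via $\mathbb{E}[e^{\lambda x_i}] \le 1 + \lambda^2\mathbb{E}[x_i^2]$ for $|\lambda|\le 1/a$ and Doob's maximal inequality on the submartingale $e^{\lambda M_i}$, arriving at tail bounds $2e^{-t^2/(4S_w)}$ and $2e^{-t/(2a)}$; both give the stated conclusion, and your two-regime bookkeeping checks out.
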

\paragraph{Proof of Lemma~\ref{lem: bound inner product}.}
Let $x_{i}=v_{i}^{T}w$, with $\vert x_{i}\vert \leq a$ and $\mathbb{E}(x_{i})=0$. Consider $\mathbb{E}(e^{\lambda x_{i}})$. Note that for all $|\lambda| \leq \frac{1}{a}$, we have $\mathbb{E}(e^{\lambda x_{i}})\leq 1+\lambda^{2}\mathbb{E}[(x^{T}_{i}w)^{2}]$.
This follows from the inequality $e^x \leq 1 + x + x^2$ for $x \in [-1,1]$.

Therefore, for all $\lambda\leq \frac{1}{a}$, we have: $\mathbb{E}(e^{\lambda}(\sum_{i=1}^{m}x_i))\leq e^{\lambda^{2}S_w}$.

On the other hand, note that for any $\lambda$, the process $e^{\lambda X_i^T w}$ is a submartingale. Hence, for any $t > 0$, we have
\begin{align*}
\mathbb{P}\left[\sup_{i}e^{\lambda X^{T}_{i}w}\geq e^{\vert \lambda\vert t}\right]\leq e^{-\vert \lambda\vert t}\mathbb{E}e^{\lambda X^{T}_{n}w}\leq e^{-\vert\lambda\vert t+\lambda^{2}S_{w}}\,.
\end{align*}

If $\frac{t}{2S_w} \leq \frac{1}{a}$, i.e., $t \leq \frac{2S_w}{a}$, then by choosing $\lambda = \frac{t}{2S_w}$ or $\lambda = -\frac{t}{2S_w}$, we obtain:
\begin{align*}
\mathbb{P}\left[\sup_{i}\vert X^{T}_{i}w\vert\geq t   \right]\leq 2e^{-\frac{t^{2}}{4S_w}}\,.
\end{align*}

If $t \geq \frac{2S_w}{a}$, then by choosing $\lambda = \frac{1}{a}$ or $\lambda = -\frac{1}{a}$, we obtain
\begin{align*}
\mathbb{P}\left[\sup_{i}\vert X^{T}_{i}w\vert\geq t   \right]\leq 2e^{-\frac{t}{a}+\frac{S_w}{a^{2}}}\leq 2e^{-\frac{t}{2a}}\,.
\end{align*}
Thus, we conclude that:
\begin{align*}
\mathbb{P}\left[\sup_{i}\vert X^{T}_{i}w\vert\geq \max\left(\sqrt{4S_w\ln\frac{2}{\delta}},2a\ln\frac{2}{\delta}\right)   \right]\leq \delta
\end{align*}

\begin{lemma}\label{lem:bound norm}
Suppose that condition (1),(2),(3) of Theorem~\ref{thm: high prob bound of optimization objective} are satisfied. Then for any $k$ and $\epsilon>0$ there exists a constant $C$ such that $$
\mathbb{P}\left[\Vert X_{i}\Vert_{2}\geq\frac{1}{2}\sqrt{S}\ln^{\frac{1}{2}+\epsilon} n   \right]\leq Cn^{-k},\forall i
$$
\end{lemma}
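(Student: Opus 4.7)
The plan is to apply a vector-valued Bernstein/Pinelis concentration inequality to the Hilbert-space valued martingale $X_{i}=\sum_{j\le i}v_{j}$. A naive route through Lemma~\ref{lem: bound inner product} would try to bound $\Vert X_{i}\Vert_{2}=\sup_{w\in\mathcal{S}^{n-1}}X_{i}^{T}w$ by a union bound over an $\epsilon$-net of $\mathcal{S}^{n-1}$, but any such net has logarithmic size $\Theta(n)$, while the per-direction Bernstein bound from the lemma only supplies an $\exp(-\mathrm{polylog}(n))$ saving once $S_{w}\le S/\ln^{4}n$ and $a^{2}\le S/\ln^{4}n$ are plugged in; the two scales are incommensurate. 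Moving to a vector-valued concentration statement avoids paying this dimensional price.

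Concretely, I would invoke the Hilbert-space Bernstein inequality (Pinelis, 1994): for independent mean-zero $v_{j}$ with $\Vert v_{j}\Vert_{2}\le a$ almost surely and $\sigma^{2}:=\sum_{j\le i}\mathbb{E}\Vert v_{j}\Vert_{2}^{2}$,
\begin{equation*}
\mathbb{P}\bigl[\Vert X_{i}\Vert_{2}\ge t\bigr]\;\le\;2\exp\!\left(-\frac{t^{2}}{2\sigma^{2}+\tfrac{2}{3}at}\right).
\end{equation*}
By the hypotheses of the lemma, $\sigma^{2}\le S$ and $a\le\sqrt{S}/\ln^{2}n$. Substituting $t=\tfrac{1}{2}\sqrt{S}\,\ln^{1/2+\epsilon}n$ gives $t^{2}=\tfrac{1}{4}S\ln^{1+2\epsilon}n$ and $at\le\tfrac{1}{2}S\,\ln^{\epsilon-3/2}n$, so the denominator is $(2+o(1))S$ for large $n$. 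The resulting exponent is of order $-\ln^{1+2\epsilon}n$, which decays faster than any polynomial in $n$; for a fixed $k$ it is eventually dominated by $-k\ln n$, and one simply chooses the constant $C$ large enough to absorb the remaining finite values of $n$. This delivers the stated tail $Cn^{-k}$ uniformly in $i$.

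The main obstacle is importing the vector-valued Bernstein statement, since the only concentration tool developed inside the paper is the scalar Lemma~\ref{lem: bound inner product}. If one prefers a self-contained argument using only that machinery, I would instead exploit the decomposition
\begin{equation*}
\Vert X_{i}\Vert_{2}^{2}\;=\;\sum_{j\le i}\Vert v_{j}\Vert_{2}^{2}\;+\;2\sum_{k\le i}X_{k-1}^{T}v_{k},
\end{equation*}
bound the nonnegative sum by a scalar Bernstein argument on the i.i.d.\ bounded increments $\Vert v_{j}\Vert_{2}^{2}\in[0,a^{2}]$ (yielding $\sum_{j\le i}\Vert v_{j}\Vert_{2}^{2}\le 2S$ with super-polynomial probability once $a^{2}\le S/\ln^{4}n$), and handle the scalar martingale $Y_{i}=2\sum_{k\le i}X_{k-1}^{T}v_{k}$ by a Freedman-style bound combined with a bootstrap stopping time $\tau=\min\{k:\Vert X_{k}\Vert_{2}>\tfrac{1}{2}\sqrt{S}\ln^{1/2+\epsilon}n\}$. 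On $\{\tau>k-1\}$, the $k$-th increment of $Y$ has $|2X_{k-1}^{T}v_{k}|\le a\sqrt{S}\ln^{1/2+\epsilon}n$ and conditional variance at most $4\Vert X_{k-1}\Vert_{2}^{2}\cdot \sigma_{k}^{2}$ where $\sum_{k}\sigma_{k}^{2}\le S/\ln^{4}n$ by condition~(3), so the predictable quadratic variation up to $i\wedge\tau$ is at most $S^{2}\ln^{2\epsilon-3}n$, and Freedman's inequality closes the induction. Either route reduces the lemma to cashing in the $\ln^{4}n$ slack in conditions~(2) and (3) against the comparatively smaller target slack $\ln^{1+2\epsilon}n$; verifying this accounting carefully is the only delicate point of the argument.
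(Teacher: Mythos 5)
Your primary route via the Pinelis Hilbert-space Bernstein inequality is correct and is a genuinely different argument from the paper's. The paper never squares the norm: it sets $Z_i=e^{\lambda\Vert X_i\Vert_2}$ with $\lambda=\ln^{0.5}n/(2a+\sqrt{S})$, shows $Z_{i+1}\le n$ on $\{\Vert X_i\Vert_2\le a+\sqrt{S}\}$, and on the complementary event derives $\mathbb{E}[Z_{i+1}\mid Z_i]\le Z_i(1+2\lambda^2 t_{i+1})$ from the elementary inequality $\Vert X_{i+1}\Vert_2\le\Vert X_i\Vert_2+X_i^{T}v_{i+1}/\Vert X_i\Vert_2+\Vert v_{i+1}\Vert_2^{2}/(2\sqrt{S})$; iterating and applying Markov gives $\mathbb{P}[\Vert X_i\Vert_2\ge t]\le 2n^{4}e^{-\lambda t}$, which with $t=\tfrac12\sqrt{S}\ln^{1/2+\epsilon}n$ is super-polynomially small. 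This is essentially a hand-rolled version of the concentration that Pinelis packages, and both proofs use only conditions (1) and (2) --- condition (3) enters the proof of Theorem~\ref{thm: high prob bound of optimization objective} only through Lemma~\ref{lem: bound inner product}, never through Lemma~\ref{lem:bound norm}. Your route buys brevity by importing an external theorem; the paper's buys self-containedness at the cost of a two-case supermartingale analysis. Your bookkeeping ($\sigma^{2}\le S$, $a\le\sqrt{S}/\ln^{2}n$, $t^{2}=\tfrac14 S\ln^{1+2\epsilon}n$) is sound for every $\epsilon>0$; even when $\epsilon\ge 3/2$ the $at/3$ term merely reduces the exponent from order $\ln^{1+2\epsilon}n$ to order $\ln^{5/2+\epsilon}n$, still super-polynomial.

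Your secondary ``self-contained'' route has a genuine gap in the variance accounting for the Freedman step. You write that the predictable quadratic variation of $Y_i=2\sum_{k\le i}X_{k-1}^{T}v_k$ (up to the stopping time) is at most $S^{2}\ln^{2\epsilon-3}n$ because ``$\sum_k\sigma_k^{2}\le S/\ln^{4}n$ by condition (3)''. But here $\sigma_k^{2}=\mathbb{E}[(\hat{X}_{k-1}^{T}v_k)^{2}\mid\mathcal{F}_{k-1}]$ with $\hat{X}_{k-1}=X_{k-1}/\Vert X_{k-1}\Vert_2$ a random, $k$-dependent direction. Condition (3) bounds $\sup_{w}\sum_k\mathbb{E}[(w^{T}v_k)^{2}]\le S/\ln^{4}n$, i.e.\ the supremum over a \emph{single fixed} direction; it does not control $\sum_k\mathbb{E}[(\hat{X}_{k-1}^{T}v_k)^{2}]$, which can be as large as $\sum_k\sup_w\mathbb{E}[(w^{T}v_k)^{2}]$ --- potentially $\Theta(S)$, e.g.\ when each $v_k$ is nearly rank-one in a direction that the partial sum happens to track. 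Without that bound the predictable variation is not small and Freedman does not close. The paper avoids this precisely by working with $\Vert X_i\Vert_2$ rather than $\Vert X_i\Vert_2^{2}$: in its case-2 estimate the only second-moment that appears is $\mathbb{E}\Vert v_{i+1}\Vert_2^{2}=t_{i+1}$ (summing to $\le S$), so no per-direction variance, and hence no condition (3), is ever invoked. A self-contained repair would be to reproduce the paper's $e^{\lambda\Vert X_i\Vert_2}$ supermartingale rather than decompose the squared norm.
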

\begin{proof}
Let $\lambda = \frac{\ln^{0.5} n}{2a+\sqrt{S}}$. We consider $Z_{i}=e^{\lambda\Vert X_{i}\Vert_{2}}$.
We distinguish between two cases:

\textbf{Case 1}: $\Vert X_{i}\Vert_2\leq a+\sqrt{S}$. In this case, $\Vert X_{i+1}\Vert_{2}\leq 2a+\sqrt{S}$, implying that $Z_{i+1}\leq n$. Thus, we have $\mathbb{E}(Z_{i+1})\leq n+Z_{i}$.

\textbf{Case 2}: $\Vert X_{i}\Vert_2\geq a+\sqrt{S}$. In this case, we have
\begin{align*}
\Vert X_{i+1}\Vert^{2}_{2}&=\Vert X_{i}\Vert^{2}+2X_{i}^{T}v_{i+1}+\Vert v_{i+1}\Vert^{2}_{2}\\
&=\left[\Vert X_{i}\Vert_{2}+\frac{X^{T}_{i}v_{i+1}}{\Vert X_{i}\Vert_2}\right]^{2}+\Vert v_{i+1}\Vert^{2}_{2}-\frac{(X^{T}_{i}v_{i+1})^{2}}{\Vert X_{i}\Vert^{2}_2}\,.
\end{align*}
By applying the inequality that for $a > 0$ and $b \geq 0$, $\sqrt{a+b}\leq \sqrt{a}+\frac{b}{2\sqrt{a}}$, and noting that $\Vert X_{i}\Vert_{2}+\frac{X^{T}_{i}v_{i+1}}{\Vert X_{i}\Vert_2}\geq \sqrt{S}$, we obtain:
$$
\Vert X_{i+1}\Vert \leq \Vert X_{i}\Vert_{2}+\frac{X^{T}_{i}v_{i+1}}{\Vert X_{i}\Vert_2}+\frac{1}{2\sqrt{S}}\left( \Vert v_{i+1}\Vert^{2}_{2}-\frac{(X^{T}_{i}v_{i+1})^{2}}{\Vert X_{i}\Vert^{2}_2}    \right)\,.
$$

Since $\lambda=\frac{\ln^{0.5} n}{2a+\sqrt{S}}\leq \frac{1}{a}$, and $\vert\Vert X_{i+1}\Vert_2-\Vert X_i\Vert_2\vert\leq \Vert X_{i+1}-X_i\Vert_2=\Vert v_{i+1}\Vert_2\leq a$, we obtain:
\begin{align*}
\frac{Z_{i+1}}{Z_{i}}&=e^{\lambda(\Vert X_{i+1}\Vert_{2}-\Vert X_{i}\Vert_{2})}\\
&\leq 1+\lambda(\Vert X_{i+1}\Vert_{2}-\Vert X_{i}\Vert_{2})+\lambda^{2}(\Vert X_{i+1}\Vert_{2}-\Vert X_{i}\Vert_{2})^{2}\\
&\leq 1+\lambda(\Vert X_{i+1}\Vert_{2}-\Vert X_{i}\Vert_{2})+\lambda^{2}\Vert v_{i+1}\Vert^{2}_{2}\\
&\leq 1+\lambda\left[\frac{X^{T}_{i}v_{i+1}}{\Vert X_{i}\Vert_2}+\frac{1}{2\sqrt{S}}\left( \Vert v_{i+1}\Vert^{2}_{2}-\frac{(X^{T}_{i}v_{i+1})^{2}}{\Vert X_{i}\Vert^{2}_2}\right)   \right]+\lambda^{2}\Vert v_{i+1}\Vert^{2}_{2}\\
&\leq 1+\lambda\frac{X^{T}_{i}v_{i+1}}{\Vert X_{i}\Vert_2}+\frac{\lambda}{2\sqrt{S}}\Vert v_{i+1}\Vert^{2}_{2}+\lambda^{2}\Vert v_{i+1}\Vert^{2}_{2}\\
&\leq 1+\lambda\frac{X^{T}_{i}v_{i+1}}{\Vert X_{i}\Vert_2}+2\lambda^{2}\Vert v_{i+1}\Vert^{2}_{2}\,.
\end{align*}
This implies that
$
\mathbb{E}(Z_{i+1}\vert Z_{i})\leq Z_{i}(1+2\lambda^{2}t_{i+1})
$
since $\mathbb{E}\left[\frac{X^{T}_{i}v_{i+1}}{\Vert X_{i}\Vert_{2}}\right]=0 $.

Combining the two cases, we obtain
\begin{align}
\mathbb{E}(Z_{i+1}\vert Z_{i})\leq (1+2\lambda^{2} t_{i+1})Z_{i}+n\,.
\end{align}
This implies
$
\mathbb{E}[Z_{i}\prod_{j=1}^{i}(1+2\lambda^{2}t_{j})^{-1}]\leq Z_{0}+i\cdot n = 1+i\cdot n\,.
$

Thus, for any $i$, 
\begin{align*}
\mathbb{P}\left[\Vert X_{i}\Vert_{2}\geq\frac{1}{2}\sqrt{S}\ln^{\frac{1}{2}+\epsilon} n \right]
&=\mathbb{P}\left[Z_{i}\geq e^{\frac{\sqrt{S}}{4a+2\sqrt{S}}\ln^{1+\epsilon} n}\right]\\
&\leq e^{-\frac{\sqrt{S}}{4a+2\sqrt{S}}\ln^{1+\epsilon} n}\mathbb{E}\left[Z_{i}\right]\\
&\leq C_0 e^{-\frac{1}{3}\ln^{1+\epsilon}n}\cdot \prod_{j=1}^{n}(1+2\lambda^{2}t_j)(1+n^{2})\\
&\leq C_0 e^{-\frac{1}{3}\ln^{1+\epsilon}n}\cdot e^{2\lambda^{2}S}(1+n^{2})\\
&\leq 2C_{0}n^{4}e^{-\frac{1}{3}\ln^{1+\epsilon}n}\,.
\end{align*}
where $C_0$ is a constant independent of $i$ and $n$.
\end{proof}
We now derive an upper bound for $Y_i$, where $Y_{i}=Y_{i-1}+2X^{T}_{i-1}v_{i}$, $\mathbb{E}\left[X^{T}_{i-1}v_i\right]=0$.
Let $V_{i}=\mathbb{E}_{v_{i}}[(X^{T}_{i-1}v_i)^{2}]=X^{T}_{i-1}Q_{i}X_{i-1}$, where $Q_i = \mathbb{E}[v_i v_i^T]$ is symmetric and positive semidefinite, with $tr(Q_{i})=t_{i}$.
Now consider a random variable $W_i$, defined over $v_1, v_2, \dots, v_m$, as follows:
$$
W_{i} = e^{\lambda Y_{i}}\mathbbm{1}\left\{ \vert 2X^{T}_{j}v_{j+1}\vert \leq a\sqrt{S}\ln^{0.75} n\text{ and }V_{i+1}\leq \frac{St_{i+1}}{\ln^{2}n} ,\forall j=1,2,...,i-1        \right\}\,.
$$
For $\vert \lambda\vert\leq\frac{1}{a\sqrt{S}\ln^{0.75} n}$, we have:
\begin{align*}
\mathbb{E}[W_{i+1}\vert W_{i}]\leq W_{i}\left[ 1+\lambda^{2}\mathbb{E}\left[(2X^{T}_{i}v_{i+1})^{2}\right]\right]\leq W_{i}(1+4\lambda^{2}\frac{St_{i+1}}{\ln^{2}n})\,.
\end{align*}

We now derive an upper bound for $\mathbb{P}\left[V_{i}\geq \frac{St_i}{\ln^{2}n}\right]$.
Let the eigenvectors of $Q_i$ be ${w_1^{(i)}, \dots, w_n^{(i)}}$ with corresponding eigenvalues $\lambda_1, \lambda_2, \dots, \lambda_n$. Then we have:
$$
X^{T}_{i-1}Q_{i}X_{i-1}=\sum_{j=1}^{n}\lambda_{i}(X^{T}_{i-1}w^{i}_{j})^{2}\leq t_{i}\sup_{j}(X^{T}_{i-1}w^{i}_{j})^{2}\,.
$$
By applying Lemma~\ref{lem: bound inner product} with $\ln\frac{2}{\delta} = \frac{1}{4}\ln^2 n$, we obtain:
\begin{align*}
\mathbb{P}\left[V_{i}\geq \frac{St_i}{\ln^{2}n}\right]\leq ne^{-\frac{1}{4}\ln^{2}n}\,.
\end{align*}
Let $\lambda = \frac{\ln^{1.25} n}{S}$, by setting $\epsilon=0.25$ in Lemma~\ref{lem:bound norm} we can obtain that:
\begin{align*}
\mathbb{P}\left[Y_{n}\geq \frac{1}{2}cS\right]&\leq \frac{\mathbb{E}[W_{n}]}{e^{\frac{1}{2}cS\lambda}}+\mathbb{P}\left[\exists i,s.t. \vert  X^{T}_{i}v_{i+1}\vert\geq \frac{1}{2}a\sqrt{S}\ln^{0.75} n\right]+\mathbb{P}\left[\exists i,s.t. V_{i}\geq\frac{St_i}{\ln^{2}n}\right]\\
&\leq 2n^{4} e^{-\frac{1}{2}c\ln^{1.25}n}+\mathbb{P}\left[\sup_{i}\Vert X_{i}\Vert_2 \geq \frac{1}{2}\sqrt{S}\ln^{0.75} n\right]\\
&\leq  2n^{4} e^{-\frac{1}{2}c\ln^{1.25}n}+C_{0}n^{5}e^{-\frac{1}{3}\ln^{1.25}n}\,.
\end{align*}
Finally, we derive an upper bound for $\mathbb{P}\left[\sum_{i=1}^{n}\Vert v_{i}\Vert^{2}_{2}\geq \sum_{i=1}^{n}t_i+\frac{1}{2}cS\right]$. 
Consider $x_i=e^{\lambda\sum_{j=1}^{n}\Vert v_{i}\Vert^{2}}$ with $\lambda \leq  \frac{1}{a^{2}}$. Then we obtain:
\begin{align*}
\mathbb{E}[x_{i+1}\vert x_i]&\leq x_{i}(1+\lambda\mathbb{E}(\Vert v_{i+1}\Vert^{2}_2)+\lambda^{2}\mathbb{E}[\Vert v_{i+1}\Vert^{4}_{2}]\\
&\leq x_{i}(1+\lambda t_{i+1}+\lambda^{2}a^{2}t_{i+1})\\
&\leq x_i e^{\lambda t_{i+1}+\lambda^{2}a^{2}t_{i+1}}\,.
\end{align*}
Since $x_0=1$, we obtain $\mathbb{E}[x_{m}]\leq e^{\lambda \sum_{i=1}^{m}t_{i}(1+a^{2}\lambda)}$. Therefore, 
\begin{align*}
\mathbb{P}\left[\sum_{i=1}^{m}\Vert v_{i}\Vert^{2}_{2}\geq \sum_{i=1}^{m}t_{i}+\frac{1}{2}cS\right]\leq e^{(-\frac{1}{2}c+a^{2}\lambda)\lambda S}\,. 
\end{align*}
Choosing $\lambda = \min(\frac{c}{4a^{2}},\frac{\ln^{1.25}n}{S})$, we obtain
$$
\mathbb{P}\left[\sum_{i=1}^{m}\Vert v_{i}\Vert^{2}_{2}\geq \sum_{i=1}^{m}t_{i}+\frac{1}{2}cS\right]\leq e^{-\min(\frac{c^{2}S}{4a^{2}},\frac{1}{4}\ln^{1.25}n)}
$$

Combining with the upper bound of $Y_{i}$ we finally obtain Theorem~\ref{thm: high prob bound of optimization objective}.

\subsubsection{Proof of Proposition~\ref{proposition: high probability bound on group decomposition}}

\begin{proof}
Let $v^* = \bar{v} H^{Z} \vec{1}$. We first examine the properties of $\mathbb{E}[u_i]$. By writing $w_i = H^{Z} e_i$, we obtain:
$$
\mathbb{E}(u_{i})=\mathbb{E}\left[\sum_{j\in S_{i}}(v_{j}-\bar{v})w_{\pi(j)}\right]=\frac{1}{\vert S_{i}\vert}\left[\sum_{j\in S_{i}}(v_{j}-\bar{v})\right]\left[\sum_{j\in S_{i}}w_{j}\right]\,.
$$
When condition (1) holds, since $\Vert \sum_{j\in S_{i}} w_{j}\Vert^{2}_{2}=\Vert H^{Z}\sum_{j\in S_{i}}e_{j}\Vert^{2}_{2}\leq \vert S_{j}\vert$, we obtain:
\begin{align*}
&\Vert \mathbb{E}(u_{i})\Vert^{2}_{2}\leq \frac{1}{\vert S_{i}\vert}\left[\sum_{j\in S_{i}}(v_j-\bar{v})\right]^{2}\in o\left(\frac{\vert S_{i}\vert}{n}\right)\sum_{i=1}^{n}(v_{i}-\bar{v})^{2}\,,\\
&\Vert \sum_{i=1}^{k}\mathbb{E}(u_i)\Vert^{2}_{2}=\left\Vert H^{Z}\sum_{i=1}^{k}\frac{1}{\vert S_{i}\vert}\sum_{j\in S_{i}}(v_j-\bar{v})\sum_{j\in S_{i}}e_{j} \right\Vert^{2}_{2}\leq \sum_{i=1}^{k}\frac{1}{\vert S_{i}\vert}\left[\sum_{j\in S_{i}}(v_j-\bar{v})\right]^{2}\in o\left(\sum_{i=1}^{n}(v_{i}-\bar{v})^{2}\right)\,.
\end{align*}

Now we compute $\Vert H^{Z}v_{\pi}\Vert^{2}_{2}$ by:
\begin{align*}
\Vert H^{Z}v_{\pi}\Vert^{2}_{2}&=\Vert v^{*}+\sum_{i=1}^{k}u_{i}\Vert^{2}_{2}\\
&=\Vert v^{*}+\sum_{i=1}^{k}\mathbb{E}(u_{i})\Vert^{2}_{2}+\Vert \sum_{i=1}^{k}[u_{i}-\mathbb{E}(u_{i})]\Vert^{2}_{2}\\
&+2\langle v^{*}+\sum_{i=1}^{k}\mathbb{E}(u_{i}),\sum_{i=1}^{k}[u_{i}-\mathbb{E}(u_{i})]\rangle\,,    
\end{align*}
where we directly obtain $\mathbb{E}[\Vert H^{Z}v_{\pi}\Vert^{2}_{2}]=\Vert v^{*}+\sum_{i=1}^{k}\mathbb{E}(u_{i})\Vert^{2}_{2}+\sum_{i=1}^{k}\mathbb{E}[\Vert u_{i}-\mathbb{E}(u_i)   \Vert^{2}_{2}]$. Furthermore, the first term can be bounded by

$\Vert v^{*}+\sum_{i=1}^{k}\mathbb{E}(u_{i})\Vert^{2}_{2}\leq (1+\frac{1}{3}c) \Vert v^{*}\Vert^{2}_{2}+(1+\frac{3}{c})\Vert \sum_{i=1}^{k}\mathbb{E}(u_i)\Vert^{2}_{2}\leq (1+c) \Vert v^{*}\Vert^{2}_{2}+o(1)\sum_{i=1}^{n}(v_i-\bar{v})^{2}$.
Here we use the fact that $\Vert u+v\Vert^{2}_{2}=\Vert u\Vert^{2}_{2}+\Vert v\Vert^{2}_{2}+2u^{T}v\leq \Vert u\Vert^{2}+\Vert v\Vert^{2}+(\frac{1}{t}\Vert u\Vert^{2}_{2}+t\Vert v\Vert^{2}_{2})$.

The second term is bounded by $\sum_{i=1}^{k}\mathbb{E}[\Vert u_{i}-\mathbb{E}(u_i)\Vert^{2}_{2}]+\frac{1}{3}c\sum_{i=1}^{n}(v_{i}-\bar{v})^{2}$, w.h.p.$(n\to\infty)$ by applying Theorem~\ref{thm: high prob bound of optimization objective} with $S=\sum_{i=1}^{n}(v_{i}-\bar{v})^{2}$. It can be easily verified that for any $i$, $\Vert u_{i}-\mathbb{E}(u_i)\Vert_{2}^{2}\leq \sum_{j\in S_{i}}\left[(v_j-\bar{v})-\frac{1}{\vert S_{i}\vert}\sum_{l\in S_{i}}(v_l-\bar{v})\right]^{2}\leq\sum_{j\in S_{i}}(v_{j}-\bar{v})^{2}\leq \frac{1}{\ln^{4}n}S$.
Now we show that for any $w\in \mathcal{S}^{n-1}$, we have $\sum_{i=1}^{n}(w^{T}_{i}w)^{2}\leq 1$.
Consider any $a_{1},...,a_{n}\in\mathbb{R}$, we have: 
$$
w^{T}\sum_{i=1}^{n}a_{i}w_{i}=\sum_{i=1}^{n}a_{i}\cdot w^{T}w_{i}.
$$
It can also be obtained that
$$
w^{T}\sum_{i=1}^{n}a_{i}w_{i}=(H^{Z}w)^{T}\sum_{i=1}^{n}a_{i}e_{i}\leq\sqrt{\sum_{i=1}^{n}a^{2}_{i}}
$$
Let $a_{i}=w^{T}w_{i}$, we must have:
$$
\sum_{i=1}^{n}(w^{T}w_{i})^{2}\leq\sqrt{\sum_{i=1}^{n}(w^{T}w_{i})^{2}}\,,
$$
implying that $\sum_{i=1}^{n}(w^{T}w_{i})^{2}\leq 1$. 

Now we upper bound $S_{w}$ as follows:
\begin{align*}
S_{w}&=\sum_{i=1}^{k}\mathbb{E}[(w^{T}(u_{i}-\mathbb{E}(u_i)))^{2}]\\
&\leq \sum_{i=1}^{k}\left[\sum_{j\in S_{i}}(v_j-\frac{1}{\vert S_{i}\vert}\sum_{l\in S_{i}}v_l)^{2}\right]\left[\sum_{j\in S_{i}}(w^{T}_{j}w)^{2}\right]\\
&\leq \max_{i}\sum_{j\in S_{i}}(v_{j}-\bar{v})^{2}\\
&\leq\frac{1}{\ln^{4}n}S\,.
\end{align*}
Here, for the second step, we use the decomposition $u_i-\mathbb{E}(u_i)=\sum_{j\in S_i}v_jw_{\sigma_i(j)}-\frac{1}{\vert S_i\vert}\sum_{l\in S_i}v_l\sum_{l\in S_i}w_l$ and apply the Cauchy–Schwarz inequality: $(\sum_{i=1}^{m}a_{i}b_{i})^{2}\leq (\sum_{i=1}^{m} a^{2}_{i})(\sum_{i=1}^{m}b^{2}_{i})$. 

The third step follows from the fact that $\sum_{i=1}^n (w_i^T w)^2 \leq 1$.
Thus, the conditions of Theorem~\ref{thm: high prob bound of optimization objective} are satisfied.

It now remains to upper bound $2\langle v^{*}+\sum_{i=1}^{k}\mathbb{E}(u_{i}),\sum_{i=1}^{k}[u_{i}-\mathbb{E}(u_{i})]\rangle$.
Let $w^{*}=v^{*}+\sum_{i=1}^{k}\mathbb{E}(u_i)$, $r_i=(w^{*})^{T}(u_{i}-\mathbb{E}(u_{i}))$. Then we have $\mathbb{E}(r_i) = 0$, and 
$$\vert r_{i}\vert \leq\sqrt{\sum_{j\in S_i}(v_{j}-\frac{1}{\vert S_i\vert}\sum_{l\in S_i}v_l)^{2}}\Vert w^{*}\Vert_{2}\leq \sqrt{\sum_{j\in S_i}(v_{j}-\bar{v})^{2}}\Vert w^{*}\Vert_{2}\,,
$$
$$\sum_{i=1}^{k}\mathbb{E}(r^{2}_{i})\leq \Vert w^{*}\Vert^{2}_{2}\cdot\sup_{w\in \mathcal{S}^{n-1}}S_{w}\leq \frac{1}{\ln^{4}n}\Vert w^{*}\Vert^{2}_{2}\sum_{i=1}^{n}(v_{i}-\bar{v})^{2}\,.$$

Now, we show that, for any constant $c^{'}>0$,
$$
\mathbb{P}\left[\vert \sum_{i=1}^{k}r_{i}\vert\geq c^{'}\sqrt{\sum_{i=1}^{n}(v_{i}-\bar{v})^{2}}\Vert w^{*}\Vert_{2}\right]\to 0\,.
$$
We can bound $\sum_{i=1}^{k}r_{i}$ as follows:
For $\vert \lambda\vert \leq \frac{1}{\max_{i}(\sqrt{\sum_{j\in S_{i}}(v_j-\bar{v})^{2}})\Vert w^{*}\Vert}$, we have:
\begin{align*}
\mathbb{E}(e^{\lambda\sum_{i=1}^{k}r_{i}})&= \prod_{i=1}^{k} \mathbb{E}(e^{\lambda r_{i}})\\
&\leq \prod_{i=1}^{k}(1+\lambda^{2}\mathbb{E}(r^{2}_{i}))\\
&\leq e^{\frac{1}{\ln^{4}n}\lambda^{2}\Vert w^{*}\Vert^{2}_2\sum_{i=1}^{n}(v_i-\bar{v})^{2}}
\end{align*}

Therefore, by rewriting $\lambda = \frac{\mu}{\Vert w^{*}\Vert_2\sqrt{\sum_{i=1}^{n}(v_{i}-\bar{v})^{2}}}$, we obtain:
\begin{align*}
\mathbb{P}\left[\vert \sum_{i=1}^{k}r_{i}\vert\geq t\Vert w^{*}\Vert_2\sqrt{\sum_{i=1}^{n}(v_{i}-\bar{v})^{2}}\right]\leq 2e^{\frac{1}{\ln^{4} n}\mu^{2}-\mu t}
\end{align*}
Where $\mu$ can range over
$$
\mu\leq \frac{\sqrt{\sum_{i=1}^{n}(v_{i}-\bar{v})^{2}}}{\max_{i}(\sqrt{\sum_{j\in S_{i}}(v_j-\bar{v})^{2}})}\,.
$$
By condition (3), we can simply let $\mu = \ln^{1.9} n$ and $t=\ln^{-0.5}n$, and we obtain:
\begin{align*}
\mathbb{P}\left[\vert \sum_{i=1}^{k} r_{i}\vert \geq 2\ln^{-0.5}(n)\Vert w^{*}\Vert_{2} \sqrt{\sum_{i=1}^{n}(v_i-\bar{v})^{2}}   \right]\leq O(e^{-\ln^{1.4}(n)})\,.
\end{align*}

This implies that, as $n\to \infty$,
\begin{align*}
2\langle v^{*}+\sum_{i=1}^{k}\mathbb{E}(u_{i}),\sum_{i=1}^{k}[u_{i}-\mathbb{E}(u_{i})]\rangle \leq \frac{1}{3}c\left[\Vert v^{*}+\sum_{i=1}^{k}\mathbb{E}(u_i)\Vert^{2}_{2}+\sum_{i=1}^{n}(v_i-\bar{v})^{2}\right], w.h.p.
\end{align*}
Combining this with the previously established result that, with high probability,
$$\Vert \sum_{i=1}^{k}(u_{i}-\mathbb{E}(u_{i}))\Vert^{2}_{2}\leq \sum_{i=1}^{k}\Vert u_{i}-\mathbb{E}(u_{i})\Vert^{2}_{2}+\frac{1}{3}c\sum_{i=1}^{n}(v_{i}-\bar{v})^{2}\,,$$
we obtain
\begin{align*}
\Vert H^{Z}v_{\pi}\Vert^{2}_{2}&\leq (1+\frac{1}{3}c)\left[\Vert v^{*}+\sum_{i=1}^{k}\mathbb{E}(u_{i})\Vert^{2}_{2}+\sum_{i=1}^{k}\mathbb{E}[\Vert u_{i}-\mathbb{E}(u_{i})\Vert^{2}_{2}]  \right]+c\sum_{i=1}^{n}(v_i-\bar{v})^{2}\\
&=(1+\frac{1}{3}c)\mathbb{E}[\Vert H^{Z}v_{\pi}\Vert^{2}_{2}]+c\sum_{i=1}^{n}(v_{i}-\bar{v})^{2}, w.h.p.
\end{align*}
This directly implies the conclusion in Proposition~\ref{proposition: high probability bound on group decomposition}.

Finally, we upper bound $v^{T}_{\pi_k}v$, which can be directly expanded by
\begin{align*}
 v^{T}_{\pi}v  =  \sum_{i=1}^{n}v_{\pi(i)}(v_i-\bar{v})+\bar{v}\sum_{i=1}^{n}v_{\pi(i)}=  (v_{\pi}-\bar{v}\cdot\vec{1})^{T}(v-\bar{v}\cdot \vec{1})+n\bar{v}^{2}\,.
\end{align*}
The expectation of the first term, $(v_{\pi}-\bar{v}\cdot\vec{1})^{T}(v-\bar{v}\cdot \vec{1})$, can be further rewritten as follows:
$$
\mathbb{E}[(v_{\pi}-\bar{v}\cdot\vec{1})^{T}(v-\bar{v}\cdot \vec{1})]=\sum_{i=1}^{k}\frac{1}{\vert S_{i}\vert}\left[\sum_{j\in S_{i}}(v_{j}-\bar{v})\right]^{2}\,.
$$
Therefore, condition~\eqref{condition for group decomposition} guarantees a near-optimal value of $\mathbb{E}[v_{\pi}^T v]$. Furthermore, it can be shown that conditions (1) and (3) also imply the following upper bound:
\begin{align}
 v^{T}_{\pi}v \leq \mathbb{E}[ v^{T}_{\pi}v]+o(1)\sum_{i=1}^{n}(v_{i}-\bar{v})^{2}\,, w.h.p.
\end{align}
This is because, for all $i$ we always have: $\vert \sum_{j\in S_{i}}(v_{\pi(j)}-\bar{v})(v_j-\bar{v})\vert\leq\sum_{j\in S_{i}}(v_j-\bar{v})^{2}$. By setting $A_i = \sum_{j\in S_{i}}(v_{\pi(j)}-\bar{v})(v_j-\bar{v})$, $\lambda=\frac{\ln^{2}n}{\sum_{i=1}^{n}(v_{i}-\bar{v})^{2}}$, we have:
\begin{align*}
\mathbb{E}[e^{\lambda A_i}]\leq 1+\lambda\mathbb{E}[A_i]+\lambda^{2}\mathbb{E}[A^{2}_{i}]\leq 1+\lambda\mathbb{E}[A_i]+\frac{\sum_{j\in S_{i}}(v_j-\bar{v})^{2}}{\sum_{i=1}^{n}(v_i-\bar{v})^{2}}
\end{align*}
This implies that 
\begin{align*}
\mathbb{P}\left[v^{T}_{\pi}v\geq t\right]&=\mathbb{P}\left[\sum_{i=1}^{k}A_i\geq t\right]\\
&\leq e^{-\lambda t}\mathbb{E}[e^{\lambda\sum_{i=1}^{k}A_{i}}]\\
&\leq e^{-\lambda t}\prod_{i=1}^{k}1+\lambda \mathbb{E}[A_{i}]+{\sum_{i=1}^{n}(v_i-\bar{v})^{2}}\\
&\leq e^{-\lambda t}\prod_{i=1}^{k}e^{\lambda \mathbb{E}[A_{i}]+{\sum_{i=1}^{n}(v_i-\bar{v})^{2}}}\\
&=exp\left(\lambda(-t+\sum_{i=1}^{k}\mathbb{E}[A_i]+1)\right)\,.
\end{align*}
Therefore, we obtain:
\begin{align*}
\mathbb{P}\left[v^{T}_{\pi}v\geq \mathbb{E}[v^{T}_{\pi}v]+\frac{\sum_{i=1}^{n}(v_i-\bar{v})^{2}}{\ln^{0.5}n}\right]\leq e^{1-\ln^{-1.5}(n)}
\end{align*}
and similarly we can also obtain
\begin{align*}
\mathbb{P}\left[v^{T}_{\pi}v\leq \mathbb{E}[v^{T}_{\pi}v]-\frac{\sum_{i=1}^{n}(v_i-\bar{v})^{2}}{\ln^{0.5}n}\right]\leq e^{1-\ln^{-1.5}(n)}
\end{align*}
\end{proof}
We now examine for which distributions of $v = (I - H^{Z})X$ the conditions in Proposition~\ref{proposition: high probability bound on group decomposition} are satisfied. Regarding the first condition, since
\begin{align*}
\mathbb{E}(u_{i})=\mathbb{E}\left[\sum_{j\in S_{i}}(v_{\pi(j)}-\bar{v})w_{j}\right]=\frac{1}{\vert S_{i}\vert}\left[\sum_{j\in S_i}(v_{j}-\bar{v})\right]\left[ \sum_{j\in S_{i}}w_{j}\right]\,,
\end{align*}
by denoting $s_{i}=\sum_{j\in S_i}(v_j-\bar{v})$, and $\vec{1}_{i}=\sum_{j\in S_i}e_{j}$, we obtain:
\begin{align*}
\left\Vert \sum_{i=1}^{k}\mathbb{E}(u_i) \right\Vert^{2}_{2}&=\left\Vert\sum_{i=1}^{k}\frac{1}{\vert S_i\vert}s_iH^{Z}\vec{1}_{i}  \right\Vert_{2}^{2}\leq\left\Vert\sum_{i=1}^{k}\frac{1}{\vert S_i\vert}s_i\vec{1}_{i}  \right\Vert_{2}^{2}=\sum_{i=1}^{k}\frac{1}{\vert S_i\vert}s^{2}_{i}.
\end{align*}

To ensure that $\Vert \sum_{i=1}^{k}\mathbb{E}(u_i)\Vert_{2}\leq o(1)\sqrt{\sum_{i=1}^{n}(v_i-\bar{v})^{2}}$, a sufficient set of constraints on ${s_i}$ and ${S_i}$ is that $s^{2}_{i}\in O(\sum_{i=1}^{n}(v_{i}-\bar{v})^{2})$ and $\vert S_{i}\vert\geq n^{0.55}$. These constraints are incorporated into our algorithm design.

Conditions (2) and (4) are mild and commonly satisfied for typical data distributions; in particular, they hold when the residual $X - H^{Z}X$ is not concentrated on a small number of coordinates.

Condition (3) imposes an additional requirement on the subsets $S_i$, which we satisfy by imposing an upper bound on $\sum_{j \in S_i} (v_j - \bar{v})^2$ in order to guarantee this condition.
\subsubsection{Proof of Lemma~\ref{lem: approximation of optimization objective}}
Firstly, we show that 
$$
\left\vert \sum_{i=1}^{k}\frac{1}{\vert S_{i}\vert}\left[\sum_{j\in S_{i}}a^{2}_{j}\right]\left[\sum_{j\in S_{i}}\Vert w_{j}\Vert^{2}_{2}\right]  -\sum_{i=1}^{k}\mathbb{E}[\Vert u_{i}\Vert^{2}_{2}] \right\vert\leq o(1)\sum_{i=1}^{n}a^{2}_{i},
$$
where $u_{i}=\sum_{j\in S_{i}}a_{j}w_{\sigma_{i}(j)}$.
Since $\mathbb{E}[\Vert u_{i}\Vert^{2}_{2}]$ can be represented by:
\begin{align*}
\mathbb{E}\left[\Vert u_{i}\Vert^{2}_{2}\right]&=\left(\sum_{j\in S_{i}}\frac{1}{\vert S_{i}\vert}a_{j}^{2}\right)\left( \sum_{j\in S_{i}}w^{T}_{j}w_{j} \right)\\
&+\left(\frac{1}{\vert S_{i}\vert(\vert S_{i}\vert-1)}\sum_{j,l\in S_{i},j\neq l}a_{j}a_{l}\right)\left(\sum_{j,l\in S_{i},j\neq l}w_{j}^{T}w_{l}\right)\,.
\end{align*}

It suffices to show that 
\begin{align*}
\left\vert \sum_i \frac{1}{\vert S_i\vert(\vert S_i\vert-1)}\left(\sum_{j,k\in S_i,j\neq k}a_j a_k\right)\left(\sum_{j,k\in S_i,j\neq k}w^{T}_jw_k\right)\right\vert \leq o(\sum_{l=1}^{n}a^{2}_l)\,.
\end{align*}
On one hand, 
\begin{align*}
\left\vert\sum_{j,k\in S_i,j\neq k}a_j a_k\right\vert=\left\vert\left(\sum_{j\in S_i}a_j\right)^{2}-\sum_{j\in S_i}a^{2}_{j}\right\vert\leq O(1)\sum_{i=1}^{n}a^{2}_{i}\,.
\end{align*}
On the other hand,
\begin{align*}
\left\vert \sum_{j,k\in S_i,j\neq k}w^{T}_jw_k\right\vert &=\left\vert\left\Vert \sum_{j\in S_i}w_j\right\Vert^{2}_2-\sum_{j\in S_i}\Vert w_j\Vert^{2}_{2}\right\vert\\
&\leq \max\left(\left\Vert\sum_{j\in S_{i}}w_j\right\Vert^{2}_{2},\sum_{j\in S_{i}}\Vert w_{j}\Vert^{2}_{2}\right)
\\&\leq \vert S_i\vert
\end{align*}
Therefore, we obtain:
\begin{align*}
&\left\vert \sum_i \frac{1}{\vert S_i\vert(\vert S_i\vert-1)}\left(\sum_{j,k\in S_i,j\neq k}a_j a_k\right)\left(\sum_{j,k\in S_i,j\neq k}w^{T}_jw_k\right)\right\vert\\
&\leq O(1)\sum_{i=1}^{k}\frac{1}{\vert S_{i}\vert(\vert S_{i}\vert-1)}\cdot \left(\sum_{j\in S_{i}}a^{2}_{j}\right)\cdot \vert S_{i}\vert \\
&\leq O(1)\cdot \left(\sum_{i=1}^{k}\frac{1}{\vert S_{i}\vert-1}\right)\sum_{i=1}^{n}a^{2}_{i}\\
&\leq o(1)\sum_{i=1}^{n}a^{2}_{i}\,,
\end{align*}
where the last inequality is because $\vert S_{i}\vert \geq n^{0.55}$. 
Next, we show that
\begin{align*}
\left\vert \left[\Vert v^{*}\Vert^{2}_{2}+\sum_{i=1}^{k}\mathbb{E}[\Vert u_{i}\Vert^{2}_{2}]\right]-\mathbb{E}[\Vert H^{Z}v_{\pi}\Vert^{2}_{2}]  \right\vert\leq o(1)\left[\sum_{i=1}^{n}a^{2}_{i}+\Vert v^{*}\Vert^{2}_{2}\right]\,.
\end{align*}
We directly compute $\mathbb{E}\left[\Vert H^{Z}v_{\pi}\Vert^{2}_{2}\right]$ by:
\begin{align*}
\mathbb{E}\left[\Vert H^{Z}v_{\pi}\Vert^{2}_{2}\right]&=\Vert v^{*}+\sum_{i=1}^{k}\mathbb{E}(u_{i})\Vert^{2}_{2}+\sum_{i=1}^{k}\mathbb{E}[\Vert u_{i}-\mathbb{E}(u_i)\Vert^{2}_{2}]\\
&=\Vert v^{*}+\sum_{i=1}^{k}\mathbb{E}(u_{i})\Vert^{2}_{2}+\sum_{i=1}^{k}\mathbb{E}[\Vert u_{i}\Vert^{2}_{2}]-\sum_{i=1}^{k}\Vert \mathbb{E}(u_i)\Vert^{2}_{2}\,.
\end{align*}
We have shown in the proof of Proposition~\ref{proposition: high probability bound on group decomposition} that $\Vert \sum_{i=1}^{k}\mathbb{E}(u_i)\Vert^{2}_{2}\leq O(1)\sum_{i=1}^{k}\frac{1}{\vert S_{i}\vert}\sum_{i=1}^{n}a^{2}_{i}\leq o(1)\sum_{i=1}^{n}a^{2}_{i}$. Moreover, for any $t > 0$,
$$
\Vert v^{*}+\sum_{i=1}^{k}\mathbb{E}(u_{i})\Vert^{2}_{2}\leq (1+t)\Vert v^{*}\Vert^{2}_{2}+(1+\frac{1}{t})\Vert \sum_{i=1}^{k}\mathbb{E}(u_i)\Vert^{2}_{2}\,,
$$
so we may choose $t \in o(1)$ such that $\frac{1}{t}\Vert \sum_{i=1}^{k}\mathbb{E}(u_i)\Vert^{2}_{2}\leq o(1)\sum_{i=1}^{n}a^{2}_{i}$.
We have also established that $\Vert \mathbb{E}(u_i)\Vert^{2}_{2} \leq O(1)\frac{1}{\vert S_i\vert}\sum_{i=1}^{n}a^{2}_{i}$, which in turn implies
$$
\sum_{i=1}^{k}\Vert \mathbb{E}(u_i)\Vert^{2}_{2}\leq o(1)\sum_{i=1}^{n}a_{i}^{2}\,.
$$
Consequently, we obtain
\begin{align*}
\left\vert \Vert v^{*}\Vert^{2}_{2}+\sum_{i=1}^{n}\mathbb{E}[\Vert u_i\Vert^{2}_{2}]-\mathbb{E}[\Vert H^{Z}v_\pi\Vert^{2}_{2}]   \right\vert\leq o(1)\left(\Vert v^{*}\Vert^{2}+\sum_{i=1}^{n}a^{2}_{i}\right)\,.
\end{align*}

Finally, note that we have shown in Proposition~\ref{proposition: high probability bound on group decomposition} that
\begin{align*}
\mathbb{P}\left[\vert v^{T}_{\pi}v- \mathbb{E}[v^{T}_{\pi}v]\vert\geq \frac{\sum_{i=1}^{n}(v_i-\bar{v})^{2}}{\ln^{0.5}n}\right]\leq e^{1-\ln^{-1.5}(n)}
\end{align*}
Combining this with the fact that $\vert\mathbb{E}[v^{T}_{\pi}v]- n\bar{v}^{2}\vert\leq o(\sum_{i=1}^{n}a^{2}_{i})$, we obtain Lemma~\ref{lem: approximation of optimization objective}.

\subsection{Correctness of Algorithm~\ref{alg structure}}

\subsubsection{Theoretical guarantee for the partition of subsets}

We first prove the correctness of Algorithm~\ref{alg partition}.

\begin{lemma}\label{lem: alg set blocks}
In Algorithm~\ref{alg structure}, if for some $l$ we have:
$$
(\sum_{j=1}^{l}a_{i_j})^{2}+(\sum_{j=1}^{l}b_{i_j})^{2}\geq \sum_{i=1}^{k}(a^2_i+b^{2}_{i})
$$
Then there exists $i_{l+1}\notin \{i_1,i_2,...,i_l\}$ such that:
$$
(\sum_{j=1}^{l+1}a_{i_j})^{2}+(\sum_{j=1}^{l+1}b_{i_j})^{2}\leq(\sum_{j=1}^{l}a_{i_j})^{2}+(\sum_{j=1}^{l}b_{i_j})^{2}-(a^{2}_{i_{l+1}}+b^{2}_{i_{l+1}})\,.
$$
This implies that for any $i$, we have:
\begin{align}
\Vert \sum_{j=1}^{i}(a_{i},b_{i})\Vert_{2}\leq 2\sqrt{\sum_{i=1}^{k}(a^{2}_i+b^{2}_{i})}
\end{align}
\end{lemma}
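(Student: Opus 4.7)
The plan is to exploit a pigeonhole-style averaging identity that becomes available once the centering performed by the Scale subroutine inside Algorithm~\ref{alg rearrange} is taken into account: Scale recenters each of $J_1,J_2,J_3$ so that the inputs $(a_1,b_1),\dots,(a_k,b_k)$ passed to Algorithm~\ref{alg partition} satisfy $\sum_{j=1}^{k}(a_j,b_j)=(0,0)$. Write $u_l=\sum_{j=1}^{l}(a_{i_j},b_{i_j})$ and $S=\sum_{i=1}^{k}(a_i^{2}+b_i^{2})$. The centering immediately gives $\sum_{i'\notin\{i_1,\dots,i_l\}}(a_{i'},b_{i'})=-u_l$, hence
\[
\sum_{i'\notin\{i_1,\dots,i_l\}}\Bigl[u_l^{T}(a_{i'},b_{i'})+\|(a_{i'},b_{i'})\|_2^{2}\Bigr]
=-\|u_l\|_2^{2}+\sum_{i'\notin\{i_1,\dots,i_l\}}\|(a_{i'},b_{i'})\|_2^{2}
\le -\|u_l\|_2^{2}+S,
\]
which is non-positive under the hypothesis $\|u_l\|_2^{2}\ge S$. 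Consequently at least one remaining index $i_{l+1}$ must satisfy $u_l^{T}(a_{i_{l+1}},b_{i_{l+1}})\le -\|(a_{i_{l+1}},b_{i_{l+1}})\|_2^{2}$, and then the identity $\|u_{l+1}\|_2^{2}=\|u_l\|_2^{2}+2u_l^{T}(a_{i_{l+1}},b_{i_{l+1}})+\|(a_{i_{l+1}},b_{i_{l+1}})\|_2^{2}$ yields the claimed inequality $\|u_{l+1}\|_2^{2}\le\|u_l\|_2^{2}-\|(a_{i_{l+1}},b_{i_{l+1}})\|_2^{2}$.

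For the uniform bound $\|u_i\|_2\le 2\sqrt{S}$ I would argue by induction on $i$, splitting according to which branch of Algorithm~\ref{alg partition} produces the next index. In the \emph{if} branch, the triangle inequality together with the trivial bound $\max_{i'}\|(a_{i'},b_{i'})\|_2^{2}\le S$ gives $\|u_{l+1}\|_2\le\|u_l\|_2+\sqrt{S}\le 2\sqrt{S}$ whenever $\|u_l\|_2^{2}\le S$. In the \emph{else} branch, the hypothesis $\|u_l\|_2^{2}>S$ activates the first part of the lemma, so $\|u_{l+1}\|_2^{2}\le\|u_l\|_2^{2}$ and the norm cannot grow above its current value. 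Combining these two observations by a routine induction yields $\|u_i\|_2^{2}\le 4S$ for every $i$, which is the stated bound after taking square roots.

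The only real subtlety is bookkeeping: one must track the centering assumption back through Algorithm~\ref{alg rearrange} in order to legitimately write $\sum_{i'\notin\{i_1,\dots,i_l\}}(a_{i'},b_{i'})=-u_l$, since this identity is the only place where the algorithmic context enters. Once that point is verified, the proof reduces to the pigeonhole observation on the averaged inner products plus the parallelogram identity, so I do not anticipate any further technical obstacle.
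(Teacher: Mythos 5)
Your proof is correct and follows essentially the same approach as the paper's. The first part is the identical pigeonhole argument (using the centering $\sum_j(a_j,b_j)=0$ from the Scale subroutine), and for the uniform bound $\|u_i\|_2\le 2\sqrt{S}$ the paper argues by locating the last index $j_1$ at which the running sum crosses the threshold $\sqrt{S}$ and bounding forward from there, while you phrase it as a routine induction over the two algorithmic branches; these are the same argument in different clothing, and if anything your induction is stated more cleanly than the paper's version of the final chain of inequalities.
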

\begin{proof}
Let $u=\sum_{j=1}^{l}(a_{i_j},b_{i_j})$ and $s=\{i_1,...,i_l\}$. Since $\sum_{j=1}^{k}a_j=\sum_{j=1}^{k}b_j=0$, we have:
$$
\sum_{i\notin s}\langle u,(a_i,b_i)\rangle =-\Vert u\Vert^{2}_{2}\leq-\sum_{i\notin s}(a_i^{2}+b^{2}_{i})\,,
$$
where the inequality is due to the definition of $i_l$ in Algorithm~\ref{alg partition}. Thus, there exists some $i\notin s$ such that 
$$
\langle u,(a_i,b_i)\rangle \leq-(a^{2}_{i}+b^{2}_{i})\,.
$$
This implies that
$$
\Vert u+(a_i,b_i)\Vert^{2}_{2}= \Vert u\Vert^{2}+\Vert (a_i,b_i)\Vert^{2}_{2}+2\langle u,(a_i,b_i)\rangle \leq \Vert u\Vert^{2}_{2}-\Vert (a_i,b_i)\Vert^{2}_{2}\,.
$$
Therefore, for any $i$ satisfying $\Vert \sum_{j=1}^{i}(a_i,b_i)\Vert^{2}_{2}\geq \sum_{i=1}^{k}(a^{2}_{i}+b^{2}_{i})$, we must have $\Vert \sum_{j=1}^{i+1}(a_i,b_i)\Vert^{2}_{2}\leq \Vert \sum_{j=1}^{i}(a_i,b_i)\Vert^{2}_{2}$. Furthermore, let $j_1\leq i$ be the largest integer such that $\Vert \sum_{j=1}^{j_1}(a_i,b_i)\Vert^{2}_{2}\geq \sum_{i=1}^{k}(a^{2}_{i}+b^{2}_{i})$ and $\Vert \sum_{j=1}^{j_1-1}(a_i,b_i)\Vert^{2}_{2}< \sum_{i=1}^{k}(a^{2}_{i}+b^{2}_{i})$. Then we obtain:
\begin{align*}
\Vert \sum_{j=1}^{i}(a_i,b_i)\Vert_{2}\leq \Vert \sum_{j=1}^{j_1}(a_i,b_i)\Vert_{2}\leq \Vert \sum_{j=1}^{j_1-1}(a_i,b_i)\Vert_{2} +\Vert(a_{j_1},b_{j_1})\Vert_2\leq 2\sqrt{\sum_{i=1}^{k}(a^{2}_{i}+b^{2}_{i})}\,.
\end{align*}
\end{proof} 
We next show that for any $S_{i}\subseteq I^{'}_{s}$(s=1,2,3), $\vert S_{i}\vert\geq \Omega(n^{0.55})$.
By Lemma~\ref{lem: alg set blocks}, for any $S_{i}\subseteq I^{'}$, where $I^{'}$ is one of $I^{'}_1,I^{'}_2,I^{'}_3$, which are the set of indices after dividing $\{1,2,...,n\}$ into 3 subsets containing different types of dimensions, as in Algorithm~\ref{alg rearrange}. Then we have:
$$
\vert \sum_{j\in S_{i}}a_{j}\vert\leq 4\sqrt{\sum_{i=1}^{n}a^{2}_{i}}\,,\,\,\,\left\vert \sum_{j\in S_{i}}(c_{j}-\frac{1}{\vert I^{'}\vert}\sum_{l\in I^{'}}c_{l})\right\vert\leq 4\max_{i}(\vert a_{i}\vert)\sqrt{\sum_{i=1}^{n}a^{2}_{i}}
$$
First, we consider $\vert S_{i}\vert$. On one hand, by Algorithm~\ref{alg partition} we have
$$
\sum_{j\in S_{i}}a^{2}_{j}+\frac{1}{M}(c_j-\frac{1}{\vert I^{'}\vert}\sum_{l\in I^{'}}c_{l})^{2}\geq M^{\frac{1}{3}}S^{\frac{2}{3}}\,,
$$
On the other hand, 
\begin{align*}
\sum_{j\in S_{i}}\frac{1}{M}(c_j-\frac{1}{\vert I^{'}\vert}\sum_{l\in I^{'}}c_{l})^{2}
&\leq \frac{1}{M}\sum_{j\in S_{i}} c^{2}_{j}+\frac{1}{M}\cdot 3\max_{i}(\vert a_{i}\vert)\sqrt{\sum_{i=1}^{n}a^{2}_{i}}\frac{1}{\vert I^{'}\vert}\sum_{l\in I^{'}}c_{l}
\\
&\leq \sum_{j\in S_{i}} c_{j}+3\sqrt{MS}\,,
\end{align*}
where, in the first step, we use the fact that $(a-b)^{2}\leq a^{2}+3\max(\vert a\vert,\vert b\vert)\cdot  \vert b\vert$, and the second step is because $c_j\leq M$.
Thus, we obtain:
\begin{align*}
2\sum_{j\in S_{i}}c_{j}+3\sqrt{MS}\geq M^{\frac{1}{3}}S^{\frac{2}{3}}.
\end{align*}
Since $\sum_{l\in I^{'}}c_{l}\leq S$, we also have:
$$
\sum_{j\in S_{i}}c_{j}\leq \vert S_{i}\vert\frac{1}{\vert I^{'}\vert}S+4\sqrt{MS}
$$
By $M\leq o(S)$, we obtain: $\frac{\vert S_{i}\vert}{\vert I^{'}\vert}\geq (\frac{1}{2}-o(1))M^{\frac{1}{3}}S^{-\frac{1}{3}}\in\Omega(n^{-\frac{1}{3}})$, and we finally obtain $\vert S_{i}\vert\in \Omega( n^{0.55})$ by $\vert I^{'}\vert\geq n^{0.9}$. 

On the other hand, for any $S_{i}$, we have $\sum_{j\in S_{i}}a^{2}_{j}\leq 2M^{\frac{1}{3}}S^{\frac{2}{3}}$. 

Thus, when $M \leq \frac{1}{8 \ln^{12} n} S$, condition (2) in Proposition~\ref{proposition: high probability bound on group decomposition} is satisfied, and consequently Lemma~\ref{lem: approximation of optimization objective} holds.

Then in Algorithm~\ref{alg structure}, $\mathbb{E}\left[\frac{1}{2}v^{T}_{\pi_k}+\Vert H^{Z}v_{\pi_k}\Vert^{2}_{2}\right]\leq \sum_{i=1}^{k}\frac{1}{\vert S_{i}\vert}\left(\sum_{j\in S_{i}} a^{2}_{j}\right)\left(\sum_{j\in S_{i}} b_{j}\right)+o(1)\sum_{i=1}^{n}a^{2}_{i}$, which demonstrates the effectiveness of \eqref{optimization objective: final form}.
It remains to compute
$\sum_{i=1}^{k}\frac{1}{\vert S_{i}\vert}\left(\sum_{j\in S_{i}} a^{2}_{j}\right)\left(\sum_{j\in S_{i}} b_{j}\right)$, which can be rewritten as $\sum_{i=1}^{k}\frac{1}{\vert S_{i}\vert}\left[\sum_{j\in S_{i}}(c_{j}-\bar{c})\right]\left[\sum_{j\in S_{i}}(b_{j}-\bar{b})\right]+n\bar{b}\bar{c}$.

\subsubsection{Upper bound for the value of the optimization objective}
Now we compute the value of optimization target~\ref{optimization objective: final form} in our algorithm. For the set $I^{'}_{1}$, which is the set of indices corresponding to $J_1$, we can derive that:
\begin{align*}
&\frac{1}{\vert S_{i}\vert}\left[\sum_{j\in S_{i}}(c_{j}-\bar{c})\right]\left[\sum_{j\in S_{i}}(b_{j}-\bar{b})\right]
\\&\leq \frac{1}{\vert S_{i}\vert}\left[\left\vert\sum_{j\in S_{i}}(c_j-\frac{1}{\vert J_{1}\vert}\sum_{l\in J_1}c_{l})\right\vert +\vert S_{i}\vert \left\vert \frac{1}{\vert J_{1}\vert}\sum_{l\in J_1}c_{l}-\bar{c} \right\vert  \right]\left\vert\sum_{j\in S_{i}}(b_{j}-\bar{b})\right\vert\\
&\leq \left\vert\sum_{j\in S_{i}}(c_j-\frac{1}{\vert J_{1}\vert}\sum_{l\in J_1}c_{l})\right\vert +\vert S_{i}\vert \left\vert \frac{1}{\vert J_{1}\vert}\sum_{l\in J_1}c_{l}-\bar{c} \right\vert\\
&\leq 4\sqrt{MS}+\frac{\vert S_{i}\vert}{\vert J_1\vert} \cdot 4\sqrt{MS}\\
&\leq 8\sqrt{MS}
\end{align*}
Where we use the fact that $\vert b_{j}-\bar{b}\vert\leq 1$ since $b_{j},\bar{b}\in[0,1]$.

By taking the sum over all the $S_{i}\subseteq J_1$, we obtain:
\begin{align*}
\sum_{S_{i}\subseteq J_1}\frac{1}{\vert S_{i}\vert}\left[\sum_{j\in S_{i}}(c_{j}-\bar{c})\right]\left[\sum_{j\in S_{i}}(b_{j}-\bar{b})\right] \leq \sum_{S_i\subseteq J_1}8\sqrt{MS}\leq 8\sqrt{MS}\cdot\frac{S}{M^{\frac{1}{3}}S^{\frac{2}{3}}}\leq o(S)   
\end{align*}
Similarly, if we let $\bar{b}_2=\frac{1}{\vert J_2\vert}\sum_{j\in J_2}b_{j}$, $\bar{b}_3=\frac{1}{\vert J_3\vert}\sum_{j\in J_3}b_{j}$, $\bar{c}_2=\frac{1}{\vert J_2\vert}\sum_{j\in J_2}c_{j}$, $\bar{c}_3=\frac{1}{\vert J_3\vert}\sum_{j\in J_3}c_{j}$, we can obtain:
\begin{align*}
\sum_{S_{i}\subseteq J_2}\frac{1}{\vert S_{i}\vert}\left[\sum_{j\in S_{i}}(c_{j}-\bar{c})\right]\left[\sum_{j\in S_{i}}(b_{j}-\bar{b})\right]\leq \vert J_2\vert(\bar{b}_2-\bar{b})(\bar{c}_2-\bar{c})+o(S)\,,
\end{align*}
\begin{align*}
\sum_{S_{i}\subseteq J_3}\frac{1}{\vert S_{i}\vert}\left[\sum_{j\in S_{i}}(c_{j}-\bar{c})\right]\left[\sum_{j\in S_{i}}(b_{j}-\bar{b})\right]\leq \vert J_3\vert(\bar{b}_3-\bar{b})(\bar{c}_3-\bar{c})+o(S)\,.
\end{align*}
Finally, we summarize the result by
\begin{align}\label{value of final optimization objective}
\sum_{i=1}^{k} \frac{1}{\vert S_{i}\vert}\left[\sum_{j\in S_{i}}(c_{j}-\bar{c})\right]\left[\sum_{j\in S_{i}}(b_{j}-\bar{b})\right]   \leq \vert J_2\vert(\bar{b}_2-\bar{b})(\bar{c}_2-\bar{c})+\vert J_3\vert(\bar{b}_3-\bar{b})(\bar{c}_3-\bar{c})+o(S)\,.
\end{align}
Here $S=\sum_{i=1}^{n}a_{i}^{2}=\Vert v-\bar{v}\vec{1}\Vert^{2}_{2}$.
\subsubsection{Proof of Proposition~\ref{prop: gap between ours and random permutation}}
Now we complete the proof of Proposition~\ref{prop: gap between ours and random permutation}, along with the case of estimating $\lambda^{*},\lambda^{**}$ by $\Omega(1/\alpha^{2})$ samples. 

Lemma~\ref{lem: approximation of optimization objective}, \eqref{value of final optimization objective}, and \Cref{subsubsection: comparison with random permutation} together demonstrate that for $\pi_k$ uniformly sampled from $\mathcal{P}_{K}$ as constructed in Algorithm~\ref{alg structure}, and for $\pi'$ drawn uniformly at random from the full symmetric group, the following two expectations exhibit a provable gap:
\begin{align*}
&\mathbb{E}_{\pi_k}\left[\frac{1}{2}v^{T}_{\pi_k}v+\Vert H^{Z}v_{\pi_k}\Vert^{2}_{2}\right]\\&\leq \mathbb{E}_{\pi^{'}}\left[\frac{1}{2}v^{T}_{\pi^{'}}v+\Vert H^{Z}v_{\pi^{'}}\Vert^{2}_{2}\right]+\vert J_2\vert(\bar{b}_2-\bar{b})(\bar{c}_2-\bar{c})+\vert J_3\vert(\bar{b}_3-\bar{b})(\bar{c}_3-\bar{c})+o(1)\Vert v\Vert^{2}_{2}\,.
\end{align*}
By Lemma~\ref{lem: upper and lower of optimization objective formal} we have $\lambda_2(X,Z,\mathcal{P}_n,\frac{1}{2}\alpha)\geq \mathbb{E}_{\pi^{'}}\left[\frac{1}{2}v^{T}_{\pi^{'}}v+\Vert H^{Z}v_{\pi^{'}}\Vert^{2}_{2}\right]-O(\alpha)\Vert v\Vert^{2}_{2}$, and by Proposition~\ref{proposition: high probability bound on group decomposition} we have:
$\lambda_2(X,Z,\mathcal{P}_K,\frac{1}{4}\alpha)\leq \mathbb{E}_{\pi_k}\left[\frac{1}{2}v^{T}_{\pi_k}v+\Vert H^{Z}v_{\pi_k}\Vert^{2}_{2}\right]+o(1)\Vert v\Vert^{2}_{2}$. This implies that
\begin{align*}
&\lambda_2(X,Z,\mathcal{P}_K,\frac{1}{4}\alpha)-\lambda_2(X,Z,\mathcal{P}_n,\frac{1}{2}\alpha)\\&\leq \vert J_2\vert(\bar{b}_2-\bar{b})(\bar{c}_2-\bar{c})+\vert J_3\vert(\bar{b}_3-\bar{b})(\bar{c}_3-\bar{c})+O(\alpha)\Vert v\Vert^{2}_{2}\,.
\end{align*}
\paragraph{Extension to estimation by finite samples.}
In practice, both $\mathcal{P}_K$ and $\mathcal{P}_{n}$ contain too many permutations to allow direct computation of $\lambda_2(X,Z,\mathcal{P}_n,\frac{1}{4}\alpha)$ and $\lambda_2(X,Z,\mathcal{P}_K,\frac{1}{4}\alpha)$. Therefore, we estimate these quantities using $m \geq 1/\alpha^2$ i.i.d. samples, which yield estimators $\lambda^*$ and $\lambda^{**}$ that satisfy:
\begin{align*}
\lambda^{*}=\inf \lambda : \frac{1}{m}\sum_{i=1}^{m}\mathbbm{1}\left\{\frac{1}{2}v^{T}_{\pi_{i}}v+\Vert H^{Z_{\pi_{i}}}v\Vert^{2}_{2}\leq \lambda  \right\}\geq 1-\frac{1}{2}\alpha (\pi_{i}\text{ is sampled i.i.d. from all permutations})
\end{align*}

\begin{align*}
\lambda^{**}=\inf \lambda : \frac{1}{m}\sum_{i=1}^{m}\mathbbm{1}\left\{\frac{1}{2}v^{T}_{\pi_{i}}v+\Vert H^{Z_{\pi_{i}}}v\Vert^{2}_{2}\leq \lambda  \right\}\geq 1-\frac{1}{4}\alpha (\pi_{i}\text{ is sampled i.i.d. from }\mathcal{P}_{K})
\end{align*}
Then Theorem~\ref{thm: value of optimization objective} and Proposition~\ref{proposition: high probability bound on group decomposition} imply that both $\lambda^{*}\geq \lambda_2(X,Z,\mathcal{P}_n,\frac{1}{4}\alpha)-O(\alpha)\Vert v\Vert^{2}_{2}$ and $\vert \lambda^{**}-\lambda_2(X,Z,\mathcal{P}_K,\frac{1}{4}\alpha)\vert \leq o(1)\Vert v\Vert^{2}_{2}$ hold with probability at least $1-e^{-\Omega(\alpha^{-1})}$. Thus, we also have:
\begin{align*}
\lambda^{**}\leq \lambda^{*}+\vert J_2\vert(\bar{b}_2-\bar{b})(\bar{c}_2-\bar{c})+\vert J_3\vert(\bar{b}_3-\bar{b})(\bar{c}_3-\bar{c})+O(\alpha)\Vert v\Vert^{2}_{2}\,, \, w.p. \text{ at least }1-e^{-\Omega(\alpha^{-1})}\,.
\end{align*}
\subsubsection{Proof of Lemma~\ref{lem: type II error of finite sample}}
For Type I error, we have:
\begin{align*}
\mathbb{P}\left[\mathcal{H}_{0}\text{ is rejected }\vert b=0 \right]&\leq\mathbb{P}\left[\min(\phi_1,\phi_2)\leq (1+c)\alpha\vert b=0 \right]+\mathbb{P}\left[\phi_1,\phi_2>(1+c)\alpha\,, \, \min(\phi^{'}_1,\phi^{'}_2)\leq \alpha\right]
\end{align*}
The first term is no larger than $4(1+c)\alpha$, and it suffices to upper bound $$\mathbb{P}\left[\phi_1,\phi_2>(1+c)\alpha\,, \, \min(\phi^{'}_1,\phi^{'}_2)\leq \alpha\right].$$ Let $\delta_i=1\left\{X^{T}H^{ZZ_{\pi_{i}}}X\leq X^{T}H^{ZZ_{\pi_{i}}}Y \right\}$, then $\mathbb{E}[\delta_i]=\phi^{'}_1\geq (1+c)\alpha.$ Thus for any $\lambda>0$, we have:
\begin{align*}
\mathbb{P}\left[\phi^{'}_1\leq \alpha\right]&\leq e^{\lambda m\alpha}\mathbb{E}\left[e^{-\lambda\sum_{i=1}^{m}\delta_i}\right]\\
&\leq e^{-\lambda m\alpha}\left[(1-\phi^{'}_1)+\phi^{'}_1e^{-\lambda}\right]^{m}\\
&=e^{\lambda m\alpha}\left[1-\phi^{'}_1(1-e^{-\lambda})\right]^{m}\,.
\end{align*}
Since $1-x\leq e^{-x}\leq 1-x+\frac{1}{2}x^{2}(x\geq 0)$, we have: $\left[1-\phi^{'}_1(1-e^{-\lambda})\right]^{m}\leq e^{-m\phi^{'}_1(\lambda-\frac{1}{2}\lambda^{2})}\leq e^{-\lambda(1+c)m\alpha(1-\frac{1}{2}\lambda)}$.
Let $1-\frac{1}{2}\lambda=\frac{1+\frac{1}{2}c}{1+c}$ we obtain:
\begin{align*}
\mathbb{P}\left[\phi^{'}_1\leq \alpha\right]\leq e^{-\frac{1}{2}cm\alpha}\leq e^{-\Omega(\alpha^{-1})}\,.
\end{align*}
Similarly, we can prove that $\mathbb{P}\left[\phi^{'}_2\leq \alpha\vert \phi_2\geq (1+c)\alpha\right]\leq e^{-\Omega(\alpha^{-1})}$. Combining these we conclude that 
\begin{align*}
\mathbb{P}\left[\phi_1,\phi_2>(1+c)\alpha\,, \, \min(\phi^{'}_1,\phi^{'}_2)\leq \alpha\right]\leq e^{-\Omega(\alpha^{-1})}\,.
\end{align*}
For the Type II error, we upper bound it by
\begin{align*}
\mathbb{P}\left[\mathcal{H}_{0}\text{ is accepted }\right]&\leq\mathbb{P}\left[\min(\phi_1,\phi_2)\geq (1-c)\alpha\right]+\mathbb{P}\left[\phi_1,\phi_2<(1-c)\alpha\,, \, \min(\phi^{'}_1,\phi^{'}_2)\geq \alpha\right]\,.
\end{align*}
Similar to the probability bound of Type I error, the second term is upper bounded by $e^{-\Omega(\alpha^{-1})}$, and we complete the proof of Lemma~\ref{lem: type II error of finite sample}.



\section{Proofs of nonexchangeable case}\label{proof_nonex}
\subsection{Proof of \texorpdfstring{\Cref{thm::Theorem nonexchangeable grouped CPT}}{Theroem 11}}
\label{detailed proof of theorem nonexchangeable grouped CPT}


From the nonexchangeable point of view, we may view this as a covariate-shift setting. In this case, the test statistic $R_0$ may behave differently from the other statistics $R_k, k\in\{1,\ldots,K\}$. We set a user-specified weight $w_0\in (0,1)$ for $R_0$, and the other weights $w_i$ are taken to be identical because of the property of the permutation group. Accordingly, we consider the rejection region 
$$ [Q_{1-\alpha}(\sum_{i=0}^{K}w_i\cdot \delta_{R_i}),1],$$
where 
$$w_i=\frac{1-w_0}{K},i=\{1,\cdots,K\}.$$
Correspondingly, define 
\begin{equation}
    \mathcal{K} \sim \sum_{i=0}^{K} w_i \cdot \delta_{\{i\}},
\end{equation}
and we obtain the following lemma
\begin{lemma}
\label{lem::Theorem nonexchangeable grouped CPT}
 Under $H_0$, and $w_0\in[\frac{1}{K+1},1)$, we set 
 $$w_i=\frac{1-w_0}{K},i\in \{1,\cdots,K\},$$
  then we obtain 
    \begin{equation*}
        \begin{aligned}
        P(R_0 \leq Q_{1-\alpha}
    (\sum_{k=0}^{K}w_k\cdot\delta_{R_k}))& \geq 1-\alpha-\sum_{k=1}^Kw_i\,\,(d_{TV}(R(\epsilon),R(\epsilon^k)) \\
    & \geq 1-\alpha-\sum_{k=1}^Kw_i\,\,(d_{TV}(\epsilon,\epsilon_{\pi_k}))
        \end{aligned}
    \end{equation*}
    where $Q_{\tau}(\cdot)$ denotes the $\tau$- quantile of its argument, $\delta_{a}$ denotes the point mass at $a$, and $R(\epsilon)$ is a $K+1$-dimensional vector that $(R(\epsilon))_{i}=R_{i-1}(\epsilon),i\in\{1,\cdots,K+1\}.$
\end{lemma}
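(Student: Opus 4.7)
The plan is to adapt the weighted conformal argument of \citet{barber2023conformal} to the group-permutation CPT setting, using the group action on $\epsilon$ induced by $\mathcal{P}_K$ in place of the pairwise swaps used in conformal prediction. First, I would introduce the weighted \emph{strange set} $S_w(R) := \{i \in \{0,1,\ldots,K\} : R_i > Q_{1-\alpha}(\sum_{k=0}^K w_k \delta_{R_k})\}$, which by the definition of the weighted $(1-\alpha)$-quantile satisfies $\sum_{i \in S_w(R)} w_i \leq \alpha$ deterministically; the event of interest then reads $\{R_0 > Q_{1-\alpha}\} = \{0 \in S_w(R(\epsilon))\}$.

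Second, I would extract from the proof of \Cref{thm::thorem of group CPT} the following deterministic identity: by the bijection $\mathbbm{P}_{\pi_k}$ on $\mathcal{P}_K$ of \Cref{prop::Proposition for group permutation}, there is a permutation $\tau_k$ of $\{0,\ldots,K\}$ with $\tau_k(0)=k$ such that $R_j(\epsilon_{\pi_k}) = R_{\tau_k(j)}(\epsilon)$ for every $j$, i.e., $R(\epsilon_{\pi_k})$ is a hardcoded relabeling of $R(\epsilon)$ that sends coordinate $0$ to coordinate $k$. Under exchangeability this upgrades to the distributional symmetry $R(\epsilon) \overset{d}{=} \tau_k R(\epsilon)$, and combined with the conservative weight condition $w_0 \geq w_i$ for $i \geq 1$ (which is exactly what $w_i = (1-w_0)/K$ together with $w_0 \in [1/(K+1),1)$ encodes), one recovers the exchangeable bound $P(0 \in S_w(R(\epsilon))) \leq \alpha$ from the pointwise inequality $\sum_k w_k \mathbbm{1}\{k \in S_w(R)\} \leq \alpha$ together with the coordinate-swap symmetry.

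Third, to pass to the nonexchangeable setting I would replace each use of the distributional equality $\epsilon \overset{d}{=} \epsilon_{\pi_k}$ by its total-variation slack $|P(R(\epsilon) \in A) - P(R(\epsilon_{\pi_k}) \in A)| \leq d_{TV}(R(\epsilon), R(\epsilon_{\pi_k}))$, valid for any measurable event $A$ on the rank vector. Applying this to $A = \{0 \in S_w(\cdot)\}$ for each $k \in \{1,\ldots,K\}$ and aggregating with weights $w_k = (1-w_0)/K$ yields the first advertised inequality $P(0 \in S_w(R(\epsilon))) \leq \alpha + \sum_{k=1}^K w_k\, d_{TV}(R(\epsilon), R(\epsilon_{\pi_k}))$. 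The second, coarser inequality follows immediately from the data-processing inequality $d_{TV}(R(\epsilon), R(\epsilon_{\pi_k})) \leq d_{TV}(\epsilon, \epsilon_{\pi_k})$, since $R$ is a deterministic function of the noise vector.

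The main obstacle is the exchangeable reduction in the second step: with nonuniform weights one must verify that the relabelled weighted quantile $Q_{1-\alpha}(\sum_j w_j \delta_{R_{\tau_k(j)}})$ is dominated, in the conservative direction, by the original $Q_{1-\alpha}(\sum_j w_j \delta_{R_j})$, a property ensured precisely by the constraint $w_0 \geq w_i$ for $i \geq 1$ via the fact that $\tau_k(0) = k$ shifts the maximum weight $w_0$ only onto one of the remaining coordinates. This is the direct analog of the importance-weighting step in Lemma~1 of \citet{barber2023conformal}; once it is in place, the remaining swap-plus-TV accounting is routine bookkeeping that accumulates the desired weighted TV sum.
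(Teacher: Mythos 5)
Your proposal follows the same overall template as the paper's proof: a weighted strange set $S_w(R)$ with the deterministic bound $\sum_{i\in S_w(R)} w_i \le \alpha$, a deterministic relabelling $R(\epsilon_{\pi_k}) = \tau_k R(\epsilon)$ with $\tau_k(0)=k$ supplied by the group closure (Proposition~\ref{prop::Proposition for group permutation}), a total-variation slack accumulated over the $K$ nontrivial group elements, and the data-processing inequality $d_{TV}(R(\epsilon),R(\epsilon_{\pi_k})) \le d_{TV}(\epsilon,\epsilon_{\pi_k})$ for the coarser bound. The paper phrases the aggregation by drawing a random index $\mathcal{K}$ from the weight distribution and analyzing $R(Y^{\mathcal{K}})$, while you instead take a convex combination of $K+1$ per-$k$ inequalities; these are the same calculation, and your identification of the role of $w_0\ge w_i$ as $w_0\ge 1/(K+1)$ is correct.

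The one step you misdescribe, and would run into trouble with if taken literally, is the quantile comparison. You assert that the relabelled quantile $Q_{1-\alpha}\bigl(\sum_j w_j\delta_{R_{\tau_k(j)}}\bigr)$ is dominated by the original $Q_{1-\alpha}\bigl(\sum_j w_j\delta_{R_j}\bigr)$. That is not a pointwise fact: the relabelled measure moves the large weight $w_0$ from the atom at $R_0$ to the atom at $R_{\tau_k(0)}=R_k$, so depending on the sign of $R_k-R_0$ the relabelled quantile can lie on either side of the original. The statement that is actually true and actually needed is the event inclusion $\{0\in S_w(\tau_k R)\}\subseteq\{k\in S_w(R)\}$. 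Establishing it requires the $\delta_{+\infty}$ inflation step the paper uses in equations \eqref{correlationforRk}--\eqref{analysisforQuantile}: replacing the atom at the tested value by $\delta_{+\infty}$ does not change whether that value exceeds the weighted quantile, and once both sides are inflated the relabelled quantile does dominate the original because a mass of $w_0-w_1\ge 0$ has been pushed from $R_0$ to $+\infty$. Equivalently, one can compare the tail masses $\sum_{j:R_j\ge R_k}(\cdot)$ directly under the two weightings and verify the relabelled tail mass is always at least the original one. So the architecture of your argument is the right one, but the pivotal monotonicity claim must be re-stated as the event inclusion, proved via the inflation trick or the tail-mass comparison, rather than as a domination between the raw quantiles.
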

We see that when we set $w_0=\frac{1}{K+1}$, Lemma \ref{lem::Theorem nonexchangeable grouped CPT} directly turns to \Cref{thm::Theorem nonexchangeable grouped CPT}, so that we only need to prove the Lemma \ref{lem::Theorem nonexchangeable grouped CPT}.
\subsection{Proof of Lemma \ref{lem::Theorem nonexchangeable grouped CPT}}
For any $k\in [0,K]$, as before, let $\pi_k$ denote the permutation corresponding to $P_k\in \mathcal{P}_K$. Then, for any $k\in [0,K]$, and we denote $(R(Y))_{i}=R_{i-1}=R_{i-1}(Y), i\in \{1,\cdots,K+1\}$, from the definition of (\ref{R_k}), we can calculate
$$(R(Y^k))_{i+1}=\frac{1}{K+1}\sum_{j=0,j\neq i}^K\mathbbm{1}\{(P_iP_kY)^T\eta\leq (P_kY)^T\eta_j\}, i=0,\cdots,K.$$

Additionally, from the construction of the linear model \eqref{eq::model}, it simply satisfies
$$(R(Y^k))_{i+1}=(R(\epsilon^k))_{i+1}=\frac{1}{K+1}\sum_{j=0,j\neq i}^K\mathbbm{1}\{(P_iP_k\epsilon)^T\eta\leq (P_k\epsilon)^T\eta_j\}, i=0,\cdots,K.$$

For simplicity, we initially consider $\mathcal{P}_K$ as the cycle permutation group, for which there exist $P_1\in \mathcal{P}_K$, $\forall k\in\{1,\cdots,K\}$, $P_k=P_1^k,P_0=I$. From the definition above, the cycle permutation group obviously satisfies \Cref{assump::grouppermutation}. Therefore,
\begin{equation}
\label{unchangeable}
    (R(Y^\mathcal{K}))_{i+1} = 
\begin{cases}
R_{\mathcal{K}+i} &  \mathcal{K}+i<K+1, \\
R_{\mathcal{K}+i-K-1}  &  \mathcal{K}+i\geq K+1.
\end{cases}
\end{equation}
Since enlarging a quantity that is already above the quantile does not change the value of the quantile, our rejection region can be rewritten as 
$$\quad R_{0}> Q_{1-\alpha}\{ \sum_{k=0}^{K} w_k \cdot \delta_{R_k}\}  \iff \quad R_{0}> Q_{1-\alpha}\{ \sum_{k=1}^{K} w_k \delta_{R_k}+w_0\cdot \delta_{+\infty}\}.$$
Under the cycle permutation group, the equation ($\ref{unchangeable}$) holds, then we have

\begin{equation}
\begin{aligned}
\label{correlationforRk}
    Q_{1-\alpha}\{ \sum_{i=1}^{K}w_k\delta_{R_k}+w_0 \delta_{+\infty}\} &\geq Q_{1-\alpha}(\sum_{k=1}^K w_k\delta_{(R(Y^\mathcal{K}))_{k+1}}+ w_0\cdot \delta_{(R(Y^\mathcal{K}))_{1}})\\
    &=Q_{1-\alpha}(\sum_{k=1}^K w_k\cdot\delta_{(R(Y^\mathcal{K}))_{k+1}}+ w_0\delta_{R_{\mathcal{K}}})
\end{aligned}
\end{equation}
We now give a detailed proof of this identity. The right-hand side of (\ref{correlationforRk}) equals 
\begin{equation}
\label{analysisforQuantile}
    \begin{aligned}
    Q_{1-\alpha}(\sum_{k=1}^K w_k\cdot\delta_{(R(Y^\mathcal{K}))_{k+1}}+ w_0\cdot \delta_{R_{\mathcal{K}}})&=Q_{1-\alpha}(\sum_{k=1,k\neq K+1-\mathcal{K}}^K w_k\cdot\delta_{(R(Y^{\mathcal{K}}))_{k+1}}\\&\qquad\qquad\qquad+w_{K+1-\mathcal{K}}\cdot \delta_{(R(Y^{\mathcal{K}}))_{(K+1-\mathcal{K})+1}}+ w_0\cdot \delta_{R_{\mathcal{K}}})\\
    &=Q_{1-\alpha}(\sum_{k=1,k\neq K+1-\mathcal{K}}^K w_k\cdot\delta_{(R(Y^{\mathcal{K}}))_{k+1}}+w_{K+1-\mathcal{K}}\cdot \delta_{R_0} \\
     &\qquad\qquad\qquad\qquad\qquad\qquad\qquad\qquad\qquad+ w_0\cdot \delta_{R_{\mathcal{K}}})\\
    &=Q_{1-\alpha}(\sum_{k=1,k\neq K+1-\mathcal{K}}^K w_k\cdot\delta_{(R(Y^{\mathcal{K}}))_{k+1}}\\&\qquad\qquad\qquad+w_{\mathcal{K}}\delta_{R_{\mathcal{K}}}+w_{K+1-\mathcal{K}}\cdot \delta_{R_0}+ (w_0-w_{\mathcal{K}})\cdot \delta_{R_{\mathcal{K}}})\\
    &=Q_{1-\alpha}\{ \sum_{k=1}^{K}w_k\delta_{R_k}+w_{K+1-\mathcal{K}}\cdot \delta_{R_0}+ (w_0-w_{\mathcal{K}})\cdot \delta_{R_{\mathcal{K}}}\}.
\end{aligned}
\end{equation}
Since $\forall k\in\{1,\cdots,K\}$, $w_k$s have the same value, and $w_0\geq w_k$, hence $w_{K+1-\mathcal{K}}+(w_0-w_{\mathcal{K}})=w_0,$ and $w_0-w_{\mathcal{K}}\geq 0.$ From the above results, (\ref{correlationforRk}) must hold.

Not only for the cycle permutation group, if we analyze the general permutation group $\mathcal{P}_K$ that satisfies Assumption \ref{assump::grouppermutation}. Using the same definition as in Proposition~\ref{prop::Proposition for group permutation}, $\mathbbm{P}_{\pi_{k}}(\pi_j):=P_j P_k, k=0,\cdots,K, j=0,\cdots,K$  , we have that
\begin{equation*}
    \begin{aligned}
 (R(Y^{\mathcal{K}}))_{i+1}&=\frac{1}{K+1}\sum_{j=0,j\neq i}^K\mathbbm{1}\{(\mathbbm{P}_{\pi_{\mathcal{K}}}(\pi_i)\cdot\epsilon)^T\eta\leq (P_{\mathcal{K}}\epsilon)^T\eta_j\}\\&=\frac{1}{K+1}\sum_{j=0,j\neq i}^K\mathbbm{1}\{(\mathbbm{P}_{\pi_{\mathcal{K}}}(\pi_i)\cdot\epsilon)^T\eta\leq (\mathbbm{P}_{\pi_{\mathcal{K}}}(\pi_j)\epsilon)^T\eta\}, \\
  &=\frac{1}{K+1}\sum_{j=0,j\neq \text{Index}(\mathbbm{P}_{\pi_{\mathcal{K}}}(\pi_i))}^K\mathbbm{1}\{(\mathbbm{P}_{\pi_{\mathcal{K}}}(\pi_i)\cdot\epsilon)^T\eta\leq (\epsilon)^T\eta_j\},\\
  &=(R(Y))_{\text{Index}(\mathbbm{P}_{\pi_{\mathcal{K}}}(\pi_i))+1}=R_{\text{Index}(\mathbbm{P}_{\pi_{\mathcal{K}}}(\pi_i))+1}, i=0,\cdots,K,
    \end{aligned}
\end{equation*}
where we define the notation index of the permutation matrix $\forall P_i$ in the permutation group $\mathcal{P}_K$ as   $\text{Index}(P_i)$, which satisfies 
\begin{equation} 
\label{abbrevIndex}
    \text{Index}(P_i)=i, P_i \in \mathcal{P}_K, i\in\{0,\cdots,K\}.
\end{equation} and the third equation holds since in  \Cref{prop::Proposition for group permutation}, we have proved that, if we determine the $\mathcal{K}$, $\mathbbm{P}_{\pi_{\mathcal{K}}}$ is a bijection. Also, using the bijection property, we obtain that 
$$\{(R(Y^{\mathcal{K}}))_{i+1}\}_{i=0}^K=\{(R(Y))_{\text{Index}(\mathbbm{P}_{\pi_{\mathcal{K}}}(\pi_i))+1}\}_{i=0}^K=\{R(Y)_{i+1}\}_{i=0}^{K},$$
where the $\{\cdot\}$ represents the unordered set. Also, because of the bijection property, there exist $j\in\{0,\cdots,K\}$, $P_jP_{\mathcal{K}}=P_0$, $j$ can be represented as $\text{Index}(P_\mathcal{K}^TP_0)$. So that with the similar analysis as (\ref{analysisforQuantile}), we obtain that 
\begin{equation}
    \label{analysispermutationgroupquantile}
    \begin{aligned}
    Q_{1-\alpha}(\sum_{k=1}^K w_k\cdot\delta_{(R(Y^\mathcal{K}))_{k+1}}+& w_0\cdot \delta_{R_{\mathcal{K}}})\\
    &=Q_{1-\alpha}\{ \sum_{i=1}^{K}w_k\delta_{R_k}+w_{\text{Index}(P_\mathcal{K}^TP_0)}\cdot \delta_{R_0}+ (w_0-w_{\text{Index}(\mathbbm{P}_{\pi_{\mathcal{K}}}(I))})\cdot \delta_{R_{\mathcal{K}}}\}.
\end{aligned}
\end{equation}

Because of $\forall k\in\{1,\cdots,K\}$, $w_k$s have the same value, and $w_0\geq w_k$, combined with (\ref{analysispermutationgroupquantile}), we have that for a general permutation group $\mathcal{P}_K$ satisfying  \Cref{assump::grouppermutation}, the comparison in (\ref{correlationforRk}) still holds.

Then, combining the following result, we have
$$ 
\begin{aligned}
  R_{0}> Q_{1-\alpha}\{ \sum_{i=0}^{K}w_k\cdot\delta_{R_k}\}  &\Rightarrow R_{0}>Q_{1-\alpha}(\sum_{i=1}^Kw_k\cdot\delta_{(R(Y^\mathcal{K}))_{i+1}}+w_0\cdot\delta_{R_{\mathcal{K}}})\\
  &\Rightarrow R_{0}>Q_{1-\alpha}(\sum_{i=0}^Kw_k\cdot\delta_{(R(Y^\mathcal{K}))_{i+1}}).
\end{aligned}
$$
Equivalently,
$$ R_{0}> Q_{1-\alpha}\{ \sum_{i=0}^{K}w_k\cdot\delta_{R_k}\}  \Rightarrow (R(Y^{\mathcal{K}}))_{\mathcal{K}+1}>Q_{1-\alpha}(\sum_{i=0}^K w_k\cdot\delta_{(R(Y^\mathcal{K}))_{i+1}}).$$ 
Next, to meet the property of the $1-\alpha$ quantile of the distribution, we consider the ``strange points'' and construct a projection $S$ from $\mathbb{R}^{K+1}$ to subsets of ${0,\cdots,K}$, as follows: for any $r \in \mathbb{R}^{K+1},$
$$S(r)=\{i\in\{0,\cdots,K\}:r_{i+1}>Q_{1-\alpha}(\sum_{j=0}^{K}w_j\cdot \delta_{r_{j+1}})\}.$$
The above set contains so-called ``strange'' points-indices $i$ for which $r_i$ is unusually large, relative to the empirical distribution of $r_1,\cdots,r_{K+1}$. A direct argument in \citet[lemma~A.1]{harrison2012importance} shows that
$$\sum_{i\in S(r)} w_k\leq \alpha.$$
That is, the fraction of ``strange'' points cannot exceed $\alpha$. From the above,  let $(\Omega,\mathcal{F},P)$ be a probability space, $\forall w\in \Omega$,
$$
\begin{aligned}
   w\in \{R_{0}> Q_{1-\alpha}\{ \sum_{i=0}^{K}w_k\cdot \delta_{R_{k+1}}\} &\} \Rightarrow w\in \{(R(Y^{\mathcal{K}}))_{\mathcal{K}+1}>Q_{1-\alpha}(\sum_{i=0}^K w_k\cdot\delta_{(R(Y^\mathcal{K}))_{i+1}})\}\\&\iff w\in \{\mathcal{K}\in S(R(Y^{\mathcal{K}}))\}, \\
\end{aligned}
$$
hence,
$P(R_{0}> Q_{1-\alpha}\{ \sum_{i=0}^{K}w_k\cdot\delta_{R_{k+1}}\})\leq P(\mathcal{K}\in S(R(Y^{\mathcal{K}})))$. Next, we analyze the probability of the event $\{\mathcal{K}\in S(R(Y^{\mathcal{K}}))\}.$

In the probability space $(\Omega,\mathcal{F},P)$, we define the probability measure in $\mathbb{R}^{K+1}$ of random vector $R(Y)$ as $\mu_{Y}$, and for $R(Y^{\mathcal{K}})$ as $\mu_{Y^{\mathcal{K}}}$. For a given $i \in \{0,\cdots,K\}$, we denote the event $A:=\{X\in \mathbb{R}^{K+1}: i\in S(X)\},$ then 
$$
\begin{aligned}
P(i \in S(R(Y ^{\mathcal{K}})))=\mu_{Y^{\mathcal{K}}}(A)&\leq \mu_{Y}(A)+\vert \mu_{Y}(A)-\mu_{Y^{\mathcal{K}}}(A)\vert \leq \mu_{Y}(A)+d_{TV}(\mu_{Y},\mu_{Y^{\mathcal{K}}}) \\   
&=P(i \in S(R(Y)))+ d_{TV}(R(Y),R(Y^{\mathcal{K}})).
\end{aligned}
$$
 
Armed with the above discussions, we have 
\begin{equation*}
    \begin{aligned}
        P(\mathcal{K}\in S(R(Y^{\mathcal{K}})))&=\sum_{k=0}^K
P(\mathcal{K}=k\,\,\textit{and}\,\, k \in S(R(Y^{k})))\\
&=\sum_{k=0}^{K} w_k \cdot P(k \in S(R(Y^{k}))\\
&\leq \sum_{k=0}^{K} w_k\cdot(P(k\in S(R(Y)))+d_{TV}(R(Y),R(Y^{k})))\\
&=\mathbb{E}[\sum_{k\in S(R(Y))}w_k]+\sum_{k=0}^Kw_k\cdot d_{TV}(R(\epsilon),R(\epsilon^k))\\
&\leq \mathbb{E}[\alpha] +\sum_{k=1}^{K} w_k\cdot d_{TV}(R(\epsilon),R(\epsilon^k))\\
&\leq \alpha +\sum_{k=1}^{K} w_k \cdot d_{TV}(\epsilon,\epsilon_{\pi_k}),
\end{aligned}
\end{equation*}
Hence, the desired result follows.
\subsection{Proof of Theorem \ref{thm::Theorem nonexchangeable grouped PALMRT}}
\label{sec::Proof of PALMRT beyond exchangeable}
As in \Cref{detailed proof of theorem nonexchangeable grouped CPT}, and using the same definition of $w_0$ and $\{w_i\}_{i=1}^K$, we obtain the extended version of \Cref{thm::Theorem nonexchangeable grouped PALMRT}.
 \begin{lemma}
   \label{lem::Theorem nonexchangeable grouped PALMRT}
Under $H_0$, and $w_0\in[\frac{1}{K+1},1)$, we set 
 $$w_k=\frac{1-w_0}{K},k\in \{1,\cdots,K\},$$
  then we have 
 \begin{equation*}
     \begin{aligned}
         P(\sum_{k=1}^K w_k \cdot 1\{X^{T}(I-H^{ZZ_{\pi_k}})&Y > (X)_{\pi_k}^{T}(I-H^{ZZ_{\pi_k}})Y\}< 1-\alpha)\\&\geq 1-2\alpha-\sum_{k=1}^Kw_k\cdot d_{TV}(T(\epsilon),T((\epsilon)^k))
         \\&\geq 1-2\alpha-\sum_{k=1}^Kw_k \cdot d_{TV}(\epsilon,\epsilon_{\pi_k})
     \end{aligned}
 \end{equation*}
    \end{lemma}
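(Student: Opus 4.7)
My plan is to mirror the structure of Lemma~\ref{lem::Theorem nonexchangeable grouped CPT}, substituting PALMRT's pairwise-comparison matrix $F(\epsilon)$ for CPT's rank statistics and invoking the sharp $2\alpha$ bound of Theorem~\ref{Thm::PALMRT} in place of the CPT $\alpha$ bound. The argument decomposes into three steps: first, construct row-indexed auxiliary statistics $\{T_a\}_{a=0}^K$ extending the test statistic to every row of $F(\epsilon)$ and derive the key permutation-equivariance identity pathwise; second, prove a deterministic weighted bound $\sum_{m\in S(\epsilon)} w_m \leq 2\alpha$ on a ``strange index'' set $S(\epsilon)$; third, close via a standard total-variation coupling.

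For the first step I would set $\tilde{r}_{ab}(\epsilon):=\mathbbm{1}\{F(\pi_a,\pi_b;x,Z,\epsilon)>F(\pi_b,\pi_a;x,Z,\epsilon)\}$ and $T_a(\epsilon):=\frac{1-w_0}{K}\sum_{b\neq a}\tilde{r}_{ab}(\epsilon)$. Then $T_0(\epsilon)$ is exactly the test statistic appearing in the lemma, so the rejection event is $\{0\in S(\epsilon)\}$ with $S(\epsilon):=\{m:T_m(\epsilon)\geq 1-\alpha\}$. Combining Proposition~\ref{prop::Finverse} with the bijection from Proposition~\ref{prop::Proposition for group permutation} yields the pathwise identity $T_0(\epsilon_{\pi_k})=T_{\text{Index}(\pi_k^{-1})}(\epsilon)$, hence the event-level equality $\{0\in S(\epsilon_{\pi_k})\}=\{\text{Index}(\pi_k^{-1})\in S(\epsilon)\}$ valid for each fixed realization of $\epsilon$.

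For the second step, each $m\in S$ forces $\sum_{b\neq m}\tilde{r}_{mb}\geq K(1-\alpha)/(1-w_0)$; summing over $m\in S$ and using the antisymmetry $\tilde{r}_{mb}+\tilde{r}_{bm}\leq 1$ gives $|S|\cdot K(1-\alpha)/(1-w_0)\leq\binom{|S|}{2}+|S|(K+1-|S|)$. Solving this quadratic in $|S|$ produces $|S|\leq 1+2K(\alpha-w_0)/(1-w_0)$; combined with the two-value weight structure $w_m\in\{w_0,(1-w_0)/K\}$ and the hypothesis $w_0\geq 1/(K+1)$, this yields $\sum_{m\in S}w_m\leq 2\alpha$ after separately checking the cases $0\in S$ and $0\notin S$. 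For the third step, the standard coupling inequality $|P(0\in S(\epsilon))-P(0\in S(\epsilon_{\pi_k}))|\leq d_{TV}(T(\epsilon),T(\epsilon_{\pi_k}))$ averaged against the weights $w_k$, combined with the weight-preserving inversion bijection $k\mapsto\text{Index}(\pi_k^{-1})$ on $\mathcal{P}_K$, yields $\sum_k w_k P(0\in S(\epsilon_{\pi_k}))=\mathbb{E}[\sum_{m\in S(\epsilon)}w_m]\leq 2\alpha$; the data-processing inequality $d_{TV}(T(\epsilon),T(\epsilon_{\pi_k}))\leq d_{TV}(\epsilon,\epsilon_{\pi_k})$ then delivers the stated bound.

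The main obstacle will be the second step: adapting the sharp $|S|\leq 2\alpha(K+1)$ counting argument of Theorem~\ref{Thm::PALMRT} to the non-uniform weight regime. Under uniform weights the bound is immediate, but with $w_0\neq w_k$ for $k\geq 1$ one must carefully handle the boundary case $0\in S$ and verify that the rescaling factor $(1-w_0)/K$ embedded in the definition of $T_a$ exactly absorbs the weight discrepancy so that the final constant is the clean $2\alpha$ rather than a weight-dependent quantity; the hypothesis $w_0\geq 1/(K+1)$ is used precisely to close the case $0\notin S$. A secondary subtlety is confirming that the inversion map $k\mapsto\text{Index}(\pi_k^{-1})$ preserves the weight distribution in the third step, which holds because $\pi_0=I$ is a fixed point of inversion while all non-identity indices share the common weight $(1-w_0)/K$.
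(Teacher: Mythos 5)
Your proposal is correct and establishes the lemma, but it takes a genuinely different route from the paper. The paper introduces a \emph{random} index $\mathcal{K}$ drawn from the weight distribution $\sum_k w_k\delta_{\{k\}}$, rewrites the test statistic through a $\mathcal{K}$-dependent reindexing of $F(\epsilon^{\mathcal{K}})$, bounds the result by the non-uniformly weighted quantity $\sum_{k=0}^K w_k\mathbbm{1}\{T(\epsilon^{\mathcal{K}})_{\mathcal{K}+1,k+1}>T(\epsilon^{\mathcal{K}})_{k+1,\mathcal{K}+1}\}$, defines the strange set $S(r)$ with the \emph{non-uniform} weights $w_j$ in the defining sum, and invokes an external lemma from Lei et al.\ for the bound $\sum_{k\in S(r)}w_k\leq 2\alpha$. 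You instead work deterministically: you define $T_a(\epsilon)=\frac{1-w_0}{K}\sum_{b\neq a}\tilde r_{ab}(\epsilon)$ with a \emph{uniform} factor (so $w_0$ enters only at the final weight aggregation, not in the defining sum), derive the pathwise equivariance $T_0(\epsilon_{\pi_k})=T_{\mathrm{Index}(\pi_k^{-1})}(\epsilon)$ from Propositions~\ref{prop::Finverse} and~\ref{prop::Proposition for group permutation}, and then replace the paper's black-box citation by a self-contained counting argument: summing the membership constraint $\sum_{b\neq m}\tilde r_{mb}\geq K(1-\alpha)/(1-w_0)$ over $m\in S$ and exploiting $\tilde r_{mb}+\tilde r_{bm}\leq 1$ gives $|S|\leq 1+2K(\alpha-w_0)/(1-w_0)$, after which the two-value weight structure and the hypothesis $w_0\geq\frac{1}{K+1}$ (used only in the case $0\notin S$, where it guarantees $\frac{1-w_0}{K}\leq 2w_0$) yield $\sum_{m\in S}w_m\leq 2\alpha$ in both cases. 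Your weight-preserving inversion bijection, which hinges on $\pi_0=I$ being fixed under $\pi\mapsto\pi^{-1}$ while the remaining $K$ indices share a common weight, plays the role of the paper's averaging over the random $\mathcal{K}$; the closing TV-coupling and data-processing steps are essentially the same. The upshot is that your version is fully verifiable from the paper's own Propositions without appeal to the external Lei et al.\ lemma, and in fact uses only the weaker hypothesis $w_0\geq\frac{1}{2K+1}$, which is implied by the stated $w_0\geq\frac{1}{K+1}$.
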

As before, if we set $w_0=\frac{1}{K+1}$, Lemma \ref{lem::Theorem nonexchangeable grouped PALMRT} implies \Cref{thm::Theorem nonexchangeable grouped PALMRT}, so we turn to give a proof of Lemma \ref{lem::Theorem nonexchangeable grouped PALMRT}.
\subsection{Proof of Lemma \ref{lem::Theorem nonexchangeable grouped PALMRT}}
 Initially, recap the definition of $$(F(\epsilon))_{i,j}:=F(\pi_{i-1},\pi_{j-1};x,Z,\epsilon)=X_{\pi_{i-1}}^{T}(I-H^{Z_{\pi_{i-1}}Z_{\pi_{j-1}}})\epsilon, i,j\in\{1,\cdots,K+1\},$$ 
 where $F(\pi_i,\pi_j;x,Z,\epsilon)$ is denoted the same in Section \ref{Grouped PALMRT}, and in \Cref{prop::Finverse} that for any permutation $\pi_1,\pi_2$ of $S_n$,
 $$F(\pi_1,\pi_2;x,Z,\epsilon_{\sigma})=F(\sigma^{-1} \circ \pi_1,\sigma^{-1} \circ \pi_2;x,Z,\epsilon),$$
where $S_n$ is defined as the permutation space of $[n]$. And $\mathcal{K}$ is defined in (\ref{mathcalK}).

 For simplicity, we first consider the cycle permutation group $\mathcal{P}_K$, which satisfies that $P_k=P_1^k, P_k\in \mathcal{P}_K, k\in \{0,\cdots,K\}$, $P_{K+1}=P_0=I$. We also use the abbreviation $\text{Index}(\cdot)$ defined in \eqref{abbrevIndex}. In this setting, we have 
  \begin{equation}
  \label{ijforF}
      \begin{aligned}
      (F(\epsilon^{\mathcal{K}}))_{i+1,j+1}:=(F(\epsilon_{\pi_k}))_{i+1,j+1} &= F(\pi_i,\pi_j;x,Z,\epsilon_{\pi_k})= F(\pi_k^{-1}\circ \pi_i,\pi_k^{-1}\circ\pi_j;x,Z,\epsilon)\\
      &= (F(\epsilon))_{\text{Index}(P_k^TP_i)+1,\text{Index}(P_k^TP_j)+1}\\
      &=  (F(\epsilon))_{(i+K+1-\mathcal{K}\mod K+1)+1, \,(j+K+1-\mathcal{K}\mod K+1)+1} 
  \end{aligned}
  \end{equation}
 where $i,j \in \{0,\cdots,K\}$ and the third equation holds since $P_k$ is a permutation matrix, $P_k^{-1}=P_k^T$, and the last equation holds since in cycle permutation group, $\forall k\in\{0,\cdots,K\}$, $P_{K+1-k}\cdot P_{k}=P_1^{K+1-k}\cdot P_1^{k}=P_1^{K+1}=P_{K+1}=I,$ hence $P_k^{-1}=P_k^T=P_{K+1-k}$. Since $(\mathcal{K}+k\mod K+1),\mathcal{K}\in \{0,\cdots,K\},$ applying (\ref{ijforF}), we can find that 
  $$
  \begin{aligned}
(F(\epsilon^{\mathcal{K}}))_{(\mathcal{K}+k\mod K+1)+1,\mathcal{K}+1}&=(F(\epsilon))_{([(\mathcal{K}+k \mod K+1)+K+1-\mathcal{K}]\mod K+1)+1,\,(\mathcal{K}+K+1-\mathcal{K} \mod K+1)+1}\\
&=(F(\epsilon))_{([(\mathcal{K}+k \mod K+1)-\mathcal{K}]\mod K+1)+1,\,1}\\
&=(F(\epsilon))_{k+1,1}=(X)_{\pi_k}^T(I-H^{ZZ_{\pi_k}})\epsilon, k\in\{0,\cdots,K\},
  \end{aligned}
  $$
  and also
  $$
(F(\epsilon^{\mathcal{K}}))_{\mathcal{K}+1,(\mathcal{K}+k\mod K+1)+1}=(F(\epsilon))_{1,k+1}=X^T(I-H^{ZZ_{\pi_k}})\epsilon,$$
Therefore,
\begin{equation*}
\begin{aligned}
    \sum_{k=1}^K& w_k\mathbbm{1}\{X^{T}(I-H^{ZZ_{\pi_k}})Y > (X)_{\pi_k}^{T}(I-H^{ZZ_{\pi_k}})Y\}\\&=
\sum_{k=1}^K w_k\mathbbm{1}\{(T(\epsilon^{\mathcal{K}}))_{\mathcal{K}+1,(\mathcal{K}+k\mod  K+1)+1} >T(\epsilon^{\mathcal{K}}))_{(\mathcal{K}+k\mod K+1)+1,\mathcal{K}+1}\}\\
&=\sum_{k=0,k\neq \mathcal{K}}^{K} w_{(k-\mathcal{K}\mod K+1)}\mathbbm{1}\{(T(\epsilon^{\mathcal{K}}))_{\mathcal{K}+1,k+1} >T(\epsilon^{\mathcal{K}}))_{k+1,\mathcal{K}+1}\}\\
&\leq \sum_{k=0,k\neq \mathcal{K}}^{K} w_{k}\mathbbm{1}\{(T(\epsilon^{\mathcal{K}}))_{\mathcal{K}+1,k+1} >T(\epsilon^{\mathcal{K}}))_{k+1,\mathcal{K}+1}\}\\
&=\sum_{k=0}^K w_k\mathbbm{1}\{T(\epsilon^{\mathcal{K}}))_{\mathcal{K}+1,k+1} >T(\epsilon^{\mathcal{K}}))_{k+1,\mathcal{K}+1}\},
\end{aligned}
\end{equation*}
 where the inequality means $w_{(k-\mathcal{K}\mod K+1)} \leq w_k ,k\in \{0,\cdot,K\}\backslash \{\mathcal{K}\}$, since by definition, $w_1=\cdots=w_k\leq w_0$. Under the cycle permutation group, we obtain
\begin{equation}
    \label{mathcalkchangeinequ}
     \sum_{k=1}^K w_k\mathbbm{1}\{X^{T}(I-H^{ZZ_{\pi_k}})Y > (X)_{\pi_k}^{T}(I-H^{ZZ_{\pi_k}})Y\}\leq \sum_{k=0}^K w_k\mathbbm{1}\{T(\epsilon^{\mathcal{K}}))_{\mathcal{K}+1,k+1} >T(\epsilon^{\mathcal{K}}))_{k+1,\mathcal{K}+1}\},
\end{equation}
For general permutation group $\mathcal{P}_K$ satisfying \Cref{assump::grouppermutation}, (\ref{mathcalkchangeinequ}) is also satisfied since the bijection property for $\mathcal{P}_K$ from \Cref{prop::Proposition for group permutation}, (\ref{mathcalkchangeinequ}) can be proved through the similar discussion combining the states in Section \ref{detailed proof of theorem nonexchangeable grouped CPT} and above cycle permutation case.
Now for any $r \in \mathbb{R}^{(K+1)\times(K+1)}$, define
$$S(r)=\{i\in \{0,\cdots,K\}:\sum_{j=0}^Kw_k 1\{r_{i+1,j+1}>r_{j+1,i+1}\}\geq 1-\alpha\}$$
an empirical set of ``strange'' points. The Lemma of \cite{lei2021weighted} implies that, for $S(r)$ defined above, $$\sum_{k\in S(r)}w_k\leq 2\alpha,$$
for all $r\in \mathbb{R}^{(n+1)\times(n+1)}.$
In other words, from above we obtain that 
$$\{\sum_{k=1}^K w_k\mathbbm{1}\{X^{T}(I-H^{ZZ_{\pi_k}})Y > (X)_{\pi_k}^{T}(I-H^{ZZ_{\pi_k}})Y\}\geq 1-\alpha\}\Rightarrow \{\mathcal{K}\in S(T(\epsilon^{\mathcal{K}}))\},$$
by the same procedure as in \Cref{detailed proof of theorem nonexchangeable grouped CPT}, we have 
$$P(K\in S(T(\epsilon^{\mathcal{K}})))\leq 2\alpha+\sum_{k=0}^{K}\frac{1}{K+1}d_{TV}(T(\epsilon),T(\epsilon^k)).$$


\end{document}